\documentclass[11pt]{article}
\ifdefined\pdfminorversion
\fi

\usepackage[utf8]{inputenc}
\usepackage{amsmath,amssymb,mathtools,amsthm}
\usepackage{graphicx}
\usepackage{thmtools}
\usepackage{float}
\usepackage{xcolor}
\usepackage{stmaryrd}

\definecolor{linkcolour}{rgb}{0.15,0.15,0.55}
\definecolor{citecolour}{rgb}{0.15,0.15,0.55}

\usepackage[linktoc=all,backref=page]{hyperref}
\hypersetup{
  colorlinks=true,
  linkcolor=linkcolour,
  urlcolor=citecolour,
  citecolor=citecolour,
  linktoc=all,
  hypertexnames=false,
  unicode=true,
  bookmarksnumbered=false,
  pdfmenubar=true,
  pdftoolbar=true,
  pdfauthor={Mohamad Alameddine, Olivier Marchal,
             Nikita Nikolaev, Nicolas Orantin},
  pdftitle={The WKB Riemann--Hilbert Problem
            for the Painlev\'e I System}
}

\usepackage[
  top=2cm,
  bottom=2cm,
  left=2cm,
  right=2cm,
  a4paper
]{geometry}
\usepackage[margin=2cm]{caption}
\DeclareGraphicsExtensions{.pdf,.png,.jpg,.eps}

\makeatletter
\@addtoreset{equation}{section}
\makeatother

\newtheorem{theorem}{Theorem}[section]
\newtheorem{proposition}{Proposition}[section]
\newtheorem{lemma}{Lemma}[section]
\newtheorem{corollary}{Corollary}[section]

\theoremstyle{definition}
\newtheorem{definition}{Definition}[section]
\newtheorem{remark}{Remark}[section]
\newtheorem{problem}{Problem}[section]

\renewcommand{\sectionautorefname}{Section}
\renewcommand{\subsectionautorefname}{Subsection}
\renewcommand{\subsubsectionautorefname}{Subsection}

\renewcommand{\appendixautorefname}{Appendix}

\ExplSyntaxOn
\NewCommandCopy{\RHPOriginalAutoref}{\autoref}
\RenewDocumentCommand{\autoref}{s m}
  {
    \group_begin:
    \tl_set:Nx \l_tmpa_tl { \getrefnumber{#2} }
    \exp_args:NnV \regex_match:nnT
      { \A [A-Z] ( \. [0-9]+ )* \Z }
      \l_tmpa_tl
      {
        \cs_set_eq:NN
          \sectionautorefname \appendixautorefname
        \cs_set_eq:NN
          \subsectionautorefname \appendixautorefname
        \cs_set_eq:NN
          \subsubsectionautorefname \appendixautorefname
      }
    \IfBooleanTF{#1}
      { \RHPOriginalAutoref*{#2} }
      { \RHPOriginalAutoref{#2} }
    \group_end:
  }
\ExplSyntaxOff

\begin{document}

\title{The WKB Riemann--Hilbert Problem \\ for the Painlevé I System}

\author{Mohamad Alameddine, Olivier Marchal, Nikita Nikolaev, and Nicolas Orantin}

\date{}

\maketitle
\thispagestyle{empty}

\noindent
\begin{abstract}
In this paper, we solve the direct monodromy problem for the
Borel-summed $0$-parameter solution of Painlev\'e~I.
Using the auxiliary time equation, we first construct formal
WKB wave matrices whose amplitudes are regular at the double
turning point.
Rigorous Borel summation then yields exact WKB matrices in
each Stokes region.
Uniform estimates on moving action strips provide a common
exact wave matrix in a full neighbourhood of the double
point, proving that all four incident jumps are identities
for consistent column normalisations.
We formulate the resulting normalised Riemann--Hilbert
problem and prove its existence and uniqueness on the
Borel domains of the construction.
Finally, after Painlev\'e scaling, the five Borel-summed
solutions coincide with the five rotated tritronqu\'ee
solutions of Painlev\'e~I.
\end{abstract}

{\footnotesize
\textbf{Keywords:}
Exact WKB method, Painlevé equations, Borel resummation, isomonodromic deformations, Stokes phenomenon, Riemann-Hilbert problems

\textbf{2020 MSC:}
34M40 (primary),
34M55,
34M60,
34M56,
35Q15,
34M30
}

{\small
\setcounter{tocdepth}{2}
\tableofcontents
 
}

\setcounter{section}{0}
\section{Introduction}
The first Painlevé equation 
\begin{equation} \label{PI}
\hbar^2\partial_t^2 q=6q^2+t
\end{equation}
occupies a particularly distinguished position among the six Painlevé equations. Despite its deceptively simple form, it provides an exceptionally explicit setting in which several central themes of modern asymptotic analysis and mathematical physics meet, including divergent asymptotic expansions, isomonodromic deformation theory, Stokes phenomena, exact WKB analysis, and Riemann--Hilbert (RH) problems; see, for instance, \cite{delabaere:hal-01067086}. Apart from the scaling parameter $\hbar$, Painlev\'e~I contains no fixed equation parameters. Its general solution nevertheless depends on two integration constants, represented by the nontrivial Stokes data of the associated linear system. This makes Painlev\'e~I a natural setting for studying how asymptotic conditions select particular monodromy data. Of particular importance are the distinguished \emph{tritronqu\'ee} solutions appearing in Boutroux's analysis of truncated solutions \cite{Boutroux1913}. These solutions are characterised by prescribed algebraic asymptotics in sectors of opening $\frac{8\pi}{5}$ and are
pole-free for sufficiently large $|t|$ in every closed subsector. Their asymptotic properties, including the precise description of their large-$t$ behaviour and the distribution of their poles, were subsequently investigated by a number of authors; see, in particular, the work of Joshi and Kitaev \cite{JoshiKitaev2001}.

\medskip

From the isomonodromic point of view, Painlev\'e I arises as the compatibility condition of a rank-two linear differential system with an irregular singularity at infinity. The associated Stokes phenomenon provides the monodromy data that remain invariant under the nonlinear deformation. In this framework, the distinguished Painlev\'{e}~I solutions are singled out by special choices of Stokes data, and the connection between their asymptotic behaviour in different sectors can be expressed in terms of the corresponding Stokes multipliers. The general isomonodromic formalism is developed in \cite{JimboMiwaUeno}, while the monodromy and connection formulae for Painlev\'{e}~I were described explicitly by Kapaev \cite{Kapaev_2004}. A systematic RH formulation can be found in \cite{FIKN2006}. More recently, Painlev\'e I has also become a fundamental model for the study of nonlinear asymptotics, resurgence, and exact WKB analysis. Its formal solutions are generically divergent but exhibit factorial growth of their coefficients, reflecting the presence of neighbouring transseries sectors and the associated Stokes phenomenon. The Painlev\'e I equation therefore provides a particularly tractable laboratory in which one can study, in a concrete nonlinear setting, the relation between formal asymptotic expansions, Borel singularities, Stokes automorphisms, and isomonodromic monodromy data. These descriptions encode related aspects of the same isomonodromic problem, but the relation between nonlinear Borel summation and the Stokes data of the spatial linear
system requires a separate analysis. Establishing this relation for the perturbative solution of Painlev\'e~I is the purpose of the present paper. Building on the analysis of the resurgent structure of the
$0$-parameter solutions in \cite{250408130737}, we determine
the monodromy data directly
from the Borel-summed perturbative solution of \eqref{PI}. For a chosen branch $\tau$ satisfying $t=-6\tau^2$, the equation has a uniquely determined formal power-series
solution with leading term $\tau$. We start from its directional Borel sums constructed in \cite{250408130737} and insert them into the Painlev\'e I Lax pair. The problem is then to construct exact WKB wave matrices, determine their connections across the Stokes graph, and identify their normalisation at spatial infinity.

\medskip

The main geometric difficulty is visible in the leading
spatial characteristic equation
\begin{equation*}
    y^2=4(x-\tau)^2(x+2\tau).
\end{equation*}
As described in \autoref{sec:stokes-geometry}, its Stokes
graph has a trivalent vertex at the simple turning point
$x=-2\tau$ and a quadrivalent vertex at the double turning
point $x=\tau$.
The normalised spectral cover is described by $w^2=x+2\tau$.
Although the spatial eigenvalues coincide at $x=\tau$,
the leading auxiliary time matrix has eigenvalues $\pm w$,
which remain distinct there when $\tau\neq0$.
We use this auxiliary equation to construct the formal wave
matrix without dividing by the vanishing spatial eigenvalue.
The formal construction is stated in
\autoref{241113190220} and \autoref{250716102028}.
A time Riccati equation determines each column ratio.
Flatness implies compatibility with the spatial equation,
and a closed one-form reconstructs the scalar amplitude
with its prescribed normalisation.
For normalisation series of factorial type, Nagumo estimates
\cite{Nagumo1941} give the factorial bounds for the amplitudes
in \autoref{250708195116}.
An independent order-by-order matrix construction is retained
in \autoref{app:matrix-formal-construction}, together with
its extension under stated hypotheses to several deformation
variables and higher-rank systems.

\medskip

The passage from formal regularity to a statement about exact
jumps requires a separate analytic argument.
Borel summation connects divergent formal expansions with
analytic solutions; its relation to nonlinear Stokes phenomena
is developed, for example, in \cite{Costin1998}.
Resurgent methods for semiclassical equations are treated in
\cite{DelabaerePham1999}, while the Borel-plane approach to
singularly perturbed differential equations is developed in
\cite{nikolaev_riccati_2023,nikolaev_2022_3}.
A broader geometric perspective on resurgence and
wall-crossing is presented in \cite{Kontsevich-Soibelman}.
Here we prove a uniform strip-summation theorem
(\autoref{lem:uniform-strip-summation}) and its application
to component ratios in an eigenbasis
(\autoref{lem:projective-strip}).
These results are formulated for parameter-dependent
differential equations under explicit analyticity, spectral
and Gevrey assumptions, independently of Painlev\'e~I.
Their conclusions include uniform Borel continuation,
exponential bounds and factorial remainder estimates
throughout an unbounded strip, locally uniformly in the
remaining parameters.
This uniformity is essential when applying the construction
to families of spatial and moving action strips.
The lemmas therefore provide analytic tools for other
isomonodromic systems, provided that the required action
domains and coefficient estimates can be established.

\medskip

More precisely, we first prove uniform Borel summation on full strips in an action coordinate. Spatial strips give the exact projective solutions away from the Stokes graph, and amplitude reconstruction then yields the exact WKB matrices of \autoref{thm:exact-WKB-existence}. Near the quadrivalent vertex, we instead vary $x$ and $t$ together while keeping a local Morse coordinate fixed. The resulting moving strips give common Borel sums throughout a full neighbourhood of the vertex. Consequently, \autoref{thm:no-jump-quadrivalent} identifies the four adjacent WKB matrices when their formal data and normalisations agree. We next compare the WKB columns with the canonical recessive solution lines at spatial infinity. This comparison determines which diagonal right normalisations put the matrices in the standard RH convention. In the representative sector
\begin{equation*}
    \alpha=0,\qquad -\frac{2\pi}{5}<\arg\tau<0,
\end{equation*}
the resulting canonical Stokes multipliers are
\begin{equation*}
    s_{-1}=s_0=0,\qquad s_{-2}=s_1=s_2=i.
\end{equation*}
The vanishing multipliers follow from the common wave matrix near the double turning point. The remaining values follow from the cyclic relations and the normalisation at infinity. The parameter appearing before this final normalisation records the relative scale of the two columns.

\medskip

We formulate the corresponding WKB RH problem and prove its existence and uniqueness in \autoref{thm:RH-WKB-existence-uniqueness}. An explicit constant gauge transformation and the scaling 
\begin{equation*}
    X=\hbar^{-\frac{2}{5}}x,\qquad
    T=\hbar^{-\frac{4}{5}}t,\qquad
    q(t,\hbar)=\hbar^{\frac{2}{5}}u(T)
\end{equation*}
reduce the problem to the classical tritronqu\'ee RHP.
Uniqueness then gives the identification in
\autoref{thm:canonical-RHP}.
This identification holds first on the pole-free domains
supplied by the Borel construction and subsequently extends
by meromorphic continuation.
A related direct monodromy problem was studied by Iwaki
\cite{Iwaki-P1} for a two-parameter construction based on
topological recursion on elliptic spectral curves.
His calculation of the Stokes multipliers assumes conjectures
on Borel summability and connection formulas.
The present argument treats the perturbative power-series
branch and its nodal spectral curve, using the auxiliary
deformation equation to establish the analytic matching at
the double turning point.

\medskip

The paper is organised as follows.
\autoref{250408124523} introduces the equation, domains and
scalar reduction.
\autoref{250408124547} states the formal results, proved in
\autoref{250109110415} using the conventions of
\autoref{250410111125}.
\autoref{sec:exact-WKB-wave-matrix} constructs the exact WKB
matrices in the Stokes regions, and \autoref{Section5}
establishes regular summation at the double turning point.
\autoref{sec:WKB-RHP} determines the normalisation and
monodromy data and identifies the resulting RH problem.
The analytic proofs are given in
\autoref{app:auxiliary-Riccati-estimates} and
\autoref{app:uniform-Morse-estimates}.
Finally, \autoref{app:matrix-formal-construction} presents
the independent matrix construction and its extension,
under stated hypotheses, to several deformation variables
and higher-rank systems.

\paragraph{Acknowledgements.}
{\sloppy M. A. was partially funded by the SNSF under the grant agreement PZ00P2 223297 and by the DFG -- project ID 551478549. O. M. was supported by the fundamental junior IUF grant G752IUFMAR. N. N. was supported by the European Union's Horizon 2020 Research and Innovation Programme under the Marie Skłodowska-Curie Grant Agreement No.\,101026083 (\href{https://cordis.europa.eu/project/id/101026083}{\texttt{AbQuantumSpec}}), as well as the Leverhulme Trust through the Leverhulme Research Project grant \textit{Extended Riemann-Hilbert Correspondence, Quantum Curves and Mirror Symmetry}.\par}

\paragraph{Statements.}The authors declare no conflict of interest. AI was used to improve the presentation in accordance with the Leiden Declaration on Artificial Intelligence and Mathematics.

\section{The \texorpdfstring{$\hbar$}{hbar}-deformed Painlev\'e I System}
\label{250408124523}

\subsection{Domains and asymptotic conventions}
\label{250419183135}

In this article, the variables $x$, $t$ and $\hbar$ denote respectively the spatial coordinate, time and formal deformation parameter. All branches are specified on open domains and uniform estimates on a time sector mean estimates on its
relatively compact subsets, unless a larger domain is explicitly stated.
For a direction $\alpha\in\mathbb R/(2\pi\mathbb Z)$
and a constant $R>0$, we define
\begin{equation}
\label{defArc}
 \operatorname{\mathsf{Arc}}_\pi(\alpha)
       =\left(\alpha-\frac{\pi}{2},\alpha+\frac{\pi}{2}\right),\qquad
 S_{\alpha,R}=\left\{\hbar\neq0 \,\text{ such that }\, \operatorname{Re}\left(\frac{e^{i\alpha}}{\hbar}\right)>R^{-1}\right\}.
\end{equation}
The latter is the open disk of centre $\frac{R}{2}e^{i\alpha}$ and radius
$\frac{R}{2}$, tangent to the origin, which we call a Borel disk of radius $\frac{R}{2}$ in direction $\alpha$. Throughout the paper, scalar quantities are measured by their absolute value. For vectors and matrices, we use the maximum norm and its induced operator norm:
\begin{equation*}
    \|v\|=\max_{1\leq i\leq N}|v_i|,
    \qquad
    \|M\|=\max_{1\leq i\leq r}
                 \sum_{j=1}^{s}|M_{ij}|,
    \qquad
    v\in\mathbb C^N,\quad M\in\mathbb C^{r\times s}.
\end{equation*}
These norms satisfy the triangle inequality and absolute
homogeneity. Moreover, for all compatible matrix and vector
products,
\begin{equation*}
    \|MN\|\leq\|M\|\,\|N\|,
    \qquad
    \|Mv\|\leq\|M\|\,\|v\|,
    \qquad
    \|I_N\|=1.
\end{equation*}
For a square matrix, the projections onto its diagonal and
off-diagonal parts are contractions:
\begin{equation*}
    \|\operatorname{diag}M\|\leq\|M\|,
    \qquad
    \|\operatorname{off}M\|\leq\|M\|.
\end{equation*}
For a function $F$ on a parameter domain and a compact subset $K$,
we write
\begin{equation*}
    \|F\|_K=\sup_{z\in K}\|F(z)\|.
\end{equation*}
All finite-dimensional norms are equivalent. Consequently, the factorial-type and uniform asymptotic properties defined below are independent of this choice of norm, although their bounding constants may change. 

\medskip

A formal series $\hat f(\hbar)=\underset{n\geq 0}{\sum}f_n\hbar^n$ is locally uniformly Gevrey-$1$, or of factorial type, on an open set $U$ if its coefficients satisfy
$\|f_n\|_K\leq C_K (A_K)^n n!$ on every compact set $K$ of $U$ (we use the standard notation $K\Subset U$ for compact sets of $U$) for some constants $(C_K,A_K)\in (\mathbb{R}_+)^2$. A holomorphic function $f$ has $\hat f$ as its locally uniform Gevrey-$1$ asymptotic expansion on Borel disks if, for every compact subset $K\Subset U$, there exist constants $C_K,A_K,R_K>0$ such that
\begin{equation}
\label{eq:uniform-Gevrey-convention}
\left\|f(\hbar)-\sum_{n=0}^{N-1}f_n\hbar^n\right\|_K
       \leq C_K (A_K)^N N!|\hbar|^N
 \quad \, \forall \, N\geq0, \, \hbar\in S_{\alpha,R_K}.
\end{equation}
Here and below $R_K$ may decrease when $K$ increases. This uniform condition on a Borel disk, rather than estimates only on smaller angular sectors, is the condition used for uniqueness of Borel resummation. Indeed, the Nevanlinna-Sokal theorem identifies such germs with directional Borel sums (see \cite{Sokal1980} and \cite[Appendix~A.2, Theorem~2]{nikolaev_2022_3}). In particular, our Borel-Laplace convention is specified in \autoref{sec:sectorial-Borel-summation}. 

\subsection{The perturbative Painlev\'e~I solution}
\label{250408124753}

We study the Painlev\'e~I Hamiltonian system
\begin{equation}
\label{250408131147}
 \hbar^2\partial_t^2q=6q^2+t,\qquad p=\hbar\partial_tq.
\end{equation}
We set $t=-6\tau^2$ and choose a nonzero branch
$\tau=\tau(t)$. In particular, $\partial_t\tau=-\frac{1}{12\tau}$.

\begin{proposition}[Formal perturbative solution of the Painlev\'{e}~I equation]
\label{250408135328}
\eqref{250408131147} admits a unique formal solution
\begin{equation}
\label{250205091941}
 \hat q=\sum_{n\geq0}q_n(t)\hbar^n,\qquad q_0=\tau,
 \qquad \hat p=\hbar\partial_t\hat q.
\end{equation}
Its coefficients are holomorphic away from $\tau=0$ and satisfy
\begin{equation}
\label{eq:q-formal-recursion}
 q_n=\frac{1}{12\tau}
  \left(\partial_t^2q_{n-2}
             -6\sum_{j=1}^{n-1}q_jq_{n-j}\right),\qquad \forall\, n\geq1,
 \qquad q_{-1}=q_{-2}=0.
\end{equation}
All odd coefficients vanish, and
\begin{equation}\label{ccoeff}
    q_{2n}=c_n\tau^{1-5n},\qquad n\geq0,
\end{equation}
where the constants $(c_n)_{n\geq 0}$ are uniquely determined by the recursion
\begin{equation}
\label{eq:c-formal-recursion}
    c_0=1,\qquad
    c_n=
    \frac{(5n-6)(5n-4)}{1728}\,c_{n-1}
    -\frac12\sum_{j=1}^{n-1}c_jc_{n-j},
    \qquad \forall\, n\geq1.
\end{equation}
In particular, $c_1=-\frac{1}{1728}$, hence $q_2=-\frac{1}{1728\tau^4}$. Moreover,
\begin{equation}
\label{eq:c-factorial-bound}
    |c_n|\leq (n!)^2,\qquad \forall\, n\geq0.
\end{equation}
Consequently, $\hat q$ is locally uniformly of factorial type: for every compact subset $K$ of the domain of the chosen nonzero branch $\tau(t)$, there exist constants $C_K,A_K>0$ such that
\begin{equation}
\label{eq:q-factorial-bound}
    \|q_n\|_K\leq C_K (A_K)^n n!,
    \qquad \forall\, n\geq0.
\end{equation}
\end{proposition}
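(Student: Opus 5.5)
Substitute $\hat q=\sum_n q_n\hbar^n$ into $\hbar^2\partial_t^2 q=6q^2+t$ and compare coefficients of $\hbar^n$. At order $\hbar^0$ we get $6q_0^2+t=0$, so $q_0=\pm\sqrt{-t/6}$; the normalisation $q_0=\tau$ selects one root. At order $\hbar^n$ with $n\geq1$ we get
\begin{equation*}
\partial_t^2q_{n-2}=12q_0q_n+6\sum_{j=1}^{n-1}q_jq_{n-j}.
\end{equation*}
Since $q_0=\tau\neq0$, this can be solved uniquely for $q_n$, which is exactly \eqref{eq:q-formal-recursion}. Induction gives existence, uniqueness and holomorphy of every $q_n$ away from $\tau=0$. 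The relation $\hat p=\hbar\partial_t\hat q$ is a definition.

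\textbf{Parity and the homogeneous form.} I would prove $q_{2n+1}=0$ by strong induction. For odd $n$, the term $\partial_t^2q_{n-2}$ vanishes by hypothesis. In each product $q_jq_{n-j}$ with $1\le j\le n-1$, one of $j$ and $n-j$ is odd, so that product vanishes too. Next I would set $q_{2n}=c_n\tau^{1-5n}$ and use $\partial_t=-\frac{1}{12\tau}\partial_\tau$ to compute
\begin{equation*}
\partial_t^2\tau^{m}=\frac{1}{144}\,\tau^{-1}\partial_\tau\bigl(\tau^{-1}\partial_\tau\tau^m\bigr)=\frac{m(m-2)}{144}\tau^{m-4}.
\end{equation*}
Take $m=1-5(n-1)=6-5n$. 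Then $\partial_t^2 q_{2n-2}=\frac{(5n-6)(5n-4)}{144}c_{n-1}\tau^{2-5n}$, where the sign works out because $(6-5n)(4-5n)=(5n-6)(5n-4)$. The sum contributes $6\sum_j c_jc_{n-j}\tau^{2-5n}$. Dividing by $12\tau$ gives \eqref{eq:c-formal-recursion}, so the ansatz closes by induction. Setting $n=1$ gives $c_1=\frac{(-1)(1)}{1728}$ and hence $q_2$.

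\textbf{The bound $|c_n|\le (n!)^2$.} I expect this to be the main step, though it is elementary. I would argue by induction, assuming the bound for all indices below $n$. The first term satisfies
\begin{equation*}
\frac{(5n-6)(5n-4)}{1728}\,((n-1)!)^2\le \frac{25n^2}{1728}((n-1)!)^2=\frac{25}{1728}(n!)^2.
\end{equation*}
For the convolution term, the sequence $j!(n-j)!$ is log-convex in $j$. So on $1\le j\le n-1$ it is maximised at the endpoints, where it equals $(n-1)!$, and this gives the crude bound
\begin{equation*}
\tfrac12\sum_{j=1}^{n-1}(j!)^2((n-j)!)^2\le \tfrac12 (n-1)\,((n-1)!)^2\le \tfrac{1}{2n}(n!)^2.
\end{equation*}
This is at most $\tfrac14(n!)^2$ for $n\ge2$, and the sum is empty for $n=1$. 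The two pieces together give at most $\bigl(\frac{25}{1728}+\frac14\bigr)(n!)^2\le (n!)^2$, which closes the induction.

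\textbf{Factorial type of $\hat q$.} On a compact $K$ where $\tau\ne0$, set $m=\min_K|\tau|>0$ and $M=\max_K|\tau|$. Then
\begin{equation*}
\|q_{2n}\|_K\le (n!)^2\max(M,m)\,\max(1,m^{-5})^n.
\end{equation*}
Using $(n!)^2\le (2n)!$, this has the form $C_K A_K^{2n}(2n)!$ with $A_K=\max(1,m^{-5/2})$. The odd coefficients vanish, so \eqref{eq:q-factorial-bound} holds.
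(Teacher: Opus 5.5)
Your proof is correct and follows the same route as the paper. You compare coefficients to get the recursion, use induction for the parity and the homogeneous form $q_{2n}=c_n\tau^{1-5n}$, bound $|c_n|/(n!)^2$ inductively via $j!(n-j)!\leq(n-1)!$ (obtained by log-convexity rather than from $\binom{n}{j}\geq n$), and use $(n!)^2\leq(2n)!$ to reach the factorial bound for $\hat q$. The only slip is cosmetic: for $n=1$ the factor $(5n-6)(5n-4)=-1$ is negative, so the first-term estimate should be stated as $|(5n-6)(5n-4)|\leq 25n^2$.
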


\begin{proof}
Inserting the formal series into \eqref{250408131147} gives \eqref{eq:q-formal-recursion}. Induction gives the parity and homogeneity. Details can be found in \cite{250408130737}. To prove \eqref{eq:c-factorial-bound}, proceed by induction.
The estimate holds for $c_0=1$. Assuming it for all smaller
indices, \eqref{eq:c-formal-recursion} together with $j!(n-j)!\leq (n-1)!$ gives
\begin{equation*}
    \frac{|c_n|}{(n!)^2}
       \leq
       \frac{|(5n-6)(5n-4)|}{1728n^2}
       +\frac12\sum_{j=1}^{n-1}\binom{n}{j}^{-2}
       \leq
       \frac{25}{1728}+\frac{n-1}{2n^2}
       \leq \frac{25}{1728}+\frac18<1.
\end{equation*}
Here the sum is empty for $n=1$, and $\binom{n}{j}\geq n$ for $1\leq j\leq n-1$. For a nonempty compact set $K$ as in the statement, put $\delta_K=\underset{K}{\min}|\tau|>0$ and $T_K=\underset{K}{\max}|\tau|$. The homogeneity formula \eqref{ccoeff} and $(n!)^2\leq(2n)!$ imply
\begin{equation*}
    \|q_{2n}\|_K
       \leq T_K(\delta_K)^{-5n}(n!)^2
       \leq T_K
               \bigl((\delta_K)^{-\frac{5}{2}}\bigr)^{2n}(2n)!.
\end{equation*} 
Since all odd coefficients vanish, this proves \eqref{eq:q-factorial-bound} with
$C_K=T_K$ and $A_K=(\delta_K)^{-\frac{5}{2}}$.
\end{proof}

\begin{definition}[Time sectors]
\label{250224171850}
For a given direction $\alpha$, there are five time sectors on the $\tau$-plane for Borel-resummation of $\hat{q}$ (see \cite{250408130737}). They are the connected components of the complement of the Stokes rays $5\arg\tau-2\alpha\in2\pi\mathbb Z$. Thus, we define the time sectors in the $\tau$-plane as
\begin{equation}
 \widetilde V_k=\left\{\frac{2(k\pi+\alpha)}5<\arg\tau<\frac{2((k+1)\pi+\alpha)}5\right\}, \qquad k\in\mathbb Z/5\mathbb Z.
\end{equation}
Their images $V_k$ under $t=-6\tau^2$ are sectors of opening $\frac{4\pi}{5}$, each equipped with the inverse branch $\tau(t)$. These images overlap: they are not the connected components of a partition of the punctured $t$-plane.
\end{definition}

\subsection{Borel-resummed solutions of the Painlev\'{e}~I equation}
We state in this section the Borel resummation result of \cite{250408130737}, adapted to any direction and justify this adaptation below. We have 
\begin{proposition}[Borel-resummed solutions of the Painlev\'{e}~I equation]
\label{250408141643}
Fix a direction $\alpha$ and a sector $\widetilde V$ in the $\tau$-plane as defined in \autoref{250224171850}. Let $V$ be its image under $t=-6\tau^2$. The formal series $\hat q$ selected by this branch in \autoref{250408135328} is locally uniformly Borel summable in direction $\alpha$ on $V$. More precisely, for every relatively compact open subset $V_0\Subset V$, there exists $R_{V_0}>0$ such that the Borel sums $q$ and $p$ are jointly holomorphic on $V_0\times S_{\alpha,R_{V_0}}$. They satisfy the Painlev\'e~I system \eqref{250408131147} and admit $\hat q$ and $\hat p$, respectively, as their uniform Gevrey-1 asymptotic expansions on this product, in the sense of
\eqref{eq:uniform-Gevrey-convention}. The constants in these estimates, as well as $R_{V_0}$, may depend on $V_0$. These properties determine the germs of $(q,p)$ uniquely: the sums constructed on two such domains agree on their intersection after restriction to a common smaller Borel disk.
\end{proposition}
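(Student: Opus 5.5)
The plan is to reduce the statement to the single direction treated in \cite{250408130737} by exploiting the scaling symmetry of \eqref{250408131147}, and then to deduce the uniqueness of germs from the Nevanlinna--Sokal theorem. The homogeneity \eqref{ccoeff} shows that $\hat q$ depends on $(\tau,\hbar)$ only through $\tau$ and the variable $\hbar\tau^{-5/2}$:
\begin{equation*}
    \hat q=\tau\,\hat f\bigl(\hbar^2\tau^{-5}\bigr),
    \qquad
    \hat f(z)=\sum_{n\geq0}c_nz^n .
\end{equation*}
Since $|c_n|\leq (n!)^2$, the Borel transform of $\hat f$ with respect to $\hbar\tau^{-5/2}$ is convergent near the origin. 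Its continuation along rays and its exponential growth were established in \cite{250408130737} for a reference direction. Its singularities lie on the rays where $5\arg\tau-2\alpha\in2\pi\mathbb Z$, which are exactly the boundaries of the sectors $\widetilde V_k$ in \autoref{250224171850}.

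\emph{Step 1 (adaptation to arbitrary $\alpha$).} The rotation $\hbar\mapsto e^{i\theta}\hbar$, $\tau\mapsto e^{2i\theta/5}\tau$, equivalently $t\mapsto e^{4i\theta/5}t$, preserves the equation $\hbar^2\partial_t^2q=6q^2+t$ if we also set $q\mapsto e^{2i\theta/5}q$. It therefore maps the formal solution with leading term $\tau$ to the one with the rotated $\tau$. It also maps Borel disks $S_{\alpha,R}$ to $S_{\alpha+\theta,R}$ and the sectors $\widetilde V_k$ for $\alpha$ to those for $\alpha+\theta$. I would therefore transport the result of \cite{250408130737}, proved for one direction and one sector, to every pair $(\alpha,\widetilde V)$. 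The parity $q_{2n+1}=0$ removes the sign ambiguity in the choice of branch.

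\emph{Step 2 (uniformity on $V_0\Subset V$).} On $\overline{V_0}$ we have $0<\delta\leq|\tau|\leq T$. Moreover, $\arg(\hbar\tau^{-5/2})$ stays at a positive distance from the singular directions of the Borel transform of $\hat f$ when $\hbar$ lies near direction $\alpha$. The Laplace integral along direction $\alpha-\tfrac52\arg\tau$ in the Borel plane of $\hbar$ then gives a joint holomorphic function on $V_0\times S_{\alpha,R_{V_0}}$. Here $R_{V_0}$ is chosen from the exponential growth rate, uniformly on the compact set of angles $\arg\tau\in\overline{\arg\widetilde V_0}$. The standard Watson-type remainder estimate gives \eqref{eq:uniform-Gevrey-convention}, with constants controlled through $\delta$ and $T$ as in the proof of \autoref{250408135328}. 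Since $\partial_t$ commutes with the Laplace transform up to the substitution $\partial_t\tau=-1/(12\tau)$, the function $p=\hbar\partial_tq$ has expansion $\hat p$. The estimate for $p$ follows from the estimate for $q$ by Cauchy estimates on a slightly larger compact set.

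\emph{Step 3 (equation and uniqueness).} Borel summation is a morphism of differential algebras: it respects products through convolution and respects $\partial_t$. Hence $q$ satisfies \eqref{250408131147}, because the residual $\hbar^2\partial_t^2q-6q^2-t$ is the Borel sum of the zero series. The uniqueness of germs is exactly the Nevanlinna--Sokal theorem applied on the common Borel disk of the intersection. Two functions with the same uniform Gevrey-$1$ expansion on some $S_{\alpha,R}$ coincide there.

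The main obstacle is Step 2. One must keep $R_{V_0}$ positive uniformly up to the closure of $V_0$ even though the Borel singular directions rotate with $\arg\tau$. I would handle this by working with the Borel variable conjugate to $\hbar\tau^{-5/2}$, so that the singular locus is fixed, and by using the compactness of the admissible angle set.
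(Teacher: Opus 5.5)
Your proposal is correct and takes essentially the paper's route: the paper also cites the direction-$0$ result of \cite{250408130737}, transports it to direction $\alpha$ via the weighted rotation $(\tau,t,\hbar)\mapsto(e^{-2i\alpha/5}\tau,\,e^{-4i\alpha/5}t,\,e^{-i\alpha}\hbar)$ and the homogeneity of the coefficients $q_{2n}$, handles $p$ by Cauchy estimates, and obtains uniqueness from Nevanlinna--Sokal. Your Step~2, using the self-similar variable $\hbar\tau^{-5/2}$, only makes explicit the local uniformity that the paper takes directly from the cited corollary; note that the Laplace direction $\alpha-\tfrac52\arg\tau$ lives in the Borel plane of $\hbar\tau^{-5/2}$, not in that of $\hbar$.
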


For $\alpha=0$, this follows from
\cite[Corollary~2.15]{250408130737}.
For a general direction, choose a real representative $\alpha\in\mathbb R$ and set
\begin{equation*}
    \tilde\tau=e^{-\frac{2i\alpha}{5}}\tau,\qquad
    \tilde t=e^{-\frac{4i\alpha}{5}}t,\qquad
    \tilde\hbar=e^{-i\alpha}\hbar.
\end{equation*}
The homogeneity of the formal coefficients gives
\begin{equation*}
    \hat q(t,\hbar)
      =e^{\frac{2i\alpha}{5}}\hat q(\tilde t,\tilde\hbar),
\end{equation*}
where the formal series on the right is taken with the
corresponding branch $\tilde\tau$.
The substitution $\xi=e^{i\alpha}\eta$ in the Laplace integral
therefore transports Borel summability in direction $0$ to
Borel summability in direction $\alpha$.
The resulting sectors are precisely those of
\autoref{250224171850}. Differentiation in $t$ is justified by Cauchy estimates on a slightly larger compact subset of $V$. \autoref{app:uniform-Morse-estimates} also constructs the required sums on unbounded action strips; their agreement with this proposition uses only uniqueness of the Borel sum. We call these the five \emph{Borel-resummed solutions} of the Painlev\'{e}~I equation. Their identification with the classical tritronqu\'ees is a conclusion of \autoref{sec:WKB-RHP}.

\subsection{The flat Painlev\'{e}~I Lax pair}
\label{250408124810}

For any holomorphic solution $q$ of \eqref{250408131147}, set
\begin{equation}
\label{241115131339}
 L(x,t)=\begin{pmatrix}p&4(x-q)\\
      x^2+xq+q^2+\frac{t}{2}&-p\end{pmatrix},\qquad
 A(x,t)=\begin{pmatrix}0&2\\\frac{x}{2}+q&0\end{pmatrix}.
\end{equation}
The $\hbar$-deformed Painlev\'e I system is
\begin{equation}
\label{250408155610}
 \hbar\partial_x \mathbf{v}=L\mathbf{v},\qquad \hbar\partial_t \mathbf{v}=A\mathbf{v}.
\end{equation}
or equivalently
\begin{equation}
\label{250408155610_bis}
 \hbar\partial_x \Psi=L\Psi,\qquad \hbar\partial_t \Psi=A\Psi.
\end{equation}
We use $\mathbf{v}$ for a column solution and $\Psi$ for a fundamental matrix, whose two columns are independent solutions. The relation with isomonodromic deformations is the usual one \cite{JimboMiwaUeno} and we recall that the $\hbar$-deformed version is obtained through a rescaling of all quantities \cite{MarchalOrantinAlameddine2022,MarchalAlameddineP1Hierarchy2023}.

\begin{lemma}[Flatness and compatibility condition]
\label{250408155701}
The compatibility condition of the Painlev\'{e}~I Lax pair is
\begin{equation}
\label{250408171411}
 \hbar\partial_tL-\hbar\partial_xA+[L,A]=0.
\end{equation}
For matrices \eqref{241115131339} with arbitrary $(q,p)$, it is equivalent to the Painlev\'{e}~I system \eqref{250408131147}.
\end{lemma}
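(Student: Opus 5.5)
Two things have to be checked: that \eqref{250408171411} is the right integrability condition, and that, for the matrices \eqref{241115131339} with $(q,p)$ arbitrary functions of $t$, its four scalar entries reduce exactly to \eqref{250408131147}.

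\textbf{The compatibility condition.} Suppose $\Psi$ is an invertible solution of \eqref{250408155610_bis}. Then
\begin{equation*}
\hbar^2\partial_t\partial_x\Psi=\hbar\partial_t(L\Psi)=(\hbar\partial_tL+LA)\Psi,
\qquad
\hbar^2\partial_x\partial_t\Psi=(\hbar\partial_xA+AL)\Psi .
\end{equation*}
Equating mixed derivatives and cancelling $\Psi$ gives $\hbar\partial_tL-\hbar\partial_xA+[L,A]=0$. Conversely, this is the zero-curvature condition for the connection $\hbar d-L\,dx-A\,dt$. By Frobenius, it guarantees a local fundamental solution. So \eqref{250408171411} is the right condition.

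\textbf{Computing the entries.} I will treat $q,p$ as functions of $t$ alone, so $\partial_x$ acts only through the explicit $x$.

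The two derivative terms are
\begin{equation*}
\hbar\partial_tL=\begin{pmatrix}\hbar p'&-4\hbar q'\\ x\hbar q'+2q\hbar q'+\tfrac{\hbar}{2}&-\hbar p'\end{pmatrix},
\qquad
\hbar\partial_xA=\begin{pmatrix}0&0\\ \tfrac{\hbar}{2}&0\end{pmatrix}.
\end{equation*}

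Write $L=\begin{pmatrix}a&b\\c&-a\end{pmatrix}$ and $A=\begin{pmatrix}0&2\\d&0\end{pmatrix}$ with $d=\tfrac x2+q$. Then
\begin{equation*}
[L,A]=\begin{pmatrix}bd-2c&4a\\-2ad&2c-bd\end{pmatrix}.
\end{equation*}

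The individual entries are as follows.

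The $(1,2)$ entry is $-4\hbar q'+4p=0$, which is exactly $p=\hbar\partial_tq$.

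The $(2,1)$ entry is $\hbar q'(x+2q)-p(x+2q)$. The two $\tfrac{\hbar}{2}$ terms cancel. Since $x+2q=2d$, this entry is proportional to $\hbar q'-p$ and vanishes by the $(1,2)$ equation.

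The $(1,1)$ entry needs
\begin{equation*}
bd-2c=4(x-q)\bigl(\tfrac x2+q\bigr)-2\bigl(x^2+xq+q^2+\tfrac t2\bigr).
\end{equation*}
Expanding, $4(x-q)(\tfrac x2+q)=2x^2+2xq-4q^2$. Subtracting $2x^2+2xq+2q^2+t$ leaves $-6q^2-t$. The $x$-dependence cancels, and the entry is $\hbar p'-6q^2-t=0$. Combined with $p=\hbar q'$, this gives $\hbar^2q''=6q^2+t$.

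The $(2,2)$ entry is the negative of the $(1,1)$ entry, so it adds nothing.

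\textbf{Conclusion.} The matrix identity holds identically in $x$ if and only if $p=\hbar\partial_tq$ and $\hbar\partial_tp=6q^2+t$, which is \eqref{250408131147}. Both directions of the equivalence follow.

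The only place needing care is the exact cancellation of the $x^2$ and $x$ terms in the $(1,1)$ entry. That cancellation is what makes the condition independent of $x$. It is routine arithmetic, and I have checked it above.
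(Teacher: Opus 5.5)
Your proof is correct. The mixed-derivative argument yields the zero-curvature condition, and your entrywise computation reduces it exactly to $p=\hbar\partial_t q$ and $\hbar\partial_t p=6q^2+t$; the key facts are $bd-2c=-6q^2-t$ and the factorisation of the $(2,1)$ entry as $(x+2q)(\hbar\partial_t q-p)$. The paper only calls this verification standard and straightforward without writing it out, so your argument is the same direct computation it has in mind.
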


\begin{proof}The proof is standard and straightforward.\end{proof}

The Lax matrices are polynomial in $x$. Thus the classical turning points introduced below are ordinary points of the exact system since only $\{\infty\}$ is an irregular point of the connection. Also, $\det\Psi$ is independent of $(x,t)$, since both matrices are traceless. For fixed nonzero $\hbar$, every local fundamental matrix $\Psi$ continues to an entire fundamental matrix in $x$ as long as $(q,p)$ are finite.

\subsection{The scalar quantum curve}
\label{250717105422}

On a domain avoiding $x=q(t,\hbar)$, the first component $\psi$ of $\mathbf{v}$ determines the second component by
\begin{equation}
\label{eq:scalar-reconstruction}
 \mathbf{v}=\begin{pmatrix}\psi\\
        \frac{\hbar\partial_x\psi-p\psi}{4(x-q)}\end{pmatrix}.
\end{equation}
Eliminating that component gives the quantum curve
\begin{equation}
\label{250408193221}
 \mathcal P\psi=\hbar^2\partial_x^2\psi
   -\frac{\hbar^2}{x-q}\partial_x\psi
   -\left(4x^3+2tx+p^2-4q^3-2tq
                       -\frac{\hbar p}{x-q}\right)\psi=0.
\end{equation}
To correspond to the \emph{full} Lax pair, $\psi$ must also satisfy
\begin{equation}
\label{eq:scalar-time-equation}
 \hbar\partial_t\psi
       =\frac{\hbar\partial_x\psi-p\psi}{2(x-q)}.
\end{equation}
Conversely, suppose that $\psi$ satisfies both the scalar spatial
equation \eqref{250408193221} and the scalar time equation
\eqref{eq:scalar-time-equation}, and assume the flatness identity
\eqref{250408171411}. On a domain where $x\neq q(t,\hbar)$, set
\begin{equation*}  
\chi\coloneqq
        \frac{\hbar\partial_x\psi-p\psi}{4(x-q)},
    \qquad
    \mathbf{v}\coloneqq
        \begin{pmatrix}\psi\\ \chi\end{pmatrix}.
\end{equation*}
We claim that $\mathbf{v}$ satisfies both equations in \eqref{250408155610}. By definition of $\chi$, $\hbar\partial_x\psi=p\psi+4(x-q)\chi$ so the first component of $\hbar\partial_x\mathbf{v}=L\mathbf{v}$ is verified. Since $q$ and $p$ are independent of $x$, differentiation with respect to $x$ gives
\begin{equation*}
   \hbar\partial_x\chi
      -\left(x^2+xq+q^2+\frac{t}{2}\right)\psi
      +p\chi
    =
    \frac{\mathcal P\psi}{4(x-q)}
    =0. 
\end{equation*}
Thus $\hbar\partial_x\mathbf{v}=L\mathbf{v}$. The scalar time equation \eqref{eq:scalar-time-equation} gives $\hbar\partial_t\psi=2\chi$, which is the first component of $\hbar\partial_t\mathbf{v}=A\mathbf{v}$. To prove the second component, define the residual $\mathbf{r}\coloneqq\hbar\partial_t\mathbf{v}-A\mathbf{v}$. Using the spatial equation already proved and the flatness
identity \eqref{250408171411}, we obtain
\begin{equation*}
    \hbar\partial_x\mathbf{r}=\hbar(\partial_xA) \mathbf{v} -[L,A]\mathbf{v}+ \hbar L \partial_t \mathbf{v}-\hbar (\partial_x A)\mathbf{v}-AL\mathbf{v}= L(-A\mathbf{v} +\hbar \partial_t \mathbf{v})=L\mathbf{r}.
\end{equation*}
The first component of $\mathbf{r}$ vanishes, so we may write $\mathbf{r}=(0,\rho)^{\mathsf T}$. The first component of this last differential equation is therefore $0=4(x-q)\rho$. Since $x\neq q$, we conclude that $\rho=0$. Hence
$\hbar\partial_t\mathbf{v}=A\mathbf{v}$, completing the reconstruction of the full Lax pair.
\medskip

We call scalar solutions satisfying both \eqref{250408193221} and \eqref{eq:scalar-time-equation} \emph{time-compatible}. The spatial equation \eqref{250408193221} alone allows arbitrary $t$-dependent linear combinations of a local basis of its
solutions, because it differentiates only with respect to $x$. The scalar time equation imposes the additional compatibility in $t$. The poles at $x=q$ belong to this scalar representation. They do not occur in \eqref{241115131339}: our construction at the double turning point will therefore use the matrix system directly.

\section{The formal WKB wave matrix}
\label{250408124547}

This section states the formal results and the construction behind them. Their proofs are collected in \autoref{250109110415}. The key choice is to construct the projective ratio from the time equation, whose leading eigenvalues remain distinct at the double turning point.

\subsection{The formal system and its spectral cover}
\label{250408124856}
\label{250408211930}

Substitution of $(\hat q,\hat p)$ from \eqref{250205091941} into \eqref{241115131339} defines $\hat L$ and $\hat A$. They obey the formal zero-curvature identity and the formal system is
\begin{equation}
\label{250408174515}
 \hbar\partial_x\hat{\mathbf{v}}=\hat L\hat{\mathbf{v}},\qquad
 \hbar\partial_t\hat{\mathbf{v}}=\hat A\hat{\mathbf{v}}.
\end{equation}
Formal expressions and differentiation are understood coefficientwise, as specified in \autoref{250410111125}. The leading matrices satisfy
\begin{equation}
 L_0=2(x-\tau)A_0,\qquad
 A_0=\begin{pmatrix}0&2\\\frac{x}{2}+\tau&0\end{pmatrix}.
\end{equation}
Their eigenvalues are respectively $\pm y$ and $\pm w$, with
\begin{equation}
\label{250706182441}
 w^2=x+2\tau,\qquad y=2(x-\tau)w.
\end{equation}
For fixed $\tau\neq0$, the spatial characteristic cubic $y^2=4(x-\tau)^2(x+2\tau)$ is nodal at $(x,y)=(\tau,0)$. At $\tau=0$ it is cuspidal ($y^2=4x^3$). Its normalisation is $w^2=x+2\tau$, with the map given by \eqref{250706182441}. The map is finite and generically one-to-one, and the source is smooth. The two branches over the node are $w=\pm\sqrt{3\tau}$. The node and cusp assertions follow from the quadratic terms and, at $\tau=0$, the equation $y^2=4x^3$. The common normalised spectral surface and the domain used below are
\begin{equation}
\label{250409162453}
 \mathsf\Sigma\coloneqq\{(x,\tau,y,w):w^2=x+2\tau,\ y=2(x-\tau)w\},
 \qquad
 \mathring{\mathsf\Sigma}\coloneqq\{\tau w\neq0\}\subset\mathsf\Sigma.
\end{equation}
The surface is smooth with global coordinates $(\tau,w)$, $x=w^2-2\tau$ and $y=2(x-\tau)w$. Its projection to $(x,\tau)$ is ramified only at $w=0$. In particular, we retain both points over $x=\tau$.
We use $\mathrm d$ for exterior differentiation on the spectral
surface $\mathsf\Sigma$. Its global coordinates are $(\tau,w)$,
while $x$ and $t$ are the functions $x=w^2-2\tau$ and $t=-6\tau^2$. Consequently,
\begin{equation*}
 \mathrm dx=2w\,\mathrm dw-2\,\mathrm d\tau,\qquad \mathrm dt=-12\tau\,\mathrm d\tau.   
\end{equation*}
The eigenvalue one-form $\lambda$  has an explicit global primitive $Z$ on $\mathsf\Sigma$:
\begin{equation}
\label{250716160458lambda}
    \lambda\coloneqq y\,\mathrm dx+w\,\mathrm dt
        =\mathrm dZ,
    \qquad
    Z\coloneqq\frac45(x-3\tau)w^3
        =\frac45w^5-4\tau w^3.
\end{equation}
Indeed, since $y=2(w^2-3\tau)w$, differentiation in the independent coordinates $(\tau,w)$ gives
\begin{align*}
    \mathrm dZ
      &=4(w^2-3\tau)w^2\,\mathrm dw-4w^3\,\mathrm d\tau=2(w^2-3\tau)w
          \bigl(2w\,\mathrm dw-2\,\mathrm d\tau\bigr)
          -12\tau w\,\mathrm d\tau\\
      &=y\,\mathrm dx+w\,\mathrm dt.
\end{align*}
The primitive is global on $\mathsf\Sigma$ because $Z=\frac45w^5-4\tau w^3$ is a polynomial in its global coordinates $(\tau,w)$. On $\mathring{\mathsf\Sigma}$, where $\tau w\neq0$, the pair $(x,t)$ also gives local coordinates. After fixing local branches $\tau=\tau(t)$ and $w=w(x,t)$, the preceding identity becomes
\begin{equation*}
    \mathrm dZ=\left.\frac{\partial Z}{\partial x}\right|_t\mathrm dx+\left.\frac{\partial Z}{\partial t}\right|_x\mathrm dt,
    \qquad
    \left.\frac{\partial Z}{\partial x}\right|_t=y,
    \qquad
    \left.\frac{\partial Z}{\partial t}\right|_x=w.
\end{equation*}
In particular, differentiation with respect to $x$ at fixed $t$
keeps $\tau$ fixed but changes $w$, whereas differentiation
with respect to $t$ at fixed $x$ changes both $\tau$ and $w$:
\begin{equation*}
    \left.\frac{\partial\tau}{\partial x}\right|_t=0,
    \qquad
    \frac{\mathrm d\tau}{\mathrm dt}=-\frac{1}{12\tau},
    \qquad
    \left.\frac{\partial w}{\partial x}\right|_t=\frac{1}{2w},
    \qquad
    \left.\frac{\partial w}{\partial t}\right|_x =-\frac{1}{12\tau w}.
\end{equation*}

\subsection{The formal time Riccati equation}
To construct formal WKB columns, we first consider the ratio
of their second component to their first component.
Indeed, any column solution
$\hat{\mathbf v}=(\hat v_1,\hat v_2)^{\mathsf T}$
of \eqref{250408174515} satisfies
\[
    \hbar\partial_t\hat v_1=2\hat v_2,
    \qquad
    \hbar\partial_t\hat v_2
       =\left(\frac{x}{2}+\hat q\right)\hat v_1.
\]
Whenever the quotient $\hat r=\frac{\hat v_2}{\hat v_1}$ is defined,
differentiating at fixed $x$ gives
\begin{equation}\label{diffrt}
    \hbar\partial_t\hat r
      =
      \frac{
        (\hbar\partial_t\hat v_2)\hat v_1
        -\hat v_2(\hbar\partial_t\hat v_1)
      }{\hat v_1^2}
      =\frac{x}{2}+\hat q-2\hat r^2.
\end{equation}

We therefore first construct formal series
\[
    \hat r_\varepsilon
       =\sum_{n=0}^{\infty}
           r_{\varepsilon,n}(\tau,w)\hbar^n,
    \qquad \varepsilon\in\{+1,-1\},
\]
by imposing the time Riccati equation
\begin{equation}
\label{eq:formal-time-Riccati}
    \hbar\partial_t\hat r_\varepsilon
       =\frac{x}{2}+\hat q-2\hat r_\varepsilon^2,
    \qquad
    r_{\varepsilon,0}=\frac{\varepsilon w}{2}.
\end{equation}
The two possible leading terms follow from
$q_0=\tau$ and $w^2=x+2\tau$.
Writing $\hat q=\underset{n\geq 0}{\sum}q_n\hbar^n$, we have
\begin{equation}
\label{eq:formal-r-recursion}
    r_{\varepsilon,n}
      =
      \frac{1}{4r_{\varepsilon,0}}\left(
        q_n-\partial_t r_{\varepsilon,n-1}
        -2\sum_{j=1}^{n-1}
             r_{\varepsilon,j}r_{\varepsilon,n-j}\right)
      ,
    \qquad n\geq1.
\end{equation}
Since $4r_{\varepsilon,0}=2\varepsilon w$ does not vanish
on $\mathring{\mathsf\Sigma}$, this recursion together with \autoref{250408135328} determines a unique formal series with holomorphic coefficients there.
In particular, it remains regular over $x=\tau\neq0$.


\subsection{Existence, normalisation and continuation}
\label{250408124908}
In this section we state the main results regarding existence of formal WKB wave matrix solving the Painlev\'{e} Lax system. 

\begin{theorem}[Formal WKB columns]
\label{241113190220}
Fix a point $b_0=(x_0,t_0)$ and branches $(\tau,w)$ near $b_0$ such that $t_0=-6\tau_0^2$ and $\tau_0w_0\neq0$. For each $\varepsilon\in\{+1,-1\}$ and nonzero transmonomial
\begin{equation}
 \hat c_\varepsilon(\hbar)\coloneqq\hbar^{m_\varepsilon}\hat b_\varepsilon(\hbar)e^{-\frac{\kappa_\varepsilon}{\hbar}},\qquad
 m_\varepsilon \in \mathbb{Z}_{\geq 0},\quad \hat b_\varepsilon\in\mathbb C[[\hbar]]^\times, \kappa_\varepsilon\in\mathbb C
\end{equation}
there is a unique germ at $(\tau_0,w_0)$ of a formal WKB column of \eqref{250408174515} of the form
\begin{equation}
\label{eq:formal-column-construction}
 \hat{\mathbf{v}}_\varepsilon
   =\hbar^{m_\varepsilon}\hat a_\varepsilon(\tau,w;\hbar)
       \binom{1}{\hat r_\varepsilon}e^{\frac{S_\varepsilon}{\hbar}},
 \qquad
 S_\varepsilon=\varepsilon\bigl(Z-Z(\tau_0,w_0)\bigr)-\kappa_\varepsilon,
\end{equation}
where $\hat r_\varepsilon$ is the formal series determined by
\eqref{eq:formal-time-Riccati} and
\eqref{eq:formal-r-recursion}. More explicitly,
\begin{equation*}
    \hat r_\varepsilon(\tau,w;\hbar)
       =\frac{\varepsilon w}{2}
        +\sum_{n=1}^{\infty}
            r_{\varepsilon,n}(\tau,w)\hbar^n,
    \qquad
    \hat a_\varepsilon(\tau,w;\hbar)
       =\sum_{n=0}^{\infty}
            a_{\varepsilon,n}(\tau,w)\hbar^n
\end{equation*}
are formal power series with holomorphic coefficients on a sufficiently small connected neighbourhood $U$ of $(\tau_0,w_0)$ in $\mathring{\mathsf\Sigma}$, and
$a_{\varepsilon,0}$ is nowhere zero on $U$. Its normalisation is
$\hat a_\varepsilon(\tau_0,w_0;\hbar)=\hat b_\varepsilon(\hbar)$. Thus the first component of $\hat{\mathbf v}_\varepsilon$ at $(\tau_0,w_0)$ is exactly $\hat c_\varepsilon(\hbar)$. The phase $S_\varepsilon$ is equivalently characterised on $U$ by $\mathrm dS_\varepsilon=\varepsilon\lambda$ and $S_\varepsilon(\tau_0,w_0)=-\kappa_\varepsilon$.
\end{theorem}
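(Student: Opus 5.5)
The construction follows the strategy announced in the introduction. First build the column ratio from the time Riccati equation, then use flatness to get compatibility with the spatial equation, and finally reconstruct the scalar amplitude from a closed one-form. Take $\hat r_\varepsilon$ from \eqref{eq:formal-r-recursion}; it is well defined with holomorphic coefficients on $\mathring{\mathsf\Sigma}$. We seek $\hat v_1=\hbar^{m_\varepsilon}\hat a_\varepsilon e^{S_\varepsilon/\hbar}$ and $\hat v_2=\hat r_\varepsilon\hat v_1$.

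\textbf{The $t$-equation.} The equation $\hbar\partial_t\hat v_1=2\hat v_2$ becomes $\hbar\partial_t\log\hat a_\varepsilon=2\hat r_\varepsilon-\partial_t S_\varepsilon=2\hat r_\varepsilon-\varepsilon w$. The right-hand side is $O(\hbar)$ because $r_{\varepsilon,0}=\varepsilon w/2$. Given this, the second component of the $t$-equation is equivalent to the Riccati equation \eqref{eq:formal-time-Riccati}, by the computation \eqref{diffrt} run backwards.

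\textbf{The $x$-equation.} The first row of $\hbar\partial_x\hat{\mathbf v}=\hat L\hat{\mathbf v}$ reads $\hbar\partial_x\log\hat a_\varepsilon=\hat p+4(x-\hat q)\hat r_\varepsilon-\varepsilon y$. The right-hand side is again $O(\hbar)$: at leading order it equals $4(x-\tau)\varepsilon w/2-\varepsilon y=0$, using $p_0=0$ since $\hat p=\hbar\partial_t\hat q$. We therefore define the formal one-form
\[
\hat\omega_\varepsilon=\hbar^{-1}\Bigl[\bigl(\hat p+4(x-\hat q)\hat r_\varepsilon-\varepsilon y\bigr)\mathrm dx+\bigl(2\hat r_\varepsilon-\varepsilon w\bigr)\mathrm dt\Bigr],
\]
which is a genuine power series in $\hbar$ with holomorphic coefficients on $\mathring{\mathsf\Sigma}$.

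\textbf{Key step: $\mathrm d\hat\omega_\varepsilon=0$.} The $\lambda$-parts are closed because $\lambda=\mathrm dZ$ by \eqref{250716160458lambda}. The remaining part is the logarithmic derivative of $\hat v_1$ for a projective solution. Let $\hat\rho$ denote the $x$-Riccati expression obtained from $\hat L$; it is the quantity for which $\hbar\partial_x\hat r_\varepsilon=\hat\rho$ must hold.

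We would first prove that $\hat r_\varepsilon$ also satisfies the spatial Riccati equation $\hbar\partial_x\hat r_\varepsilon=(x^2+x\hat q+\hat q^2+t/2)-2\hat p\hat r_\varepsilon-4(x-\hat q)\hat r_\varepsilon^2$. Set $\hat\delta=\hbar\partial_x\hat r_\varepsilon-\hat\rho$. Differentiating with respect to $t$ and using the zero-curvature identity \eqref{250408171411} (the projective version of the residual argument of \autoref{250717105422}) should give a linear homogeneous equation $\hbar\partial_t\hat\delta=-4\hat r_\varepsilon\hat\delta$. Its leading coefficient $-2\varepsilon w$ is nonzero, so comparing orders in $\hbar$ forces $\hat\delta=0$ order by order.

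Once both Riccati equations hold, the mixed-partials condition $\partial_t(\text{$x$-coefficient})=\partial_x(\text{$t$-coefficient})$ reduces to an identity. We expect this verification, tracking the dependence of $w$ and $\tau$ on $(x,t)$ recorded in \autoref{250408124856}, to be the main obstacle, though it is a finite algebraic check.

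\textbf{Integration, normalisation and uniqueness.} On a simply connected neighbourhood $U$, set $\log\hat a_\varepsilon=\log\hat b_\varepsilon(\hbar)+\int_{(\tau_0,w_0)}\hat\omega_\varepsilon$. The logarithm $\log\hat b_\varepsilon$ exists because $\hat b_\varepsilon$ is a unit, and exponentiating coefficientwise gives $a_{\varepsilon,0}=b_{\varepsilon,0}\exp(\int\omega_{\varepsilon,0})$, which is nowhere zero. The phase satisfies $\mathrm dS_\varepsilon=\varepsilon\,\mathrm dZ=\varepsilon\lambda$ and $S_\varepsilon(\tau_0,w_0)=-\kappa_\varepsilon$, so the first component at $b_0$ equals $\hat c_\varepsilon$.

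For uniqueness, any column of the form \eqref{eq:formal-column-construction} has ratio solving \eqref{eq:formal-time-Riccati} with leading term $\varepsilon w/2$, and that ratio is unique by the recursion. Its amplitude then satisfies $\mathrm d\log\hat a_\varepsilon=\hat\omega_\varepsilon$ with a fixed value at $(\tau_0,w_0)$, so on a connected $U$ it is determined.
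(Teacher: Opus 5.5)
Your proposal is correct and follows the paper's proof essentially step for step. You build the time-Riccati ratio, show that the spatial defect satisfies $\hbar\partial_t\hat\delta=-4\hat r_\varepsilon\hat\delta$ and therefore vanishes, prove the logarithmic one-form is closed, integrate coefficientwise with normalisation $\hat b_\varepsilon$, and get uniqueness from the ratio recursion together with the exact differential of the amplitude.

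Your emphasis is the only thing that differs from the paper. The defect identity you assert with ``should give'' is where the actual computation lies: it is a polynomial identity in $r$ using only $\hbar\partial_t\hat q=\hat p$ and $\hbar\partial_t\hat p=6\hat q^2+t$. By contrast, the closedness check you flag as the main obstacle is short. Once both Riccati equations hold and $\lambda=\mathrm dZ$ has absorbed the leading terms, it reduces to $\hbar\bigl(\partial_t\hat P_\varepsilon-\partial_x\hat T_\varepsilon\bigr)=0$ by a two-line cancellation, with no need to track how $w$ and $\tau$ vary.
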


\begin{corollary}[Formal WKB wave matrices]
\label{250716102028}
For the prescribed normalisations $\hat c_+(\hbar)$ and $\hat c_-(\hbar)$,
the two columns $(\hat{\mathbf{v}}_+,\hat{\mathbf{v}}_-)$ of \autoref{241113190220} give the uniquely normalised formal WKB wave matrix
\begin{equation}
\label{250421155248}
    \hat\Psi
      =\left[\hat{\mathbf v}_+\ \hat{\mathbf v}_-\right]
      =\hat E(\tau,w;\hbar)
        \exp\left(
          \frac{1}{\hbar}\operatorname{diag}(S_+,S_-)
        \right),
\end{equation}
where
\begin{equation*}
   \hat E
      =
      \begin{pmatrix}
        \hat a_+ & \hat a_-\\
        \hat a_+\hat r_+ & \hat a_-\hat r_-
      \end{pmatrix}
      \operatorname{diag}
        \left(\hbar^{m_+},\hbar^{m_-}\right)
      =\sum_{n=0}^{\infty}E_n(\tau,w)\hbar^n. 
\end{equation*}
After restricting to a common neighbourhood $U$, all coefficients $E_n$ are holomorphic on $U$. If $m_+=m_-=0$, the matrix $\hat E$ is invertible as a matrix
of formal power series with holomorphic coefficients, and its
leading term is
\begin{equation}
\label{eq:formal-leading-amplitude}
    E_0(\tau,w)
      =\left(\frac{w_0}{w}\right)^{\frac{1}{2}}
        \begin{pmatrix}
          1&1\\
          \frac{w}{2}&-\frac{w}{2}
        \end{pmatrix}
        \operatorname{diag}(b_{+,0},b_{-,0}),
    \qquad
    \det E_0=-w_0b_{+,0}b_{-,0}\neq0.
\end{equation}
Here $b_{\varepsilon,0}$ is the constant coefficient of
$\hat b_\varepsilon$, and, after shrinking $U$ if necessary,
the square root is the holomorphic branch equal to $1$
at $(\tau_0,w_0)$.
For arbitrary $m_\pm\in\mathbb Z_{\geq0}$, the matrix
$\hat E$ is invertible over formal Laurent series with
holomorphic coefficients. If $x_0\neq\tau_0$, then, after possibly shrinking $U$,
the first-row components are the corresponding uniquely
normalised time-compatible formal WKB solutions of the
scalar quantum curve.
\end{corollary}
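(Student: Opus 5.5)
The corollary follows from \autoref{241113190220}, applied once for each $\varepsilon=\pm1$ with the two prescribed transmonomials. The remaining work is to identify the leading amplitude $a_{\varepsilon,0}$, check invertibility, and compare with the scalar quantum curve.

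\emph{Assembly and holomorphy.} Take $U$ to be the intersection of the two neighbourhoods supplied by \autoref{241113190220}, shrunk to be connected. Each column then has the form $\hbar^{m_\varepsilon}\hat a_\varepsilon(1,\hat r_\varepsilon)^{\mathsf T}e^{S_\varepsilon/\hbar}$. Collecting the exponentials into $\exp(\hbar^{-1}\operatorname{diag}(S_+,S_-))$ gives the factorisation \eqref{250421155248}. The coefficients $E_n$ are finite sums of products of the holomorphic coefficients $a_{\varepsilon,j}$ and $r_{\varepsilon,k}$, hence holomorphic on $U$. Uniqueness of $\hat\Psi$ follows from uniqueness of each normalised column.

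\emph{Leading term.} I would extract the order-$\hbar^0$ equation for $a_{\varepsilon,0}$ from the time equation. Write the first component as $\hat v_1=\hat a_\varepsilon e^{S_\varepsilon/\hbar}$ and use $\hbar\partial_t\hat v_1=2\hat v_2=2\hat r_\varepsilon\hat v_1$. This gives
\[
\hbar\partial_t\hat a_\varepsilon+(\partial_tS_\varepsilon)\hat a_\varepsilon=2\hat r_\varepsilon\hat a_\varepsilon .
\]
At order $\hbar^0$ this is the eikonal identity $\partial_tS_\varepsilon=\varepsilon w=2r_{\varepsilon,0}$. At order $\hbar^1$ it gives the transport equation $\partial_t\log a_{\varepsilon,0}=2r_{\varepsilon,1}$.

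The spatial equation gives the analogous relation $\partial_x\log a_{\varepsilon,0}=(\text{order-}\hbar\text{ term of }L\text{-eigenvector equation})$. I expect the combined closed one-form to be $-\tfrac12\,\mathrm d\log w$. From \eqref{eq:formal-r-recursion}, $r_{\varepsilon,1}=\frac{q_1-\partial_t r_{\varepsilon,0}}{2\varepsilon w}=-\frac{\partial_t w}{4w}$, because $q_1=0$. So the $t$-component is $-\tfrac12\partial_t\log w$. The $x$-component I would check via the first row of the $L$-equation at order $\hbar$. Integrating with the normalisation $a_{\varepsilon,0}(\tau_0,w_0)=b_{\varepsilon,0}$ gives $a_{\varepsilon,0}=b_{\varepsilon,0}(w_0/w)^{1/2}$. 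Together with $r_{\varepsilon,0}=\varepsilon w/2$ this is \eqref{eq:formal-leading-amplitude}. The determinant is $(w_0/w)\cdot(-w)\,b_{+,0}b_{-,0}=-w_0b_{+,0}b_{-,0}$, which is nonzero since $w_0\neq0$ and the $\hat b_\varepsilon$ are units.

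\emph{Invertibility.} When $m_\pm=0$, $\det\hat E$ is a formal power series whose constant term $\det E_0$ is a nonvanishing holomorphic function on $U$. Its reciprocal is therefore a formal series with holomorphic coefficients, and the adjugate formula inverts $\hat E$. For general $m_\pm$, $\hat E=\hat E'\operatorname{diag}(\hbar^{m_+},\hbar^{m_-})$, where $\hat E'$ is the $m_\pm=0$ matrix. It is then invertible over formal Laurent series, with inverse $\operatorname{diag}(\hbar^{-m_+},\hbar^{-m_-})\hat E'^{-1}$.

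\emph{Scalar statement.} If $x_0\neq\tau_0$, I would shrink $U$ so that $x-\tau$ is nonvanishing on it. Then $x-\hat q$ is invertible as a formal series, and the formal analogue of the reconstruction argument in \autoref{250717105422} applies. The first component of a column solving both equations of \eqref{250408174515} satisfies \eqref{250408193221} and \eqref{eq:scalar-time-equation}. Conversely, a time-compatible scalar solution reconstructs a full column. Normalisation at $(\tau_0,w_0)$ fixes the first component to be $\hat c_\varepsilon$, so uniqueness transfers.

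\emph{Main obstacle.} I expect the hardest step to be verifying the $x$-component of the transport equation, i.e.\ that $a_{\varepsilon,0}$ is exactly $(w_0/w)^{1/2}$ rather than having an extra factor such as $(x-\tau)^{-1/2}$ that the spatial equation alone would suggest. I would settle this by checking the order-$\hbar$ spatial equation using $L_0=2(x-\tau)A_0$. That identity makes the $x$- and $t$-transport consistent, with no $(x-\tau)$ factor.
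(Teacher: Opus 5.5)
Your proposal follows the paper's proof essentially step for step: assembling the two columns of the column theorem, reading $a_{\varepsilon,0}$ off the order-$\hbar$ transport equations, proving invertibility through $\det E_0$, and transferring uniqueness through formal scalar reconstruction; the paper packages the transport equations as the constant coefficient $\beta_{\varepsilon,0}=P_{\varepsilon,1}\,\mathrm dx+T_{\varepsilon,1}\,\mathrm dt$ of the closed form $\hat\beta_\varepsilon$. The $x$-component you leave open does close, but the decisive input is the first-order diagonal term $p_1=\partial_t\tau=-\frac{1}{12\tau}$ of $\hat L$, not $L_0=2(x-\tau)A_0$ alone: one gets $\partial_x\log a_{\varepsilon,0}=P_{\varepsilon,1}=p_1+4(x-\tau)r_{\varepsilon,1}=-\frac{1}{12\tau}+\frac{x-\tau}{12\tau w^2}=-\frac{1}{4w^2}=-\frac12\partial_x\log w$ by $w^2=x+2\tau$, whereas dropping $p_1$ would leave the incorrect value $\frac{x-\tau}{12\tau w^2}$.
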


\begin{proof}The proof of \autoref{241113190220} and \autoref{250716102028} is done in \autoref{250408124924}. It has three steps: solve the time Riccati recursion, use flatness to deduce the spatial Riccati equation, and integrate a closed logarithmic-amplitude form. The construction does not divide by the spatial eigenvalue $y$. Consequently, the formal amplitude coefficients remain
holomorphic at the double turning point $x=\tau\neq0$. An independent proof by matrix coefficients is given in \autoref{app:matrix-formal-construction}. It retains the classical eigenbasis gauge and the separate off-diagonal and diagonal compatibility calculations, for use in more general Lax systems.
\end{proof}

In particular, taking $\hat c_+(\hbar)=\hat c_-(\hbar)=1$ specifies the \emph{well-normalised} matrix at $b_0$. This means that the first component of each column equals $1$ at $b_0$, coefficientwise in $\hbar$. Indeed, this choice gives $m_\pm=0$, $\kappa_\pm=0$, $\hat b_\pm=1$, and hence $S_\pm(b_0)=0$ and $\hat a_\pm(b_0;\hbar)=1$.
Consequently, since $S_\pm(b_0)=0$,
\[
    \hat\Psi(b_0;\hbar)
      =
      \begin{pmatrix}
        1&1\\
        \hat r_+(b_0;\hbar)&\hat r_-(b_0;\hbar)
      \end{pmatrix}
      =G_0(b_0)+O(\hbar),
    \qquad
    G_0(b_0)\coloneqq
      \begin{pmatrix}
        1&1\\
        \frac{w_0}{2}&-\frac{w_0}{2}
      \end{pmatrix}.
\]
Thus the leading matrix at $b_0$ is the common eigenvector
matrix of $L_0(b_0)$ and $A_0(b_0)$, with each eigenvector
normalised to have $1$ as a first component.

\medskip

The local results obtained in \autoref{241113190220} and \autoref{250716102028} can be used to extend the formal quantities as explained in the following proposition.

\begin{proposition}[Formal fundamental solutions and continuation]
\label{250109214318}
\label{250708114601}
\label{250716170516}
Assume $m_+=m_-=0$, so that the prescribed normalisations $\hat c_\pm$ are units of the constant finite-exponential algebra $\mathcal T$ defined in \autoref{250410111125}. Then the columns of $\hat\Psi$ of \autoref{250716102028} form a $\mathcal T$-basis of the formal solutions in the algebra of germs defined there. This includes, but does not require, the well-normalised choice $\hat c_+=\hat c_-=1$.
With the ordered phases fixed, two such WKB matrices differ by a
unique diagonal matrix of units of $\mathbb C[[\hbar]]$, independent
of $(x,t)$. Allowing additive constants in the phases introduces the
corresponding constant exponential factors: interchanging the two
eigenlines interchanges the columns.

Each formal WKB column continues coefficientwise along every path
in $\mathring{\mathsf\Sigma}$, hence to its universal cover. Its
$\hbar$-adic order and the nonvanishing of its leading amplitude
are preserved. The same assertions hold for the wave matrix.
\end{proposition}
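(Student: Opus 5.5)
The plan is to work in the algebra of formal germs over $\mathcal T$ (constant finite-exponential transmonomials) and reduce everything to a Wronskian-type argument plus the uniqueness part of \autoref{241113190220}.

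\textbf{Step 1: the columns form a basis.} Since $m_\pm=0$, the matrix $\hat E$ is invertible over $\mathbb C[[\hbar]]$ with holomorphic coefficients, by \autoref{250716102028}. The exponential factor $\exp(\hbar^{-1}\operatorname{diag}(S_+,S_-))$ is invertible as well. Given any formal solution $\hat{\mathbf v}$ in the algebra of germs, I set $\hat{\mathbf c}\coloneqq\hat\Psi^{-1}\hat{\mathbf v}$. Differentiating $\hat{\mathbf v}=\hat\Psi\hat{\mathbf c}$ and using $\hbar\partial_x\hat\Psi=\hat L\hat\Psi$ and $\hbar\partial_t\hat\Psi=\hat A\hat\Psi$ gives $\hat\Psi\,\partial\hat{\mathbf c}=0$, hence $\partial_x\hat{\mathbf c}=\partial_t\hat{\mathbf c}=0$. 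This is applied componentwise on each exponential sector of the germ algebra. The key claim is that a germ annihilated by both $\partial_x$ and $\partial_t$ lies in $\mathcal T$. This should follow from the structure of the germ algebra in \autoref{250410111125}: within each exponential sector the phases $\pm Z/\hbar$ are not constant, and $dZ=\lambda\neq0$ on $\mathring{\mathsf\Sigma}$, so after multiplying by $\hat\Psi^{-1}$ only constant exponents survive. Uniqueness of the coefficients follows from the invertibility of $\hat\Psi$.

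\textbf{Step 2: diagonal ambiguity.} Let $\hat\Psi'$ be another WKB matrix with the same ordered phases. Then $\hat\Psi^{-1}\hat\Psi'=C$ is constant by Step 1. Its off-diagonal entries multiply $e^{\pm(S_+-S_-)/\hbar}=e^{\pm 2\varepsilon Z/\hbar}$ terms, and since $Z$ is nonconstant these cannot be matched by the pure power-series amplitudes. Comparing exponential types therefore forces $C$ to be diagonal. Its diagonal entries are ratios $\hat a'_\varepsilon/\hat a_\varepsilon$ of units, which are constants in $\mathbb C[[\hbar]]^\times$. An equivalent route is to apply the uniqueness clause of \autoref{241113190220} to the rescaled normalisation.

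\textbf{Step 3: phase shifts and swapping.} Shifting $S_\varepsilon$ by a constant $-\kappa$ multiplies the column by $e^{-\kappa/\hbar}$. Interchanging the eigenlines $\varepsilon\mapsto-\varepsilon$ permutes the columns, since $\hat r_\varepsilon$ is determined by $r_{\varepsilon,0}=\varepsilon w/2$.

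\textbf{Step 4: continuation.} I expect this step to be the main obstacle. It requires showing that the local germs glue. Cover a path in $\mathring{\mathsf\Sigma}$ by small discs $U_j$ as in \autoref{241113190220}. On each overlap, the local constructions share the same Riccati series $\hat r_\varepsilon$: the recursion \eqref{eq:formal-r-recursion} is global and holomorphic on $\mathring{\mathsf\Sigma}$, because $w\neq0$ and $\tau\neq0$. The amplitude is obtained by integrating a closed logarithmic form. On each disc, I choose the normalisation so that it agrees with the preceding disc at a point of their overlap. Local uniqueness then makes the two agree on the whole connected overlap. This produces coefficientwise analytic continuation along the path, and homotopy invariance, via the closedness of the form, gives well-definedness on the universal cover. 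The $\hbar$-order and the nonvanishing of $a_{\varepsilon,0}$ are preserved because $a_{\varepsilon,0}=(w_0/w)^{1/2}b_{\varepsilon,0}$ continues without zeros on $\mathring{\mathsf\Sigma}$. Applying this to both columns gives the statement for the wave matrix.
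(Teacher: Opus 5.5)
Your proposal is correct and follows the paper's proof in substance. The basis assertion comes from $\hat\Psi^{-1}\hat{\mathbf v}$ together with the kernel computation of \autoref{250410111125}. The diagonal comparison comes from uniqueness on each eigenline, and your exponential-type comparison is an equivalent variant of this. Continuation rests on the fact that the coefficients of $\hat r_\varepsilon$ are rational in $(\tau,w)$ with poles only on $\tau w=0$, so $\hat\beta_\varepsilon$ is a global closed form on $\mathring{\mathsf\Sigma}$ with primitives on the universal cover; your chain of discs simply re-derives this.

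One correction to Step~1: non-constant phases are excluded from the common kernel by the leading-order identity $(\partial_z S_j)\,a_{j,m_j}=0$, valid for every phase germ $S_j$ in the support. This is exactly the calculation in \autoref{250410111125}; the reason is not that $\mathrm dZ\neq0$.
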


\begin{proof}The proof is done in \autoref{AppendixB2}.
\end{proof}

The phase $Z-Z(b_0)$ itself is already single-valued on $\mathring{\mathsf\Sigma}$. The universal cover is needed for the amplitude primitives and their square roots. This distinguishes phase continuation from amplitude continuation. The following proposition gives the explicit reconstruction of the amplitudes and relates the matrix construction to the scalar quantum curve.

\begin{proposition}[Amplitude reconstruction and scalar formulation]
\label{prop:formal-amplitude-reconstruction}
Use the notation and normalisations of \autoref{241113190220}.
For each $\varepsilon\in\{+1,-1\}$, set
\begin{equation}
\label{eq:formal-log-derivatives}
    \hat P_\varepsilon
       \coloneqq\hat p+4(x-\hat q)\hat r_\varepsilon,
    \qquad
    \hat T_\varepsilon\coloneqq2\hat r_\varepsilon.
\end{equation}
Then the ordinary formal one-form
\begin{equation}
\label{eq:formal-amplitude-form}
    \hat\beta_\varepsilon
       \coloneqq\frac{\hat P_\varepsilon-\varepsilon y}{\hbar}
            \,\mathrm dx
        +\frac{\hat T_\varepsilon-\varepsilon w}{\hbar}
            \,\mathrm dt
\end{equation}
has holomorphic coefficients on $\mathring{\mathsf\Sigma}$
and is closed. The amplitude in
\eqref{eq:formal-column-construction} is given by
\begin{equation}
\label{eq:formal-amplitude-primitive}
    \hat a_\varepsilon
       =\hat b_\varepsilon
          \exp\left(
             \int_{b_0}^{(x,t)}\hat\beta_\varepsilon
          \right).
\end{equation}
The integral is taken coefficientwise on a sufficiently small
simply connected neighbourhood of $b_0$, or along a prescribed
path on the universal cover of $\mathring{\mathsf\Sigma}$. At every point where $x\neq\tau$, the series
$\hat P_\varepsilon$ is the unique formal solution with
leading term $\varepsilon y$ of the spatial Riccati equation
\begin{equation}
\label{250714195543}
    \hbar\partial_x\hat P_\varepsilon+\hat P_\varepsilon^2
    -\frac{\hbar\hat P_\varepsilon}{x-\hat q}
    -\left(
        4x^3+2tx+\hat p^2-4\hat q^3-2t\hat q
        -\frac{\hbar\hat p}{x-\hat q}
     \right)=0.
\end{equation}
Moreover, if $m_+=m_-=0$, the first-row components of $\hat\Psi$ form a $\mathcal T$-basis of the germs of time-compatible formal scalar solutions at such a point.
Here time compatibility means satisfying both \eqref{250408193221} and \eqref{eq:scalar-time-equation}, interpreted formally.
\end{proposition}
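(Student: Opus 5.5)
The plan is to work directly with the formal column $\hat{\mathbf v}_\varepsilon=\hat a_\varepsilon(1,\hat r_\varepsilon)^{\mathsf T}e^{S_\varepsilon/\hbar}$ of \autoref{241113190220} (with $m_\varepsilon=0$ for simplicity; the factor $\hbar^{m_\varepsilon}$ is a constant and plays no role) and to read off its logarithmic derivatives. The first component of $\hbar\partial_x\hat{\mathbf v}=\hat L\hat{\mathbf v}$ gives $\hbar\partial_x\hat v_1=\hat p\hat v_1+4(x-\hat q)\hat v_2$. Dividing by $\hat v_1$ yields $\hbar\partial_x\log\hat v_1=\hat P_\varepsilon$. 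Likewise, the time equation gives $\hbar\partial_t\log\hat v_1=2\hat r_\varepsilon=\hat T_\varepsilon$. Since $\hat v_1=\hat a_\varepsilon e^{S_\varepsilon/\hbar}$ and $\mathrm dS_\varepsilon=\varepsilon\lambda=\varepsilon(y\,\mathrm dx+w\,\mathrm dt)$, we obtain $\mathrm d\log\hat a_\varepsilon=\hat\beta_\varepsilon$ at the level of formal series.

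The steps, in order, are as follows.
\begin{enumerate}
\item Show that $\hat\beta_\varepsilon$ has holomorphic coefficients. The leading terms are $P_{\varepsilon,0}=\tau+4(x-\tau)\varepsilon w/2=\varepsilon y+\tau$ (note $p_0=0$ since $\hat p=\hbar\partial_t\hat q$), so I need to recheck this. In fact $\hat p=O(\hbar)$, so $P_{\varepsilon,0}=2\varepsilon(x-\tau)w=\varepsilon y$ and $T_{\varepsilon,0}=\varepsilon w$. The $\hbar^{-1}$ terms therefore cancel. All coefficients are polynomial in $\hat q$, $\hat p$ and $r_{\varepsilon,n}$, which are holomorphic on $\mathring{\mathsf\Sigma}$ by \eqref{eq:formal-r-recursion} and \autoref{250408135328}.
\item Prove closedness of $\hat\beta_\varepsilon$. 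I would argue this independently of the column, in order to justify the primitive. The condition $\partial_t\hat P_\varepsilon=\partial_x\hat T_\varepsilon$ is exactly the compatibility extracted in the proof of \autoref{241113190220}: flatness \eqref{250408171411} together with the time Riccati equation implies the spatial Riccati equation for $\hat r_\varepsilon$, and then the mixed derivatives agree. The $\varepsilon y$ and $\varepsilon w$ parts are closed because $\lambda=\mathrm dZ$.
\item Derive the primitive formula \eqref{eq:formal-amplitude-primitive}. Both sides satisfy $\mathrm d\log(\cdot)=\hat\beta_\varepsilon$ and take the value $\hat b_\varepsilon$ at $b_0$. Uniqueness of the primitive on a simply connected neighbourhood, or along a lifted path on the universal cover by \autoref{250109214318}, gives equality coefficientwise.
\end{enumerate}

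For the scalar statements, away from $x=\tau$ the denominator $x-\hat q$ is invertible formally, since its leading term is $x-\tau\neq0$. Substituting $\psi=\hat v_1$ into \eqref{250408193221} and writing $\hbar\partial_x\psi=\hat P_\varepsilon\psi$ turns the second-order equation into \eqref{250714195543}, because $\hbar^2\psi''=(\hbar\hat P_\varepsilon'+\hat P_\varepsilon^2)\psi$. The scalar reduction of \autoref{250717105422} shows that $\hat v_1$ satisfies the quantum curve. Uniqueness with leading term $\varepsilon y$ follows from the Riccati recursion: the order-$\hbar^n$ coefficient is determined by dividing by $2P_0=2\varepsilon y$, and $y\neq0$ when $x\neq\tau$ and $w\neq0$. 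The $\mathcal T$-basis statement is then transferred from \autoref{250109214318} via the reconstruction of \autoref{250717105422}. Time-compatible scalar solutions correspond bijectively to column solutions through \eqref{eq:scalar-reconstruction}, since the converse argument there holds formally as well. This map is $\mathcal T$-linear, so the first-row components of the basis columns form a basis.

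The main obstacle is step~2, the closedness. It requires carefully converting flatness into the spatial Riccati identity for $\hat r_\varepsilon$ without dividing by $y$. My first attempt would compute $\partial_x$ of the time Riccati equation minus $\partial_t$ of the candidate spatial Riccati equation and show that the residual $\rho$ satisfies a linear equation $\hbar\partial_t\rho=-4\hat r_\varepsilon\rho$. Since this equation has nonvanishing leading coefficient $-2\varepsilon w$, it forces $\rho=0$ order by order.
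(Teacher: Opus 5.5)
Your proposal is correct and follows essentially the paper's own route. Closedness comes from the spatial Riccati defect satisfying $\hbar\partial_t\hat R=-4\hat r_\varepsilon\hat R$, which forces $\hat R=0$ because $4r_{\varepsilon,0}=2\varepsilon w\neq0$, and then gives $\partial_t\hat P_\varepsilon=\partial_x\hat T_\varepsilon$; the amplitude is the normalised exponential primitive; \eqref{250714195543} follows from $\hbar\partial_x\hat\psi=\hat P_\varepsilon\hat\psi$ with uniqueness by division by $2\varepsilon y$; and the basis statement transfers through the $\mathcal T$-linear scalar reconstruction bijection. The only blemish is the stray formula $P_{\varepsilon,0}=\varepsilon y+\tau$ in step~1, which confuses $p_0$ with $q_0$ and should simply be deleted, since $p_0=0$ gives $P_{\varepsilon,0}=\varepsilon y$, as you then observe.
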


\begin{proof}
The compatibility identity $\partial_t\hat P_\varepsilon=\partial_x\hat T_\varepsilon$ is proved in \autoref{250408124924}. Together with $P_{\varepsilon,0}=\varepsilon y$, $T_{\varepsilon,0}=\varepsilon w$, and $\mathrm dZ=y\,\mathrm dx+w\,\mathrm dt$, it shows that $\hat\beta_\varepsilon$ is an ordinary
formal closed one-form (See \eqref{eq:formal-amplitude-form_bis} and the discussion above it). The amplitude formula is precisely the construction in \eqref{eq:formal-amplitude-primitive_bis}.

Substituting
$\hbar\partial_x\hat\psi=\hat P\hat\psi$ (i.e. the first-row equation of \eqref{250408174515}, using \eqref{eq:formal-log-derivatives})  into the formal scalar equation \eqref{250408193221} gives \eqref{250714195543}. Away from $x=\tau$, the formal inverse of $x-\hat q$
exists and $y\neq0$. Once the leading term $P_0=\varepsilon y$ is fixed, each higher coefficient is uniquely determined by division by $2\varepsilon y$.

Finally, the scalar reconstruction gives a bijection between time-compatible formal scalar solutions and formal solutions of the Lax system wherever $x-\tau\neq0$.
The scalar basis assertion therefore follows from \autoref{250109214318}.
\end{proof}

\subsection{Factorial type}
\label{250408124938}

\begin{theorem}[Factorial type of the formal amplitudes]
\label{250708195116}
Suppose that the ordinary formal series $\hat b_\varepsilon(\hbar)$ in the prescribed normalisations are of factorial type. Then the scalar amplitudes $\hat a_\varepsilon(\tau,w;\hbar)$ and the ratios $\hat r_\varepsilon(\tau,w;\hbar)$ in \autoref{241113190220} are locally uniformly of factorial type. Consequently, both the reduced column prefactors $\hbar^{-m_\varepsilon}e^{-\frac{S_\varepsilon}{\hbar}}\hat{\mathbf v}_\varepsilon=\hat a_\varepsilon \begin{pmatrix} 1\\ \hat r_\varepsilon\end{pmatrix}$ and the full column prefactors $e^{-\frac{S_\varepsilon}{\hbar}}\hat{\mathbf v}_\varepsilon$ are locally uniformly of factorial type. In particular, the matrix $\hat E(\tau,w;\hbar)$ in \eqref{250421155248}, which includes the prescribed powers $\hbar^{m_\pm}$, also has this property. More explicitly, for every compact subset $K$ of the
universal cover of $\mathring{\mathsf\Sigma}$, there exist
constants $C_K,M_K>0$ such that
\begin{equation}
\label{250708201455}
    \|E_n\|_K
       \coloneqq \sup_{b\in K}\|E_n(b)\|
       \leq C_K M_K^n n!,
    \qquad \forall\, n\geq0.
\end{equation}
The constants may depend on $K$ and on the prescribed
normalisations, but are independent of $n$. In particular, these estimates hold on a full neighbourhood of each point over $x=\tau$ when $\tau\neq0$.
\end{theorem}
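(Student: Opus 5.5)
The plan is to prove factorial type in three stages: first for $\hat r_\varepsilon$, then for the closed form $\hat\beta_\varepsilon$ and its primitive, and finally for $\hat a_\varepsilon$ by exponentiation. Products with the fixed factors $\hbar^{m_\varepsilon}$ and with the polynomial leading data then give the statements for the column prefactors and for $\hat E$. Throughout we work on a compact $K$ in the universal cover of $\mathring{\mathsf\Sigma}$ and cover it by finitely many polydiscs in the coordinates $(\tau,w)$; equivalently $(x,t)$, since $\tau w\neq0$. On each polydisc $\tau$ and $w$ stay uniformly away from $0$. The point is that $4r_{\varepsilon,0}=2\varepsilon w$ is bounded below, with no division by $y$. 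Hence the bounds hold near $x=\tau$ exactly as elsewhere.

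\textbf{Step 1: the Riccati ratios.} Write $\hat r_\varepsilon=\frac{\varepsilon w}{2}+\hat\rho$. Then \eqref{eq:formal-time-Riccati} becomes
\[
 2\varepsilon w\,\rho_n=q_n-\partial_t\rho_{n-1}-2\sum_{j=1}^{n-1}\rho_j\rho_{n-j},
\]
a singularly perturbed first-order equation in $t$ with a nonvanishing linear coefficient. The holomorphic parameter $x$ enters only through $w$. The source term $\hat q$ is locally uniformly of factorial type by \autoref{250408135328}. The loss of one derivative at each step is handled by Nagumo norms, as announced in the introduction. On a polydisc of radius $\delta$ in $t$ (with $x$ frozen), put
\[
 \|f\|_{k}=\sup\, d(t)^k|f(t)|,
\]
where $d$ is the distance to the boundary. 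These norms satisfy
\[
 \|\partial_t f\|_{k+1}\le e(k+1)\|f\|_k,
 \qquad
 \|fg\|_{k+l}\le\|f\|_k\|g\|_l .
\]
We then prove by induction that $\|\rho_n\|_n\le C A^n n!$ for suitable constants $C,A$. The derivative term contributes at most $e\,n\cdot CA^{n-1}(n-1)!$. The convolution contributes at most $C^2A^n\sum_j j!(n-j)!\le C^2A^n\cdot \mathrm{const}\cdot(n-1)!$. The source gives $\|q_n\|_n\le C_qA_q^n n!$. Dividing by $|2w|\ge c>0$, the induction closes once $A$ is large relative to $C$, $C_q$, $A_q$ and $c^{-1}$. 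Shrinking to a concentric polydisc of half radius converts $\|\cdot\|_n$ into a sup bound with an extra factor $(2/\delta)^n$. This still has factorial type, and all constants are uniform in the parameter $x$.

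\textbf{Step 2: amplitudes.} Since $\hat p=\hbar\partial_t\hat q$, Cauchy estimates on a slightly larger compact set give $\hat p$ factorial type. The same then holds for
\[
 \hat P_\varepsilon=\hat p+4(x-\hat q)\hat r_\varepsilon,
 \qquad
 \hat T_\varepsilon=2\hat r_\varepsilon,
\]
because factorial type is stable under products: $\sum_j j!(n-j)!\le 2\,n!$. The coefficients of $\hat\beta_\varepsilon$ are the shifted coefficients $P_{\varepsilon,n+1}$ and $T_{\varepsilon,n+1}$. These are of size $CA^{n+1}(n+1)!$, which is still factorial type after enlarging $A$. This mild loss is where care is needed. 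It is harmless because, in Step 1, $\rho_{n+1}\lesssim A^{n+1}(n+1)!$ is absorbed by writing $(n+1)!\le 2^n n!$.

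Integrating $\hat\beta_\varepsilon$ from $b_0$ along fixed paths covering $K$ preserves the bounds, up to the path length.

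Exponentiation then preserves factorial type. Set $\hat a=\exp(\hat\phi)$ with $\phi_0$ bounded, and $\hat\phi-\phi_0$ of factorial type without constant term. The recursion $n a_n=\sum_k k\,\phi_k\, a_{n-k}$ gives the bound inductively using the same convolution estimate. Finally, multiplying by $\hat b_\varepsilon$, which is of factorial type by hypothesis, gives the bound for $\hat a_\varepsilon$.

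\textbf{Main obstacle.} The main obstacle is Step 1. The derivative loss means that naive sup-norm induction fails, and one must track shrinking domains uniformly over the parameter $x$ and over the finitely many charts of $K$. We plan to use the Nagumo norm argument above, with the domain radius chosen uniformly from $\operatorname{dist}(K,\{\tau w=0\})$. Continuation to the universal cover is harmless by \autoref{250109214318}, since the estimates are local and $K$ is covered by finitely many charts. Together these give \eqref{250708201455}, including on neighbourhoods over $x=\tau\neq0$.
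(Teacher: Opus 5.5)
Your route is essentially the paper's. You use Nagumo-weighted norms on the time-Riccati recursion, then stability of factorial type under products, the shift producing $\hat\beta_\varepsilon$, path integration and exponentiation. The paper weights by the distance in a polydisc in $(x,t)$; since only $\partial_t$ occurs in the recursion, your $t$-disc with $x$ frozen works equally well. The rest of your plan is fine, including the $\hbar\partial_\hbar$ recursion for the exponential, which closes once the geometric constant is taken larger than $(C+1)A$.

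One step fails as you phrase it. In the induction $\|\rho_n\|_n\le CA^n n!$, the quadratic term contributes $\frac{2}{c}C^2A^n\sum_{j=1}^{n-1}j!(n-j)!$. Its ratio to the target $CA^n n!$ is $\frac{2C}{c}\sum_{j=1}^{n-1}\binom{n}{j}^{-1}$. This equals $C/c$ at $n=2$ and does not improve as $A$ grows. So taking $A$ large relative to $C$, $C_q$, $A_q$ and $c^{-1}$ does not close the induction.

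The fix is to fix $C$ small compared with $c$ first. For $n\geq1$, the source term $q_n$ and the derivative term each carry a factor $A^{-1}$ relative to the target. They can therefore be absorbed afterwards by taking $A$ large. The paper avoids this bookkeeping differently: it compares $u_n=\mathcal N_n(r_n)/n!$ with the Taylor coefficients of the holomorphic solution $\mathcal V$ of a majorant equation given by the implicit function theorem.

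Separately, your displayed recursion with $\rho_0=0$ drops the term $-\partial_t r_0$ at $n=1$. Since $r_0=\frac{\varepsilon w}{2}$ depends on $t$ at fixed $x$, one has $r_1=\frac{1}{48\tau w^2}\neq0$, so this term must be kept in the base case. It is harmless: it is bounded by $e\|r_0\|_0$ and is absorbed by the same choice of large $A$.
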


\begin{proof}
The proof in \autoref{app:factorial-type-proof} applies Nagumo norms directly to \eqref{eq:formal-r-recursion}, then uses \autoref{prop:formal-amplitude-reconstruction}.    
\end{proof}

 Factorial coefficient bounds alone do not prove summability: the continuation and growth of the Borel transform are established separately in \autoref{sec:exact-WKB-wave-matrix}.

\section{The Exact WKB Wave Matrix}
\label{sec:exact-WKB-wave-matrix}

Fix a phase $\alpha\in\mathbb R/(2\pi\mathbb Z)$ and one of the Stokes sectors
$\widetilde V\subset\mathbb{C}_\tau$ of \autoref{250224171850}.  Let
$V\subset\mathbb{C}_t$ be its projection under $t=-6\tau^2$, and let
$q(t,\hbar)$ be the corresponding Borel-summed perturbative solution supplied
by \autoref{250408141643}.  All uniform asymptotic assertions below are understood on Borel
disks as in \eqref{eq:uniform-Gevrey-convention}.

\subsection{Stokes geometry}
\label{sec:stokes-geometry}

For fixed $\tau\in\widetilde V$, the normalised classical spectral curve is
\begin{equation}
    \mathsf{\Sigma}_\tau
      =\bigl\{(x,w)\in\mathbb{C}^2\mid w^2=x+2\tau\bigr\},
    \qquad
    y=2(x-\tau)w.
\end{equation}
On a chosen local sheet of the projection
$\mathsf\Sigma_\tau\to\mathbb C_x$, write $w=w(x)$,
where $w(x)^2=x+2\tau$. A primitive of the eigenvalue
one-form $y\,\mathrm dx$ is
\begin{equation}
\label{eq:space-action}
    Z_\tau(x)
       =\frac45(x-3\tau)w(x)^3.
\end{equation}
To distinguish this local expression from the action
written in the global coordinate $w$, define
\begin{equation}\label{DefZtilde}
      \widetilde Z_\tau(w)
       \coloneqq\frac45w^5-4\tau w^3.  
\end{equation}
Thus
\[
    Z_\tau(x)=\widetilde Z_\tau(w(x))=\frac45(x-3\tau)(x+2\tau)^{\frac{3}{2}}.
\]
The polynomial $\widetilde Z_\tau$ is single-valued on
the normalised curve and satisfies $\widetilde Z_\tau(0)=0$,
whereas $Z_\tau$ denotes its expression on a chosen local
sheet over the $x$-plane. The finite critical points of the action on
$\mathsf\Sigma_\tau$ are determined by
\[
    \frac{\mathrm d\widetilde Z_\tau}{\mathrm dw}
       =4w^2(w^2-3\tau)=0.
\]
Choose a holomorphic branch of $\tau^{\frac{1}{2}}$ on
$\widetilde V$ and put
\[
    w_\pm(\tau)\coloneqq\pm\sqrt3\,\tau^{\frac{1}{2}}.
\]
Since $\tau\neq0$, there are exactly three distinct
finite critical points:
\begin{equation}
    p_0(\tau)=(-2\tau,0),
    \qquad
    p_\pm(\tau)=\bigl(\tau,w_\pm(\tau)\bigr).
\end{equation}
Their projections to the $x$-plane are
\begin{equation}
\label{eq:spatial-turning-points}
    x_{\mathrm s}(\tau)\coloneqq-2\tau,
    \qquad
    x_{\mathrm d}(\tau)\coloneqq\tau.
\end{equation}
We call these the \emph{simple turning point} and the
\emph{double turning point}, respectively, according
to their multiplicities as zeros of
$y^2=4(x-\tau)^2(x+2\tau)$.
As shown below, they are respectively the trivalent
and quadrivalent vertices of the Stokes graph (See \autoref{fig:final-Stokes-matrices}). The Stokes curves of phase $\alpha$ are the images,
under $x=w^2-2\tau$, of the level-set branches
\begin{equation}
\label{eq:Stokes-curves}
    \operatorname{Im}\!\left(
        e^{-i\alpha}
        \bigl(
            \widetilde Z_\tau(w)
            -\widetilde Z_\tau(w_\ast)
        \bigr)
    \right)=0
\end{equation}
issuing from
$w_\ast\in\{0,w_+(\tau),w_-(\tau)\}$. The two projected critical locations $x_{\mathrm s}(\tau)$ and $x_{\mathrm d}(\tau)$ have different local geometries. At $p_0(\tau)$, the projection $\mathsf\Sigma_\tau\to\mathbb C_x$ is simply ramified,
and
\begin{equation}
    \widetilde Z_\tau(w)
       =-4\tau w^3+O(w^5).
\end{equation}
There are therefore six local level-set branches in
the $w$-plane. The involution $w\mapsto-w$ pairs these
branches and identifies their projections, giving three
Stokes curves issuing from $x=-2\tau$.

At each of $p_\pm(\tau)$, the projection is unramified.
Choose the local sheet with $w(\tau)=w_\pm(\tau)$ and
put $u=x-\tau$. On the chosen sheet,
$Z_\tau(\tau)=\widetilde Z_\tau(w_\pm(\tau))$.
Subtracting this critical value expresses the Stokes
level set through $p_\pm(\tau)$ as the vanishing of
the rotated imaginary part of the action difference. Then we have
\begin{equation}
\label{eq:nodal-action-local}
    Z_\tau(x)
       -\widetilde Z_\tau\bigl(w_\pm(\tau)\bigr)
       =w_\pm(\tau)u^2+O(u^3).
\end{equation}
Thus four local Stokes branches issue from each of
$p_+(\tau)$ and $p_-(\tau)$.
The involution $w\mapsto-w$ exchanges these two families
and leaves their projections unchanged. Hence four
projected Stokes curves meet at $x=\tau$.
This quadrivalent vertex comes from the quadratic
critical points of the action: the projection is
unramified at both points above it.

Let us describe the geometry of the Stokes graph more closely. There are no connections between distinct critical
points in this time sector. Indeed, the polynomial
$\widetilde Z_\tau$ has critical values
\[
    \widetilde Z_\tau(0)=0,
    \qquad
    \widetilde Z_\tau\bigl(w_\pm(\tau)\bigr)
       =-\frac{24}{5}\tau^2w_\pm(\tau).
\]
Their rotated imaginary parts are distinct inside
$\widetilde V$. Each critical trajectory therefore
goes to infinity, by properness of the polynomial.
Distinct trajectories cannot cross at a regular point,
and a compact cycle of a level set is excluded by the
maximum principle for the imaginary part of the action.
The projection identifies the two opposite lifts of
each trajectory; it creates no further crossings.
The projected graph thus consists of disjoint
three-armed and four-armed trees. Its complement has
$1+(3-1)+(4-1)=6$ connected components.
Their labelling, illustrated in
\autoref{fig:final-Stokes-matrices}, is specified
at the end of this subsection.

\medskip

Since the opening of $\widetilde V$ is smaller than $\pi$, the map
\begin{equation*}
    \widetilde V\longrightarrow\mathbb{C}_t,
    \qquad
    \tau\longmapsto t=-6\tau^2,
\end{equation*}
is injective. We recall that its image is denoted by $V
        \coloneqq
        \left\{
            -6\tau^2
            \,\middle|\,
            \tau\in\widetilde V
        \right\}
        \subset\mathbb{C}_t
$
and its holomorphic inverse by $\tau=\tau(t)$, for $t\in V$. For every $\tau\in\widetilde V$, let
\begin{equation*}
    R_{\mathrm A}(\tau),\ldots,R_{\mathrm F}(\tau)
        \subset\mathbb{C}_x
\end{equation*}
be the six connected components of the complement of the Stokes graph
in the $x$-plane. The corresponding time-dependent Stokes regions are
the open subsets
\begin{equation}
\label{eq:time-dependent-Stokes-regions}
    \mathcal U_j
        \coloneqq
        \left\{
            (x,t)\in\mathbb{C}_x\times V
            \,\middle|\,
            x\in R_j\bigl(\tau(t)\bigr)
        \right\},
    \qquad
    \forall\, j\in\{\mathrm A,\ldots,\mathrm F\}.
\end{equation}

The labelling of these regions is fixed as follows.
We denote by $R_{\mathrm D}(\tau)$ the unique region
whose boundary contains both the trivalent vertex
$x_{\mathrm s}(\tau)=-2\tau$ and the quadrivalent vertex
$x_{\mathrm d}(\tau)=\tau$.
The remaining labels are determined by requiring that
the regions incident with the quadrivalent vertex occur
in the counterclockwise cyclic order $R_{\mathrm A}(\tau), R_{\mathrm B}(\tau), R_{\mathrm C}(\tau), R_{\mathrm D}(\tau)$, whereas those incident with the trivalent vertex occur in the counterclockwise cyclic order $R_{\mathrm D}(\tau), R_{\mathrm E}(\tau), R_{\mathrm F}(\tau)$. This convention is illustrated in \autoref{fig:final-Stokes-matrices}. For the fixed direction $\alpha$ and the chosen branch
$\tau(t)$, the restriction $t\in V$ prevents wall
crossings of the spatial Stokes graph.
Indeed, a Stokes connection between the simple and
double turning points occurs when
$\operatorname{Im}\!\left(e^{-i\alpha}\tau^{\frac52}\right)=0$, or equivalently for $5\arg\tau-2\alpha\in2\pi\mathbb Z$. These are precisely the boundary rays of the sectors in \autoref{250224171850}. The corresponding wall crossing, through a saddle
connection between the two vertices, is illustrated
in \cite[Section~3.2.1 and Figure~2]{vanSpaendonckVonk2022}. Inside $\widetilde V$, the vertices and regular arcs
vary smoothly, while the incidence relations, cyclic
orders and attachment to the asymptotic directions
at infinity remain unchanged.
Thus the labels $R_{\mathrm A}(\tau),\ldots,
R_{\mathrm F}(\tau)$ remain consistent throughout
$\widetilde V$, and hence throughout $V$ via the
chosen branch $\tau(t)$.

\subsection{Borel summation on adapted strips}
\label{sec:sectorial-Borel-summation}

The analytic input is a uniform strip version of the Borel-plane
construction in \cite{nikolaev_2022_3}, proved in \autoref{app:auxiliary-Riccati-estimates}. We use spatial trajectories away
from the Stokes graph. At the quadrivalent vertex we instead vary
$x$ and $t$ together, keeping a local Morse coordinate fixed\footnote{Near the quadrivalent vertex, a local Morse coordinate is a holomorphic coordinate $\zeta$, vanishing at $x=\tau$, such that $Z_\tau(x)-Z_\tau(\tau)=\zeta^2$ on the chosen sheet. Keeping $\zeta$ fixed therefore means varying $x$ and $t$ together while preserving this action difference.}. This
choice avoids any assertion that fixed-$x$ auxiliary trajectories
remain inside the selected time sector. 
\medskip

Let us introduce the notation for Borel resummation. For $R,\delta>0$, define
\begin{equation}
\label{eq:Borel-disk-and-tube}
 S_{\alpha,R}
   \coloneqq
   \left\{
     \hbar\in\mathbb C^\ast
     \,\middle|\,
     \operatorname{Re}\left(\frac{e^{i\alpha}}{\hbar}\right)
       >R^{-1}
   \right\},
 \qquad
 \Xi_{\alpha,\delta}
   \coloneqq
   \left\{
     \xi\in\mathbb C
     \,\middle|\,
     \operatorname{dist}
       \bigl(\xi,e^{i\alpha}\mathbb R_{\geq0}\bigr)<\delta
   \right\}.
\end{equation}
Here $\operatorname{dist}$ is the Euclidean distance in
$\mathbb C$; explicitly, $\operatorname{dist} \bigl(\xi,e^{i\alpha}\mathbb R_{\geq0}\bigr)=\underset{s\geq0}{\inf}\left|\xi-se^{i\alpha}\right|$. Thus, $\Xi_{\alpha,\delta}$ is the open tubular neighbourhood of radius $\delta$ around the ray
$e^{i\alpha}\mathbb R_{\geq0}$. 

\begin{figure}[htbp]
    \centering
    \includegraphics[width=0.48\textwidth]{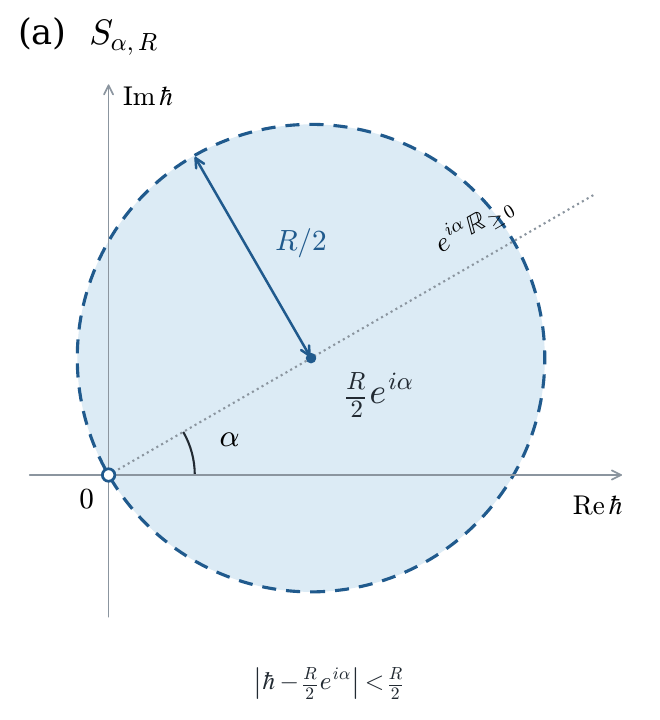}
    \hfill
    \includegraphics[width=0.48\textwidth]{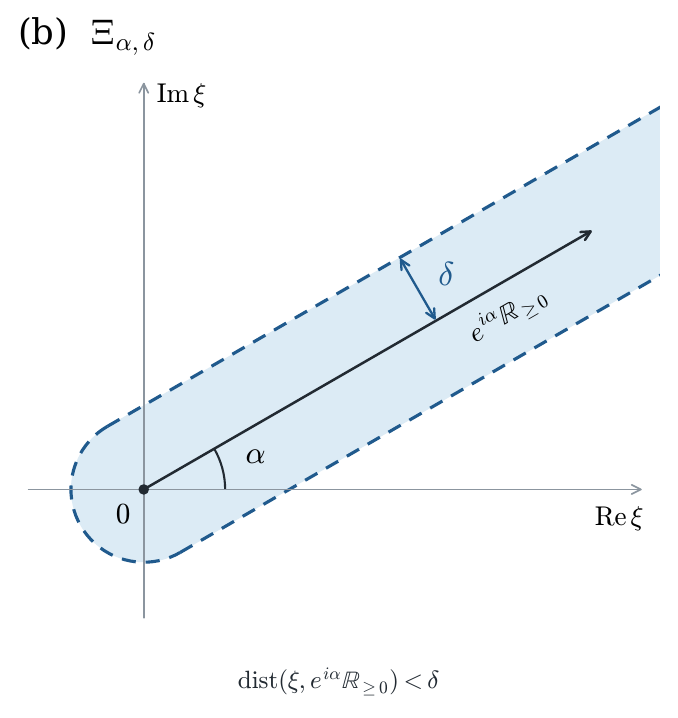}
    \caption{The Borel disk $S_{\alpha,R}$ and the Borel tube
    $\Xi_{\alpha,\delta}$, illustrated for $\alpha=\frac{\pi}{6}$.
    The disk has centre $\frac{R}{2}e^{i\alpha}$ and radius
    $\frac{R}{2}$; its boundary point $0$ is excluded.
    The tube is the open Euclidean neighbourhood of radius
    $\delta$ around the nonnegative ray $e^{i\alpha}\mathbb R_{\geq0}$;
    it includes a neighbourhood of $0$.
    Dashed boundaries are excluded.}
    \label{fig:Borel-disk-and-tube}
\end{figure}

For an ordinary formal series $\hat f(\hbar)=\underset{n\geq 0}{\sum}f_n\hbar^n$, our
Borel and Laplace conventions are
\begin{equation}
\label{eq:parametric-Borel-sum}
 \mathcal B\hat f(\xi)\coloneqq\sum_{n\geq0}\frac{f_{n+1}}{n!}\xi^n,
 \qquad
 s_\alpha(\hat f)(\hbar)\coloneqq f_0+
       \int_{e^{i\alpha}\mathbb R_{\geq0}}
          e^{-\frac{\xi}{\hbar}}\mathcal B\hat f(\xi)\,\mathrm d\xi.
\end{equation}
In the Laplace integral, $\mathcal B\hat f(\xi)$ denotes
the holomorphic continuation of the germ defined by
the Borel series, whenever this continuation exists.

\begin{definition}[Locally uniform Borel summability]
\label{def:locally-uniform-Borel-summability}
Suppose that the coefficients of a formal series
$\hat f(k,\hbar)=\underset{n\geq 0}{\sum}f_n(k)\hbar^n$
depend holomorphically on parameters $k$ in a
complex domain $P$.
We say that $\hat f$ is \emph{locally uniformly
Borel summable in direction $\alpha$ on $P$} if,
for every relatively compact open subset
$P_0\Subset P$, there exist constants
$\delta,C,c>0$ such that the Borel series
$\mathcal B\hat f(k,\xi)$ defined by
\eqref{eq:parametric-Borel-sum} converges locally
uniformly near $\xi=0$ and admits a joint
holomorphic continuation to
$P_0\times\Xi_{\alpha,\delta}$, satisfying
\[
    \sup_{k\in P_0}
    \left\|\mathcal B\hat f(k,\xi)\right\|
       \leq C e^{c|\xi|},
    \qquad
    \forall\,\xi\in\Xi_{\alpha,\delta}.
\]
\end{definition}
For any $R>0$ such that $R^{-1}>c$, the Laplace integral
converges locally normally and defines a holomorphic
function on $P_0\times S_{\alpha,R}$.
Moreover, this function admits $\hat f(\hbar)$ as its uniform
Gevrey-$1$ asymptotic expansion on this product, by the
Nevanlinna--Sokal theorem.
The factorial remainder estimate on the whole Borel
disk uses the tubular continuation, in addition to
convergence of the integral along the summation ray.
\medskip

We use the following result on unbounded strips. For $d>0$, set
\begin{equation}
\label{eq:parallel-strip}
 H_d(z_0)=\{z:|\operatorname{Im}(e^{-i\alpha}(z-z_0))|<d\}.
\end{equation}
$H_d(z_0)$ is the open strip of width $2d$ centred on the straight line passing through $z_0$ with direction $\alpha$. Every translation $z\mapsto z+s e^{i\alpha}$, with $s\in\mathbb R$, preserves this strip.

\begin{figure}[htbp]
    \centering
    \includegraphics[width=0.58\textwidth]{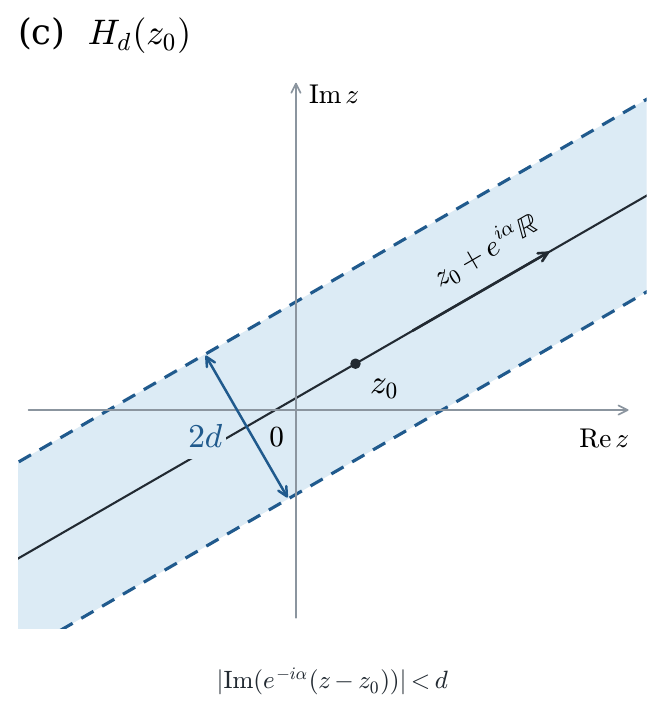}
    \caption{The open strip $H_d(z_0)$ of width $2d$ around
    $z_0+e^{i\alpha}\mathbb R$, illustrated for $\alpha=\frac{\pi}{6}$.
    It extends infinitely in both directions and is preserved
    by translations $z\mapsto z+s e^{i\alpha}$, $s\in\mathbb R$.
    Its dashed boundary lines are excluded.}
    \label{fig:action-strip}
\end{figure}

\begin{lemma}[Uniform summation on strips]
\label{lem:uniform-strip-summation}
Let $U\subset\mathbb C^m$ be an open parameter domain. Fix a summation direction
$\alpha\in\mathbb R/(2\pi\mathbb Z)$, a point $z_0\in\mathbb C$, and constants $d,R_0>0$. Recall that $S_{\alpha,R_0}$ is the Borel disk in direction $\alpha$ defined in \eqref{eq:Borel-disk-and-tube}. Let
\begin{equation*}
  \begin{aligned}
F:\quad U\times H_d(z_0)\times S_{\alpha,R_0}\times\mathbb C^r&\longrightarrow \mathbb C^r,\\(k,z,\hbar,v)&\longmapsto F(k,z,\hbar,v)
\end{aligned}  
\end{equation*}
be jointly holomorphic and polynomial in $v$. Consider the differential equation
\begin{equation}\label{deff}
\hbar\partial_z f(k,z,\hbar)=F\bigl(k,z,\hbar,f(k,z,\hbar)\bigr),
\end{equation}
where the unknown $f$ takes values in $\mathbb C^r$.

Assume that $F$ admits a Gevrey-$1$ asymptotic expansion in $\hbar$,
\begin{equation*}
     F(k,z,\hbar,v)\sim\sum_{n=0}^{\infty}F_n(k,z,v)\hbar^n,
\end{equation*}
whose coefficients are holomorphic on
$U\times H_d(z_0)\times\mathbb C^r$.
The expansion is assumed uniform on the entire strip $H_d(z_0)$ and locally uniform in the remaining variables: for every compact set $K\Subset U$ and every bounded
set $B\subset\mathbb C^r$, there exist constants $C,A>0$ and $0<R_{K,B}\leq R_0$ such that
\begin{equation}
\label{eq:strip-input-remainders}
\sup_{\substack{k\in K,\ z\in H_d(z_0)\\v\in B}}\left\|F(k,z,\hbar,v)-\sum_{n=0}^{N-1}F_n(k,z,v)\hbar^n\right\|\leq C A^N N!|\hbar|^N
\end{equation}
for every integer $N\geq0$ and every $\hbar\in S_{\alpha,R_{K,B}}$, with the sum understood to be empty when $N=0$. Let $F_0$ denote the leading coefficient of this expansion. Assume that, for every $(k,z)\in U\times H_d(z_0)$,
\begin{equation*}
   F_0(k,z,0)=0,\qquad \partial_vF_0(k,z,0)=\Lambda, 
\end{equation*} 
where $\partial_vF_0$ denotes the Jacobian matrix with respect to $v$, and $\Lambda$ is a constant diagonal $r\times r$ matrix with entries in $\{+1,-1\}$.

Then the differential equation \eqref{deff} admits a unique
formal solution $\hat f(k,z,\hbar)=\underset{n=1}{\overset{\infty}{\sum}}f_n(k,z)\hbar^n$, with coefficients holomorphic on $U\times H_d(z_0)$ and vanishing constant term $f_0(k,z)\equiv0$. Moreover, for every relatively compact open subset $U_0\Subset U$ and every $0<d'<d$, this series is uniformly Borel summable in direction $\alpha$ on $U_0\times H_{d'}(z_0)$. Its Borel sum $f(k,z,\hbar)$ is jointly holomorphic on $U_0\times H_{d'}(z_0)\times S_{\alpha,R}$ for some $0<R\leq R_0$, and satisfies $\hbar\partial_z f(k,z,\hbar)=F\bigl(k,z,\hbar,f(k,z,\hbar)\bigr)$. The Borel tube, exponential growth constants, Borel disk and factorial remainder constants may depend on $U_0$ and $d'$, but are uniform throughout $U_0\times H_{d'}(z_0)$.
\end{lemma}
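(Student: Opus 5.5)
We follow the Borel-plane construction of \cite{nikolaev_2022_3}, keeping track of where each constant comes from, so that every estimate depends only on the compact parameter set and on $d'$, never on the position $z$ along the strip.

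\emph{Formal solution.} Expanding \eqref{deff} with $f_0\equiv0$, the order-$\hbar^{n}$ coefficient gives
\[
\Lambda f_n=\partial_z f_{n-1}-\bigl[\text{polynomial in } f_1,\dots,f_{n-1}\text{ and the } F_j\bigr],
\]
using $F_0(k,z,0)=0$ and $\partial_vF_0(k,z,0)=\Lambda$. Since $\Lambda^{-1}=\Lambda$, this recursion determines each $f_n$ uniquely, and each $f_n$ is holomorphic on $U\times H_d(z_0)$. This gives existence and uniqueness of $\hat f$.

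\emph{Step 1: convolution integral equation.} Take the Borel transform in $\hbar$. Put $\phi=\mathcal B\hat f$ and write
\[
F(k,z,\hbar,v)=\Lambda v+G_0(k,z)+\textstyle\sum_{|\mu|\geq2}\cdots+\hbar(\ldots).
\]
Because $\hbar\partial_z$ becomes $\partial_z$ acting after an integration in $\xi$, the equation becomes
\[
(\partial_z-\xi^{-1}\!\cdot)\ \text{form},\qquad\text{more precisely}\qquad \partial_z\phi-\Lambda\,\partial_\xi\text{-type transport}.
\]
We prefer the standard normal form
\[
\partial_\xi\phi_i\mp\partial_z\phi_i=\Phi_i[\phi],
\]
with $-$ or $+$ according to $\Lambda_{ii}=\pm1$. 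Here $\Phi$ is built from convolution products and from the Borel transforms of the remainder terms of $F$. The polynomial dependence on $v$ together with the Gevrey estimate \eqref{eq:strip-input-remainders} makes these Borel transforms holomorphic on a tube $\Xi_{\alpha,\delta}$. They have exponential bounds uniform on $K\times H_d(z_0)\times B$, after using Cauchy estimates to pass to $H_{d''}(z_0)$ with $d'<d''<d$.

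\emph{Step 2: integration along characteristics.} The characteristics are the lines $z\pm\xi=\mathrm{const}$. For $\xi\in\Xi_{\alpha,\delta}$ and $z\in H_{d'}(z_0)$, the segment from $(z\pm\xi,0)$ to $(z,\xi)$ moves $z$ by at most $e^{i\alpha}\mathbb R$ plus a perpendicular shift of size less than $\delta$. Choosing $\delta<d''-d'$, it therefore stays inside $H_{d''}(z_0)$. This is exactly why we need the full strip, invariant under translation by $e^{i\alpha}\mathbb R$: arbitrarily long characteristics never leave the domain. We convert the equation into a Volterra fixed-point problem
\[
\phi=\phi^{(0)}+\mathcal I[\phi],
\]
where $\mathcal I$ integrates along characteristics from $\xi=0$.

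\emph{Step 3: weighted norms and the main obstacle.} The main obstacle is closing the nonlinear estimate uniformly on an unbounded domain. We would use the Banach algebra of functions on $U_0\times H_{d'}\times\Xi_{\alpha,\delta}$ with the weighted norm
\[
\sup |\phi|\,e^{-c|\xi|}\quad\text{or}\quad \sup|\phi|\,(1+|\xi|)^2e^{-c|\xi|}.
\]
Both make convolution submultiplicative up to a constant. We then show that $\mathcal I$ is a contraction for $c$ large, since integration along a path of length $\sim|\xi|$ gains a factor $1/c$. Alternatively, we can run the Picard iteration $\phi=\sum\phi_m$ and prove $|\phi_m|\leq CM^m|\xi|^m/m!\cdot e^{c|\xi|}$ by induction, as in Nikolaev's majorant method. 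Polynomiality in $v$ keeps the nonlinear terms finite in number, so a majorant series argument yields convergence for large $c$. All constants depend only on the $\sup$ over $K\times H_{d''}\times B$, which are uniform by hypothesis. The $\partial_z$ derivatives that appear are controlled by Cauchy estimates, again using the margin $d''-d'$. Uniqueness of the fixed point shows that $\phi$ coincides with $\mathcal B\hat f$ near $\xi=0$.

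\emph{Step 4: Laplace transform.} Laplace transforming $\phi$ in direction $\alpha$ for $R^{-1}>c$ gives $f$. Holomorphy, the differential equation and the uniform factorial remainders on $S_{\alpha,R}$ follow from the tube continuation and the Nevanlinna--Sokal theorem (\cite{Sokal1980}, \cite[App.~A.2]{nikolaev_2022_3}), as explained after \autoref{def:locally-uniform-Borel-summability}.
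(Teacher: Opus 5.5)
Your overall architecture (formal recursion, Borel transform, integration along characteristics, weighted-norm contraction, Laplace transform) matches the paper's, but the central step does not work as you set it up.

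\textbf{The transport equation for $\phi$ loses a derivative.} You transform the unknown itself, $\phi=\mathcal B\hat f$. You then assert a transport equation $(\partial_\xi\mp\partial_z)\phi_i=\Phi_i[\phi]$ whose right-hand side contains only convolution products of $\phi$ and Borel transforms of known functions. This is false as soon as $F_0$ is nonlinear in $v$. The lemma allows this, and it actually occurs when the lemma is applied to the rescaled Painlev\'e system, where $F_0$ contains $U^2/2$.

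An $\hbar^0$ quadratic term of $F_0$ becomes $\phi*\phi$ in the integrated Borel equation
$\int_0^\xi\partial_z\phi(k,z,\eta)\,\mathrm d\eta=\Lambda\phi+q\,\phi*\phi+\cdots$,
with no extra integration in front of it. Passing to transport form forces you to differentiate in $\xi$, which gives $\partial_\xi(\phi*\phi)=\phi(0)\phi+(\partial_\xi\phi)*\phi$. Integrating the undifferentiated form along characteristics instead produces $\partial_z(\phi*\phi)$. Either way the fixed-point map contains a derivative of the unknown, so no contraction in a weighted sup norm can close. Cauchy estimates on a fixed margin cannot repair this, since each iteration would consume part of the domain.

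\textbf{The missing idea is the normalisation.} Set $f=\hbar(f_1+g)$ with $f_1=-\Lambda^{-1}F_1(k,z,0)$. This yields $\hbar\partial_z g=\Lambda g+\hbar\,G(k,z,\hbar,g)$ with $G$ polynomial in $g$. Every nonlinear term now carries an explicit factor $\hbar$, which becomes an integration in the Borel plane. Hence $\varphi=\mathcal B\hat g=\partial_\xi\mathcal B\hat f$ satisfies $\partial_z\varphi=\Lambda\partial_\xi\varphi+\mathcal G(\varphi)$ with $\mathcal G$ derivative-free. The only derivative left is $\partial_z f_1$, a known function. That is where the Cauchy estimate on a narrower strip enters, together with the bound $\sup\|F_1(k,z,0)\|\leq CA$ on the whole strip, read off from the remainder estimates with $N=1,2$.

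\textbf{Three set-up points also need fixing.}
\begin{enumerate}
\item You pose the fixed-point problem on $U_0\times H_{d'}\times\Xi_{\alpha,\delta}$. As you note yourself, the characteristic arguments $z+(\xi-\eta)/\lambda_j$ leave $H_{d'}$, so the integral operator does not map functions on this product to themselves. You need a domain exactly invariant under the characteristic shifts and under the splitting $(\eta,\xi-\eta)$ in convolutions. The paper uses, after rotating to $\alpha=0$, $\mathcal D_{\mathrm B}=\{(z,\xi):\xi\in\Xi_{0,\delta},\ |\operatorname{Im}(z-z_0)|+|\operatorname{Im}\xi|<d_1\}$. This contains $H_{d'}\times\Xi_{0,\delta}$ when $d'+\delta<d_1$.
\item The pure weight $e^{-c|\xi|}$ does not make convolution bounded; it costs a factor $|\xi|$. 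Only the polynomially corrected weight works.
\item Identifying the fixed point with $\mathcal B\hat f$ by ``uniqueness of the fixed point'' presupposes that $\mathcal B\hat f$ converges and lies in the weighted space, which is not known a priori. Instead, check that the Taylor coefficients of the fixed point at $\xi=0$ satisfy the formal recursion, and invoke formal uniqueness.
\end{enumerate}
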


\begin{proof}
The complete proof is in \autoref{app:uniform-strip-proof}. It
uses the Borel-transformed equation, characteristic translations
$z\mapsto z\pm\xi$, and a contraction estimate uniform in the
unbounded direction of the strip. In particular, compact-parameter
estimates are not being extended to an unbounded strip without proof.
\end{proof}

We now apply \autoref{lem:uniform-strip-summation} to
linear systems written in an eigenbasis of their leading
matrix. The direction of each column is described by a
ratio of its two components. These ratios satisfy Riccati
equations, to which the preceding lemma applies.
This gives Borel summability of the column directions,
independently of their scalar normalisations.
We will use this construction both on spatial strips
and on the moving family through the quadrivalent vertex. 
The scalar amplitudes will then be reconstructed in
\autoref{lem:projective-reconstruction} and the wave matrix in \autoref{thm:exact-WKB-existence}.

\begin{lemma}[Summation of the component ratios in an eigenbasis]
\label{lem:projective-strip}
Let $U\subset\mathbb C^m$ be an open parameter domain.
Fix a direction $\alpha\in\mathbb R/(2\pi\mathbb Z)$,
a point $z_0\in\mathbb C$, and constants $d>0$, $R_0>0$.
Let $H_d(z_0)$ and $S_{\alpha,R_0}$ be the strip and
Borel disk defined in \eqref{eq:parallel-strip} and
\eqref{eq:Borel-disk-and-tube}, respectively. Let
\[
    B=(B_{ij})_{1\leq i,j\leq2}:
    U\times H_d(z_0)\times S_{\alpha,R_0}
    \longrightarrow \operatorname{Mat}_2(\mathbb C)
\]
be jointly holomorphic. Assume that $B$ admits a
Gevrey-$1$ expansion
\[
    B(k,z,\hbar)\sim
       \hat B(k,z,\hbar)
       =\sum_{n=0}^{\infty}B_n(k,z)\hbar^n,
    \qquad B_0\equiv0,
\]
with holomorphic coefficients $B_n(k,z)$ on $U\times H_d(z_0)$.
More precisely, for every compact set $K\Subset U$,
there exist constants $C_K>0$, $A_K>0$ and
$0<R_K\leq R_0$ such that
\[
    \sup_{\substack{k\in K\\z\in H_d(z_0)}}
    \left\|
       B(k,z,\hbar)
       -\sum_{n=0}^{N-1}B_n(k,z)\hbar^n
    \right\|
    \leq C_K (A_K)^N N!|\hbar|^N
\]
for every integer $N\geq0$ and every
$\hbar\in S_{\alpha,R_K}$, with the sum understood
to be empty when $N=0$. Consider the linear system
\begin{equation}
\label{eq:strip-linear-system}
    \hbar\partial_z\chi=(D+B)\chi,
    \qquad
    D=\operatorname{diag}\left(\frac12,-\frac12\right),
    \qquad
    \chi\coloneqq\binom{\chi_1}{\chi_2}.
\end{equation}
To describe a column direction near the first or second
coordinate eigenline of $D$, we use the respective
representatives $\begin{pmatrix} 1\\ g_+\end{pmatrix}$ and $\begin{pmatrix} g_-\\ 1\end{pmatrix}$. The scalar coordinates $g_+$ and $g_-$ satisfy, in their respective charts, the Riccati equations
\begin{align}
    \hbar\partial_zg_+
       &=B_{21}+(-1+B_{22}-B_{11})g_+
                   -B_{12}g_+^2,
       \label{eq:gplus-Riccati}\\
    \hbar\partial_zg_-
       &=B_{12}+(1+B_{11}-B_{22})g_-
                   -B_{21}g_-^2.
       \label{eq:gminus-Riccati}
\end{align}
We consider these two equations independently: the solutions selected below describe two different column directions.
Each of these Riccati equations, with $B$ replaced by
its formal expansion $\hat B$, has a unique formal
solution with zero constant coefficient, $\hat g_\pm(k,z,\hbar) =\underset{n=1}{\overset{\infty}{\sum}}g_{\pm,n}(k,z)\hbar^n$, whose coefficients are holomorphic on $U\times H_d(z_0)$. The vectors $\begin{pmatrix} 1\\ \hat g_+\end{pmatrix}$, $\begin{pmatrix} \hat g_-\\ 1\end{pmatrix}$ represent the corresponding formal column directions: their leading terms are the two coordinate eigenvectors of $D$.

Then, for every relatively compact open subset $U_0\Subset U$
and every $0<d'<d$, both series $\hat g_\pm$ are
uniformly Borel summable in direction $\alpha$ on
$U_0\times H_{d'}(z_0)$.
With the Borel--Laplace convention
\eqref{eq:parametric-Borel-sum}, their sums are
\[
    g_\pm(k,z,\hbar)
       \coloneqq s_\alpha(\hat g_\pm)(k,z,\hbar)
       =\int_{e^{i\alpha}\mathbb R_{\geq0}}
          e^{-\frac{\xi}{\hbar}}
          \varphi_\pm(k,z,\xi)\,\mathrm d\xi,
\]
where $\varphi_\pm$ denotes the holomorphic continuation
to a common Borel tube of the convergent germ $(\mathcal B\hat g_\pm)(k,z,\xi)=\underset{n=0}{\overset{\infty}{\sum}}\frac{g_{\pm,n+1}(k,z)}{n!}\xi^n$. The functions $g_\pm$ are jointly holomorphic on
$U_0\times H_{d'}(z_0)\times S_{\alpha,R}$
for a common $0<R\leq R_0$, and satisfy
\eqref{eq:gplus-Riccati} and
\eqref{eq:gminus-Riccati}, respectively. They admit $\hat g_\pm$ as their uniform Gevrey-$1$ asymptotic expansions on this product $U_0\times H_{d'}(z_0)\times S_{\alpha,R}$.  In particular, $g_\pm=O(\hbar)$ uniformly there. The Borel tubes, growth constants, remainder constants and $R$ may depend on $U_0$ and $d'$, but are uniform throughout the entire strip $H_{d'}(z_0)$.
\end{lemma}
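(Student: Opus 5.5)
The plan is to reduce \autoref{lem:projective-strip} directly to \autoref{lem:uniform-strip-summation}, applied once for each sign with $r=1$. First derive the Riccati equations: for $\chi=\chi_1(1,g_+)^{\mathsf T}$ with $\chi_1\neq0$, write $\hbar\partial_z g_+=(\hbar\partial_z\chi_2)/\chi_1-g_+(\hbar\partial_z\chi_1)/\chi_1$. Substituting the two rows of \eqref{eq:strip-linear-system} gives
\[
\hbar\partial_zg_+=\bigl(B_{21}+(-\tfrac12+B_{22})g_+\bigr)-g_+\bigl(\tfrac12+B_{11}+B_{12}g_+\bigr),
\]
which is \eqref{eq:gplus-Riccati}. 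The equation \eqref{eq:gminus-Riccati} follows in the same way from $\chi=\chi_2(g_-,1)^{\mathsf T}$. These are formal identities of the system, so nothing analytic is needed at this stage.

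Next, define $F_\pm(k,z,\hbar,v)$ as the right-hand sides: $F_+=B_{21}+(-1+B_{22}-B_{11})v-B_{12}v^2$, and analogously for $F_-$. Each is jointly holomorphic on $U\times H_d(z_0)\times S_{\alpha,R_0}\times\mathbb C$ and polynomial (quadratic) in $v$. Its expansion has coefficients $F_{\pm,n}(k,z,v)$ built from $B_n$. Since $B_0\equiv0$, the leading coefficients are $F_{+,0}=-v$ and $F_{-,0}=v$. Hence $F_{\pm,0}(k,z,0)=0$ and $\partial_vF_{\pm,0}=\Lambda=\mp1$, which is constant with entry in $\{\pm1\}$, as the lemma requires.

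The remainder hypothesis \eqref{eq:strip-input-remainders} follows from the assumption on $B$. For a compact $K\Subset U$ and bounded $B'\subset\mathbb C$ with $|v|\le\rho$, the $N$-th remainder of $F_\pm$ is a combination of the $N$-th remainders of the entries $B_{ij}$, with coefficients $1$, $v$, $v^2$. The sup over the whole strip $H_d(z_0)$ is therefore bounded by $(1+2\rho+\rho^2)C_K A_K^N N!|\hbar|^N$ on $S_{\alpha,R_K}$. Here I use that the entries are bounded by the matrix norm. The remainder of the constant term $\mp v$ vanishes identically, because the constant $\mp1$ is already exact. This uniformity over the unbounded strip is inherited directly from the hypothesis, so no extra compactness argument in $z$ is needed.

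\autoref{lem:uniform-strip-summation} then gives a unique formal solution $\hat g_\pm$ with zero constant term, with holomorphic coefficients on $U\times H_d(z_0)$. This is the uniqueness claim, and it can also be seen directly: the order-$n$ coefficient is solved by dividing by $\Lambda=\mp1$. The same lemma gives uniform Borel summability on $U_0\times H_{d'}(z_0)$ and a joint holomorphic sum on $U_0\times H_{d'}(z_0)\times S_{\alpha,R}$ that solves the Riccati equation. The sum $g_\pm$ has Gevrey-$1$ asymptotics, hence $g_\pm=O(\hbar)$.

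It remains to arrange common data for both signs. I would take the minimum of the two radii $R$, the smaller of the two tube widths, and the larger of the two growth constants. The Laplace representation is the convention \eqref{eq:parametric-Borel-sum} with $f_0=0$. The formal column directions have leading terms $(1,0)^{\mathsf T}$ and $(0,1)^{\mathsf T}$, the eigenvectors of $D$.

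The main obstacle is only bookkeeping. One has to check that the bounded set in $v$ required by \autoref{lem:uniform-strip-summation} is compatible with how that lemma is applied, since its proof presumably works on a bounded ball around $v=0$. One also has to check that the resulting constants depend only on $U_0$ and $d'$. All the substantive analysis is contained in \autoref{lem:uniform-strip-summation}.
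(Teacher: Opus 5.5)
Your proposal is correct and is essentially the paper's proof: derive the Riccati equations by the quotient rule, then reduce to \autoref{lem:uniform-strip-summation}. The only difference is cosmetic: the paper stacks $(g_+,g_-)$ into one system with $r=2$ and $\Lambda=\operatorname{diag}(-1,1)$, which gives common tubes, radii and constants for both signs in one application rather than by taking minima and maxima afterwards. One minor bookkeeping slip: for $N=0$ the remainder is all of $F_\pm$, so the term $\mp v$ does not drop out and adds $\rho$ to the constant, which is harmless.
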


\begin{proof}
The two rows of \eqref{eq:strip-linear-system} are
\[
    \hbar\partial_z\chi_1
       =\left(\frac12+B_{11}\right)\chi_1+B_{12}\chi_2,
    \qquad
    \hbar\partial_z\chi_2
       =B_{21}\chi_1+\left(-\frac12+B_{22}\right)\chi_2.
\]
For any local solution column $\chi$, the quotient rule
shows that $\frac{\chi_2}{\chi_1}$, wherever $\chi_1\neq0$,
satisfies \eqref{eq:gplus-Riccati}, while
$\frac{\chi_1}{\chi_2}$, wherever $\chi_2\neq0$,
satisfies \eqref{eq:gminus-Riccati}.
We now regard these as independent differential
equations for $g_+$ and $g_-$, without assuming that
the corresponding linear-system columns have already
been constructed. Together, these two Riccati equations form a system
for the column vector $f=\begin{pmatrix}g_+\\g_-\end{pmatrix}\in\mathbb C^2$. This system has the form $\hbar\partial_z f(k,z,\hbar)=F\bigl(k,z,\hbar,f(k,z,\hbar)\bigr)$, where, for $v=\begin{pmatrix}v_1\\v_2\end{pmatrix}
\in\mathbb C^2$, we define
\begin{equation*}
    F(k,z,\hbar,v)
       \coloneqq
       \begin{pmatrix}
          B_{21}+(-1+B_{22}-B_{11})v_1-B_{12}v_1^2
          \\[3pt]
          B_{12}+(1+B_{11}-B_{22})v_2-B_{21}v_2^2
       \end{pmatrix}.
\end{equation*} 
Here every entry $(B_{ij})_{1\leq i,j\leq 2}$ is evaluated at $(k,z,\hbar)$.
The function $F$ is jointly holomorphic and polynomial
in $v$. The assumed remainder estimates for $B$ give
the remainder estimates required in
\autoref{lem:uniform-strip-summation}, uniformly on
bounded sets of $v$. Since the leading coefficient $B_0$ of the expansion
in $\hbar$ of $B$ vanishes, the coefficient of $\hbar^0$ in the
expansion of $F$ is $F_0(k,z,v)=\begin{pmatrix}-v_1\\v_2\end{pmatrix}$. Consequently,
\[
    F_0(k,z,0)=0,
    \qquad
    \partial_vF_0(k,z,0)
       =\begin{pmatrix}-1&0\\0&1\end{pmatrix}.
\]
Thus all hypotheses of
\autoref{lem:uniform-strip-summation} hold.
That lemma supplies the unique formal solutions with
zero constant coefficients, their uniform Borel
summability, and the stated properties of their sums. The integral formula in the statement follows from the Borel--Laplace convention
\eqref{eq:parametric-Borel-sum}: since the formal
solutions $\hat g_\pm$ have zero constant coefficient,
$g_{\pm,0}=0$, there is no constant term outside
the Laplace integral.
\end{proof}

\subsection{Exact WKB wave matrices in the Stokes regions}
\label{sec:exact-WKB-reconstruction}
In this section, we construct exact WKB wave matrices in each
time-dependent Stokes region in three steps. First,
\autoref{prop:sectorial-projective-summability} establishes Borel
summability of the formal component ratios $\hat r_\pm$, using
spatial action strips. After specifying the required exact wave
matrix in \autoref{def:exact-WKB-wave-matrix},
\autoref{lem:projective-reconstruction} reconstructs the scalar
amplitudes with prescribed Borel-summable normalisations; its proof
is deferred to \autoref{app:projective-reconstruction-proof}.
Finally, \autoref{thm:exact-WKB-existence} patches the local
constructions and establishes existence and uniqueness throughout
the Stokes region.

\paragraph{Borel summability of the formal component ratios $\hat r_\pm$.}
We begin by constructing the spatial action strips needed to apply \autoref{lem:projective-strip} to the spatial equation written in an eigenbasis.

\begin{lemma}[Spatial action strips]
\label{lem:spatial-adapted-strips}
Let $\mathcal U_j$, $j\in\{\mathrm A,\ldots,\mathrm F\}$,
be one of the time-dependent Stokes regions defined in
\eqref{eq:time-dependent-Stokes-regions}, and fix
$(x_0,t_0)\in\mathcal U_j$.
Put $\tau_0=\tau(t_0)$ and choose either value $w_0$
satisfying $w_0^2=x_0+2\tau_0$. This choice specifies a local sheet of the normalised spectral curve described in \autoref{sec:stokes-geometry}, and hence a local branch of the action $Z_{\tau(t)}(x)$ in \eqref{eq:space-action}.
Set $s_0\coloneqq 2Z_{\tau_0}(x_0)=2\widetilde Z_{\tau_0}(w_0)$, where $\widetilde Z_\tau$ is defined in \eqref{DefZtilde}. There exist a relatively compact open neighbourhood $V_0\Subset V$ of $t_0$, a constant $d>0$, and jointly
holomorphic functions $x,w:H_d(s_0)\times V_0\longrightarrow\mathbb C$, where $H_d(s_0)
       =\left\{
          s\in\mathbb C:
          \left|\operatorname{Im}
             \bigl(e^{-i\alpha}(s-s_0)\bigr)\right|<d
        \right\}$,
such that
\[
    2\widetilde Z_{\tau(t)}(w(s,t))=s,
    \qquad
    x(s,t)=w(s,t)^2-2\tau(t), \qquad  w(s_0,t_0)=w_0,
    \qquad
    x(s_0,t_0)=x_0.
\]
Thus $x(s,t)$ is the inverse branch of $s=2Z_{\tau(t)}(x)$ obtained by continuation from the chosen local sheet.

Throughout $H_d(s_0)\times V_0$, we have
\[
    w(s,t)y(s,t)\neq0
    \quad \text{ with }\quad
    y(s,t)\coloneqq
       2\bigl(x(s,t)-\tau(t)\bigr)w(s,t).
\]
At both ends of the strip, that is, as
$\operatorname{Re}(e^{-i\alpha}(s-s_0))\to+\infty$
or $-\infty$, we have\footnote{For nonnegative quantities $F(s,t)$ and
$G(s,t)$, the notation $F\asymp G$ means that there
exist constants $c,C>0$ such that
$cG(s,t)\leq F(s,t)\leq CG(s,t)$ for all sufficiently
large $|s|$ in the stated domain. Here the constants
and the threshold for $|s|$ are independent of $t\in V_0$.}, uniformly for $t\in V_0$,
\[
    |w(s,t)|\asymp |s|^{\frac{1}{5}},
    \qquad
    |x(s,t)|\asymp |s|^{\frac{2}{5}},
    \qquad
    |y(s,t)|\asymp |s|^{\frac{3}{5}}.
\]
The choices and bounds can be made locally uniformly
as the base point $(x_0,t_0)$ varies in $\mathcal U_j$,
with the chosen sheet continued holomorphically
in each such neighbourhood.
\end{lemma}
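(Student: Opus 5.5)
The idea is to use the fact that the spatial Stokes curves through $(x_0,t_0)$ are exactly the level lines of $\operatorname{Im}(e^{-i\alpha}\cdot)$ of the action. We invert $s=2\widetilde Z_\tau(w)$ along the whole trajectory through $w_0$, and then thicken this trajectory to a strip.

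\textbf{Step 1: the trajectory at $t_0$.} Fix $t_0$ and consider the maximal integral curve $\gamma$ in the $w$-plane through $w_0$ on which $2\widetilde Z_{\tau_0}(w)-s_0\in e^{i\alpha}\mathbb R$. Since $(x_0,t_0)$ lies off the Stokes graph, the point $w_0$ is not on a critical level line. So $\gamma$ contains no critical point $0,w_\pm(\tau_0)$, and along $\gamma$ the function $\sigma=\operatorname{Re}(e^{-i\alpha}(2\widetilde Z_{\tau_0}-s_0))$ is strictly monotone. By the same properness and maximum-principle argument used in \autoref{sec:stokes-geometry}, $\gamma$ is not compact and cannot accumulate at a finite point. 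Hence both ends of $\gamma$ go to $w=\infty$, and $\sigma$ runs over all of $\mathbb R$. Consequently $2\widetilde Z_{\tau_0}$ restricted to $\gamma$ is a bijection onto the line $s_0+e^{i\alpha}\mathbb R$. Near infinity $\widetilde Z$ is dominated by $\frac45w^5$, so each end of $\gamma$ is asymptotic to one of the ten rays along which $\operatorname{Im}(e^{-i\alpha}w^5)=\mathrm{const}$.

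\textbf{Step 2: thickening and inversion.} At every point of $\gamma$ the derivative $\widetilde Z'_{\tau_0}=4w^2(w^2-3\tau_0)$ is nonzero, so $2\widetilde Z_{\tau_0}$ is locally biholomorphic there. We must produce an inverse on a strip of uniform width $2d$ that is also uniform in $t$ near $t_0$.
\begin{itemize}
\item \emph{Compact part.} On any compact piece of $\gamma$, uniform width follows from the holomorphic implicit function theorem applied to $2\widetilde Z_{\tau(t)}(w)=s$ in the variables $(s,t)$.
\item \emph{Ends.} Near infinity, write $2\widetilde Z_\tau(w)=\frac85w^5\bigl(1-5\tau w^{-2}\bigr)$. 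Then $u=w(1-5\tau w^{-2})^{1/5}$ is a holomorphic coordinate on $\{|w|>\rho\}$ with $\frac85u^5=s$, depending holomorphically on $\tau$ near $\tau_0$. The inverse is $w=(\frac58 s)^{1/5}\bigl(1+O(|s|^{-2/5})\bigr)$ on the sector containing the end, uniformly in $t\in V_0$. In this sector the strip $H_d(s_0)$ is contained in a region $\{|s|>R\}$ lying in a narrow sector around $\pm e^{i\alpha}$.
\end{itemize}
Gluing the two ends to the compact middle piece gives a single-valued inverse $w(s,t)$ on $H_d(s_0)\times V_0$. Single-valuedness holds because the strip is simply connected and the inverse is obtained by continuation along it, with no branch points inside. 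We then set $x=w^2-2\tau$ and $y=2(x-\tau)w$, and the identities in the statement follow.

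\textbf{Step 3: nonvanishing and growth.} We have $wy\neq0$ because $w$ avoids the critical set $\{0,w_\pm\}$. This holds on the middle piece for $d$ small, by compactness and continuity in $t$, and on the ends since $|w|>\rho$ there. From $w\sim(\frac58 s)^{1/5}$ we obtain $|w|\asymp|s|^{1/5}$, and then $|x|\asymp|s|^{2/5}$ and $|y|\asymp|s|^{3/5}$, since $\tau$ is bounded on $V_0$. Finally, local uniformity in the base point follows because all constants ($d$, $\rho$, the compact middle piece) depend continuously on $(x_0,t_0)$ in the open set $\mathcal U_j$.

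\textbf{Main obstacle.} The hardest step is making the strip width $d$ uniform over the entire unbounded trajectory and uniform in $t$. Two things need care here. The first is to verify that the image of the thickened neighbourhood of $\gamma$ actually contains the full strip $H_d(s_0)$, and not merely a neighbourhood whose width shrinks at infinity. The second is to exclude collisions with the critical values $2\widetilde Z(0)$ and $2\widetilde Z(w_\pm)$, whose imaginary parts stay separated from $\operatorname{Im}(e^{-i\alpha}s_0)$ for $t$ near $t_0$. To handle the first point, we use the explicit fifth-root coordinate $u$ at infinity: there, strips of fixed width in $s$ correspond to explicit thin regions in $w$.
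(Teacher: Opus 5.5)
Your route is essentially the paper's. You lift the line $s_0+e^{i\alpha}\mathbb R$ (equivalently, follow the regular level curve through $w_0$), exclude finite endpoints by properness and the off-Stokes-graph hypothesis, and build explicit branches at the two ends; your coordinate $u=w(1-5\tau w^{-2})^{1/5}$ does the job of the paper's $\eta=s^{-1/5}$, $w=s^{1/5}v$. You then thicken the compact middle by the implicit function theorem and glue.

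The gap is the second point of your ``main obstacle'': it is false, and your single-valuedness argument relies on it. The rotated imaginary parts of the critical values need not be separated from $\operatorname{Im}(e^{-i\alpha}s_0)$. Take $x_0=3\tau_0$ and $w_0=\pm\sqrt{5\tau_0}$. Then $s_0=w_0^3\bigl(\frac85w_0^2-8\tau_0\bigr)=0=2\widetilde Z_{\tau}(0)$ for every $t$, so every strip $H_d(s_0)$ contains a critical value. Yet $x_0$ is off the Stokes graph, for two reasons:
\begin{itemize}
\item along a level-set branch issuing from $w=0$, the function $\operatorname{Re}(e^{-i\alpha}\widetilde Z_{\tau_0})$ is strictly monotone starting from $0$, so the branch never passes through the zeros $\pm\sqrt{5\tau_0}$ of $\widetilde Z_{\tau_0}$;
\item the branches from $w_\pm(\tau_0)$ lie on other levels of $\operatorname{Im}(e^{-i\alpha}\widetilde Z_{\tau_0})$ inside $\widetilde V$.
\end{itemize}
Such base points fill whole curves in the Stokes regions, so they cannot be discarded. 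For the same reason, the phrase ``$w_0$ is not on a critical level line'' in your Step~1 is correct only if it means ``not on a branch issuing from a critical point''.

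Hence ``the strip is simply connected with no branch points inside'', read as a covering argument over a strip free of critical values, does not work in general. Some of the other inverse branches of $2\widetilde Z_{\tau}$ do ramify over points of $H_d(s_0)$. What is true, and all that is needed, is that the selected branch never meets $\{0,w_\pm(\tau)\}$.

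The paper obtains this by gluing directly:
\begin{itemize}
\item the finitely many local inverses along the middle segment agree on overlaps, because at $t=t_0$ they all continue the same lift, so local uniqueness and the identity theorem identify them;
\item the end branches are matched with the lift on segments of the tails in the same way;
\item regularity of this single branch comes from the construction (regular points plus compactness in the middle, $|w|>\rho$ at the ends), not from the absence of critical values.
\end{itemize}
The paper says so explicitly: the construction does not require other inverse branches over the same action values to be regular. With this replacement, your argument goes through.

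For local uniformity in the base point, the paper reuses the same branch on $H_{d/2}(s_b)\subset H_d(s_0)$ for nearby base points. This is more robust than your appeal to continuous dependence of $d$ and $\rho$.
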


\begin{proof}The proof is done in \autoref{AppendixProofLemma43}
\end{proof} 

We now use these spatial action strips to apply
\autoref{lem:projective-strip}, with $s$ as the
differential variable and $t$ as the holomorphic
parameter.
We write the spatial Lax equation in an eigenbasis
of its leading matrix and verify the uniform
Gevrey-$1$ estimates required for its remainder.
The resulting summability of the component ratios
in this eigenbasis then gives, after transforming
back to the original basis, the following result
for the formal ratios $\hat r_\pm$ constructed in
\autoref{241113190220}.

\begin{proposition}[Summability of the formal projective solutions]
\label{prop:sectorial-projective-summability}
Let $\mathcal U_j$, $j\in\{\mathrm A,\ldots,\mathrm F\}$,
be one of the time-dependent Stokes regions defined
in \eqref{eq:time-dependent-Stokes-regions}.
Let $\Omega\subset\mathcal U_j$ be an open set carrying
a chosen holomorphic branch $w=w(x,t)$ satisfying $w(x,t)^2=x+2\tau(t)$. For each $\varepsilon\in\{+1,-1\}$, consider the formal ratio
\[
    \hat r_\varepsilon(x,t;\hbar)
       =\frac{\hat v_{\varepsilon,2}}
              {\hat v_{\varepsilon,1}}
       =\frac{\varepsilon w(x,t)}2
        +\sum_{n=1}^{\infty}
           r_{\varepsilon,n}\bigl(\tau(t),w(x,t)\bigr)
           \hbar^n
\]
of the formal WKB column in \autoref{241113190220}.
This is the formal series determined by
\eqref{eq:formal-time-Riccati} and the recursion
\eqref{eq:formal-r-recursion}, evaluated on the
chosen branch.

Then both series $\hat r_\varepsilon$ are locally
uniformly Borel summable in direction $\alpha$ on
$\Omega$, in the sense of
\autoref{def:locally-uniform-Borel-summability}. More precisely, for every relatively compact open
subset $\Omega_0\Subset\Omega$, there exists
$R_{\Omega_0}>0$, common to both signs, such that
their Borel--Laplace sums
\[
    r_\varepsilon(x,t,\hbar)
       \coloneqq
       s_\alpha(\hat r_\varepsilon)(x,t,\hbar)
\]
are jointly holomorphic on
$\Omega_0\times S_{\alpha,R_{\Omega_0}}$.
With the convention \eqref{eq:parametric-Borel-sum},
these sums are given by
\[
    r_\varepsilon(x,t,\hbar)
       =\frac{\varepsilon w(x,t)}2
        +\int_{e^{i\alpha}\mathbb R_{\geq0}}
           e^{-\frac{\xi}{\hbar}}
           \varphi_\varepsilon(x,t,\xi)\,
           \mathrm d\xi,
\]
where $\varphi_\varepsilon$ is the holomorphic
continuation of $\mathcal B\hat r_\varepsilon$
to the Borel tube supplied by local uniform
summability. The functions $r_\varepsilon$ admit
$\hat r_\varepsilon$ as their uniform Gevrey-$1$
asymptotic expansions on $\Omega_0\times S_{\alpha,R_{\Omega_0}}$, in the sense
of \eqref{eq:uniform-Gevrey-convention}.
The Borel-disk parameter $R_{\Omega_0}$ and the
remainder constants may depend on $\Omega_0$.
The Borel sums $r_\varepsilon$ constructed on overlapping subdomains
of $\Omega$ agree after restriction to a common smaller Borel disk.
\end{proposition}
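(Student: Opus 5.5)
The plan is to reduce to \autoref{lem:projective-strip} on the spatial action strips of \autoref{lem:spatial-adapted-strips}, and then transport the result back to $\hat r_\varepsilon$. Fix $(x_0,t_0)\in\Omega_0$ and take $V_0$, $d$ and the inverse branch $x(s,t)$, $w(s,t)$ from \autoref{lem:spatial-adapted-strips}, with $s_0=2Z_{\tau_0}(x_0)$ on the sheet defined by the branch $w$ of $\Omega$. Since $\mathrm d s=2y\,\mathrm dx$ at fixed $t$, the spatial equation becomes
\[
2\hbar\,\partial_s\mathbf v=y^{-1}L\,\mathbf v.
\]
Write $L=L_0+L'$ with $L'=\begin{pmatrix}p&-4(q-\tau)\\ x(q-\tau)+q^2-\tau^2&-p\end{pmatrix}$. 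Here $q-\tau$ and $p$ are Borel sums, by \autoref{250408141643}, of series of order $\hbar$; in fact $q-\tau=O(\hbar^2)$ and $p=O(\hbar^2)$ as well. Diagonalise $L_0=yG\,\operatorname{diag}(1,-1)G^{-1}$ with $G=\begin{pmatrix}1&1\\ w/2&-w/2\end{pmatrix}$. The system in $\chi=G^{-1}\mathbf v$ is then
\[
\hbar\partial_s\chi=(D+B)\chi,\qquad B=\frac{G^{-1}L'G}{2y}-\hbar\,G^{-1}\partial_sG .
\]
Note $\partial_sG=O(|w|^{-1}|y|^{-1})$ because $\partial_s w=1/(4wy)$. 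This puts the equation exactly in the form \eqref{eq:strip-linear-system}, with $z=s$ and parameter $k=t\in V_0$.

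The main obstacle is verifying the uniform Gevrey-1 remainder bounds for $B$ on the whole unbounded strip $H_d(s_0)$, uniformly in $t$ on compacts of $V_0$. First I will check the conjugation. Entries of $G^{-1}L'G$ are bounded by $C(|p|+|w||q-\tau|+|x||q-\tau|/|w|+|q^2-\tau^2|/|w|)$, and by \autoref{lem:spatial-adapted-strips} these weights grow at most like $|s|^{1/5}$. Dividing by $|y|\asymp|s|^{3/5}$ leaves a bounded factor; in fact it decays like $|s|^{-2/5}$. The derivative term is $O(|s|^{-4/5})$. Near finite parts of the strip, $wy\neq0$ on $H_d(s_0)\times V_0$ gives a lower bound, after shrinking $V_0$ and $d$ slightly. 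The remainders of $q,p$ are uniform Gevrey-1 on $V_0\times S_{\alpha,R}$ by \autoref{250408141643}, since they do not depend on $s$. Multiplying by these uniformly bounded $s$-dependent weights therefore preserves the estimate with $B_0\equiv0$. The remaining care is to make the $\asymp$ constants hold on all of $H_d(s_0)$ and not only for large $|s|$; I would combine the asymptotic bounds with compactness of the bounded part of the closed substrip. The lemma then applies for every $d'<d$ and $U_0\Subset V_0$, giving Borel-summable $g_\pm$ with uniform constants.

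Next I transport back. The column direction $(1,g_+)^{\mathsf T}$ corresponds to $G(1,g_+)^{\mathsf T}$, whose ratio is
\[
r=\frac{w}{2}\cdot\frac{1-g_+}{1+g_+}.
\]
For $\varepsilon=-1$ one uses $g_-$, giving $r=-\frac{w}{2}\cdot\frac{1-g_-}{1+g_-}$. Both are well defined since $g_\pm=O(\hbar)$. Formally the same transformation of $\hat g_\pm$ solves the Riccati equation \eqref{diffrt} only in $x$. To identify it with $\hat r_\varepsilon$ I would use the uniqueness in \autoref{prop:formal-amplitude-reconstruction}: away from $x=\tau$, $\hat P_\varepsilon=\hat p+4(x-\hat q)\hat r_\varepsilon$ is the unique spatial Riccati solution with leading term $\varepsilon y$. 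Since $y\neq0$ on these strips, the formal spatial projective solution with leading eigenline $\varepsilon$ is unique, so it equals $\hat r_\varepsilon$.

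Borel summability is preserved under the holomorphic map $g\mapsto \frac{w}{2}\frac{1-g}{1+g}$. The composition of a summable series with zero constant term into a convergent function is summable, since summable series form a differential algebra and division by $1+g$ is allowed. Pulling back via $(x,t)\mapsto(2Z_{\tau(t)}(x),t)$, which is biholomorphic near $(x_0,t_0)$, gives local uniform summability near each point, with the integral representation inherited from \eqref{eq:parametric-Borel-sum}. Covering $\overline{\Omega_0}$ by finitely many such neighbourhoods yields a common $R_{\Omega_0}$ and the Gevrey estimates. Agreement on overlaps follows from uniqueness of Borel sums, via Nevanlinna–Sokal, of the same formal series.
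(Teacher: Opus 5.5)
Your proposal is correct and follows essentially the same route as the paper: the spatial action strips of \autoref{lem:spatial-adapted-strips}, the eigenbasis gauge, and bounded geometric weights on the full strip (middle part by compactness, ends by the $|s|^{1/5},|s|^{2/5},|s|^{3/5}$ asymptotics). Then come \autoref{lem:projective-strip}, the back-transform $\varepsilon\frac{w}{2}\frac{1-g_\varepsilon}{1+g_\varepsilon}$, identification with $\hat r_\varepsilon$ by uniqueness of the spatial Riccati solution (nonvanishing linearisation $-2\varepsilon y$), and finite patching. Your $G$ differs from the paper's $E=w^{-1/2}G$ only by a scalar factor, which changes $B$ by a bounded multiple of $\hbar I$; that term drops out of the Riccati equations. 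One small slip: $p$ is not $O(\hbar^2)$, since $p_1=\partial_t\tau=-\frac{1}{12\tau}\neq0$, but only $p_0=0$ and $q_0=\tau$ are needed for $B_0=0$, so nothing is affected.
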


\begin{proof}
Fix $(x_0,t_0)\in\Omega$ and use the branch of $w$
chosen in the statement of the proposition.
By \autoref{lem:spatial-adapted-strips}, there exist
a strip $H_d(s_0)$, a time neighbourhood
$V_0\Subset V$ of $t_0$, and jointly holomorphic
functions $x=x(s,t)$ and $w=w(s,t)$ on
$H_d(s_0)\times V_0$, such that
\[
    s=2\widetilde Z_{\tau(t)}(w(s,t)),
    \qquad
    x(s,t)=w(s,t)^2-2\tau(t),
    \qquad
    w(s,t)y(s,t)\neq0.
\]
After shrinking the time neighbourhood, we may
assume that $V_0$ is an open disk centred at $t_0$.

We start from the spatial equation of the Lax system $\hbar\partial_x\mathbf v=L\mathbf v$ given in 
\eqref{250408155610}, where $L$ is the matrix in \eqref{241115131339}.
In this part of the proof, $t$ is a parameter.
Since $s=2Z_{\tau(t)}(x)$ and
$\partial_xZ_{\tau(t)}(x)=y$, differentiation at
fixed $t$ gives $\partial_xs=2y, 
    \partial_sx=\frac1{2y}$.
Thus $s$ replaces the spatial variable $x$ but it does not replace the time variable $t$. The product $H_d(s_0)\times V_0$ is simply connected,
and $w$ does not vanish there.
We may therefore choose a holomorphic square root
$w^{\frac{1}{2}}$ on this product and define
\begin{equation}
\label{eq:regular-eigenbasis}
    E=w^{-\frac{1}{2}}
       \begin{pmatrix}
          1&1\\
          \frac{w}{2}&-\frac{w}{2}
       \end{pmatrix}
    \quad \Rightarrow\quad
    \det E=-1.
\end{equation}
Its columns are eigenvectors of the leading
spatial matrix $L_0$: $E^{-1}L_0E=\operatorname{diag}(y,-y)$. For a local solution column $\mathbf v$, set
\[
    \chi(s,t,\hbar)
       =E(s,t)^{-1}
          \mathbf v(x(s,t),t,\hbar).
\]
The substitution $\mathbf v=E\chi$ in the spatial
equation \eqref{250408155610} gives $\hbar\partial_x\chi
       =\left(E^{-1}LE-\hbar E^{-1}\partial_xE\right)
          \chi$.
Using $\partial_x=2y\,\partial_s$ at fixed $t$,
we obtain
\begin{equation}\label{EqdiffMs}\hbar\partial_s\chi=M_x\chi,
    \qquad
    M_x=
       \frac{E^{-1}LE-\hbar E^{-1}\partial_xE}{2y}.
\end{equation}
Here and below, all expressions are evaluated at
$x=x(s,t)$.

Put $\Delta=q-\tau$. Since $t=-6\tau^2$ and $w^2=x+2\tau$, the matrix
$L$ in \eqref{241115131339} decomposes as
\[
    L=L_0+
       \begin{pmatrix}
          p&-4\Delta\\
          \Delta w^2+\Delta^2&-p
       \end{pmatrix},
    \qquad
    L_0=
       \begin{pmatrix}
          0&4(x-\tau)\\
          (x-\tau)w^2&0
       \end{pmatrix}.
\]
Indeed, $x^2+xq+q^2+\frac t2=(x-\tau)w^2+\Delta w^2+\Delta^2$. From \eqref{EqdiffMs} we obtain
\begin{equation}
\label{eq:spatial-eigenbasis-remainder}
    M_x=D+\frac1{2y}
       \begin{pmatrix}
          \frac{\Delta^2}{w}
          &
          p+\frac{\Delta(2x+q+3\tau)}{w}+\frac{\hbar}{4w^2}
          \\[2pt]
          p-\frac{\Delta(2x+q+3\tau)}{w}+\frac{\hbar}{4w^2}
          &
          -\frac{\Delta^2}{w}
       \end{pmatrix},
    \qquad
    D=\operatorname{diag}\left(\frac12,-\frac12\right).
\end{equation}

Choose $0<d_1<d$ and $t_0\in V_1\Subset V_0$, where $V_1$ is an open disk.
By \autoref{250408141643}, there exists $R_1>0$
such that $q$ and $p$ are jointly holomorphic on $V_1\times S_{\alpha,R_1}$ and admit their uniform Gevrey-$1$ expansions there.
In particular, they are uniformly bounded on this
product, and their leading coefficients are
$q_0=\tau$ and $p_0=0$.

We claim that the four geometric factors $\left(\frac1{yw},
    \frac1y,
    \frac wy,
    \frac1{yw^2}\right)
$
are holomorphic and uniformly bounded on
$H_{d_1}(s_0)\times V_1$. Their holomorphy follows from $wy\neq0$.
To prove boundedness, first restrict to a finite
middle part of the strip.
Its closure, together with $\overline{V_1}$,
is a compact subset of $H_d(s_0)\times V_0$.
Since $w$ and $y$ do not vanish there, all four
factors are bounded on this middle part. At either end, the estimates in
\autoref{lem:spatial-adapted-strips} give
\[
    \frac1{yw}=O(|s|^{-\frac{4}{5}}),
    \qquad
    \frac1y=O(|s|^{-\frac{3}{5}}),
    \qquad
    \frac wy=O(|s|^{-\frac{2}{5}}),
    \qquad
    \frac1{yw^2}=O(|s|^{-1}),
\]
uniformly for $t\in V_1$.
This proves boundedness on the entire product
$H_{d_1}(s_0)\times V_1$. Set $B=M_x-D$. Using
\[
    \frac{\Delta(2x+q+3\tau)}{yw}
       =2\Delta\frac wy+\Delta^2\frac1{yw},
\]
we see that every entry of $B$ is a finite sum
of products of the four bounded geometric factors
with $p$, $\Delta$, $\Delta^2$ or $\hbar$.
Consequently, $B$ is jointly holomorphic on $H_{d_1}(s_0)\times V_1\times S_{\alpha,R_1}$. The uniform Gevrey-$1$ estimates for $(q,p)$,
together with stability of these estimates under
finite sums and products, give constants $C>0$, $A>0$
such that, after decreasing $R_1$ if necessary,
\[
    \sup_{\substack{s\in H_{d_1}(s_0)\\t\in V_1}}
    \left\|
       B(s,t,\hbar)
       -\sum_{n=0}^{N-1}B_n(s,t)\hbar^n
    \right\|
       \leq C A^N N!|\hbar|^N
\]
for every $N\geq0$ and
$\hbar\in S_{\alpha,R_1}$.
Since $\Delta_0=0$ and $p_0=0$, we have $B_0=0$.

We can therefore apply \autoref{lem:projective-strip},
with $s$ as the differential variable and $t\in V_1$
as the parameter.
Choose
$0<d_2<d_1$ and $t_0\in V_2\Subset V_1$, with $V_2$ an open disk.
\autoref{lem:projective-strip} gives uniform Borel summability of the
two formal eigenbasis ratios $\hat g_\varepsilon$
on $H_{d_2}(s_0)\times V_2$.
Their sums $g_\varepsilon$ are jointly holomorphic
on $H_{d_2}(s_0)\times V_2\times S_{\alpha,R_2}$ for a common $0<R_2\leq R_1$.
The formal eigenbasis directions are $\begin{pmatrix}1\\ \hat g_+ \end{pmatrix}$, $\begin{pmatrix}\hat g_-\\ 1 \end{pmatrix}$. Multiplication by $E$ gives their component ratios
in the original basis. We first identify these
ratios with the formal solutions appearing in the
statement. The gauge transformation and the change of spatial
variable transform the corresponding projective
equations into one another. Hence each transformed
ratio satisfies the formal spatial Riccati equation
\[
    \hbar\partial_x\hat r
       =x^2+x\hat q+\hat q^2+\frac t2
        -2\hat p\hat r
        -4(x-\hat q)\hat r^2.
\]
Its leading term is $\frac{\varepsilon w}{2}$. The leading
right-hand side of this equation is
$(x-\tau)(w^2-4r^2)$, and
\[
    \left.
    \partial_r\bigl((x-\tau)(w^2-4r^2)\bigr)
    \right|_{r=\frac{\varepsilon w}{2}}
       =-4\varepsilon(x-\tau)w
       =-2\varepsilon y
       \neq0.
\]
Consequently, comparison of successive powers of
$\hbar$ determines every higher coefficient uniquely:
at each order, the unknown coefficient is multiplied
by $-2\varepsilon y$, while all remaining terms
depend only on previously determined coefficients.
The formal ratio supplied by
\autoref{241113190220} satisfies this same equation
and has the same leading term. It therefore agrees
with the transformed ratio. Explicitly,
\begin{equation}
\label{eq:projective-back-transform}
    \hat r_\varepsilon
       =\varepsilon\frac w2\,
          \frac{1-\hat g_\varepsilon}
               {1+\hat g_\varepsilon}.
\end{equation}

Locally uniform Borel summability is stable under
finite sums, products, and inversion of a series
whose constant coefficient is nonzero.
Since $\hat g_\varepsilon$ has zero constant
coefficient, \eqref{eq:projective-back-transform}
therefore proves local uniform Borel summability
of $\hat r_\varepsilon$ on the strip product.
Here multiplication by $w$ preserves the estimates
on relatively compact subsets; no uniform bound
for $w$ on the entire unbounded strip is required. Moreover, \autoref{lem:projective-strip} gives
$g_\varepsilon=O(\hbar)$ uniformly on
$H_{d_2}(s_0)\times V_2$.
After decreasing $R_2$ if necessary, we therefore
have $|g_\varepsilon|\leq\frac12$ for both signs
throughout the corresponding product with
$S_{\alpha,R_2}$. Thus $1+g_\varepsilon$ does not
vanish there, and compatibility of Borel summation
with these algebraic operations gives
\[
    s_\alpha(\hat r_\varepsilon)
       =\varepsilon\frac w2\,
          \frac{1-g_\varepsilon}
               {1+g_\varepsilon}.
\]
These functions are jointly holomorphic. Since $\partial_sx=\frac{1}{2y}\neq0$, the map $(s,t)\longmapsto(x(s,t),t)$ is locally biholomorphic. Restricting near the
chosen point therefore gives local uniform Borel
summability in the original variables $(x,t)$,
on a neighbourhood contained in $\Omega$. Finally, let $\Omega_0\Subset\Omega$.
Cover $\overline{\Omega_0}$ by finitely many such
neighbourhoods, chosen sufficiently small that the
local Borel estimates are uniform on each of them.
On their overlaps, the local Borel transforms have
the same Taylor series at $\xi=0$, namely
$\mathcal B\hat r_\varepsilon$.
The identity theorem therefore identifies their
continuations on a common smaller Borel tube.
Taking the minimum of the finitely many tube widths
and Borel-disk parameters, and enlarging the
finitely many bounds, gives common estimates for
both signs throughout $\Omega_0$. The resulting Laplace integrals consequently agree
on overlaps and define jointly holomorphic
functions on
$\Omega_0\times S_{\alpha,R_{\Omega_0}}$.
Their integral representations and uniform
Gevrey-$1$ remainder estimates are those asserted
in the proposition.
\end{proof}

\paragraph{From projective solutions to exact WKB wave matrices.}
\autoref{prop:sectorial-projective-summability} supplies the Borel
sums of the formal ratios $\hat r_\pm$, which determine the directions
of the two columns. It remains to reconstruct their scalar amplitudes
with the prescribed normalisations. We first define the exact WKB
wave matrix that we seek. Then \autoref{lem:projective-reconstruction}
gives the local reconstruction, and
\autoref{thm:exact-WKB-existence} combines it with
\autoref{prop:sectorial-projective-summability} to obtain existence
and uniqueness throughout a Stokes region.

\begin{definition}[Exact WKB wave matrix]
\label{def:exact-WKB-wave-matrix}
Let $\Omega\subset\mathbb C_x\times V$ be a connected open set
carrying fixed holomorphic branches of $\tau$ and $w$, with
$t=-6\tau^2$, $w^2=x+2\tau$ and $\tau w\neq0$.
Fix a formal fundamental WKB wave matrix
\[
    \hat\Psi=\hat E e^{\frac{S}{\hbar}},
    \qquad
    \hat E=\sum_{n=0}^{\infty}E_n\hbar^n,
    \qquad
    S=\operatorname{diag}(S_+,S_-),
\]
where the ordered phases and the coefficients $E_n$ are holomorphic
on $\Omega$, and $E_0$ is invertible there.
An exact WKB wave matrix with these formal data, in direction
$\alpha$, is a holomorphic fundamental solution $\Psi$ of both
equations of \eqref{250408155610}, defined on an open set
$\mathcal D\subset\Omega\times\mathbb C^\ast$ with the following
property: for every relatively compact open subset
$\Omega_0\Subset\Omega$, there are constants $A,B,R>0$ such that
$\Omega_0\times S_{\alpha,R}\subset\mathcal D$ and its amplitude
$E=\Psi e^{-\frac{S}{\hbar}}$ satisfies
\begin{equation}
\label{eq:exact-amplitude-Gevrey}
    \sup_{(x,t)\in\Omega_0}
    \left\|E(x,t,\hbar)-\sum_{n=0}^{N-1}E_n(x,t)\hbar^n\right\|
       \leq A B^N N!|\hbar|^N,
    \qquad N\geq0,\quad \hbar\in S_{\alpha,R}.
\end{equation}
The sum is empty when $N=0$. The constants may depend on
$\Omega_0$, but are independent of $N$ and $\hbar$.
Here $S_{\alpha,R}$ is the Borel disk defined in
\eqref{eq:Borel-disk-and-tube}. Uniqueness is understood after
restriction to common such products.
\end{definition}

\paragraph{Local reconstruction of the scalar amplitudes.}
Let $b=(x_b,t_b)$ and let $\Omega\subset\mathbb C_x\times V$ be a
polydisc centred at $b$, carrying holomorphic branches with
\[
    \tau(t)\neq0,\qquad w(x,t)^2=x+2\tau(t)\neq0,
    \qquad t=-6\tau(t)^2.
\]
Put $y=2(x-\tau)w$. Since $y\,\mathrm dx+w\,\mathrm dt=\mathrm dZ$
by \eqref{250716160458lambda}, choose the phases
\begin{equation}
\label{eq:reconstruction-phases}
    S_\varepsilon=\varepsilon\bigl(Z-Z(b)\bigr)-\kappa_\varepsilon,
    \qquad
    \mathrm dS_\varepsilon
       =\varepsilon(y\,\mathrm dx+w\,\mathrm dt),
    \qquad \varepsilon\in\{+1,-1\},
\end{equation}
where $\kappa_\varepsilon\in\mathbb C$ are prescribed constants.
Consider the formal columns of \autoref{241113190220}, with
$m_+=m_-=0$, written as
\begin{equation}
\label{eq:formal-projective-columns}
    \hat{\mathbf v}_\varepsilon
       =\hat a_\varepsilon
         \begin{pmatrix}1\\ \hat r_\varepsilon\end{pmatrix}
         e^{\frac{S_\varepsilon}{\hbar}},
    \qquad
    \hat a_\varepsilon(b;\hbar)=\hat b_\varepsilon(\hbar)
       \in\mathbb C[[\hbar]]^\times.
\end{equation}
After shrinking $\Omega$ if necessary, the leading coefficients of
$\hat a_\varepsilon$ are nonvanishing. The formal amplitude matrix is
therefore
\begin{equation}
\label{eq:projective-amplitude-matrix}
    \hat E=
       \begin{pmatrix}
          \hat a_+&\hat a_-\\
          \hat r_+\hat a_+&\hat r_-\hat a_-
       \end{pmatrix}.
\end{equation}
The ratios $\hat r_\varepsilon$ determine the column directions,
but not the scalar amplitudes $\hat a_\varepsilon$. The latter are
reconstructed by integrating a closed logarithmic one-form, with
initial values fixed by $\hat b_\varepsilon$. The next lemma shows
that this reconstruction preserves Borel summability and gives
solutions of both equations of the Lax system. Its detailed proof
is given in \autoref{app:projective-reconstruction-proof}.

\begin{lemma}[Reconstruction from the projective solutions]
\label{lem:projective-reconstruction}
With the preceding notation, assume that $\hat r_+$ and $\hat r_-$
are locally uniformly Borel summable in direction $\alpha$ on
$\Omega$, and that $\hat b_+$ and $\hat b_-$ are Borel summable in
the same direction. Recall that $s_\alpha$ denotes Borel--Laplace summation in
direction $\alpha$, with the convention
\eqref{eq:parametric-Borel-sum}.
Let $q=s_\alpha(\hat q)$ be the selected
Painlev\'e~I solution and $p=\hbar\partial_tq$.
The constants $\kappa_\varepsilon\in\mathbb C$
are the prescribed phase constants introduced in
\eqref{eq:reconstruction-phases} and we have
$S_\varepsilon(b)=-\kappa_\varepsilon$.
Then, on every smaller polydisc $\Omega_0\Subset\Omega$ centred at
$b$, the scalar amplitudes $\hat a_\pm$ and the matrix $\hat E$ are
uniformly Borel summable in direction $\alpha$.
Writing
\[
    r_\varepsilon=s_\alpha(\hat r_\varepsilon),
    \qquad a_\varepsilon=s_\alpha(\hat a_\varepsilon),
    \qquad
    E=s_\alpha(\hat E)=
       \begin{pmatrix}
          a_+&a_-\\ r_+a_+&r_-a_-
       \end{pmatrix},
\]
there is an $R>0$ such that $\Psi=E e^{\frac{S}{\hbar}}$ is holomorphic and invertible on
$\Omega_0\times S_{\alpha,R}$ and solves both equations of
\eqref{250408155610}. Its first-row values at $b$ are
$s_\alpha(\hat b_\varepsilon)e^{-\frac{\kappa_\varepsilon}{\hbar}}$.
It is the unique exact WKB wave matrix with the prescribed formal
data, in the sense of \autoref{def:exact-WKB-wave-matrix}.
\end{lemma}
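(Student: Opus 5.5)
The plan is to reconstruct the amplitudes through the closed one-form of \autoref{prop:formal-amplitude-reconstruction}, now at the level of Borel sums. For each $\varepsilon$ set
\[
    \beta_\varepsilon\coloneqq\frac{P_\varepsilon-\varepsilon y}{\hbar}\,\mathrm dx+\frac{T_\varepsilon-\varepsilon w}{\hbar}\,\mathrm dt,
    \qquad
    P_\varepsilon=p+4(x-q)r_\varepsilon,\quad T_\varepsilon=2r_\varepsilon,
\]
with $q=s_\alpha(\hat q)$, $p=s_\alpha(\hat p)$ and $r_\varepsilon=s_\alpha(\hat r_\varepsilon)$. By hypothesis and \autoref{250408141643}, the coefficients $P_\varepsilon-\varepsilon y$ and $T_\varepsilon-\varepsilon w$ are uniformly Borel sums on $\Omega_0\times S_{\alpha,R}$ of series with vanishing constant term. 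This uses $P_{\varepsilon,0}=\varepsilon y$ and $T_{\varepsilon,0}=\varepsilon w$. Since Borel summation commutes with finite sums and products, division by $\hbar$ keeps them Borel sums of ordinary series. That is, $\hbar^{-1}s_\alpha(\hat f)=s_\alpha(\hbar^{-1}\hat f)$ when $f_0=0$, which in the Borel plane is differentiation of $\mathcal B\hat f$ plus the term $f_1$. Cauchy estimates on a slightly smaller tube preserve the exponential bounds.

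\textbf{Step 1 (closedness).} I would check that the exact form $\beta_\varepsilon$ is closed, i.e.\ $\partial_t P_\varepsilon=\partial_x T_\varepsilon$. The formal identity $\partial_t\hat P_\varepsilon=\partial_x\hat T_\varepsilon$ holds coefficientwise. Both sides are Borel sums, with $t$-derivatives handled by Cauchy estimates on the polydisc. Uniqueness of Borel sums (Nevanlinna--Sokal) then transports the identity to the sums. Alternatively one can argue directly: $r_\varepsilon$ satisfies both the time Riccati equation and the spatial Riccati equation, again by uniqueness of sums, and these two equations encode the flatness relation.

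\textbf{Step 2 (amplitudes).} On the simply connected polydisc I would define
\[
    a_\varepsilon\coloneqq s_\alpha(\hat b_\varepsilon)\exp\Bigl(\int_b^{(x,t)}\beta_\varepsilon\Bigr),
\]
integrating along straight segments. The integral of a uniformly Borel-summable form over segments of bounded length is uniformly Borel summable, since one integrates the Borel transform. Its sum has vanishing constant term, so its exponential is summable: $\exp$ of a summable series is summable because composition with an entire function preserves the convolution bounds. The same bounds give the uniform Gevrey remainder estimates. Comparison with \eqref{eq:formal-amplitude-primitive} shows $a_\varepsilon=s_\alpha(\hat a_\varepsilon)$, hence $E=s_\alpha(\hat E)$.

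\textbf{Step 3 (Lax equations).} I would verify that $\mathbf v_\varepsilon=a_\varepsilon(1,r_\varepsilon)^{\mathsf T}e^{S_\varepsilon/\hbar}$ solves both equations of \eqref{250408155610}. By construction,
\[
    \hbar\,\mathrm d\log(a_\varepsilon e^{S_\varepsilon/\hbar})=P_\varepsilon\,\mathrm dx+T_\varepsilon\,\mathrm dt .
\]
The $t$-derivative of the first component is therefore $2r_\varepsilon v_1=2v_2$, and the Riccati equation \eqref{diffrt} for $r_\varepsilon$ gives the second row of the time equation. The first row of the spatial equation is the definition of $P_\varepsilon$. The second row follows from the residual argument of \autoref{250717105422}, or directly from the spatial Riccati equation for $r_\varepsilon$. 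For invertibility, $\det E=a_+a_-(r_--r_+)$, and
\[
    r_--r_+=-w+O(\hbar),\qquad a_\pm=a_{\pm,0}+O(\hbar),
\]
uniformly on $\Omega_0$. Hence $\det E\neq0$ after shrinking $R$. The first-row values at $b$ follow from $S_\varepsilon(b)=-\kappa_\varepsilon$.

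\textbf{Step 4 (uniqueness), the main obstacle.} Suppose $\Psi'$ is another exact WKB matrix with the same formal data. Then $C=\Psi^{-1}\Psi'$ is independent of $(x,t)$, and
\[
    E^{-1}E'=e^{S/\hbar}Ce^{-S/\hbar}
\]
has asymptotic expansion $I$ uniformly on $\Omega_0\times S_{\alpha,R}$. The diagonal entries of $C$ are functions of $\hbar$ alone with expansion $1$ in the Gevrey-$1$ sense on a full Borel disk, so they equal $1$ by Nevanlinna--Sokal uniqueness. For the off-diagonal entries, $C_{+-}e^{(S_+-S_-)/\hbar}$ must be $O(\hbar^\infty)$ with Gevrey bounds for all $(x,t)\in\Omega_0$. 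Here
\[
    S_+-S_-=2(Z-Z(b))-\kappa_++\kappa_-,
\]
and $Z-Z(b)$ takes a full open set of values near $0$ because $\mathrm dZ\neq0$ where $\tau w\neq0$. I would choose a point where $\operatorname{Re}\bigl(e^{-i\alpha}(S_+-S_-)\bigr)$ is larger than at some other point. The two resulting bounds then force $C_{+-}(\hbar)$ to be exponentially small at two different rates along the disk, i.e.\ of size $O(e^{-c/|\hbar|})$ beyond Gevrey-$1$ flatness. A Phragmén--Lindelöf/Watson argument on the Borel disk then gives $C_{+-}=0$; the same reasoning applies to $C_{-+}$. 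I expect this step to need the most care, specifically in formulating exactly which exponential smallness forces vanishing on $S_{\alpha,R}$.
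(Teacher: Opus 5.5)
Your Steps 1--3 are essentially the paper's existence argument: sum the two Riccati identities and the closed form $\hat\beta_\varepsilon$, integrate on the polydisc, exponentiate, check the four scalar equations, and get invertibility from $\det E=a_+a_-(r_--r_+)$. Two small corrections there.

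First, the primitive $\int_b^{(x,t)}\hat\beta_\varepsilon$ does not have vanishing constant term. Since $\beta_{\varepsilon,0}=-\tfrac12\,\mathrm dw/w$, its $\hbar^0$ coefficient is $-\tfrac12\log(w/w_0)$. This is harmless: factor out the exponential of the constant term before the convolution argument, as the paper does.

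Second, the residual argument of \autoref{250717105422} as written divides by $x-q$. That is not available near $x=\tau$, where this lemma is later applied in \autoref{thm:no-jump-quadrivalent}. Use your other option, the exact spatial Riccati equation for $r_\varepsilon$, or run the residual argument with $x$ and $t$ exchanged. Once the time equation holds, flatness gives $\hbar\partial_t\mathbf r=A\mathbf r$ for $\mathbf r=\hbar\partial_x\mathbf v-L\mathbf v=(0,\rho)^{\mathsf T}$, and its first component reads $0=2\rho$.

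Step 4 is where you go astray. The obstacle you anticipate does not exist, and the two-point argument you sketch for it is not sound as stated.

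In \autoref{def:exact-WKB-wave-matrix} the phases $S$ are part of the formal data. So a second exact WKB matrix with the same data is $\Psi'=E'e^{S/\hbar}$, where $E'$ has the same uniform Gevrey-$1$ expansion $\hat E$ as $E$ on a common product $\Omega_0\times S_{\alpha,R}$. For fixed $(x,t)$, each entry of $E'-E$ is then a holomorphic function on the full Borel disk with zero Gevrey-$1$ expansion, hence vanishes by Nevanlinna's uniqueness theorem. This gives $E'=E$ and $\Psi'=\Psi$ at once. It is exactly the paper's argument, and it needs neither the connection matrix $C$ nor the Lax system.

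Even within your route, the off-diagonal entry at a single fixed point, $\hbar\mapsto C_{+-}(\hbar)e^{(S_+-S_-)(x,t)/\hbar}$, is itself Gevrey-flat on the whole disk. It therefore vanishes by the same theorem you applied to the diagonal entries, and $C_{+-}=0$ because the exponential has no zeros.

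Drop the comparison of $\operatorname{Re}\bigl(e^{-i\alpha}(S_+-S_-)\bigr)$ at two points and the appeal to Phragm\'en--Lindel\"of. On $S_{\alpha,R}$ the modulus of $e^{(S_+-S_-)/\hbar}$ also depends on $\operatorname{Im}\bigl(e^{-i\alpha}(S_+-S_-)\bigr)$ and on $\operatorname{Im}(e^{i\alpha}/\hbar)$, which is unbounded on the disk. So that comparison does not produce the uniform extra exponential smallness you describe, and none is needed.
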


\begin{proof}
The proof is done in \autoref{app:projective-reconstruction-proof}.
\end{proof}

\paragraph{Existence and uniqueness on a Stokes region.}
\autoref{prop:sectorial-projective-summability} verifies the
summability hypothesis on the ratios in every Stokes region.
Applying \autoref{lem:projective-reconstruction} locally and
continuing the prescribed formal amplitudes then gives the following
result. Indeed, the local Borel sums agree on overlaps by uniqueness.

\begin{theorem}[Existence and uniqueness of exact WKB wave matrices]
\label{thm:exact-WKB-existence} 
Fix a summation direction
$\alpha\in\mathbb R/(2\pi\mathbb Z)$ and a time sector
$\widetilde V$ as in \autoref{250224171850}.
Let $V$ be its image under $t=-6\tau^2$, equipped
with the corresponding holomorphic inverse branch
$\tau=\tau(t)$. Let $(q,p)$ be the Borel-summed Painlev\'e~I solution
associated with this branch in \autoref{250408141643}. Fix $j\in\{\mathrm A,\ldots,\mathrm F\}$ and let
$\mathcal U_j\subset\mathbb C_x\times V$ be the
time-dependent Stokes region defined in
\eqref{eq:time-dependent-Stokes-regions}.
Choose a holomorphic branch $w=w(x,t)$ on
$\mathcal U_j$ satisfying $w(x,t)^2=x+2\tau(t)$. Together with $\tau(t)$, this specifies a lift to the spectral surface \eqref{250409162453}. Let $Z$ denote the primitive \eqref{250716160458lambda}, evaluated on this lift. Fix a basepoint $b=(x_b,t_b)\in\mathcal U_j$. For each $\varepsilon\in\{+1,-1\}$, prescribe
$\kappa_\varepsilon\in\mathbb C$ and a formal unit $\hat b_\varepsilon(\hbar)=\underset{n=0}{\overset{\infty}{\sum}}b_{\varepsilon,n}\hbar^n$, $b_{\varepsilon,0}\neq0$,
which is Borel summable in direction $\alpha$.
Use the first-component normalisations $\hat c_\varepsilon(\hbar)
       =\hat b_\varepsilon(\hbar)
          e^{-\frac{\kappa_\varepsilon}{\hbar}}$
in \autoref{241113190220} (thus $m_+=m_-=0$).
Set $S_\varepsilon
       =\varepsilon\bigl(Z-Z(b)\bigr)
          -\kappa_\varepsilon$, $S=\operatorname{diag}(S_+,S_-)$.
Let
\[
    \hat\Psi=\hat E e^{\frac{S}{\hbar}},
    \qquad
    \hat E=\sum_{n=0}^{\infty}E_n\hbar^n,
\]
be the formal fundamental WKB wave matrix obtained
locally from \autoref{250716102028} and continued
coefficientwise as in \autoref{250109214318}.
Then the coefficientwise continuation of $\hat E$
is single-valued on $\mathcal U_j$, and $\hat E$
is locally uniformly Borel summable in direction
$\alpha$ on $\mathcal U_j$ in the
sense of \autoref{def:locally-uniform-Borel-summability}.
With the Borel--Laplace convention
\eqref{eq:parametric-Borel-sum}, define
\[
    E=s_\alpha(\hat E),
    \qquad
    \Psi_{(j)}=E e^{\frac{S}{\hbar}}.
\]
Then $\Psi_{(j)}$ is the unique exact WKB wave matrix
with these prescribed formal data, in the sense of
\autoref{def:exact-WKB-wave-matrix}.
More precisely, for every relatively compact open
subset $\Omega_0\Subset\mathcal U_j$, there exists
$R_{\Omega_0}>0$ such that
\[
    \Psi_{(j)}:
    \Omega_0\times S_{\alpha,R_{\Omega_0}}
       \longrightarrow \operatorname{GL}_2(\mathbb C)
\]
is jointly holomorphic and satisfies $\hbar\partial_x\Psi_{(j)}=L\Psi_{(j)}$ and $\hbar\partial_t\Psi_{(j)}=A\Psi_{(j)}$ throughout $\Omega_0\times S_{\alpha,R_{\Omega_0}}$, where $L$ and $A$ are the Lax matrices in \eqref{241115131339}.
The amplitude $E$ admits $\hat E$ as its uniform
Gevrey-$1$ asymptotic expansion on
$\Omega_0\times S_{\alpha,R_{\Omega_0}}$,
in the sense of \eqref{eq:exact-amplitude-Gevrey}. 
The Borel-disk parameter and the remainder constants
may depend on $\Omega_0$. Finally, on a sufficiently small Borel disk, the first row of $\Psi_{(j)}$
at the basepoint is
\[
    \left([\Psi_{(j)}]_{1,1}(b,\hbar),
          [\Psi_{(j)}]_{1,2}(b,\hbar)\right)
       =
       \left(
          s_\alpha(\hat b_+)(\hbar)
             e^{-\frac{\kappa_+}{\hbar}},
          s_\alpha(\hat b_-)(\hbar)
             e^{-\frac{\kappa_-}{\hbar}}
       \right).
\]
The local constructions agree on overlaps after
restriction to a common smaller Borel disk.
\end{theorem}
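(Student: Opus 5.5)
The proof combines \autoref{prop:sectorial-projective-summability} with \autoref{lem:projective-reconstruction}: the first is applied on all of $\mathcal U_j$, the second locally, and the resulting local sums are patched together using uniqueness of Borel sums.

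\emph{Single-valuedness of the formal continuation.} For fixed $\tau$, the region $R_j(\tau)$ is a connected component of the complement of a forest of trees in $\mathbb C_x$, so it is simply connected. The labelling is constant over the simply connected sector $V$, and the regions vary continuously with $\tau$. Hence $\mathcal U_j$ fibres over $V$ with simply connected fibres, and is itself simply connected. It also avoids $w=0$ and $\tau=0$, so it lies in the image of $\mathring{\mathsf\Sigma}$ under the chosen lift. By \autoref{250109214318}, $\hat E$ continues coefficientwise along every path in $\mathcal U_j$, and simple connectivity makes this continuation single-valued. By \autoref{prop:formal-amplitude-reconstruction}, the continued amplitudes are $\hat a_\varepsilon=\hat b_\varepsilon\exp\int_b\hat\beta_\varepsilon$, with path-independent integrals.

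\emph{Local summation.} \autoref{prop:sectorial-projective-summability} gives locally uniform Borel summability of $\hat r_\pm$ on all of $\mathcal U_j$. Fix a point $c\in\mathcal U_j$ and a polydisc $\Omega_c\Subset\mathcal U_j$ centred at $c$. The continued formal matrix restricted to $\Omega_c$ is the formal WKB matrix based at $c$, with normalisation
\[
\hat b_\varepsilon^{(c)}=\hat a_\varepsilon(c;\hbar)
\]
and phase constant $\kappa^{(c)}_\varepsilon=-S_\varepsilon(c)$; this follows from the uniqueness in \autoref{241113190220}. For \autoref{lem:projective-reconstruction} we need $\hat b^{(c)}_\varepsilon$ to be Borel summable. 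It equals $\hat b_\varepsilon\exp\int_b^c\hat\beta_\varepsilon$, and the integrand is built from $\hat r_\varepsilon,\hat q,\hat p$ by sums and products. Summability is preserved by these operations, by integration along a compact path, and by exponentiation of a series with zero constant term. So cover a compact path from $b$ to $c$ by finitely many polydiscs and propagate step by step: at each step, apply \autoref{lem:projective-reconstruction} on a polydisc and evaluate the summable amplitude at the next centre. This chain argument is the step I expect to need the most care. The Borel-disk radius shrinks at each step, but only finitely many steps are needed for a compact set, so a uniform radius survives.

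\emph{Patching, uniqueness, and the remaining claims.} Given $\Omega_0\Subset\mathcal U_j$, cover $\overline{\Omega_0}$ by finitely many such polydiscs. On each overlap, the local sums $s_\alpha(\hat E)$ have the same Borel germ. By the identity theorem, the continuations of the Borel transforms agree on a common tube, so the Laplace integrals agree on a common smaller Borel disk. Taking minima of the radii and maxima of the constants then gives the following on $\Omega_0\times S_{\alpha,R_{\Omega_0}}$:
\begin{itemize}
\item $E$ is jointly holomorphic and has the uniform Gevrey-$1$ estimate \eqref{eq:exact-amplitude-Gevrey};
\item $\Psi_{(j)}$ satisfies both Lax equations, since each local piece does by the lemma;
\item $\Psi_{(j)}$ is invertible, because $\det$ is constant in $(x,t)$ and $\det E_0\neq0$.
\end{itemize}
Uniqueness holds because any two exact WKB matrices with the same formal data have amplitudes with the same uniform Gevrey-$1$ expansion on Borel disks. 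By Nevanlinna--Sokal they then coincide, and this can be checked locally, as in the uniqueness clause of \autoref{lem:projective-reconstruction}. Finally, the value of the first row at $b$ is read off directly from \autoref{lem:projective-reconstruction} applied at the basepoint polydisc.
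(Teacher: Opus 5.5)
Your proof is correct. Apart from one step it follows the paper:
\begin{itemize}
\item summability of $\hat r_\pm$ from \autoref{prop:sectorial-projective-summability};
\item propagation of summable normalisations along a finite chain of polydiscs via \autoref{lem:projective-reconstruction};
\item patching by uniqueness of the Borel sum;
\item Nevanlinna uniqueness.
\end{itemize}

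The genuine difference is how you obtain single-valuedness of the continued formal amplitude. You argue topologically that $\mathcal U_j$ is simply connected, so the closed forms $\hat\beta_\varepsilon$ and the square roots have no periods. The paper never uses the topology of $\mathcal U_j$. For a loop $\gamma\subset\mathcal U_j$, it picks a simply connected $V_0\Subset V$ containing the time projection of $\gamma$, which is possible because $V$ is convex. The exact Lax connection is flat and polynomial in $x$ on the simply connected domain $\mathbb C_x\times V_0$. Hence the chain of local exact matrices along $\gamma$ returns to the original matrix. Since the phases are single-valued, so is the exact amplitude. Uniqueness of Gevrey asymptotic coefficients then forces the formal amplitude to return to its germ.

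What each route buys:
\begin{itemize}
\item Your route gives formal single-valuedness up front, by a purely formal argument. Afterwards you only need to propagate sums along paths issuing from $b$.
\item The paper's route would still work if $\mathcal U_j$ had nontrivial topology. The cost is that exact sums must first be continued along closed paths.
\end{itemize}

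If you keep your route, two points need tightening.
\begin{itemize}
\item ``Complement of a forest of trees'' is simply connected here only because every arm of both trees runs to infinity; the complement of a bounded tree is not simply connected. The paper establishes this unboundedness in \autoref{sec:stokes-geometry}.
\item Passing from simply connected fibres and base to a simply connected total space requires local triviality of the family $R_j(\tau(t))$ over $V$, that is, an isotopy of the Stokes graph. This does follow from the absence of saddle connections inside $\widetilde V$ together with the fixed asymptotic directions at infinity. However, ``the regions vary continuously'' does not give it by itself.
\end{itemize}

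A minor point: the exponent $\int_b^c\hat\beta_\varepsilon$ has a nonzero constant coefficient, and $\hat\beta_\varepsilon$ involves division by $\hbar$. It is therefore not just sums and products of series with zero constant term. Both operations are covered by the preliminary summability rules at the start of the proof of \autoref{lem:projective-reconstruction}, once the constant term is separated before exponentiating.
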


\begin{proof}
By \autoref{prop:sectorial-projective-summability},
the formal component ratios $\hat r_\pm$ are locally uniformly
Borel summable in direction $\alpha$ on
$\mathcal U_j$. We denote their Borel--Laplace sums by
$r_\pm=s_\alpha(\hat r_\pm)$. Apply
\autoref{lem:projective-reconstruction} near the basepoint.
Its closed logarithmic one-forms can then be integrated along any
fixed finite path in the region $\mathcal{U}_j$. Cover the path by finitely many
polydiscs: on a common Borel disk, integration, multiplication and
exponentiation preserve Borel summability, as proved in that lemma.
This propagates Borel summability of the prescribed formal amplitudes to
every endpoint. We next prove that this continuation is single-valued.
Let $\gamma$ be a loop in $\mathcal U_j$ based at $b$.
Choose a simply connected relatively compact open set
$V_0\Subset V$ containing its time projection.
Such a choice is possible because $V$ is a sector
of opening $\frac{4\pi}{5}<\pi$ and is therefore convex.
By \autoref{250408141643}, after decreasing the
common Borel-disk parameter, $q$ and $p$ are
holomorphic on $V_0\times S_{\alpha,R}$.

For each fixed $\hbar\in S_{\alpha,R}$, the exact
Lax connection is holomorphic on
$\mathbb C_x\times V_0$, since its matrices
\eqref{241115131339} are polynomial in $x$.
It is flat by \autoref{250408155701}.
On this simply connected domain, continuation of
a local fundamental matrix is therefore independent
of the path.

Along $\gamma$, the local matrices supplied by
\autoref{lem:projective-reconstruction} realise
this analytic continuation: successive constructions
agree by their initial values and uniqueness for
the exact linear system.
Consequently, continuation around $\gamma$ returns
the original exact matrix.
Since the prescribed phases are single-valued on
$\mathcal U_j$, the same is true of its amplitude.
Uniqueness of asymptotic coefficients then shows
that the continued formal amplitude also returns
to its original germ.
Thus $\hat E$ is single-valued on $\mathcal U_j$.

On overlaps the local sums coincide by uniqueness
of the Borel sum of the given single-valued formal amplitude. Thus
they patch, independently of the path, and solve the complete system.
A finite cover of the closure of any relatively compact subdomain
gives a common Borel disk and common remainder constants.
Uniqueness follows by applying the Nevanlinna uniqueness
theorem entrywise to the difference of two amplitudes
with the prescribed formal expansion, as in the proof of
\autoref{lem:projective-reconstruction}.
\end{proof}

\begin{remark}
Normalisations of the more general form
$\hbar^{m_\varepsilon}\hat b_\varepsilon
e^{-\frac{\kappa_\varepsilon}{\hbar}}$,
with $m_\varepsilon\in\mathbb Z_{\geq0}$, are obtained
by replacing $\Psi_{(j)}$ with $\Psi_{(j)}
       \operatorname{diag}
          \bigl(\hbar^{m_+},\hbar^{m_-}\bigr)$.
This matrix still solves both equations of the Lax
system and remains invertible for $\hbar\neq0$.
The amplitude condition in
\autoref{def:exact-WKB-wave-matrix} and the uniqueness
statement apply after removing these prescribed
column factors.
\end{remark}

\section{Regularity at the quadrivalent vertex for the exact WKB matrix}\label{Section5}
This section proves that the exact WKB wave matrix constructed from \autoref{thm:exact-WKB-existence}
is regular at the double turning point
$x=\tau(t)$, which is the quadrivalent vertex of
the Stokes graph. More precisely, we construct
a common Borel sum on a full neighbourhood of this
vertex and show that the matrices in the four
incident Stokes regions coincide there when their
ordered phases, branches and normalisations agree.

The argument proceeds in three steps.
First, \autoref{prop:formal-regularity-quadrivalent}
establishes regularity and factorial-type bounds
for the formal amplitude at the vertex.
Second, since the spatial action coordinate
degenerates there, \autoref{lem:Morse-family}
constructs a family in which $x$ and $t$ vary
together while a local Morse coordinate is kept
fixed. The uniform estimates proved in
\autoref{app:uniform-Morse-estimates} allow us
to apply \autoref{lem:projective-strip} along
this family, yielding common Borel summability
of the formal ratios in
\autoref{prop:vertex-projective-summability}.
Finally, \autoref{lem:projective-reconstruction}
reconstructs the amplitudes and gives the regular
exact wave matrix. Uniqueness of Borel summation
then identifies its restrictions with the
consistently normalised sectorial matrices,
proving the absence of a jump in
\autoref{thm:no-jump-quadrivalent}. The effect of changing the scalar normalisations
is described in \autoref{sec:exact-normalisation}.


\begin{proposition}[Formal regularity]
\label{prop:formal-regularity-quadrivalent}
Let $\tau_\ast\in\widetilde V$ and set $v_\ast
       \coloneqq
       \bigl(\tau_\ast,-6\tau_\ast^2\bigr)
       \in\mathbb C_x\times V$.
Thus $v_\ast$ corresponds to the double turning point
$x=x_{\mathrm d}(\tau_\ast)=\tau_\ast$
at time $t=-6\tau_\ast^2$, as defined in
\eqref{eq:spatial-turning-points}; it is the
quadrivalent vertex of the spatial Stokes graph. On a chosen local sheet of the spectral cover, there is a formal fundamental WKB wave matrix of the form
\[
    \hat\Psi^{\mathrm v}
       =\hat E^{\mathrm v}e^{\frac{S}{\hbar}},
    \qquad
    \hat E^{\mathrm v}
       =\sum_{n=0}^{\infty}E_n^{\mathrm v}\hbar^n,
    \qquad
    S=\operatorname{diag}(S_+,S_-).
\]
The phases are prescribed by
\eqref{eq:reconstruction-phases} with basepoint
$b=v_\ast$ and $\kappa_+=\kappa_-=0$:
\[
    S_\varepsilon
       =\varepsilon\bigl(Z-Z(v_\ast)\bigr),
    \qquad \varepsilon\in\{+1,-1\},
\]
where $Z$ is the primitive
\eqref{250716160458lambda} on the chosen sheet.
In particular, $S(v_\ast)=0$.
The first-row normalisations are
\[
    [\hat\Psi^{\mathrm v}]_{1,1}(v_\ast;\hbar)
       =[\hat\Psi^{\mathrm v}]_{1,2}(v_\ast;\hbar)
       =1
\]
coefficientwise in $\hbar$.
All amplitude coefficients $(E_n^{\mathrm v})_{n\geq 0}$ are
holomorphic on a common full neighbourhood of
$v_\ast$, and $\hat E^{\mathrm v}$ is locally uniformly
of factorial type there. Its leading coefficient
$E_0^{\mathrm v}$ is invertible throughout a
sufficiently small such neighbourhood.
\end{proposition}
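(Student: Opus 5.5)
The plan is to obtain the statement as a direct specialisation of \autoref{241113190220}, \autoref{250716102028} and \autoref{250708195116} to the basepoint $b_0=v_\ast$. The point is that none of these formal results requires $x_0\neq\tau_0$: they only require $\tau_0w_0\neq0$. At the vertex we have $x=\tau_\ast$, so $w_\ast^2=x+2\tau_\ast=3\tau_\ast$. Since $\tau_\ast\in\widetilde V$ is nonzero, both values $w_\ast=w_\pm(\tau_\ast)=\pm\sqrt{3}\,\tau_\ast^{1/2}$ are nonzero. Hence the point $(\tau_\ast,w_\ast)$ of $\mathsf\Sigma$ lies in $\mathring{\mathsf\Sigma}$, and the chosen local sheet is the choice of one of these two values.

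The steps are as follows.

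First, I will fix the sheet $w_\ast$ and apply \autoref{241113190220} with $\hat c_+=\hat c_-=1$, that is, $m_\pm=0$, $\kappa_\pm=0$ and $\hat b_\pm=1$. This yields unique formal columns $\hat{\mathbf v}_\pm$. Their phases are $S_\varepsilon=\varepsilon(Z-Z(v_\ast))$, and their first components equal $1$ at $v_\ast$ coefficientwise. This is exactly the well-normalised choice described after \autoref{250716102028}, so $S(v_\ast)=0$ and the stated first-row normalisation hold.

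Second, the ratios $\hat r_\varepsilon$ come from the recursion \eqref{eq:formal-r-recursion}. That recursion divides only by $4r_{\varepsilon,0}=2\varepsilon w$, never by $y$. The amplitudes come from integrating the closed form $\hat\beta_\varepsilon$ of \autoref{prop:formal-amplitude-reconstruction}, whose coefficients are holomorphic on all of $\mathring{\mathsf\Sigma}$. Consequently every coefficient $E_n^{\mathrm v}$ is holomorphic on a neighbourhood $U$ of $(\tau_\ast,w_\ast)$ in $\mathring{\mathsf\Sigma}$.

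Third, I will transport this neighbourhood to $(x,t)$-space. The projection $(\tau,w)\mapsto(x,t)=(w^2-2\tau,-6\tau^2)$ has Jacobian determinant $\pm 24\tau w$, which is nonzero at the vertex. Hence it is a local biholomorphism there, and the coefficients are holomorphic on a full neighbourhood of $v_\ast$ in $\mathbb C_x\times V$. This neighbourhood straddles all four incident Stokes regions.

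Fourth, invertibility of $E_0^{\mathrm v}$ follows from \eqref{eq:formal-leading-amplitude}, which gives $\det E_0=-w_\ast\neq0$ with $b_{\pm,0}=1$. By continuity, $E_0^{\mathrm v}$ remains invertible on a smaller neighbourhood.

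Fifth, the factorial-type claim follows from \autoref{250708195116} applied with the trivially factorial normalisations $\hat b_\pm=1$. That theorem explicitly asserts the bound \eqref{250708201455} on compact neighbourhoods of points over $x=\tau\neq0$.

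The only step requiring care is the third, together with the statement of \autoref{250708195116}. We must check that the compact sets used there may be taken to be full neighbourhoods in $\mathring{\mathsf\Sigma}$, rather than subsets of the Stokes regions. We must also check that the pullback to $(x,t)$ of locally uniform factorial bounds is again locally uniform. Both reduce to the local biholomorphism in the third step and compactness. If any doubt remains about the Nagumo argument near the vertex, I would re-run it directly on \eqref{eq:formal-r-recursion} on a small polydisc in $(\tau,w)$. On such a polydisc, $|w|$ and $|\tau|$ are bounded below, so the only divisor in that recursion, $2\varepsilon w$, is uniformly controlled.
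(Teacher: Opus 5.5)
Your proposal is correct and matches the paper's proof essentially step for step: both specialise \autoref{241113190220}, \autoref{250716102028} and \autoref{250708195116} to the basepoint $v_\ast$ with $\hat b_\pm=1$ and $\kappa_\pm=0$, and the key observation in both is that the ratio recursion divides only by $2\varepsilon w$, where $w_\ast^2=3\tau_\ast\neq0$. Your Jacobian check ($\pm24\tau w\neq0$) just makes explicit the paper's passage to a full polydisc in $(x,t)$, and the appeal to continuity is unnecessary, since \eqref{eq:formal-leading-amplitude} gives $\det E_0^{\mathrm v}\equiv-w_\ast$ on the whole domain of the square-root branch.
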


\begin{proof}
At the vertex $w^2=3\tau_\ast\neq0$. Choose a full polydisc on which
$\tau w\neq0$. With $w_\ast=w(v_\ast)$, the leading amplitude is
\begin{equation*}
 E^{\mathrm v}_0=\left(\frac{w_\ast}{w}\right)^{\frac{1}{2}}
       \begin{pmatrix}1&1\\\frac{w}{2}&-\frac{w}{2}\end{pmatrix},
 \qquad \det E^{\mathrm v}_0=-w_\ast\neq0.
\end{equation*}
Here the square root is chosen to equal $1$ at $v_\ast$. 
The construction of \autoref{241113190220} applies on this entire
polydisc. Its ratio recursion \eqref{eq:formal-r-recursion} divides
by $w$, and its amplitude is the exponential of the closed primitive
\eqref{eq:formal-amplitude-primitive}. Take $\kappa_+=\kappa_-=0$ and
$\hat b_+=\hat b_-=1$.
Then $S(v_\ast)=0$ and the prescribed first-row
normalisations are both equal to $1$.
Holomorphy and the displayed leading determinant follow from
\autoref{250716102028} and factorial bounds follow from
\autoref{250708195116}. The key point is that the formal ratios $\hat{r}_\pm$ are constructed using the auxiliary time equation through the recursion
\eqref{eq:formal-r-recursion}. Its divisor is $4r_{\varepsilon,0}=2\varepsilon w$,
which remains nonzero at $v_\ast$ because
$w(v_\ast)^2=3\tau_\ast\neq0$.
Thus the recursion remains regular at the double
turning point, where the spatial eigenvalue $y$ vanishes.
The holomorphic amplitude reconstruction
\eqref{eq:formal-amplitude-primitive} then preserves
this regularity.
\end{proof}

The main task is now to prove the analytic assertion on the exact WKB matrix. Using the chosen branch
of $\tau^{\frac{1}{2}}$ on $\widetilde V$, set
\begin{equation}
\label{eq:PI-action-coordinate}
    z=z(\tau)
       \coloneqq
       2\widetilde Z_\tau\bigl(w_+(\tau)\bigr)
       =-\frac{48\sqrt3}{5}\tau^{\frac{5}{2}}.
\end{equation}
Changing the square-root branch reverses the sign of $z$.
In this subsection, $Z_\tau(x)$ denotes the local branch
corresponding to $w(\tau)=w_+(\tau)$, so that
$Z_\tau(\tau)=\frac{z}{2}$. The sector $\widetilde V$ has opening $\frac{2\pi}{5}$, and its boundary rays satisfy
$5\arg\tau-2\alpha\in2\pi\mathbb Z$.
Consequently, the map $\tau\mapsto z(\tau)$ is
biholomorphic from $\widetilde V$ onto one of the two
open half-planes bounded by $e^{i\alpha}\mathbb R$.
Put $z_\ast\coloneqq z(\tau_\ast)$ and denote the
image of the chosen time sector by
\[
    \mathbb H_\ast
       \coloneqq z(\widetilde V)
       =
       \left\{
          z\in\mathbb C
          \,\middle|\,
          \operatorname{Im}(e^{-i\alpha}z)\,
          \operatorname{Im}(e^{-i\alpha}z_\ast)>0
       \right\}.
\]
Thus $\mathbb H_\ast$ is the open half-plane bounded
by the straight line $e^{i\alpha}\mathbb R$ and containing $z_\ast$.
Let  $\tau:\mathbb H_\ast\longrightarrow\widetilde V$
be the holomorphic inverse of
\eqref{eq:PI-action-coordinate}.
Choose $0<d<
       \left|\operatorname{Im}
                    (e^{-i\alpha}z_\ast)\right|$,
so that the strip $\overline{H_d(z_\ast)}\subset\mathbb H_\ast$.

\medskip

The following lemma constructs curves along which
$x$ and $t$ vary together while the action difference
from the moving double turning point remains fixed.
Along these curves, the derivative of the action
with respect to $z$ is $\frac12$, even at the double
turning point. This provides the geometric setup
for applying \autoref{lem:projective-strip} in the
proof of \autoref{prop:vertex-projective-summability}.

\begin{lemma}[Coordinates near the moving double turning point]
\label{lem:Morse-family}
With the preceding notation, there exist $\rho>0$
and a jointly holomorphic function
\[
    x:\Delta_\rho\times H_d(z_\ast)
         \longrightarrow\mathbb C,
    \qquad
    \Delta_\rho\coloneqq
       \{\zeta\in\mathbb C:|\zeta|<\rho\},
\]
such that, throughout the product $\Delta_\rho\times H_d(z_\ast)$,
\begin{equation}
\label{eq:Morse-action-coordinate}
    Z_{\tau(z)}(x(\zeta,z))
       =\frac z2+\zeta^2,
    \qquad
    x(0,z)=\tau(z),\quad\forall \, (\zeta,z)\in \Delta_\rho\times H_d(z_\ast) 
\end{equation}
The corresponding holomorphic branch $w(\zeta,z)$ satisfies
\[
    w(\zeta,z)^2=x(\zeta,z)+2\tau(z),
    \qquad
    w(0,z)=w_+(\tau(z)).
\]
It is nowhere zero on
$\Delta_\rho\times H_d(z_\ast)$. Define
\[
    t(z)\coloneqq-6\tau(z)^2,
    \qquad
    \Phi(\zeta,z)\coloneqq
       \bigl(x(\zeta,z),t(z)\bigr).
\]
Then $\Phi$ is a holomorphic map from
$\Delta_\rho\times H_d(z_\ast)$ to
$\mathbb C_x\times V$. It is locally biholomorphic
near $(0,z_\ast)$, and satisfies
$\Phi(0,z_\ast)=v_\ast$.

For each fixed $z$, the value $\zeta=0$ corresponds
to the double turning point $x=\tau(z)$.
Small nonzero values of $\zeta$ give nearby points
distinct from that turning point.
For each fixed $\zeta$, the curve
$z\mapsto\Phi(\zeta,z)$ remains in
$\mathbb C_x\times V$ and satisfies
\[
    Z_{\tau(z)}(x(\zeta,z))
       -Z_{\tau(z)}(\tau(z))=\zeta^2,
    \qquad
    \left.
    \frac{\partial}{\partial z}
       Z_{\tau(z)}(x(\zeta,z))
    \right|_{\zeta\ {\rm fixed}}
       =\frac12.
\]

Finally,
$s(\zeta,z)\coloneqq \frac{x(\zeta,z)}{\tau(z)}$
stays in a fixed small neighbourhood of $1$ for $(\zeta,z)\in \Delta_\rho\times H_d(z_\ast)$. For every $0<\rho'<\rho$, it tends to $1$
uniformly for $|\zeta|\leq\rho'$ at both ends
of the strip $z\in H_d(z_\ast)$, namely as $\operatorname{Re}
       \bigl(e^{-i\alpha}(z-z_\ast)\bigr)
       \longrightarrow\pm\infty$.
\end{lemma}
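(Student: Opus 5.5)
The approach is to work in the rescaled variables $s=x/\tau$ and $\omega=w/\tau^{1/2}$, where the action is homogeneous. Since $\widetilde Z_\tau(w)=\frac45w^5-4\tau w^3$, the substitution $w=\tau^{\frac12}\omega$ gives $\widetilde Z_\tau(w)=\tau^{\frac52}\bigl(\frac45\omega^5-4\omega^3\bigr)$. With $\omega_+=\sqrt3$ we have $\widetilde Z_\tau(w_+)=\frac z2$. Setting $h(\omega)\coloneqq\frac45\omega^5-4\omega^3$, the defining equation \eqref{eq:Morse-action-coordinate} becomes
\[
  \tau^{\frac52}\bigl(h(\omega)-h(\sqrt3)\bigr)=\zeta^2 .
\]
Because $h'(\sqrt3)=0$ and $h''(\sqrt3)=4\omega(5\omega^2-6)|_{\sqrt3}=36\sqrt3\neq0$, the Morse lemma provides a local biholomorphism $\mu$ near $\omega=\sqrt3$ with $h(\omega)-h(\sqrt3)=\mu(\omega)^2$, $\mu(\sqrt3)=0$ and $\mu'(\sqrt3)\neq0$. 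We fix the branch $\tau^{\frac54}$ as a holomorphic square root of $\tau^{\frac52}$ on $\widetilde V$. The equation then reads $\mu(\omega)=\zeta\,\tau^{-\frac54}$, up to a fixed sign choice, so
\[
  \omega=\mu^{-1}\bigl(\zeta\tau(z)^{-\frac54}\bigr),\qquad w=\tau^{\frac12}\omega,\qquad x=w^2-2\tau,
\]
and hence $s=x/\tau=\omega^2-2$. Joint holomorphy in $(\zeta,z)$ follows from holomorphy of $\tau(z)$ on $\mathbb H_\ast$ and of $\mu^{-1}$ near $0$. The branch of $w$ is the one with $w(0,z)=w_+(\tau(z))$, and it is nonzero because $\omega$ stays near $\sqrt3$.

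The one point requiring care is the domain. We need $\zeta\tau^{-\frac54}$ to stay inside the disc where $\mu^{-1}$ is defined, uniformly for $z$ in the whole unbounded strip. From $|z|=\frac{48\sqrt3}{5}|\tau|^{\frac52}$ we get $|\tau|^{-\frac54}\asymp|z|^{-\frac12}$. On $H_d(z_\ast)$ the distance from $z$ to $0$ is at least $|\operatorname{Im}(e^{-i\alpha}z_\ast)|-d>0$, since $\overline{H_d(z_\ast)}\subset\mathbb H_\ast$ and $0$ lies on the boundary line. Therefore $|\tau(z)|^{-\frac54}$ is bounded above on the strip by a constant $c$. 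Choosing $\rho$ with $\rho c<r_\mu$, where $r_\mu$ is the radius of the disc on which $\mu^{-1}$ is defined, handles the whole strip at once.

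This gives the claim that $s$ stays in a fixed small neighbourhood of $1$, since $\omega^2-2$ is close to $3-2=1$. It also gives the limit: for $|\zeta|\leq\rho'$ we have $|\zeta\tau^{-\frac54}|\leq\rho'|\tau|^{-\frac54}\to0$ as $|z|\to\infty$, which happens at both ends of the strip. Hence $\omega\to\sqrt3$ and $s\to1$ uniformly.

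The remaining assertions then follow.
\begin{itemize}
\item \emph{Identities at $\zeta=0$.} Setting $\zeta=0$ gives $\omega=\sqrt3$, so $x(0,z)=3\tau-2\tau=\tau(z)$, the double turning point, and $w(0,z)=w_+$.
\item \emph{Action identities.} The action identity is the defining equation, because $Z_{\tau}(\tau)=\frac z2$ on this sheet. Differentiating $\frac z2+\zeta^2$ at fixed $\zeta$ gives the derivative $\frac12$.
\item \emph{Distinctness for $\zeta\neq0$.} For small $\zeta\neq0$ we have $\omega\neq\sqrt3$, and $\omega$ is near $\sqrt3$, so $\omega\neq-\sqrt3$; hence $x\neq\tau$.
\item \emph{The map $\Phi$.} $\Phi$ lands in $\mathbb C_x\times V$ because $\tau(z)\in\widetilde V$. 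It is holomorphic as a composition, and $\Phi(0,z_\ast)=(\tau_\ast,-6\tau_\ast^2)=v_\ast$.
\item \emph{Local biholomorphy.} The Jacobian of $\Phi$ at $(0,z_\ast)$ is triangular: $\partial_zt=-12\tau\,\tau'(z)\neq0$ because $z\mapsto\tau$ is biholomorphic, and
\[
  \partial_\zeta x=2\tau w\,\partial_\zeta\omega\cdot\tau^{-\frac12}=2\tau\sqrt3\,\tau^{-\frac54}/\mu'(\sqrt3)\neq0 .
\]
Hence $\Phi$ is locally biholomorphic near $(0,z_\ast)$.
\end{itemize}
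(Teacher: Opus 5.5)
Your proposal is correct and follows the paper's proof essentially step for step: rescale by $\tau$, take a holomorphic Morse square root at the critical point, invert with argument $\zeta\tau^{-5/4}$, and use $|z|\geq|\operatorname{Im}(e^{-i\alpha}z_\ast)|-d$ to bound $|\tau|^{-5/4}$ uniformly on the whole strip. The only differences are cosmetic. You normalise by $\omega=w\tau^{-1/2}$ instead of $s=x/\tau$, which makes the rescaled action the polynomial $h(\omega)=A(\omega^2-2)$ and avoids choosing a branch of $(s+2)^{1/2}$. There is also one harmless slip: $h''(\omega)=16\omega^3-24\omega$, so $h''(\sqrt3)=24\sqrt3$ rather than $36\sqrt3$; this is still nonzero and consistent with $\partial_\zeta x(0,z)=3^{-1/4}\tau^{-1/4}$.
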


\begin{proof}
We first construct a local Morse coordinate for the
spatial action. We then choose a common $\zeta$-disk
valid along the entire strip $H_d(z_\ast)$ and verify the geometric
properties of the resulting parametrisation.

\smallskip
\noindent\emph{Step 1. Construction of the Morse coordinate.}
Writing $x=\tau s$, the action formula
\eqref{eq:space-action} gives
\begin{equation}\label{Ztaus} Z_\tau(\tau s)=\tau^{\frac{5}{2}}A(s),
    \qquad
    A(s)\coloneqq\frac45(s-3)(s+2)^{\frac{3}{2}}.
\end{equation}
Here $(s+2)^{\frac{1}{2}}$ is the holomorphic branch near
$s=1$ whose value there is $\sqrt3$ so that $A(1)=-\frac{24}{5}\sqrt{3}$.
Since $A'(s)=2(s-1)(s+2)^{\frac{1}{2}}$, we obtain $A(s)-A(1)=\sqrt3\,(s-1)^2+O((s-1)^3)$.
 Consequently, the quotient
$\frac{A(s)-A(1)}{(s-1)^2}$ extends holomorphically
and is nonzero near $s=1$.
Choose the holomorphic square root of this extended
quotient whose value at $s=1$ is $3^{\frac{1}{4}}$, and set $f(s)\coloneqq
       (s-1)
       \left(
          \frac{A(s)-A(1)}{(s-1)^2}
       \right)^{\frac{1}{2}}$,
where the quotient is understood through its
holomorphic extension at $s=1$.
Then
\begin{equation}
\label{rrrr}
    f(s)^2=A(s)-A(1),
    \qquad
    f(1)=0,
    \qquad
    f'(1)=3^{\frac{1}{4}}.
\end{equation}
The holomorphic inverse-function theorem therefore
gives a local inverse $s=F(\eta)$, defined for
$|\eta|<\eta_0$, with $F(0)=1$ and $F'(0)=3^{-\frac{1}{4}}$. Shrink $\eta_0$ so that $F(\eta)$ stays in a small neighbourhood of $1$ on which the chosen branches
are defined.

\smallskip
\noindent\emph{Step 2. A common domain along the full strip.}
Set $\delta_\ast
       \coloneqq
       \left|\operatorname{Im}
                    (e^{-i\alpha}z_\ast)\right|-d>0$.
For $z\in H_d(z_\ast)$, we have $|z|\geq\delta_\ast$.
By \eqref{eq:PI-action-coordinate}, this implies 
\begin{equation}|\tau(z)|\geq
    \left(\frac{5\delta_\ast}{48\sqrt3}\right)^{\frac{2}{5}}
       \eqqcolon\mu>0.
\end{equation}
Choose the powers of $\tau$ consistently with the
fixed branch of $\tau^{\frac{1}{2}}$, so that
$\left(\tau^{\frac{5}{4}}\right)^2=\tau^{\frac{5}{2}}$.
Choose $\rho>0$ such that
$\rho\mu^{-\frac{5}{4}}<\frac{\eta_0}{2}$, and define
\begin{equation}
\label{eq:explicit-Morse-family}
    s(\zeta,z)
       =F\bigl(\zeta\tau(z)^{-\frac{5}{4}}\bigr),
    \qquad
    x(\zeta,z)=\tau(z)s(\zeta,z).
\end{equation}
These functions are jointly holomorphic on
$\Delta_\rho\times H_d(z_\ast)$, because the
argument of $F$ remains in $|\eta|<\frac{\eta_0}{2}$
throughout this product. Applying \eqref{Ztaus}
gives the first equality below. The remaining equalities
follow from \eqref{rrrr} and
$f(s(\zeta,z))=\zeta\tau(z)^{-\frac54}$,
since $f\circ F=\mathrm{id}$:
\begin{equation*}
    Z_{\tau(z)}(x(\zeta,z))
       -Z_{\tau(z)}(\tau(z))
    =\tau(z)^{\frac{5}{2}}
         \bigl(A(s(\zeta,z))-A(1)\bigr)
    =\tau(z)^{\frac{5}{2}}
         f(s(\zeta,z))^2
      =\zeta^2.
\end{equation*}
Since $Z_{\tau(z)}(\tau(z))=\frac{z}{2}$ and $F(0)=1$,
this proves \eqref{eq:Morse-action-coordinate}.
Differentiating with respect to $z$ at fixed
$\zeta$ gives the stated derivative $\frac12$.

Moreover, the relation $w^2=x+2\tau$ in
\eqref{250706182441}, together with
$x(\zeta,z)=\tau(z)s(\zeta,z)$, gives
$w(\zeta,z)^2=\tau(z)(s(\zeta,z)+2)$.
With the square-root branches fixed above, we therefore have
\[
    w(\zeta,z)
       =\tau(z)^{\frac12}
          \bigl(s(\zeta,z)+2\bigr)^{\frac12}.
\]
This function is holomorphic and nowhere zero on
$\Delta_\rho\times H_d(z_\ast)$, since
$\tau(z)\neq0$ and $s(\zeta,z)+2\neq0$ throughout
this product. Moreover,
$w(0,z)=\sqrt3\,\tau(z)^{\frac{1}{2}}=w_+(\tau(z))$
for every $z\in H_d(z_\ast)$. 
By construction, $\tau(z)\in\widetilde V$, so
$t(z)=-6\tau(z)^2$ belongs to $V$ throughout
the strip.

\smallskip
\noindent\emph{Step 3. Local invertibility and behaviour at the ends.}
At $\zeta=0$, differentiation of
\eqref{eq:explicit-Morse-family} gives
\[
    \partial_\zeta x(0,z)
       =3^{-\frac{1}{4}}\tau(z)^{-\frac{1}{4}}\neq0.
\]
Also, differentiating \eqref{eq:PI-action-coordinate}
and $t=-6\tau^2$ yields
\[
    \partial_z t(z)
       =\frac{-12\tau(z)}
               {-24\sqrt3\,\tau(z)^{\frac{3}{2}}}
       =\frac{1}{2\sqrt3\,\tau(z)^{\frac{1}{2}}}\neq0.
\]
Since $t$ is independent of $\zeta$, the Jacobian
determinant of $\Phi$ at $(0,z_\ast)$ is
\[
    \left[\det
    \begin{pmatrix}
        \partial_\zeta x&\partial_z x\\
        0&\partial_z t
    \end{pmatrix}\right]_{|(\zeta,z)=(0,z_\ast)}
       =
       \partial_\zeta x(0,z_\ast)\,
       \partial_z t(z_\ast)\neq0.
\]
Thus $\Phi$ is locally biholomorphic there, and
$\Phi(0,z_\ast)=v_\ast$.
The action identity also shows that
$x(\zeta,z)=\tau(z)$ holds precisely when $\zeta=0$. Finally, holomorphy of $F$ and $F(0)=1$ give
$|F(\eta)-1|\leq C|\eta|$ for $|\eta|\leq\frac{\eta_0}{2}$.
Hence, for every $0<\rho'<\rho$,
\[
    \sup_{|\zeta|\leq\rho'}
       |s(\zeta,z)-1|
       \leq C\rho'|\tau(z)|^{-\frac{5}{4}}
       =O\left(|z|^{-\frac{1}{2}}\right).
\]
At either end of the strip, $|z|\to\infty$.
This proves the required uniform convergence to $1$.
\end{proof}

\begin{remark}[Role of the moving coordinates $(\zeta,z)$]
Inside a Stokes region, \autoref{lem:spatial-adapted-strips}
uses the spatial action $s=2Z_{\tau(t)}(x)$ as the
differential variable, with $t$ fixed.
Since $\partial_xs=2y\neq0$, this change has a local
holomorphic inverse, which
\autoref{lem:spatial-adapted-strips} extends to a full
spatial strip as required by
\autoref{lem:projective-strip}.
At the double turning point, however, $y=0$:
the spatial action has a critical point, so this
inverse is no longer holomorphic there. The moving coordinates $(\zeta,z)$ of \autoref{lem:Morse-family}
replace these spatial trajectories by curves along
which $x$ and $t$ vary together, with
$\partial_zZ|_\zeta=\frac12$.
The leading matrix of the pulled-back $z$-equation
therefore has the distinct eigenvalues
$\frac12$ and $-\frac12$, including at the double point.
This construction also applies to nearby points
inside the incident Stokes regions.
Its present domain, however, requires $\frac{x}{\tau}$
to remain near $1$ and does not cover an arbitrary
Stokes region.

In both constructions, control on a full unbounded
strip is used to continue the Borel transforms along
the entire summation ray: the characteristic formulas
in the proof of \autoref{lem:uniform-strip-summation}
involve arbitrarily large translations parallel to
that ray. A local holomorphic change of coordinates
alone does not provide this continuation or its
exponential bounds.
\end{remark}

We now pull back the full Lax system along the curves
of \autoref{lem:Morse-family}, using $z$ as the
differential variable and $\zeta$ as a parameter.
The constant action derivative $\frac12$ gives
leading eigenvalues $\pm\frac12$, even at $\zeta=0$.
Together with the uniform coefficient estimates
of \autoref{lem:Morse-uniform-coefficients}, this
allows us to apply \autoref{lem:projective-strip}
and obtain common Borel sums of the formal ratios
$\hat r_\pm$ on a full neighbourhood of the double
turning point.

\begin{proposition}[Common projective summability at the vertex]
\label{prop:vertex-projective-summability}
Fix a summation direction
$\alpha\in\mathbb R/(2\pi\mathbb Z)$ and retain
the time sector, branches, and double turning point $v_\ast=(\tau_\ast,-6\tau_\ast^2)$ used in \autoref{lem:Morse-family}.
For $\varepsilon\in\{+1,-1\}$, let
$\hat r_\varepsilon$ be the formal ratio constructed
in \autoref{241113190220}, with leading coefficient
$\frac{\varepsilon w}{2}$ on the chosen sheet.

There exist a full polydisc
$\Omega\subset\mathbb C_x\times V$ centred at
$v_\ast$ and constants $\delta>0$, $R>0$, common to
both signs, such that $\hat r_\pm$ are uniformly
Borel summable in direction $\alpha$ on $\Omega$,
in the sense of
\autoref{def:locally-uniform-Borel-summability}.
With the convention \eqref{eq:parametric-Borel-sum},
their sums are
\[
    r_\varepsilon(x,t,\hbar)
       \coloneqq
       s_\alpha(\hat r_\varepsilon)(x,t,\hbar)
       =
       \frac{\varepsilon w(x,t)}2
       +\int_{e^{i\alpha}\mathbb R_{\geq0}}
          e^{-\frac{\xi}{\hbar}}
          \varphi_\varepsilon(x,t,\xi)\,
          \mathrm d\xi,
\]
where $\varphi_\varepsilon$ is the jointly holomorphic
continuation of $\mathcal B\hat r_\varepsilon$ to
$\Omega\times\Xi_{\alpha,\delta}$, with at most
exponential growth in $\xi$, uniformly for
$(x,t)\in\Omega$. These sums are jointly holomorphic on
$\Omega\times S_{\alpha,R}$ and admit
$\hat r_\varepsilon$ as their uniform Gevrey-$1$
asymptotic expansions there, in the sense of
\eqref{eq:uniform-Gevrey-convention}.
In particular, the same Borel--Laplace construction
is valid across all four Stokes curves incident
with $v_\ast$.
\end{proposition}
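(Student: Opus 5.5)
The plan mirrors the proof of \autoref{prop:sectorial-projective-summability}, with the spatial action strip replaced by the moving family of \autoref{lem:Morse-family}. The differential variable is $z\in H_d(z_\ast)$ and the parameter is $\zeta\in\Delta_\rho$. Write $\mathbf V(\zeta,z,\hbar)=\mathbf v(\Phi(\zeta,z),\hbar)$. By the chain rule, the full Lax pair \eqref{250408155610} gives
\[
\hbar\partial_z\mathbf V=\bigl(\partial_zx\,L+\partial_zt\,A\bigr)\mathbf V\eqqcolon N\mathbf V .
\]
Its leading matrix is $N_0=\partial_zx\,L_0+\partial_zt\,A_0=\bigl(2(x-\tau)\partial_zx+\partial_zt\bigr)A_0$. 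On the eigenvectors of $A_0$, with $\pm w$, the leading eigenvalues are therefore $\pm(y\,\partial_zx+w\,\partial_zt)=\pm\partial_z Z|_\zeta=\pm\frac12$, by $\mathrm dZ=y\,\mathrm dx+w\,\mathrm dt$ and \autoref{lem:Morse-family}. This holds even at $\zeta=0$. We then gauge by the same eigenbasis $E=w^{-1/2}\begin{pmatrix}1&1\\ \frac w2&-\frac w2\end{pmatrix}$ as in \eqref{eq:regular-eigenbasis}. This is legitimate because $w$ is nowhere zero and the domain $\Delta_\rho\times H_d(z_\ast)$ is simply connected. The result is $\hbar\partial_z\chi=(D+B)\chi$ with
\[
B=E^{-1}(N-N_0)E-\hbar E^{-1}\partial_zE .
\]
Since $q_0=\tau$ and $p_0=0$, we get $B_0=0$.

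\textbf{Main step: uniform Gevrey estimates for $B$ on the full strip, locally uniformly in $\zeta$.} This is the hypothesis of \autoref{lem:projective-strip}. Two inputs are needed.

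First, the geometric factors must be bounded along the whole strip. We have $\partial_zt=(2\sqrt3\,\tau^{1/2})^{-1}$ and $\partial_zx=\partial_z(\tau s)$. Using $s\to1$ uniformly at both ends (\autoref{lem:Morse-family}) and $|\tau|\asymp|z|^{2/5}$, the factors $w$, $w^{-1}$, $\partial_zx$, $\partial_zt$ and $w^{-1}\partial_zw$ all carry explicit powers of $|z|$. The entries of $B$ are products such as $\partial_zt\cdot\Delta$, $\partial_zx\cdot p\,w^{-1}$, $\partial_zx\cdot\Delta w$ and $\hbar\,\partial_z\log w$. Each of these decays or stays bounded, provided $\Delta=q-\tau$ and $p$ decay suitably in $\tau$.

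Second, we need uniform Gevrey-$1$ expansions of $q-\tau$ and $p$ on the unbounded set $\tau(H_d(z_\ast))$, with the correct decay. We take these from the strip version of the Painlev\'e summation announced in \autoref{app:uniform-Morse-estimates} (\autoref{lem:Morse-uniform-coefficients}). That result is naturally obtained by applying \autoref{lem:uniform-strip-summation} to Painlev\'e~I written in the action variable $z$, where homogeneity $q_{2n}=c_n\tau^{1-5n}$ makes $\hbar$ combine with $\tau^{-5/2}\sim z^{-1}$. Its agreement with \autoref{250408141643} follows from uniqueness of Borel sums.

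I expect this uniform estimate, particularly the decay bookkeeping at both ends of the strip, to be the main obstacle. My first attempt would be to rescale $q=\tau\,Q$ and $p=\tau^{3/2}P$ and verify that the rescaled remainders are uniform in $z$.

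With these estimates, \autoref{lem:projective-strip} applies. It yields Borel-summable eigenbasis ratios $g_\pm$, uniformly on $\Delta_{\rho'}\times H_{d'}(z_\ast)$. We then transform back via \eqref{eq:projective-back-transform}. The transformed ratio solves the time Riccati equation \eqref{eq:formal-time-Riccati} with leading term $\frac{\varepsilon w}{2}$, because a $z$-flat column direction of the full system is flat in $t$ as well. The formal solution of \eqref{eq:formal-time-Riccati} is unique since its divisor $2\varepsilon w\neq0$, so the back-transformed series equals $\hat r_\varepsilon$.

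Finally, $\Phi$ is locally biholomorphic at $(0,z_\ast)$. We therefore choose a polydisc $\Omega$ centred at $v_\ast$ inside $\Phi(\Delta_{\rho'}\times\{|z-z_\ast|<\eta\})$ and pull back. Algebraic operations and local biholomorphic substitutions preserve uniform Borel summability, with a common tube $\Xi_{\alpha,\delta}$ and a common $R$ for both signs. Since $\Omega$ is a full neighbourhood of $v_\ast$, the construction is valid across the four incident Stokes curves.
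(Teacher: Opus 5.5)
Your architecture is the paper's: pull back along $\Phi$ with $z$ as the differential variable, gauge by $E$, read off $D=\operatorname{diag}(\frac12,-\frac12)$ from $x_zy+t_zw=\frac12$, apply \autoref{lem:projective-strip}, and return through $\Phi^{-1}$. Two steps, however, are not yet arguments.

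First, the step you call the main one is deferred, and your citation does not supply it. \autoref{lem:Morse-uniform-coefficients} is the coefficient bound for the pulled-back matrix itself, and it rests on a strip estimate for Painlev\'e~I. Both are the content of the proof you are asked to give. Nor can you import these estimates from \autoref{250408141643}: it controls only relatively compact subsets of $V$, whereas $\tau(H_d(z_\ast))$ runs out towards both boundary Stokes rays of $\widetilde V$.

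Your rescaling is the right one, and this is what has to be checked. With $Q=q/\tau$ and $P=p/(2\sqrt3\,\tau^{3/2})$, Painlev\'e~I becomes $\hbar\partial_zQ=P-\frac{2\hbar}{5z}Q$, $\hbar\partial_zP=\frac{Q^2-1}{2}-\frac{3\hbar}{5z}P$. This system depends on $z$ only through $1/z$, which is bounded by $\delta_\ast^{-1}$ on the strip. Its linearisation at $(1,0)$ is $\sigma_1$. So \autoref{lem:uniform-strip-summation} applies on the whole strip after diagonalising, and uniqueness of Borel sums identifies the result with the original solution.

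That lemma gives remainders $CK^NN!|\hbar|^N$ for $Q-1$ and $P$ with \emph{no} decay in $z$. The ``decay bookkeeping'' you anticipate therefore has to be exact cancellation. It is, once the entries are right. The entries of $B$ are $x_z\Delta^2/w$, $x_zp$, $x_z\Delta(2x+q+3\tau)/w$, $t_z\Delta/w$ and $\hbar w_z/w$. They are not $t_z\Delta$, which would grow like $|z|^{1/5}\,|Q-1|$. Substitute $\Delta=\tau(Q-1)$, $p=2\sqrt3\,\tau^{3/2}P$, $x=\tau s$ and $w=\tau^{1/2}(s+2)^{1/2}$. Every power of $\tau$ cancels, leaving polynomials in $Q-1$, $P$ and $\hbar/z$. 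Their coefficients are bounded because $s$ stays in a fixed compact neighbourhood of $1$.

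Second, your identification runs the wrong way. The series $\hat g_\varepsilon$ comes from the $z$-equation alone, which differentiates only along the curves $\zeta=\mathrm{const}$. But $\partial_t$ at fixed $x$ also involves $\partial_\zeta$. So ``a $z$-flat column direction is flat in $t$ as well'' is exactly what would need proof. It fails for general exact solutions of the $z$-equation, whose initial data may depend on $\zeta$; for your formal series it holds only as a consequence of uniqueness.

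Apply uniqueness on the other side instead. The coefficients of $\hat r_\varepsilon$ are rational in $(\tau,w)$ with poles only at $\tau w=0$, so its pullback is defined on the whole product. It is the ratio of a formal column of the full system. Hence its eigenbasis ratio is a zero-constant-term formal solution of the same $z$-Riccati equation. That solution is unique, because the linearised leading coefficients are $\mp1$. So the ratio equals $\hat g_\varepsilon$, and \eqref{eq:projective-back-transform} follows.
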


\begin{proof}The proof is done in \autoref{app:uniform-Morse-estimates}.\end{proof}

We now apply \autoref{lem:projective-reconstruction}
to the common Borel sums $r_\pm$ supplied by
\autoref{prop:vertex-projective-summability} to
reconstruct the scalar amplitudes and hence an exact
WKB wave matrix on a full neighbourhood of $v_\ast$.
Uniqueness of Borel summation then identifies this
matrix with the sectorial matrices having the same
formal data, proving the absence of jumps across
the four incident Stokes curves.

\begin{theorem}[Regular summation and absence of a nodal jump]
\label{thm:no-jump-quadrivalent}
Fix a summation direction
$\alpha\in\mathbb R/(2\pi\mathbb Z)$ and a corresponding
time sector $\widetilde V$ as in
\autoref{250224171850}.
Let $V$, $\tau(t)$ and the Borel-summed Painlev\'e
solution $(q,p)$ be as in \autoref{250408141643}.
Choose $\tau_\ast\in\widetilde V$ and set $v_\ast=(\tau_\ast,-6\tau_\ast^2)$. Choose a local holomorphic branch $w$ near $v_\ast$
with $w^2=x+2\tau(t)$, and let $Z$ be the primitive
\eqref{250716160458lambda} on this sheet. Let
\[
    \hat\Psi^{\mathrm v}
       =\hat E^{\mathrm v}e^{\frac{S}{\hbar}},
    \qquad
    S=\operatorname{diag}(S_+,S_-),
    \qquad
    S_\varepsilon
       =\varepsilon\bigl(Z-Z(v_\ast)\bigr),
\]
be the formal matrix of
\autoref{prop:formal-regularity-quadrivalent}.
Write its amplitude as
\[
    \hat E^{\mathrm v}
       =
       \begin{pmatrix}
          \hat a_+^{\mathrm v}&\hat a_-^{\mathrm v}\\
          \hat r_+\hat a_+^{\mathrm v}
             &\hat r_-\hat a_-^{\mathrm v}
       \end{pmatrix},
    \qquad
    \hat a_\pm^{\mathrm v}(v_\ast;\hbar)=1.
\]

There exist a full polydisc
$\Omega\subset\mathbb C_x\times V$ centred at
$v_\ast$ and $R>0$ such that
$\hat E^{\mathrm v}$ is uniformly Borel summable
in direction $\alpha$ on $\Omega$.
With the convention \eqref{eq:parametric-Borel-sum},
set
\[
    E^{\mathrm v}\coloneqq
       s_\alpha(\hat E^{\mathrm v}),
    \qquad
    \Psi^{\mathrm v}\coloneqq E^{\mathrm v}e^{\frac{S}{\hbar}}.
\]
Then
\[
    \Psi^{\mathrm v}:
       \Omega\times S_{\alpha,R}
          \longrightarrow\operatorname{GL}_2(\mathbb C)
\]
is jointly holomorphic and satisfies both equations
of the Lax system \eqref{250408155610_bis}.
Its amplitude $E^{\mathrm v}$ admits
$\hat E^{\mathrm v}$ as its uniform Gevrey-$1$
asymptotic expansion on this product.
It is the unique exact WKB wave matrix with these
formal data, in the sense of
\autoref{def:exact-WKB-wave-matrix}, and its value
at the vertex is
\[
    \Psi^{\mathrm v}(v_\ast,\hbar)
       =
       \begin{pmatrix}
          1&1\\
          r_+(v_\ast,\hbar)&r_-(v_\ast,\hbar)
       \end{pmatrix},
    \qquad
    r_\varepsilon=s_\alpha(\hat r_\varepsilon).
\]

Let $\left(\mathcal U_j\right)_{j\in\{\mathrm A,\mathrm B,
\mathrm C,\mathrm D\}}$, be the four Stokes regions
incident with $v_\ast$, as defined in
\eqref{eq:time-dependent-Stokes-regions}.
Use branches agreeing with the chosen local sheet
and the same ordering of the two columns.
For each $j\in \{A,B,C,D\}$, choose
$b_j\in\Omega\cap\mathcal U_j$ and prescribe in
\autoref{thm:exact-WKB-existence} the normalisation
data
\[
    \kappa_{\varepsilon,j}
       =-S_\varepsilon(b_j),
    \qquad
    \hat b_{\varepsilon,j}(\hbar)
       =\hat a_\varepsilon^{\mathrm v}(b_j;\hbar),
    \qquad \forall\, \varepsilon\in\{+1,-1\}.
\]
These normalisation series are Borel-summable
formal units. Then the resulting four sectorial wave matrices
$(\Psi_{(j)})_{j\in \{A,B,C,D\}}$ provided by \autoref{thm:exact-WKB-existence} admit holomorphic continuations to
$\Omega\times S_{\alpha,R}$, after decreasing $R$
if necessary, and all four continuations equal
$\Psi^{\mathrm v}$.
Consequently, all four local connection matrices
across the Stokes curves incident with $v_\ast$
are equal to $I_2$.
\end{theorem}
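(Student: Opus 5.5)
The plan is to assemble the theorem from three earlier results: \autoref{prop:vertex-projective-summability}, \autoref{lem:projective-reconstruction} and \autoref{thm:exact-WKB-existence}, with uniqueness of Borel sums supplying the identifications.

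\emph{Step 1 (the regular matrix at the vertex).} \autoref{prop:vertex-projective-summability} gives a full polydisc $\Omega$ centred at $v_\ast$ on which $\hat r_\pm$ are uniformly Borel summable in direction $\alpha$, with a common tube and Borel disk. After shrinking $\Omega$ so that $\tau w\neq0$ and the leading amplitude coefficients of $\hat a^{\mathrm v}_\pm$ do not vanish, the hypotheses of \autoref{lem:projective-reconstruction} hold with basepoint $b=v_\ast$, $\kappa_\pm=0$ and $\hat b_\pm=1$; the constant series $1$ is trivially summable. Its conclusion applies to a slightly smaller polydisc, which we rename $\Omega$. It gives uniform Borel summability of $\hat E^{\mathrm v}$, an $R>0$ such that $\Psi^{\mathrm v}=E^{\mathrm v}e^{S/\hbar}$ is holomorphic and invertible on $\Omega\times S_{\alpha,R}$ and solves both Lax equations, and uniqueness in the sense of \autoref{def:exact-WKB-wave-matrix}. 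The vertex value follows because $S(v_\ast)=0$ and $a_\pm^{\mathrm v}(v_\ast)=s_\alpha(1)=1$, so the columns are $(1,r_\pm(v_\ast,\hbar))^{\mathsf T}$.

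\emph{Step 2 (summability of the normalisation data).} For each $j$, the series $\hat a^{\mathrm v}_\varepsilon(b_j;\hbar)$ is the restriction to $b_j\in\Omega$ of a uniformly Borel summable series, so it is Borel summable. Its sum is $a^{\mathrm v}_\varepsilon(b_j,\hbar)$. Its constant term is nonzero, so it is a formal unit.

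\emph{Step 3 (matching the formal data).} The formal matrix of \autoref{thm:exact-WKB-existence} with data $\kappa_{\varepsilon,j}=-S_\varepsilon(b_j)$ and $\hat b_{\varepsilon,j}$ has phase $\varepsilon(Z-Z(b_j))+S_\varepsilon(b_j)=S_\varepsilon$, using the same sheet. By the uniqueness part of \autoref{250716102028} and \autoref{250109214318}, a WKB matrix is determined by its ordered phases and its first-row values at a point. Hence on a connected neighbourhood of $b_j$ contained in $\Omega\cap\mathcal U_j$, its formal amplitude coincides with $\hat E^{\mathrm v}$.

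\emph{Step 4 (identification).} Take a small polydisc $\Omega_j$ around $b_j$ inside $\Omega\cap\mathcal U_j$. Both $\Psi_{(j)}$ and $\Psi^{\mathrm v}$ are exact WKB matrices on $\Omega_j$ with the same formal data. By the uniqueness statements, that is Nevanlinna--Sokal applied entrywise to the difference of the amplitudes on a common Borel disk, they agree on $\Omega_j\times S_{\alpha,R'}$.

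\emph{Step 5 (continuation and connection matrices).} For fixed $\hbar$ both matrices solve the same flat holomorphic Lax system, so they agree on every connected domain where both are defined. In particular this holds on $(\Omega\cap\mathcal U_j)\times S_{\alpha,R}$ after decreasing $R$. Thus $\Psi^{\mathrm v}$ gives the holomorphic continuation of each $\Psi_{(j)}$ to $\Omega\times S_{\alpha,R}$. The connection matrix across a Stokes curve between $\mathcal U_j$ and $\mathcal U_k$ is $\Psi_{(j)}^{-1}\Psi_{(k)}=(\Psi^{\mathrm v})^{-1}\Psi^{\mathrm v}=I_2$.

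The main obstacle is Step 4. There one must ensure that $\Omega\cap\mathcal U_j$ is connected near $v_\ast$, which holds because the four sectors are locally sectorial in a small polydisc. One must also check that the Borel disks of \autoref{thm:exact-WKB-existence} and of the vertex construction can be taken common on a relatively compact subset, so that the uniqueness theorem applies on a single product $\Omega_j\times S_{\alpha,R'}$.
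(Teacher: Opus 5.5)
Your proposal is correct and follows the paper's proof essentially step for step. The steps are: reconstruct $\Psi^{\mathrm v}$ at the basepoint $v_\ast$ from \autoref{prop:vertex-projective-summability} and \autoref{lem:projective-reconstruction} with $\hat b_\pm=1$, $\kappa_\pm=0$; match the phases and first-row data at $b_j$; invoke formal uniqueness; and identify the Borel sums on a common small disk. The connectedness of $\Omega\cap\mathcal U_j$ that you flag as the main obstacle is not actually needed. Agreement near $b_j$ already makes $\Psi^{\mathrm v}$ the holomorphic continuation of $\Psi_{(j)}$ to the connected product $\Omega\times S_{\alpha,R}$, and the connection matrices are independent of $(x,t)$, as the paper notes at the end of its proof.
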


\begin{proof}
\noindent\emph{Proof of the first part.}
By \autoref{prop:vertex-projective-summability},
the formal ratios $\hat r_\pm$ are uniformly
Borel summable in direction $\alpha$ on a full
neighbourhood of $v_\ast$.
Apply \autoref{lem:projective-reconstruction}
with basepoint $v_\ast$ and normalisation data
$\hat b_+=\hat b_-=1$ and 
    $\kappa_+=\kappa_-=0$.
The reconstructed formal matrix is
$\hat\Psi^{\mathrm v}$ by uniqueness in
\autoref{241113190220}.
After shrinking to a full polydisc $\Omega$
centred at $v_\ast$, the reconstruction lemma
therefore gives uniform Borel summability of
$\hat E^{\mathrm v}$ and a common Borel disk
$S_{\alpha,R}$ on which $\Psi^{\mathrm v}
       =s_\alpha(\hat E^{\mathrm v})e^{\frac{S}{\hbar}}$
is jointly holomorphic, invertible, and satisfies
both equations of \eqref{250408155610_bis}.
It also gives the asserted uniform Gevrey-$1$
expansion of its amplitude and uniqueness with
the prescribed formal data.

Since evaluation at $v_\ast$ commutes with
Borel--Laplace summation,
\[
    a_\varepsilon^{\mathrm v}(v_\ast,\hbar)
       =s_\alpha\bigl(
          \hat a_\varepsilon^{\mathrm v}
                       (v_\ast;\hbar)\bigr)
       =s_\alpha(1)=1.
\]
Together with $S_\varepsilon(v_\ast)=0$,
this proves that both first-row entries of
$\Psi^{\mathrm v}$ at $v_\ast$ equal $1$.

\smallskip
\noindent\emph{Proof of the second part.}
Fix $j\in\{\mathrm A,\mathrm B,\mathrm C,\mathrm D\}$
and choose $b_j\in\Omega\cap\mathcal U_j$.
By the first part,
$\hat a_\varepsilon^{\mathrm v}(b_j;\hbar)$
is a Borel-summable formal unit, so the prescribed
normalisation data are admissible in
\autoref{thm:exact-WKB-existence}. The phase assigned by that theorem is
\begin{equation*}
    \varepsilon\bigl(Z-Z(b_j)\bigr)
       -\kappa_{\varepsilon,j}
    =\varepsilon\bigl(Z-Z(b_j)\bigr)
       +S_\varepsilon(b_j)
    =\varepsilon\bigl(Z-Z(v_\ast)\bigr)
      =S_\varepsilon.
\end{equation*}
Moreover, the formal first-component normalisation
at $b_j$ is $\hat b_{\varepsilon,j}(\hbar)
       e^{-\frac{\kappa_{\varepsilon,j}}{\hbar}}
       =
       \hat a_\varepsilon^{\mathrm v}(b_j;\hbar)
       e^{\frac{S_\varepsilon(b_j)}{\hbar}}$,
which is precisely the first component of the
corresponding column of $\hat\Psi^{\mathrm v}$
at $b_j$.

The branches and column ordering also agree.
Uniqueness in \autoref{241113190220} therefore
identifies the formal matrix underlying
$\Psi_{(j)}$ with $\hat\Psi^{\mathrm v}$
on a neighbourhood of $b_j$.
Their exact amplitudes are Borel sums of the
same formal amplitude in direction $\alpha$;
hence $\Psi_{(j)}=\Psi^{\mathrm v}$
on that neighbourhood and a common sufficiently
small Borel disk. The matrix $\Psi^{\mathrm v}$ thus provides a
holomorphic continuation of the local restriction
of $\Psi_{(j)}$ to the full product
$\Omega\times S_{\alpha,R}$.
There are only four regions, so decreasing $R$
once makes this conclusion valid for all four
matrices simultaneously.
Their continuations consequently coincide. For adjacent regions, the connection matrix on
the common continuation domain is therefore
\[
    C_{jk}
       =(\Psi_{(j)})^{-1}\Psi_{(k)}
       =(\Psi^{\mathrm v})^{-1}\Psi^{\mathrm v}
       =I_2.
\]
More generally, the two Lax equations imply
$\partial_xC_{jk}=\partial_tC_{jk}=0$.
Thus this identity persists throughout any
connected overlap reached by consistent analytic
continuation.
\end{proof}

\begin{remark}
Formal regularity alone would not exclude exponentially small
differences. Here the common Borel sum is proved using the moving family
\eqref{eq:Morse-action-coordinate}, the strip-summation theorem
proved in \autoref{app:uniform-strip-proof}, and the uniform
estimates established in \autoref{app:uniform-Morse-estimates}. No fixed-$x$ auxiliary trajectory is assumed
to stay in $V$, and no characteristic end is assumed to remain in a
closed subsector of $\widetilde V$.
\end{remark}

\section{The WKB Riemann--Hilbert problem}
\label{sec:WKB-RHP}

We first describe the connection and jump matrices between the exact WKB
wave matrices and the effect of changing their normalisations in \autoref{sec:connection-matrices}.
In \autoref{sec:exact-normalisation}, we use the recessive behaviour of their columns at spatial infinity to establish the triangular form of the jump matrices. We then impose the cyclic relations
at the two vertices of the Stokes graph to reduce the number of Stokes constants in the jump matrices to two. The diagrams in \autoref{fig:RH-cut-folded-general} and 
\autoref{fig:RH-branch-cut-positions} display the two remaining coefficients and the effect of moving the branch cut. They also distinguish these coefficients from the conventional relative column scale $a$, which may be fixed to $i$ to recover the convention of \cite{Kapaev_2004}. Finally, we use \autoref{thm:no-jump-quadrivalent} to prove that these Stokes constants vanish. After fixing the normalisation at infinity, we state the resulting
Riemann--Hilbert problem in the original Lax gauge in \autoref{sec:WKB-RHP-problem}. The comparison with the standard
Painlev\'e~I Riemann--Hilbert problem is made in \autoref{sec:canonical-RHP}.

\subsection{Connection and jump matrices}
\label{sec:connection-matrices}

Fix the summation direction $\alpha$ and a time sector
$\widetilde V$ as in \autoref{250224171850}. Let $V$, $\tau(t)$
and $(q,p)$ be as in \autoref{250408141643}.
Let $\mathcal U_j\subset\mathbb C_x\times V$,
$j\in\{\mathrm A,\ldots,\mathrm F\}$, be the six
time-dependent Stokes regions defined in
\eqref{eq:time-dependent-Stokes-regions}.
Choose branches,
ordered phases and Borel-summable scalar normalisations in each
Stokes region. By \autoref{thm:exact-WKB-existence}, these data define
\begin{equation}
\label{eq:RH-sectorial-matrices}
    \Psi_{(j)}^{\mathrm{WKB}}
       =s_\alpha(\hat E_{(j)})e^{\frac{S_{(j)}}{\hbar}},
    \qquad
    S_{(j)}=\operatorname{diag}(S_{+,j},S_{-,j}).
\end{equation}
Here $s_\alpha$ uses the convention
\eqref{eq:parametric-Borel-sum}. Initially these are the exact WKB matrices of
\autoref{def:exact-WKB-wave-matrix}. We shall also
rescale their columns by writing
\begin{equation}\Psi_{(j)}
       =\Psi_{(j)}^{\mathrm{WKB}}C_j(\hbar),
    \qquad
    C_j(\hbar)
       =\operatorname{diag}
          \bigl(c_{+,j}(\hbar),c_{-,j}(\hbar)\bigr),
\end{equation}
where the scalar factors $\left(c_{+,j}(\hbar),c_{-,j}(\hbar)\right)$ are holomorphic and nowhere
zero on the Borel disk under consideration and are
independent of $(x,t)$. Such a rescaling preserves both
Lax equations, invertibility, and the ordered column
directions. At this stage, the diagonal factors $C_j(\hbar)$
are not fixed. We use $\Psi_{(j)}$ to denote a
fundamental matrix obtained from
$\Psi_{(j)}^{\mathrm{WKB}}$ by such a diagonal
rescaling. The choices of these factors will be
specified below, first through normalisation along
a chain of regions and then through the canonical
normalisation at infinity. The additional factors
are assumed holomorphic and nowhere zero on the
common Borel disk, but are not assumed to be Borel
sums of formal units.

All constructions are local in the time variable.
Fix a simply connected relatively compact open
subset $V_0\Subset V$. Choose $R_{V_0}>0$ sufficiently
small that $(q,p)$ are holomorphic on
$V_0\times S_{\alpha,R_{V_0}}$ and the finitely many
local exact WKB matrices under consideration are
holomorphic on their respective $(x,t)$-neighbourhoods
in $\mathbb C_x\times V_0$, times the same Borel disk.
Set
\[
    \mathcal P_{V_0}
       \coloneqq V_0\times S_{\alpha,R_{V_0}}.
\]
For each $(t,\hbar)\in\mathcal P_{V_0}$, every
$\Psi_{(j)}$ continues to an entire fundamental
matrix in $x$, because the spatial Lax matrix
is polynomial in $x$. Flatness and the simple
connectedness of $V_0$ give compatible single-valued
continuation in $t$. These continuations are jointly
holomorphic on $\mathbb C_x\times\mathcal P_{V_0}$
and satisfy both Lax equations.
They permit comparisons even when the open Stokes
regions themselves are disjoint.
The parameter $R_{V_0}$ may depend on $V_0$ and
may be decreased in subsequent constructions.

For each fixed $(t,\hbar)\in\mathcal P_{V_0}$,
the entire continuations of the matrices
$\Psi_{(j)}$ are fundamental matrices of the same
spatial linear system on $\mathbb C_x$. Consequently, for every pair
$(i,j)\in\{\mathrm A,\ldots,\mathrm F\}^2$, there is a
unique invertible \emph{connection matrix}, defined by
\begin{equation}
\label{eq:jump-convention}
    \Psi_{(i)}=\Psi_{(j)}S_{ij}(\hbar),
    \qquad
    S_{ij}(\hbar)=(\Psi_{(j)})^{-1}\Psi_{(i)}.
\end{equation}
The spatial equation implies that $S_{ij}$ is
independent of $x$; see
\cite[Chapter~3, Theorem~2.3, p.~70]
{CoddingtonLevinson1955}.
The time equation also implies independence of $t$,
as verified in \autoref{lem:connection-matrices-constant}. These connection matrices satisfy
\begin{equation}
\label{eq:connection-cocycle}
    S_{ii}=I,
    \qquad
    S_{ji}=(S_{ij})^{-1},
    \qquad
    S_{ik}=S_{jk}S_{ij}.
\end{equation}
Indeed,
$\Psi_{(i)}=\Psi_{(j)}S_{ij}
=\Psi_{(k)}S_{jk}S_{ij}$.
Thus the ordered product along any closed cycle is
the identity: for example, $S_{ki}S_{jk}S_{ij}=I$. For adjacent regions, these connection matrices
define the \emph{jump matrices} of the piecewise
wave matrix across their common boundary, with
the directed convention \eqref{eq:jump-convention}.
Since the adjacency graph is connected, it suffices
to specify these jump matrices: along any chain
of adjacent regions $i=i_0,i_1,\ldots,i_n=j$, $S_{ij}
       =S_{i_{n-1}i_n}\cdots S_{i_1i_2}S_{i_0i_1}$.
The cycle relations ensure independence of the
chosen chain. Henceforth, we restrict our attention
to the connection matrices between adjacent regions, i.e. to jump matrices only.

\begin{lemma}[Connection matrices and changes of normalisation]
\label{lem:connection-matrices-constant}
The matrices $\left(S_{ij}\right)_{(i,j)\in \{A,\dots,F\}^2}$ are holomorphic in $\hbar$ and independent
of $(x,t)$. Replacing $\Psi_{(j)}$ by $\Psi_{(j)}C_j(\hbar)$, where
$C_j$ is holomorphic in $\hbar$, invertible on the common
Borel disk $S_{\alpha,R_{V_0}}$ and independent of $(x,t)$, gives
\begin{equation}
\label{eq:RH-change-normalisation}
    S_{ij}\longmapsto C_j^{-1}S_{ij}C_i.
\end{equation}
In particular, a common diagonal change conjugates every jump matrix.
At a basepoint $b$, prescribing  $\Psi_{(j)}(b,\hbar)=M(\hbar)$, with $M(\hbar)$ a holomorphic, invertible matrix on the common
Borel disk $S_{\alpha,R_{V_0}}$ and independent of $(x,t)$, amounts to the right change of basis
$C_j(\hbar)=\Psi_{(j)}(b,\hbar)^{-1}M(\hbar)$.
\end{lemma}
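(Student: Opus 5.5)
The plan is to work directly with the defining relation $S_{ij}=(\Psi_{(j)})^{-1}\Psi_{(i)}$ on the joint continuation domain $\mathbb C_x\times\mathcal P_{V_0}$ described above. Both $\Psi_{(i)}$ and $\Psi_{(j)}$ are jointly holomorphic there and invertible, since $\det\Psi$ is independent of $(x,t)$ (traceless Lax matrices) and nonzero at one point. Hence $S_{ij}$ is jointly holomorphic in $(x,t,\hbar)$. For constancy, differentiate the identity $\Psi_{(j)}S_{ij}=\Psi_{(i)}$:
\[
    \hbar\partial_x\Psi_{(j)}\,S_{ij}+\hbar\Psi_{(j)}\partial_xS_{ij}
      =L\Psi_{(i)}=L\Psi_{(j)}S_{ij},
\]
so $\Psi_{(j)}\partial_xS_{ij}=0$, and invertibility gives $\partial_xS_{ij}=0$. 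The same computation with $A$ in place of $L$ gives $\partial_tS_{ij}=0$. Because $\mathbb C_x\times V_0$ is connected (as $V_0$ is simply connected, hence connected), $S_{ij}$ depends on $\hbar$ alone. It is holomorphic in $\hbar$ because it can be evaluated at a fixed point $(x,t)$, where it is a holomorphic function of $\hbar$.

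For the change of normalisation, I would substitute $\Psi'_{(k)}=\Psi_{(k)}C_k$ into the definition of the connection matrix. This gives
\[
    S'_{ij}=C_j^{-1}\Psi_{(j)}^{-1}\Psi_{(i)}C_i=C_j^{-1}S_{ij}C_i.
\]
The rescaled matrices still solve both Lax equations because $C_k$ is independent of $(x,t)$, so the first part applies to them as well. If all regions use a common matrix $C$, this formula specialises to the conjugation $S_{ij}\mapsto C^{-1}S_{ij}C$.

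For the basepoint prescription, the requirement $\Psi_{(j)}(b,\hbar)C_j(\hbar)=M(\hbar)$ has the unique solution $C_j=\Psi_{(j)}(b,\hbar)^{-1}M(\hbar)$. This matrix is holomorphic and invertible on the Borel disk, since both factors are. It is also independent of $(x,t)$ by construction, so it is an admissible right change of basis. Conversely, any admissible $C_j$ achieving the prescription must equal this one.

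The only step that is not routine is making sure the continuations used in the first part are single-valued and jointly holomorphic on the stated domain. This follows from the flatness in \autoref{250408155701}, together with the polynomial dependence of $L$ on $x$ and the simple connectedness of $V_0$. Both facts were already recorded above, so I expect no serious obstacle here.
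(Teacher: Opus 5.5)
Your proposal is correct and is essentially the paper's argument. The paper differentiates $(\Psi_{(j)})^{-1}\Psi_{(i)}$ directly using both Lax equations, whereas you differentiate $\Psi_{(j)}S_{ij}=\Psi_{(i)}$ and cancel the invertible factor; this is the same computation, and in both the remaining assertions follow by substitution, with your remarks on invertibility, connectedness of $\mathbb C_x\times V_0$ and holomorphy in $\hbar$ making explicit what the paper leaves implicit.
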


\begin{proof}
Differentiate $(\Psi_{(j)})^{-1}\Psi_{(i)}$ using the two Lax
equations. For example,
\[
    \hbar\partial_xS_{ij}\overset{\eqref{eq:jump-convention}}{=}\hbar\partial_x((\Psi_{(j)})^{-1}\Psi_{(i)})
      =-(\Psi_{(j)})^{-1}L\Psi_{(i)}
        +(\Psi_{(j)})^{-1}L\Psi_{(i)}=0.
\]
The time derivative vanishes in the same way. The other assertions
follow by substitution.
\end{proof}

A general right change of basis need not preserve the two ordered
WKB column directions. Changes between the exact WKB matrices
with fixed ordered phases and fixed column orders are diagonal.
More precisely, let $\mathcal A_\alpha^\times$ be the multiplicative
group of Borel sums in direction $\alpha$ of formal units in
$\mathbb C[[\hbar]]$. For two such matrices in the same region,
formal uniqueness and multiplicativity of Borel summation give
\begin{equation}
\label{eq:exact-change-normalisation}
    \hat\Psi_1=\hat\Psi_2\hat C,
    \qquad
    \Psi_1=\Psi_2C,
    \qquad
    C=s_\alpha(\hat C)
       \in\operatorname{diag}
         (\mathcal A_\alpha^\times,\mathcal A_\alpha^\times).
\end{equation}
Indeed, the corresponding formal ratios coincide, and the
quotients of the scalar amplitudes are formal constants, obtained
by evaluation at a common basepoint. Constant exponential factors
are included in the phases. The preceding discussion describes the connection
matrices and their dependence on the choice of bases.
To formulate a normalised Riemann--Hilbert problem,
we must also prescribe the asymptotic behaviour of
the wave matrix at spatial infinity, i.e. $t$ and
$\hbar$ are fixed, while $x\to\infty$.
This normalisation at infinity fixes the column scales by
requiring the prescribed sectorial asymptotic
expansion, given below in
\eqref{eq:canonical-normalisation-infinity}.

\subsection{Triangular jump matrices and normalisation at \texorpdfstring{$x\to\infty$}{infinity}}
\label{sec:exact-normalisation}
Before imposing the normalisation at $x\to \infty$, we identify the
solution directions selected by decay at the ends
of the spatial action strips. This will explain
the triangular form of the jump matrices and subsequently allow comparison with
the canonical wave matrices normalised at infinity.
Matching the column scales may involve diagonal
factors holomorphic in $\hbar$ on the common Borel
disk $S_{\alpha,R_{V_0}}$. We do not assume that these additional
factors are themselves Borel sums of formal units. We first work in the representative case
\begin{equation}
\label{eq:RH-representative-sector}
    \alpha=0,
    \qquad -\frac{2\pi}{5}<\arg\tau<0,
    \qquad \frac{\pi}{5}<\arg t<\pi,
    \qquad \tau=\sqrt{e^{-i\pi}\frac{t}{6}}.
\end{equation}
Decay arguments are first made for sufficiently small positive
$\hbar$. Identities between the resulting connection matrices
then extend to their common Borel disk by holomorphy. The weighted
rotations at the end of the section recover the other cases.

\begin{definition}[Recessive solutions]\label{DefRecessivity}For fixed $t$ and sufficiently small positive $\hbar$, a nonzero solution column $\mathbf v$ of the Lax system \eqref{250408155610} is called \emph{recessive} at a specified end at $x\to\infty$ if for every solution column $\mathbf u$ of the Lax system \eqref{250408155610} that is linearly independent of $\mathbf v$ we have: 
$\frac{\|\mathbf v(x,t,\hbar)\|}{\|\mathbf u(x,t,\hbar)\|}\overset{x\to\infty}{\longrightarrow} 0$ along the trajectory approaching that end. Thus a recessive solution, when it exists, determines a unique one-dimensional solution
space, called the \emph{recessive line}. Its scalar normalisation remains free.
\end{definition}

On the action strips considered below, recessivity will follow from exponential decay towards the chosen end, whereas every independent solution
has an exponentially growing component. \autoref{lem:RH-recessive-comparison} identifies these recessive lines with the WKB column directions. When two adjacent matrices share such a line in the same column position, their jump matrix is triangular. This determines the triangular type before any choice of column scales.

\begin{lemma}[Recessive columns and triangular connections]
\label{lem:RH-recessive-comparison}
Fix $(x_0,t_0)\in\mathcal U_j$ with $t_0\in V_0$,
and use the chosen sheet of the action
$Z_{\tau(t)}(x)$ defined in \eqref{eq:space-action}.
Set $s_0\coloneqq 2Z_{\tau(t_0)}(x_0)$. Let $d>0$, a time neighbourhood $V_1\Subset V_0$ of $t_0$, and the jointly holomorphic inverse branch
 $x:H_d(s_0)\times V_1\longrightarrow\mathbb C$ be supplied by \autoref{lem:spatial-adapted-strips}. Since $\alpha=0$, the action strip is
\[
    H_d(s_0)
       \coloneqq
       \left\{
          s\in\mathbb C:
          |\operatorname{Im}(s-s_0)|<d
       \right\},
    \qquad
    2Z_{\tau(t)}(x(s,t))=s.
\]
Write the exact WKB matrix in the chosen column order as
\[
    \Psi_{(j)}^{\mathrm{WKB}}
       =\bigl[\mathbf v_+\ \mathbf v_-\bigr],
    \qquad
    \mathbf v_+
       \coloneqq
       \begin{pmatrix}
          [\Psi_{(j)}^{\mathrm{WKB}}]_{1,1}\\
          [\Psi_{(j)}^{\mathrm{WKB}}]_{2,1}
       \end{pmatrix},
    \qquad
    \mathbf v_-
       \coloneqq
       \begin{pmatrix}
          [\Psi_{(j)}^{\mathrm{WKB}}]_{1,2}\\
          [\Psi_{(j)}^{\mathrm{WKB}}]_{2,2}
       \end{pmatrix}.
\]
We call $\mathbf v_+$ and $\mathbf v_-$ the plus
and minus WKB columns, respectively, and use the same
notation for their analytic continuations.

After decreasing $d$ and shrinking $V_1$ if necessary,
for each fixed $t\in V_1$ and sufficiently small
positive $\hbar$, the column
$\mathbf v_+(x(s_0-u,t),t,\hbar)$ is recessive
as $u\to+\infty$, whereas
$\mathbf v_-(x(s_0+u,t),t,\hbar)$ is recessive
as $u\to+\infty$.
The same conclusions hold with $s_0$ replaced by
any fixed point of $H_d(s_0)$.
Each column spans the unique recessive solution line
at its respective end. Consequently, adjacent
sectorial wave matrices have triangular jump matrices
when their common recessive column occupies the same
column position.
\end{lemma}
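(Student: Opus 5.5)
We pull back the spatial Lax equation along the action strip of \autoref{lem:spatial-adapted-strips}, exactly as in the proof of \autoref{prop:sectorial-projective-summability}. With $\chi=E^{-1}\mathbf v$ and $E$ the eigenbasis \eqref{eq:regular-eigenbasis}, the system becomes $\hbar\partial_s\chi=(D+B)\chi$ with $D=\operatorname{diag}(\frac12,-\frac12)$. There $B$ is uniformly bounded on $H_{d_1}(s_0)\times V_1\times S_{\alpha,R}$, and it is in fact $O(\hbar)$: its Gevrey expansion has $B_0=0$, and the remainder bound with $N=1$ gives $\|B\|\leq CA|\hbar|$ uniformly on the whole strip. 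The WKB columns correspond to $\chi_+=a_+w^{1/2}(1+g_+,\,1-g_+)^{\mathsf T}/2$-type vectors. More usefully, the Borel sums $g_\pm$ of \autoref{lem:projective-strip} are uniformly $O(\hbar)$ on the entire strip. Along $s\in s_0+\mathbb R$, the column $\mathbf v_\pm$ equals $E$ times a scalar times $(1,g_+)^{\mathsf T}$, respectively $(g_-,1)^{\mathsf T}$. Its scalar factor $\phi_\pm$ satisfies the scalar equation $\hbar\partial_s\log\phi_+=\frac12+B_{11}+B_{12}g_+$, and similarly for the minus column with $-\frac12$.

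The first key step is the growth estimate. Since $|B_{11}+B_{12}g_+|\leq C'|\hbar|$ uniformly, integrating from $s_0$ to $s_0-u$ along the real direction gives
\[
    |\phi_+(s_0-u)|\leq|\phi_+(s_0)|\exp\Bigl(-\frac{u}{2\hbar}(1-2C'\hbar)\Bigr)
\]
for real positive small $\hbar$, and the symmetric bound $|\phi_-(s_0+u)|\leq|\phi_-(s_0)|e^{-u(1-2C'\hbar)/(2\hbar)}$. In the same way, $\phi_-$ grows like $e^{u(1-2C'\hbar)/(2\hbar)}$ as $s\to s_0-u$, and $\phi_+$ grows at the other end. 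Multiplication by $E$ changes norms only by the factors $|w|^{\pm1/2}\asymp|s|^{\pm1/10}$ from \autoref{lem:spatial-adapted-strips}. These polynomial factors are negligible against the exponentials. Hence, at the end $u\to+\infty$ of $s_0-u$,
\[
    \frac{\|\mathbf v_+\|}{\|\mathbf v_-\|}\leq C|s|^{c}\,e^{-u(1-2C'\hbar)/\hbar}\longrightarrow0.
\]

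The second step deduces recessivity in the sense of \autoref{DefRecessivity}. Any solution $\mathbf u$ independent of $\mathbf v_+$ can be written $\mathbf u=\alpha\mathbf v_++\beta\mathbf v_-$ with $\beta\neq0$, because $[\mathbf v_+\ \mathbf v_-]$ is fundamental by \autoref{thm:exact-WKB-existence}. Then $\|\mathbf u\|\geq|\beta|\|\mathbf v_-\|-|\alpha|\|\mathbf v_+\|$. This makes $\|\mathbf v_+\|/\|\mathbf u\|\to0$, and the line of $\mathbf v_+$ is unique: any other recessive column must have $\beta=0$. The minus end is symmetric. Replacing $s_0$ by another point of the strip changes nothing, since the estimates are uniform on $H_{d_2}(s_0)$. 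One small point needs care here. The columns are defined a priori only on $\Omega_0\times S_{\alpha,R}$ with $\Omega_0$ relatively compact. We therefore use the strip-wide Borel sums $g_\pm$ of \autoref{lem:projective-strip} and note that they coincide with the continuation of the local ratios by uniqueness of Borel sums. The amplitudes continue by the linear ODE.

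Finally, consider adjacent regions $\mathcal U_i$ and $\mathcal U_j$ whose strips approach the same end at infinity with the same recessive column position. The two matrices share that recessive line, so the relevant column of $S_{ij}$ has a zero off-diagonal entry. This gives the triangular form; for example, a shared plus column makes $S_{ij}$ upper triangular via \eqref{eq:jump-convention}.

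The main obstacle I expect is the uniformity of the $O(\hbar)$ bounds on $B$ and $g_\pm$ along the whole unbounded strip. Only this uniformity lets the exponential dominate. It is provided by \autoref{lem:projective-strip}, but we must check that the end-of-strip decay of the geometric factors, and of $d$ after shrinking, is genuinely preserved.
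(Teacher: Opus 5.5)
Your recessivity argument along a single trajectory is sound, and somewhat more self-contained than the paper's. You integrate both scalar transport equations along the same ray $s_0-u$, getting decay of $\phi_+$ and growth of $\phi_-$, hence $\|\mathbf v_+\|/\|\mathbf v_-\|\to0$ directly. Recessivity and uniqueness of the recessive line along that trajectory then follow by linear algebra. The paper only proves decay of each column and then invokes the canonical fundamental pair of \cite{Kapaev_2004}. You should add that $\phi_\pm$ is nowhere zero, since a column of a fundamental matrix cannot vanish, so that $\log\phi_\pm$ makes sense.

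The genuine gap is in your last paragraph. \autoref{DefRecessivity} defines recessivity relative to a specified trajectory. What you have shown is that $\Psi_{(i)}$ has a recessive column along the tail of one horizontal action line in $\mathcal U_i$, and $\Psi_{(j)}$ has one along the tail of a different line in $\mathcal U_j$. These curves lie on opposite sides of the separating Stokes curve, and nothing in your argument shows the two recessive lines coincide. Your sentence ``the two matrices share that recessive line'' is exactly the substantive content of the triangularity claim; it is what is used later, e.g.\ to get $[J_{\mathrm{FD}}]_{11}=0$.

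Your strip estimates cannot bridge the two sides. A horizontal strip in the $s$-plane containing both tails must contain the line through the critical value from which the separating Stokes curve issues. The inverse branch then meets a turning point, so \autoref{lem:spatial-adapted-strips} and \autoref{lem:projective-strip} no longer apply. Moreover, the Borel sums $g_\pm$ obtained on the two sides need not continue into one another; that is the Stokes phenomenon.

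What is needed is uniqueness of the decaying solution on a whole sector at infinity containing both tails. First check, by continuing the inverse branch of the action near infinity where there are no critical points, that both tails approach the same non-oscillatory direction on the same sheet. Then place them in a common closed subsector of a canonical sector $\Omega_k^\infty(\hbar)$, where the canonical pair has one exponentially decaying and one exponentially growing column. Expand each continued WKB column in this pair: its decay along its own trajectory forces the growing coefficient to vanish, so both columns are multiples of the same canonical column. This is how the paper argues.

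Finally, your argument gives the vanishing entry of $S_{ij}$ only for small positive $\hbar$. You still need to extend it to the common Borel disk by holomorphy in $\hbar$ and the identity theorem.
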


\begin{proof}
Let $r_\varepsilon
       \coloneqq s_\alpha(\hat r_\varepsilon)$,  $\varepsilon\in\{+1,-1\}$,
be the Borel-summed projective solutions constructed in
\autoref{prop:sectorial-projective-summability}.
In the eigenbasis $E$ defined in
\eqref{eq:regular-eigenbasis}, the corresponding
component ratios are denoted by $g_\varepsilon$.
The exact counterpart of
\eqref{eq:projective-back-transform} gives
\[
    r_\varepsilon(x(s,t),t,\hbar)
       =\frac{\varepsilon w(s,t)}{2}
          \frac{1-g_\varepsilon(s,t,\hbar)}
               {1+g_\varepsilon(s,t,\hbar)}.
\]
By the estimates established in the proof of
\autoref{prop:sectorial-projective-summability},
the transformed spatial system \eqref{EqdiffMs}
and its component ratios $g_\pm$ satisfy,
uniformly on $H_d(s_0)\times V_1$ after the
reductions allowed in the statement,
\[
    M_x=D+B,\qquad
    D=\operatorname{diag}\left(\frac{1}{2},-\frac{1}{2}\right),\qquad
        \|B\|\leq C|\hbar|, \qquad |g_+|+|g_-|\leq C|\hbar|.
\]
Using the analytic continuation in $x$ established
in \autoref{sec:connection-matrices}, define, for
each $\varepsilon\in\{+1,-1\}$,
\[
    \chi_\varepsilon(s,t,\hbar)
       \coloneqq
       E(s,t)^{-1}
       \mathbf v_\varepsilon(x(s,t),t,\hbar),
    \qquad s\in H_d(s_0).
\]
For fixed $(t,\hbar)$, this is holomorphic in $s$
throughout $H_d(s_0)$, since $\mathbf v_\varepsilon$
is entire in $x$ and both $x(s,t)$ and $E(s,t)^{-1}$
are holomorphic on the strip. Near $s=s_0$, these columns have the component
ratios $g_\pm$ constructed in
\autoref{prop:sectorial-projective-summability}.
The identities $\chi_{+,2}=g_+\chi_{+,1}$ and $\chi_{-,1}=g_-\chi_{-,2}$
therefore extend throughout the connected strip.
Setting $a_+\coloneqq\chi_{+,1}$ and
$a_-\coloneqq\chi_{-,2}$, and omitting the
dependence on $(t,\hbar)$, the two columns have
the forms
\[
    \chi_+=a_+(s)\binom{1}{g_+(s)},\qquad
    \chi_-=a_-(s)\binom{g_-(s)}{1}.
\]
The factors $a_\pm$ are nowhere zero: otherwise
the corresponding column $\chi_\pm$, and hence
$\mathbf v_\pm$, would vanish at that point,
contradicting the invertibility of the fundamental
matrix.
Taking the first row of \eqref{EqdiffMs} for
$\chi_+=a_+(1,g_+)^{\mathsf T}$ and the second row
for $\chi_-=a_-(g_-,1)^{\mathsf T}$, with $M_x=D+B$,
and dividing by $\hbar a_+$ and $\hbar a_-$,
respectively, gives
\[
    \partial_s\log a_+
       =\frac{\frac{1}{2}+B_{11}+B_{12}g_+}{\hbar}
       =\frac1{2\hbar}+O(1),
    \qquad
    \partial_s\log a_-
       =\frac{-\frac{1}{2}+B_{22}+B_{21}g_-}{\hbar}
       =-\frac1{2\hbar}+O(1).
\]
Let $C>0$ uniformly bound the two $O(1)$ terms above.
For sufficiently small $0<\hbar\leq(4C)^{-1}$:
\[
    \frac{\mathrm d}{\mathrm du}
       \log|a_\pm(s_0\mp u)|
       \leq-\frac1{2\hbar}+C
       \leq-\frac1{4\hbar}\,\qquad \forall\, u\geq 0.
\]
Integrating from $0$ to $u$ and exponentiating gives
\[
    |a_\pm(s_0\mp u)|
       \leq C_{t,\hbar}e^{-\frac{u}{4\hbar}},
    \qquad
    C_{t,\hbar}\coloneqq
       \max_{\varepsilon\in\{+,-\}}
          |a_\varepsilon(s_0,t,\hbar)|.
\]
Here $t$ and $\hbar$ are fixed, and their appearance
as arguments of $a_\pm$ has been omitted.
By \autoref{lem:spatial-adapted-strips} and
\eqref{eq:regular-eigenbasis}, the eigenbasis $E$
grows at most polynomially as $u\to+\infty$
along the two trajectories $s=s_0\mp u$.
The original columns therefore also decay
exponentially. The images of these trajectories in the $x$-plane
approach non-oscillatory directions at infinity.
On a sector containing the tail of either
trajectory, the canonical asymptotic theory
provides a fundamental pair with one exponentially
decaying column and one exponentially growing
column; see \cite[Section~2]{Kapaev_2004}.
Expressing the continued WKB column in this pair
shows that its growing coefficient is zero.
Every linearly independent solution has a
nonzero growing coefficient. This proves
recessivity in the sense of
\autoref{DefRecessivity} and uniqueness of the
recessive line. These conclusions also apply to the diagonally
rescaled matrices. Indeed,
\[
    \Psi_{(j)}
       =\Psi_{(j)}^{\mathrm{WKB}}C_j
       =\bigl[
           c_{+,j}(\hbar)\mathbf v_+
           \quad
           c_{-,j}(\hbar)\mathbf v_-
         \bigr].
\]
For fixed $\hbar$, the factors $c_{\pm,j}(\hbar)$
are nonzero and independent of $x$. Multiplication
by such a factor preserves recessivity in the sense
of \autoref{DefRecessivity}. Thus the corresponding
columns of $\Psi_{(j)}$ span the same recessive lines
as those of $\Psi_{(j)}^{\mathrm{WKB}}$.

Fix $t$ and sufficiently small positive $\hbar$.
On the two sides of a separating Stokes curve,
nearby spatial trajectories approach the same
non-oscillatory direction at infinity, with the
sheet choices matched consistently. This follows
by continuing the inverse branches of the action
near infinity, where there are no critical points.
Their tails lie in a common closed subsector of
a canonical asymptotic sector. The same canonical fundamental pair, consisting
of an exponentially decaying column and an
exponentially growing column, is therefore valid
along both trajectories. Each continued WKB
column is a linear combination of this pair,
with coefficients independent of $x$.
Its exponential decay forces the coefficient
of the growing column to vanish. The two WKB
columns consequently span the same recessive
line and are proportional. If these are the first columns of the two
fundamental matrices, their connection matrix
is upper triangular; if they are the second
columns, it is lower triangular.
The vanishing entry is holomorphic in $\hbar$.
Since it vanishes for all sufficiently small
positive $\hbar$, the identity theorem extends
its vanishing to the common Borel disk.
\end{proof}

The same argument shows that two WKB matrices in one region,
with the same ordered column directions, differ by a diagonal
matrix. It also explains the qualification concerning the branch
cut: going around the simple turning point exchanges the two
sheets, so the column positions must then be compared after
the corresponding permutation. Our labelling of the Stokes regions gives the following limiting directions at infinity for the Stokes curves.

\begin{proposition}[Labeling at $x\to\infty$]\label{PropLabelling}
We may take the absolute labeling of ends at $x\to \infty$:    
\begin{equation}
\label{eq:RH-end-labelling}
\begin{array}{c|ccccc}
 \text{Stokes curves}&\mathrm{FD},\mathrm{DA}&\mathrm{AB}&\mathrm{BC}
                 &\mathrm{CD},\mathrm{DE}&\mathrm{EF}\\ \hline
 \arg x\text{ at infinity}&-\frac{4\pi}{5}&-\frac{2\pi}{5}&0&\frac{2\pi}{5}&\frac{4\pi}{5}.
\end{array}
\end{equation}
\end{proposition}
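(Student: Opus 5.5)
The plan is to determine the asymptotic directions of all seven Stokes curves from the behaviour of the action at infinity. I will then fix which curve goes where by combining a topological count with the cyclic labelling convention of \autoref{sec:stokes-geometry}. Invariance of the Stokes graph throughout $\widetilde V$ (no wall crossing) reduces everything to one convenient value of $\tau$.

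\emph{Step 1: possible directions.} As $x\to\infty$ on either sheet, $\widetilde Z_\tau(w)=\frac45 x^{5/2}\bigl(1+O(x^{-1})\bigr)$. Since $\alpha=0$, the level condition \eqref{eq:Stokes-curves} becomes $\operatorname{Im}\bigl(\frac45x^{5/2}\bigr)=\operatorname{Im}\widetilde Z_\tau(w_\ast)+o(1)$. Each critical trajectory goes to infinity, by properness as noted in \autoref{sec:stokes-geometry}, and it does so along a direction with $\operatorname{Im}x^{5/2}\to$ bounded, that is, $\sin(\frac52\arg x)\to0$. Two further facts restrict the directions to those in the table. First, I expect $\operatorname{Re}(x^{5/2})$ to be monotone along a trajectory near infinity, since there are no critical points there. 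Second, the two lifts $\pm w$ are identified. Together these should leave only $\arg x\in\{0,\pm\frac{2\pi}{5},\pm\frac{4\pi}{5}\}$. I will verify that the lines $\arg x=\pm\frac{\pi}{5},\pm\frac{3\pi}{5},\pi$ are the oscillatory, anti-Stokes ones, and that the five listed directions are exactly the Stokes asymptotes.

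\emph{Step 2: counting and topology.} There are $3+4=7$ projected curves but only five asymptotic directions. By Step 1 each direction receives at least one curve, because every sector between consecutive Stokes asymptotes must be cut, the solutions having alternating dominance there. Hence exactly two directions receive two curves each. Two curves with the same asymptote are asymptotically parallel, with constant but different imaginary parts of the action, so they bound a half-strip reaching infinity. The only region bounded by curves from \emph{both} vertices is $R_{\mathrm D}$, and its boundary consists of the curves $\mathrm{DA},\mathrm{CD}$ from $x=\tau$ and $\mathrm{FD},\mathrm{DE}$ from $x=-2\tau$. I will argue that the doubled directions must be $\{\mathrm{FD},\mathrm{DA}\}$ and $\{\mathrm{CD},\mathrm{DE}\}$. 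Two curves from the same vertex cannot share an asymptote, since the region between them would be a half-strip containing no further vertex, contradicting the alternating dominance in that sector. The remaining four regions $R_{\mathrm A},R_{\mathrm B},R_{\mathrm C},R_{\mathrm E},R_{\mathrm F}$ then each occupy a single sector between consecutive asymptotes, consistent with the count of six regions.

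\emph{Step 3: fixing the rotation.} The counterclockwise orders $\mathrm A,\mathrm B,\mathrm C,\mathrm D$ at $x=\tau$ and $\mathrm D,\mathrm E,\mathrm F$ at $x=-2\tau$ then determine the cyclic order of ends at infinity:
\[
\{\mathrm{FD},\mathrm{DA}\},\ \mathrm{AB},\ \mathrm{BC},\ \{\mathrm{CD},\mathrm{DE}\},\ \mathrm{EF}.
\]
It remains to choose the rotation, which I will do at the sample point $\arg\tau=-\frac{\pi}{5}$. There the simple turning point $-2\tau$ has argument $\frac{4\pi}{5}$, so its three curves should tend to the nearby directions $\frac{2\pi}{5},\frac{4\pi}{5},-\frac{4\pi}{5}$. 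Its local cubic structure, with leading term $-4\tau w^3$, gives the initial tangents, and I will check that these lie in the correct asymptotic basins. This places $\mathrm{EF}$ at $\frac{4\pi}{5}$ and the curves of the double point at $x=\tau$ (argument $-\frac{\pi}{5}$) at $-\frac{4\pi}{5},-\frac{2\pi}{5},0,\frac{2\pi}{5}$, which matches the table. The word ``may'' in the statement reflects the remaining freedom of a global choice of branch of $x^{1/5}$; the rotation fixed here is the one compatible with \eqref{eq:RH-representative-sector}.

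The main obstacle is Step 3. Trajectories can bend, so the local tangent at a vertex does not alone determine the asymptote. I would first try tracking $\operatorname{Im}\widetilde Z_\tau$ explicitly. At $\arg\tau=-\frac{\pi}{5}$ we have $\tau^{5/2}\in i\mathbb R$, so the critical values $0$ and $-\frac{24}{5}\tau^2w_\pm$ have explicit imaginary parts. A curve tending to $\arg x=\theta$ must carry the level $\operatorname{Im}\widetilde Z=c$, and I will compare $c$ with the sign of $\operatorname{Im}x^{5/2}$ in the adjacent anti-Stokes sectors; this decides unambiguously which side each level set exits on. Finally, invariance of incidence and asymptotic attachment inside $\widetilde V$, already established in \autoref{sec:stokes-geometry}, transfers the result from this sample point to the whole representative sector.
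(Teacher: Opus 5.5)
Your Steps 1 and 2, together with the cyclic-order computation at the start of Step 3, correctly recover everything that holds for any action of this shape. The possible asymptotic directions are exactly $\arg x\in\frac{2\pi}{5}\mathbb Z$. The two doubled directions are the two ends of the strip region $R_{\mathrm D}$. The counterclockwise order at infinity is $\{\mathrm{FD},\mathrm{DA}\},\mathrm{AB},\mathrm{BC},\{\mathrm{CD},\mathrm{DE}\},\mathrm{EF}$.

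Two smaller points. First, along a trajectory it is not true that $\operatorname{Im}(\frac45x^{5/2})=\operatorname{Im}\widetilde Z_\tau(w_\ast)+o(1)$, because of the $tx^{1/2}$ term. You only need that $\operatorname{Im}\widetilde Z_\tau$ is constant while $|\widetilde Z_\tau|\to\infty$; this gives $\sin(\frac52\arg x)\to0$ and the five directions at once, without your two further facts. Second, the ``alternating dominance'' arguments in Step 2 are best made precise through the half-plane/strip structure of the complementary regions in the action coordinate.

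The genuine gap is this. Steps 1 and 2 determine the table only up to a rotation by a multiple of $\frac{2\pi}{5}$. That rotation is not a branch convention: once the labels are fixed, it is forced by the geometry. So the whole content of the statement lies in Step 3, and there you give no argument. As you note yourself, initial tangents do not determine asymptotes. Comparing the level $c$ with the sign of $\operatorname{Im}x^{5/2}$ in the anti-Stokes sectors next to a Stokes direction is local at infinity. It can at most say from which side a level curve approaches a given ray, never which of the five rays a given arm reaches.

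The missing ingredient is a global barrier, and your sample point supplies one. For $\tau=re^{-i\pi/5}$, write $w=\tau^{1/2}\eta$. Then $2\widetilde Z_\tau=\tau^{5/2}F(\eta)$, with $\tau^{5/2}=-ir^{5/2}$ and $F(\eta)=\frac85\eta^5-8\eta^3$ a real odd polynomial. So the Stokes condition reads $\operatorname{Re}F(\eta)=\operatorname{Re}F(\eta_\ast)$.

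Since $F(i\mathbb R)\subset i\mathbb R$, the imaginary $\eta$-axis consists exactly of Stokes branches through $w=0$. Under $x=\tau(\eta^2-2)$ it projects onto the straight ray $\{-s\tau:s\geq2\}$, of argument $\frac{4\pi}{5}$.

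The critical values of $F$ at $0,\pm\sqrt3$ are $0,\mp c$, with $c=\frac{48\sqrt3}{5}$. For $\eta\in(0,\sqrt3)$ one has $F(\eta)\in(-c,0)$ and $F(-\eta)\in(0,c)$. Hence the open segment from $-2\tau$ to $\tau$ avoids the Stokes graph, and so it lies in $R_{\mathrm D}$.

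This segment leaves $-2\tau$ in the direction opposite to the straight ray. It therefore lies inside the local sector bounded by the other two arms, whose initial tangents are $\frac{2\pi}{15}$ and $-\frac{8\pi}{15}$. So the straight ray is $\mathrm{EF}$. Your cyclic order then gives the table, and absence of wall crossing transports it through the whole sector.

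The paper argues differently: it locates all seven arms directly. It tracks the roots of $F(\eta)=-c+iu$ and $F(\eta)=iu$, using the barrier $F(i\mathbb R)\subset i\mathbb R$, Hurwitz stability of the cubic cofactor of $F+c$, and an expansion of the root near $i\mathbb R_{>0}$. It then reads the table off the labelling. Your combinatorial reduction, once completed as above, needs only the single arm $\mathrm{EF}$.
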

\begin{proof}
    We shall verify that the indexing \eqref{eq:RH-end-labelling} is correct. At $\tau=r e^{-\frac{i\pi}{5}}$, put
$w=\tau^{\frac{1}{2}}\eta$. Then
\[
    s=-i r^{\frac{5}{2}}F(\eta),\qquad
    F(\eta)=\frac85\eta^5-8\eta^3,
    \qquad c=\frac{48\sqrt3}{5}.
\]
The two roots issuing from $\sqrt3$ along
$F(\eta)=-c+iu$, $u>0$, remain in the right half-plane, since
$F(i\mathbb R)\subset i\mathbb R$. The other three roots start
in the left half-plane: indeed,
\[
 F(\eta)+c=\frac85(\eta-\sqrt3)^2
       (\eta^3+2\sqrt3\eta^2+4\eta+2\sqrt3),
\]
and the cubic satisfies the strict Hurwitz inequalities.
For $u>0$, let $R=R(u)>0$ be determined by
$\frac85R^5+8R^3=u$, so that $F(iR)=iu$.
The root approaching the positive imaginary axis
satisfies
\[
    \eta=iR-\frac{c}{F'(iR)}+O(R^{-9}),
    \qquad
    F'(iR)=8R^2(R^2+3)>0.
\]
It therefore lies in the left half-plane for
sufficiently large $u$. Consequently, the two
right-half-plane roots have limiting arguments
$\frac{\pi}{10}$ and $-\frac{3\pi}{10}$.
Complex conjugation treats $u<0$.
Projection by $x=\tau(\eta^2-2)$ gives the four nodal directions
$0,-\frac{4\pi}{5},-\frac{2\pi}{5},\frac{2\pi}{5}$. At the simple vertex, the imaginary
$\eta$-axis projects to $\frac{4\pi}{5}$. The branch of $F(\eta)=iu$
issuing into the fourth quadrant remains there and approaches
$-\frac{3\pi}{10}$; its conjugate gives the remaining branch. Their
projections have arguments $-\frac{4\pi}{5}$ and $\frac{2\pi}{5}$.
The region labeling in \eqref{eq:time-dependent-Stokes-regions}
now gives \eqref{eq:RH-end-labelling}. It persists throughout
the time sector, since no critical trajectories connect distinct
vertices there.
\end{proof}

The region labels and the cyclic ordering of these
ends are illustrated in \autoref{fig:RH-cut-folded-general} and 
\autoref{fig:RH-branch-cut-positions}. Note that the values of $\arg x$
in \eqref{eq:RH-end-labelling} specify the limiting
directions of the curves as $|x|\to\infty$, rather
than their angles at the finite vertices. The position of the branch cut joining the simple
turning point $x=-2\tau$ to infinity is a convention.
It may be moved, provided the sheet choices and
column orderings are adjusted consistently.
\autoref{fig:RH-branch-cut-positions} illustrates
two such choices, with the cut passing through
$\mathrm D$ or $\mathrm E$.
They give different presentations of the same
connection data; the corresponding transformations
of the jump matrices are described below.
Unless otherwise stated, we use the convention in
which the branch cut is folded onto $\mathrm{FD}$. 

\bigskip

We can use the last results to propagate normalisation along a chain of Stokes regions.

\begin{lemma}[Normalisation along a chain of Stokes regions]
\label{lem:RH-chain-normalisation}
Consider two adjacent Stokes regions whose common recessive
line occupies the same column position. Their
connection matrix is triangular by
\autoref{lem:RH-recessive-comparison}.
Fix the normalisation of one sectorial wave matrix.
Then, there is a unique diagonal rescaling of the wave
matrix in the other Stokes region which makes their
jump matrix unipotent (i.e. with $1$ on the diagonal). This normalisation can be
propagated along any chain of such adjacent Stokes regions
without closing a cycle.
\end{lemma}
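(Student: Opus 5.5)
The argument is linear algebra on the triangular connection matrix, using the change-of-normalisation rule of \autoref{lem:connection-matrices-constant}. Let $i$ be the region whose normalisation is fixed and $j$ the adjacent one, so that $\Psi_{(j)}=\Psi_{(i)}S_{ji}$.

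\textbf{Triangular form.} By \autoref{lem:RH-recessive-comparison}, $S_{ji}$ is upper or lower triangular, according to whether the shared recessive line sits in the first or second column. Assume it is upper triangular:
\[
S_{ji}=\begin{pmatrix}\sigma_1&\beta\\0&\sigma_2\end{pmatrix}.
\]
Because $S_{ji}$ is invertible, the diagonal entries $\sigma_1(\hbar),\sigma_2(\hbar)$ are nonzero on the common Borel disk. They are holomorphic there by \autoref{lem:connection-matrices-constant}.

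\textbf{Existence.} Rescale $\Psi_{(j)}$ to $\Psi_{(j)}C_j$ with $C_j=\operatorname{diag}(c_+,c_-)$ and $C_i=I$. By \eqref{eq:RH-change-normalisation}, the new connection matrix is $S_{ji}C_j$. Its diagonal is $(\sigma_1c_+,\sigma_2c_-)$, and its off-diagonal entry is $\beta c_-$, so it stays triangular. It is unipotent exactly when
\[
c_+=\sigma_1^{-1},\qquad c_-=\sigma_2^{-1}.
\]
These functions are holomorphic and nowhere zero on the Borel disk and independent of $(x,t)$, so they are admissible rescalings in the sense of \autoref{sec:connection-matrices}. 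The lower-triangular case is identical, and so is the case where the connection is read in the opposite direction, since $S_{ij}=S_{ji}^{-1}$ is unipotent if and only if $S_{ji}$ is.

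\textbf{Uniqueness.} Any admissible diagonal rescaling of $\Psi_{(j)}$ multiplies the diagonal of the connection matrix by its two entries. Forcing both diagonal entries to equal $1$ therefore determines the rescaling uniquely.

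\textbf{Propagation along a chain.} Take a chain $i_0,i_1,\ldots,i_n$ with no repeated region. Fix the normalisation at $i_0$ and apply the step above inductively: at step $k$, $\Psi_{(i_{k-1})}$ is already fixed, so the rescaling of $\Psi_{(i_k)}$ is uniquely determined. Since no region recurs, no region receives two conflicting requirements, and the result is well defined.

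\textbf{Main subtlety.} The only delicate point is the requirement of not closing a cycle. Around a cycle, the last region would already be fixed, so the final jump need not be unipotent. Indeed, the cycle relation \eqref{eq:connection-cocycle} constrains only the product of the jumps, not their individual diagonals. Avoiding cycles is therefore exactly what makes the construction consistent, and beyond this observation no analytic estimates are needed.
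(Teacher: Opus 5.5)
Your proof is correct and follows the paper's argument. You read off the nonzero diagonal entries of the triangular connection matrix and rescale the free region's matrix by them; your use of the oppositely oriented $S_{ji}$ is immaterial, since triangular unipotence is preserved under inversion. Uniqueness follows because a diagonal rescaling multiplies the diagonal of the connection matrix entrywise, and the chain statement follows by iteration without revisiting a region.
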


\begin{proof}
If $\Psi_{(i)}=\Psi_{(j)}T$ with
$T=\left(\begin{smallmatrix}a&b\\0&d\end{smallmatrix}\right)$,
then $a,d\neq0$. Replace $\Psi_{(j)}$ by
$\Psi_{(j)}\operatorname{diag}(a,d)$. The new connection is
$\left(\begin{smallmatrix}1&\frac{b}{a}\\0&1\end{smallmatrix}\right)$.
The diagonal entries determine the rescaling uniquely. The
lower-triangular case is identical. Iteration proves the chain
statement. Closing a cycle requires the additional relations
examined next.
\end{proof}

\begin{remark}On a closed cycle, the cyclic identity determines
the closing jump matrix. The unipotent normalisation extends around the cycle precisely when the diagonal entries of this closing matrix
are both equal to $1$ but this is not implied by the cyclic identity alone.
\end{remark}

\subsection{Cyclic relations and determination of the jump matrices}
\label{sec:RH-cyclic-relations}

We now combine the cyclic identities at the two finite
vertices with the normalisation along chains established
in \autoref{lem:RH-chain-normalisation}. This gives a
standard algebraic form of the jump matrices, with two
undetermined nodal coefficients and a free relative conventional
column scale. We then apply
\autoref{thm:no-jump-quadrivalent} to determine the
nodal coefficients and impose the normalisation at infinity.

\paragraph{Cyclic identities, normalisations and triangular jump matrices.} For any choice of normalisations, the directed convention
\eqref{eq:jump-convention} gives the cyclic identities
\begin{equation}
\label{eq:RH-vertex-cycles}
 S_{\mathrm{DA}}S_{\mathrm{CD}}S_{\mathrm{BC}}
          S_{\mathrm{AB}}=I,
 \qquad
 J_{\mathrm{FD}}S_{\mathrm{EF}}S_{\mathrm{DE}}=I.
\end{equation}
Here $J_{\mathrm{FD}}$ denotes the actual connection matrix
between the chosen matrices in $\mathrm F$ and $\mathrm D$,
including the branch-cut factor:
$\Psi_{(\mathrm F)}=\Psi_{(\mathrm D)}J_{\mathrm{FD}}$.
These identities follow from
\eqref{eq:connection-cocycle} and do not require the
individual jump matrices to be unipotent.

\bigskip

From \autoref{lem:RH-recessive-comparison} and \eqref{eq:RH-end-labelling}, choose the column ordering for which the jump matrices $S_{AB}$ and $S_{CD}$ are upper triangular, whereas $S_{BC}$ and $S_{DA}$ are lower triangular.
Their diagonal entries are not yet prescribed. Using \autoref{lem:RH-chain-normalisation} along the chain $\mathrm A,\mathrm B,\mathrm C,\mathrm D$, we may normalize them so that $S_{AB}$, $S_{BC}$ and $S_{CD}$ are unipotent. The closing jump matrix $S_{\mathrm{DA}}$ is then determined by the cyclic
identity and is not unipotent a priori. For compactness, we introduce the notations
\begin{equation}
\label{eq:ULP-matrices}
 \operatorname{Up}(b)
    \coloneqq\begin{pmatrix}1&b\\0&1\end{pmatrix},\quad
 \operatorname{Lw}(b)
    \coloneqq\begin{pmatrix}1&0\\b&1\end{pmatrix},
 \quad
 \operatorname{Ex}(a)
    \coloneqq\begin{pmatrix}0&a\\-a^{-1}&0\end{pmatrix}\,\,\Rightarrow\,\,
 \operatorname{Ex}(a)^{-1}
    =\begin{pmatrix}0&-a\\a^{-1}&0\end{pmatrix},
\end{equation}
where $b\in\mathbb C$ and $a\in\mathbb C^\ast$.
The notation $\operatorname{Ex}(a)$ denotes a scaled
sheet-exchange matrix.

\begin{proposition}[The nodal connection coefficients]
\label{prop:RH-nodal-cyclic-data}
With the column ordering just specified, there exist
diagonal normalisations of the sectorial fundamental
matrices for which
\begin{equation}
\label{eq:RH-nodal-triangular-data}
\begin{aligned}
 S_{\mathrm{AB}}
    &=\operatorname{Up}(s_+),
 &S_{\mathrm{BC}}
    &=\operatorname{Lw}(s_-),
 \\
 S_{\mathrm{CD}}
    &=\operatorname{Up}\left(-\frac{s_+}{\delta}\right),
 &S_{\mathrm{DA}}
    &=\begin{pmatrix}
        \delta&0\\
        -s_-&\delta^{-1}
      \end{pmatrix},
\end{aligned}
\qquad
\delta\coloneqq 1+s_+s_-\neq0.
\end{equation}
The coefficients $s_+$ and $s_-$ are independent of
$(x,t)$ and are holomorphic in
$\hbar$ on the common Borel disk $S_{0,R_{V_0}}$. $\delta$ is nowhere zero there. The closing connection $S_{\mathrm{DA}}$ is unipotent
if and only if $\delta=1$.
\end{proposition}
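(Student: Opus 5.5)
The plan is to obtain \eqref{eq:RH-nodal-triangular-data} from three ingredients: the triangular structure supplied by \autoref{lem:RH-recessive-comparison}, the chain normalisation of \autoref{lem:RH-chain-normalisation}, and the cyclic identity at the quadrivalent vertex in \eqref{eq:RH-vertex-cycles}.

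\emph{Step 1: normalise the chain.} Fix the column ordering chosen just before the statement, so that $S_{\mathrm{AB}}$ and $S_{\mathrm{CD}}$ are upper triangular and $S_{\mathrm{BC}}$ and $S_{\mathrm{DA}}$ are lower triangular. Keep $\Psi_{(\mathrm A)}$ fixed and apply \autoref{lem:RH-chain-normalisation} successively along $\mathrm A,\mathrm B,\mathrm C,\mathrm D$. This gives unique diagonal rescalings of $\Psi_{(\mathrm B)},\Psi_{(\mathrm C)},\Psi_{(\mathrm D)}$ for which
\[
S_{\mathrm{AB}}=\operatorname{Up}(s_+),\qquad S_{\mathrm{BC}}=\operatorname{Lw}(s_-),\qquad S_{\mathrm{CD}}=\operatorname{Up}(c)
\]
for some scalars $s_+,s_-,c$.

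The rescaling factors are the diagonal entries of triangular connection matrices. These entries are holomorphic in $\hbar$ on $S_{0,R_{V_0}}$ and independent of $(x,t)$ by \autoref{lem:connection-matrices-constant}. They are nowhere zero because each connection matrix is invertible and its determinant is the product of its diagonal entries. Hence the rescaled matrices are admissible in the sense of \autoref{sec:connection-matrices}. By \eqref{eq:RH-change-normalisation}, the new connection matrices are again holomorphic in $\hbar$ and constant in $(x,t)$, so $s_\pm$ and $c$ inherit these properties.

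\emph{Step 2: the cyclic computation.} The cyclic identity gives $S_{\mathrm{DA}}=(S_{\mathrm{CD}}S_{\mathrm{BC}}S_{\mathrm{AB}})^{-1}$. A direct multiplication gives
\[
S_{\mathrm{CD}}S_{\mathrm{BC}}S_{\mathrm{AB}}=\begin{pmatrix}1+cs_-&s_++c\delta\\ s_-&\delta\end{pmatrix},\qquad \delta=1+s_+s_-.
\]
This product has determinant $1$, since each factor is unipotent. The rescalings are diagonal, so $S_{\mathrm{DA}}$ stays lower triangular by \autoref{lem:RH-recessive-comparison}. The product is the inverse of $S_{\mathrm{DA}}$, so it is lower triangular as well, which forces $s_++c\delta=0$.

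\emph{Step 3: the key point, $\delta\neq0$.} Suppose $\delta=0$ at some $\hbar$. Then Step 2 gives $s_+=0$ there, and hence $\delta=1$, a contradiction. So $\delta$ is nowhere zero on the Borel disk. It follows that $c=-s_+/\delta$ is holomorphic, and $1+cs_-=1-s_+s_-/\delta=\delta^{-1}$. Inverting the lower-triangular matrix $\begin{pmatrix}\delta^{-1}&0\\ s_-&\delta\end{pmatrix}$ gives exactly the displayed $S_{\mathrm{DA}}$. Finally, $S_{\mathrm{DA}}$ is unipotent if and only if its diagonal entries $\delta$ and $\delta^{-1}$ both equal $1$, that is, if and only if $\delta=1$.

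I expect the only real subtlety to be this use of the triangularity of the closing matrix $S_{\mathrm{DA}}$, which comes from recessivity rather than from the chain normalisation, to force both the value of $c$ and the nonvanishing of $\delta$. Everything else is bookkeeping of the admissible diagonal rescalings.
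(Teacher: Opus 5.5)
Your proposal is correct and follows the paper's proof step for step. You use the chain normalisation along $\mathrm A,\mathrm B,\mathrm C,\mathrm D$ and the cyclic identity at the quadrivalent vertex. The lower triangularity of $S_{\mathrm{DA}}$, which the paper obtains from the shared recessive second column and which survives diagonal rescaling, forces $s_++c\delta=0$, and you then exclude $\delta=0$ by the same pointwise argument. Your additional checks are details the paper leaves implicit. These are the holomorphy and nonvanishing of the rescaling factors, and the explicit computation $1+cs_-=\delta^{-1}$ before inverting.
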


\begin{proof}
\sloppy{Apply \autoref{lem:RH-chain-normalisation} along
$\mathrm A,\mathrm B,\mathrm C,\mathrm D$.
The first three jump matrices are then unipotent.
Write $S_{\mathrm{AB}}=\operatorname{Up}(s_+)$, $S_{\mathrm{BC}}=\operatorname{Lw}(s_-)$, $S_{\mathrm{CD}}=\operatorname{Up}(c)$.
The first relation in \eqref{eq:RH-vertex-cycles} gives $S_{\mathrm{DA}}
   =\bigl(
       \operatorname{Up}(c)
       \operatorname{Lw}(s_-)
       \operatorname{Up}(s_+)
     \bigr)^{-1}.
$
By \autoref{lem:RH-recessive-comparison}, with the
chosen column ordering, the second columns of
$\Psi_{(\mathrm D)}$ and $\Psi_{(\mathrm A)}$
span the same recessive solution line along
$\mathrm{DA}$. This property is preserved by the
diagonal rescalings used to normalise the chain.
Taking the second column of
$\Psi_{(\mathrm D)}=\Psi_{(\mathrm A)}S_{\mathrm{DA}}$
therefore gives
\[
 [\Psi_{(\mathrm D)}]_{\cdot,2}
   =(S_{\mathrm{DA}})_{12}
        [\Psi_{(\mathrm A)}]_{\cdot,1}
    +(S_{\mathrm{DA}})_{22}
        [\Psi_{(\mathrm A)}]_{\cdot,2}.
\]
Linear independence of the columns of
$\Psi_{(\mathrm A)}$ implies
$(S_{\mathrm{DA}})_{12}=0$.
Thus $S_{\mathrm{DA}}$ is lower triangular.
Applying this condition to the preceding matrix
product gives $s_++c(1+s_+s_-)=0$.
If $\delta=1+s_+s_-=0$, this equality would imply
$s_+=0$, contradicting $\delta=0$.
Thus $c=-\frac{s_+}{\delta}$, and multiplication gives
\eqref{eq:RH-nodal-triangular-data}.
Holomorphy in $\hbar$ and independence of $(x,t)$
follow from the corresponding properties of the
jump matrices.}
\end{proof}

The factor $\delta$ records the return around the quadrivalent
vertex after normalising along a chain. At this stage the two nodal
coefficients $s_+$ and $s_-$ remain undetermined. We next extend the
normalisation to $\mathrm E$ and $\mathrm F$, and distinguish these
coefficients from the conventional relative scale of the columns.

\begin{lemma}[The simple vertex and the relative column scale]
\label{lem:RH-simple-vertex-normalisation}
\sloppy{Keep the matrices in $\mathrm A,\mathrm B,\mathrm C,\mathrm D$
normalised as in \autoref{prop:RH-nodal-cyclic-data}, and extend
the chain normalisation along $\mathrm D,\mathrm E,\mathrm F$.
With the cut folded onto $\mathrm{FD}$, there is a
nowhere-zero holomorphic function
$a:S_{0,R_{V_0}}\longrightarrow\mathbb C^\ast$,
independent of $(x,t)$, such that
\begin{equation}
\label{eq:RH-simple-vertex-free-a}
 S_{\mathrm{DE}}=\operatorname{Up}(a),\qquad
 S_{\mathrm{EF}}=\operatorname{Lw}(-a^{-1}),\qquad
 J_{\mathrm{FD}}=\operatorname{Ex}(a)^{-1}\operatorname{Up}(a).
\end{equation}
Consider the simultaneous right rescaling $\Psi_{(j)}\mapsto\Psi_{(j)}C$ in every Stokes region, where $C(\hbar)=\operatorname{diag}\bigl(c_+(\hbar),c_-(\hbar)\bigr)$ and $\eta(\hbar)\coloneqq \frac{c_-(\hbar)}{c_+(\hbar)}$. Here $c_+$ and $c_-$ are nowhere-zero holomorphic functions on the common Borel disk $S_{0,R_{V_{0}}}$, independent of
$(x,t)$, multiplying the first and second columns,
respectively. The induced conjugation
$S_{ij}\mapsto C^{-1}S_{ij}C$ acts on the coefficients by
\begin{equation}
\label{eq:RH-relative-scale-action}
 (a,s_+,s_-,\delta)
   \longmapsto
 (\eta a,\eta s_+,\eta^{-1}s_-,\delta).
\end{equation}
In particular, $a$ may be fixed to $i$ by a choice of relative
column normalisation. This choice leaves the Lax matrices and
the Painlev\'e~I solution $(q,p)$ unchanged.}
\end{lemma}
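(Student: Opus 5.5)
Starting from the normalisation of \autoref{prop:RH-nodal-cyclic-data} on $\mathrm A,\mathrm B,\mathrm C,\mathrm D$, I extend it along the chain $\mathrm D,\mathrm E,\mathrm F$ with \autoref{lem:RH-chain-normalisation}.

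\emph{Step 1: triangular types at the simple vertex.} By \autoref{lem:RH-recessive-comparison} and the end labelling \eqref{eq:RH-end-labelling}, the curves $\mathrm{DE}$ and $\mathrm{EF}$ tend to the directions $\frac{2\pi}{5}$ and $\frac{4\pi}{5}$. Recessivity alternates between these adjacent ends, so I expect $S_{\mathrm{DE}}$ to be upper triangular and $S_{\mathrm{EF}}$ lower triangular, in the column ordering fixed for $\mathrm{CD}$. I would check this against the ordering by comparing the recessive column at the common end $\frac{2\pi}{5}$ of $\mathrm{CD}$ and $\mathrm{DE}$. Normalising $\Psi_{(\mathrm E)}$ and then $\Psi_{(\mathrm F)}$ by the unique diagonal rescalings of \autoref{lem:RH-chain-normalisation} gives
\[
S_{\mathrm{DE}}=\operatorname{Up}(a),\qquad S_{\mathrm{EF}}=\operatorname{Lw}(b),
\]
with $a,b$ holomorphic in $\hbar$ and independent of $(x,t)$ by \autoref{lem:connection-matrices-constant}. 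These rescalings do not touch $\Psi_{(\mathrm D)}$, so the nodal data \eqref{eq:RH-nodal-triangular-data} are unchanged.

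\emph{Step 2: the branch-cut factor.} This is the main obstacle. Crossing $\mathrm{FD}$ with the cut folded there exchanges the sheets $w\mapsto -w$. The plus column of $\mathrm F$ is therefore continued to a column with the phase of the minus column of $\mathrm D$. Along $\mathrm{FD}$ (end $-\frac{4\pi}{5}$) the shared recessive line lies in opposite column positions on the two sides. So $J_{\mathrm{FD}}$ equals the permutation matrix times a triangular matrix: $J_{\mathrm{FD}}=P\,T$ with $P$ antidiagonal. I would write $J_{\mathrm{FD}}$ with unknown entries and impose two constraints.

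First, $\det J_{\mathrm{FD}}$ is fixed: the traceless Lax matrices give $\det\Psi$ constant. Under the normalisations the jump matrices in \eqref{eq:RH-nodal-triangular-data} have determinant one, so I expect $\det J_{\mathrm{FD}}=1$, making the exchange factor of the form $\operatorname{Ex}(\cdot)^{\pm1}$. Second, the cyclic identity \eqref{eq:RH-vertex-cycles} gives
\[
J_{\mathrm{FD}}=(S_{\mathrm{EF}}S_{\mathrm{DE}})^{-1}=\operatorname{Up}(-a)\operatorname{Lw}(-b).
\]
Matching this with the required antidiagonal-times-triangular shape, specifically the vanishing of the entry forced by recessivity along $\mathrm{FD}$, should give $1+ab=0$, i.e.\ $b=-a^{-1}$. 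Then
\[
\operatorname{Up}(-a)\operatorname{Lw}(a^{-1})=\begin{pmatrix}0&-a\\ a^{-1}&1\end{pmatrix}=\operatorname{Ex}(a)^{-1}\operatorname{Up}(a),
\]
which is a direct multiplication. Nonvanishing of $a$ follows because $a=0$ would make $J_{\mathrm{FD}}$ unipotent upper triangular, contradicting the sheet exchange.

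\emph{Step 3: the relative-scale action.} Conjugate by $C=\operatorname{diag}(c_+,c_-)$:
\begin{align*}
C^{-1}\operatorname{Up}(b)C&=\operatorname{Up}(\eta b),\\
C^{-1}\operatorname{Lw}(b)C&=\operatorname{Lw}(\eta^{-1}b),
\end{align*}
and diagonal matrices are unchanged. This immediately gives \eqref{eq:RH-relative-scale-action}, with $\delta=1+s_+s_-$ invariant. The transformed $J_{\mathrm{FD}}$ keeps the form $\operatorname{Ex}(\eta a)^{-1}\operatorname{Up}(\eta a)$, as one checks by conjugating each factor, or simply by applying the cyclic identity again. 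Choosing $\eta=i/a$, which is holomorphic and nowhere zero, sets $a=i$. The Lax matrices depend only on $(q,p)$ and not on $\Psi$, so they are unaffected by this rescaling.
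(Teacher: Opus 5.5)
Your proposal is correct and follows the paper's proof. Chain normalisation gives $S_{\mathrm{DE}}=\operatorname{Up}(a)$ and $S_{\mathrm{EF}}=\operatorname{Lw}(b)$, the simple-vertex cycle gives $J_{\mathrm{FD}}=\operatorname{Up}(-a)\operatorname{Lw}(-b)$, the shared recessive line on $\mathrm{FD}$ (first column of $\Psi_{(\mathrm F)}$, second column of $\Psi_{(\mathrm D)}$) forces $[J_{\mathrm{FD}}]_{11}=1+ab=0$, and diagonal conjugation gives the scale action. Three small remarks: your determinant constraint is already implied by the cyclic identity and adds nothing; $a\neq0$ follows at once from $1+ab=0$; and with $a=0$ one would have $J_{\mathrm{FD}}=\operatorname{Lw}(-b)$, which is lower triangular, not upper.
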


\begin{proof}
The shared recessive columns on $\mathrm{DE}$ and $\mathrm{EF}$
occupy, respectively, the first and second positions. Thus
\autoref{lem:RH-chain-normalisation} gives
$S_{\mathrm{DE}}=\operatorname{Up}(a)$ and $S_{\mathrm{EF}}=\operatorname{Lw}(b)$ for some
coefficients $a=a(\hbar)$ and $b=b(\hbar)$,
holomorphic on the common Borel disk. The simple-vertex
relation in \eqref{eq:RH-vertex-cycles} gives
\[
 J_{\mathrm{FD}}
   =\operatorname{Up}(-a)\operatorname{Lw}(-b)
   =\begin{pmatrix}1+ab&-a\\-b&1\end{pmatrix}.
\]
On $\mathrm{FD}$, the common recessive line is the first column
in $\mathrm F$ and the second column in $\mathrm D$. Indeed,
the determinations of $\arg x$ at this end are $\frac{6\pi}{5}$ and
$-\frac{4\pi}{5}$, respectively. For positive $\hbar$, the leading
action $\frac45x^{\frac{5}{2}}$ therefore makes the plus column
recessive on the $\mathrm F$ side and the minus column
recessive on the $\mathrm D$ side. By
\autoref{lem:RH-recessive-comparison}, the first column of
$\Psi_{(\mathrm F)}$ is proportional to the second column of
$\Psi_{(\mathrm D)}$. Hence $[J_{\mathrm{FD}}]_{11}=0$.
This gives $1+ab=0$, so $a\neq0$, $b\neq 0$ and $b=-a^{-1}$.
Substitution yields \eqref{eq:RH-simple-vertex-free-a}.
These identities extend holomorphically to the common Borel disk.
Note that the formula for $J_{FD}$ is equivalent to
\begin{equation}
\label{eq:ULU-equals-P}
 \operatorname{Up}(a)\operatorname{Lw}(-a^{-1})\operatorname{Up}(a)=\operatorname{Ex}(a).
\end{equation}
For $C=\operatorname{diag}(c_+,c_-)$, direct conjugation gives
$C^{-1}\operatorname{Up}(s)C=\operatorname{Up}(\eta s)$,
$C^{-1}\operatorname{Lw}(s)C=\operatorname{Lw}(\eta^{-1}s)$ and
$C^{-1}\operatorname{Ex}(a)C=\operatorname{Ex}(\eta a)$. This proves
\eqref{eq:RH-relative-scale-action}. Since $a$ is holomorphic
and nowhere zero, $C=\operatorname{diag}\left(1,\frac{i}{a}\right)$ is admissible
and sets $a=i$. Right multiplication by $C(\hbar)$ does not
change either Lax matrix.
\end{proof}

\begin{figure}[H]
\centering
\includegraphics[width=0.66\textwidth]
    {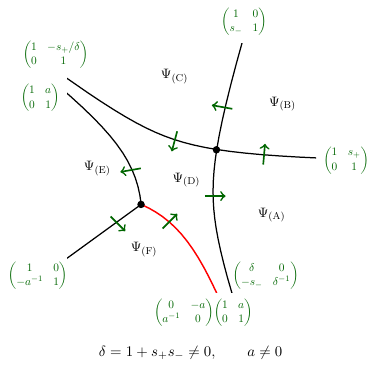}
\caption{Connection matrices from
\autoref{prop:RH-nodal-cyclic-data} and
\autoref{lem:RH-simple-vertex-normalisation},
with $\delta=1+s_+s_-\neq0$ and arbitrary $a\neq0$.
The branch cut is folded onto the red arc
$\mathrm{FD}$, whose combined jump is
$J_{\mathrm{FD}}=\operatorname{Ex}(a)^{-1}\operatorname{Up}(a)$.
Each green arrow from region $i$ to region $j$
carries the matrix $S_{ij}$ in the convention
$\Psi_{(i)}=\Psi_{(j)}S_{ij}$.
The contour is schematic.}
\label{fig:RH-cut-folded-general}
\end{figure}

Note that $a$ is conventional since it measures a relative column scale. By contrast, $s_+$ and $s_-$ are the two nodal Stokes multipliers still to be determined after that scale has been fixed. In particular, the vanishing
of either $s_+$ or $s_-$ is independent of the conventional choice of $a$. Indeed,  the combinations $\frac{s_+}{a}$ and $a s_-$ are invariant under
\eqref{eq:RH-relative-scale-action}. We keep $a$ visible in the
figures to distinguish this freedom from the analytic conclusion
$s_+=s_-=0$ obtained below in \autoref{prop:RH-nodal-vanishing}.

\paragraph{Moving the branch cut.}
The two diagrams in \autoref{fig:RH-branch-cut-positions} show
the same connection data with different positions of the cut.
An arrow from region $i$ to region $j$, labelled by $S$, means
$\Psi_{(i)}=\Psi_{(j)}S$, as in \eqref{eq:jump-convention}.
In the left diagram the cut lies inside $\mathrm D$ and creates
a narrow component carrying
\begin{equation}
\label{eq:RH-cut-in-D}
 \Psi'_{(\mathrm D)}=\Psi_{(\mathrm D)}\operatorname{Ex}(a)^{-1},\qquad
 \Psi_{(\mathrm F)}=\Psi'_{(\mathrm D)}\operatorname{Up}(a).
\end{equation}
The red cut has jump $\operatorname{Ex}(a)^{-1}$ and the neighbouring Stokes
line has jump $\operatorname{Up}(a)$. Folding the cut onto that line gives
$\Psi_{(\mathrm F)}=\Psi_{(\mathrm D)}\operatorname{Ex}(a)^{-1}\operatorname{Up}(a)$.
This explains both the combined jump and the order of its factors. For the right diagram, move the cut into $\mathrm E$ and set
\begin{equation}
\label{eq:RH-cut-in-E}
 \Psi_{(\mathrm E_1)}=\Psi_{(\mathrm E)},\qquad
 \Psi_{(\mathrm E_2)}=\Psi_{(\mathrm E)}\operatorname{Ex}(a),\qquad
 \Psi'_{(\mathrm F)}=\Psi_{(\mathrm F)}\operatorname{Ex}(a).
\end{equation}
The corresponding relations are
\begin{equation*}
 \Psi_{(\mathrm D)}=\Psi_{(\mathrm E_1)}\operatorname{Up}(a),\quad
 \Psi_{(\mathrm E_1)}=\Psi_{(\mathrm E_2)}\operatorname{Ex}(a)^{-1},\quad
 \Psi_{(\mathrm E_2)}=\Psi'_{(\mathrm F)}\operatorname{Up}(a),\quad
 \Psi'_{(\mathrm F)}=\Psi_{(\mathrm D)}\operatorname{Lw}(-a^{-1}).
 \end{equation*}
They follow from \eqref{eq:RH-simple-vertex-free-a} and
$\operatorname{Ex}(a)^{-1}\operatorname{Lw}(-a^{-1})\operatorname{Ex}(a)=\operatorname{Up}(a)$,
$\operatorname{Ex}(a)^{-1}\operatorname{Up}(a)\operatorname{Ex}(a)=\operatorname{Lw}(-a^{-1})$.
Thus moving the cut changes the column ordering and scales
on the regions it passes through, with the asymptotic branches
changed accordingly. It introduces no new monodromy parameter.
We use the folded-cut convention on $\mathrm{FD}$ in the RHP below.

\begin{figure}[H]
\centering
\begin{minipage}[t]{.485\textwidth}
 \centering
 \includegraphics[width=\linewidth]{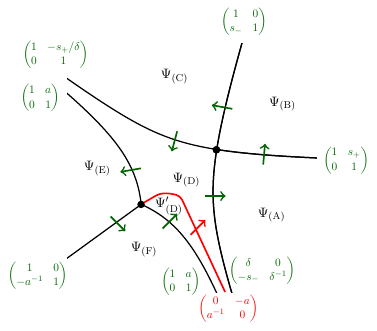}
 \par\smallskip
 {\small (a) The cut inside $\mathrm D$.}
\end{minipage}\hfill
\begin{minipage}[t]{.485\textwidth}
 \centering
 \includegraphics[width=\linewidth]{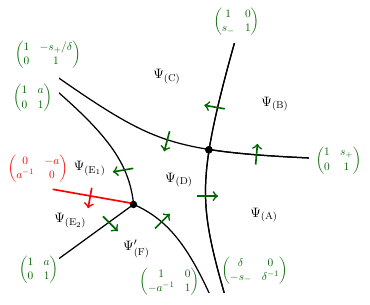}
 \par\smallskip
 {\small (b) The cut inside $\mathrm E$.}
\end{minipage}
\caption{Connection data before applying
\autoref{thm:no-jump-quadrivalent}, with
$\delta=1+s_+s_-\neq0$ and $a\neq0$. Green arrows carry
the directed connections; the red cut carries the sheet-exchange
factor $\operatorname{Ex}(a)^{-1}$. The two diagrams are related by
\eqref{eq:RH-cut-in-D} and \eqref{eq:RH-cut-in-E}.
The nodal coefficients are the same in both pictures.
The parameter $a$ is a conventional relative column scale.}
\label{fig:RH-branch-cut-positions}
\end{figure}

\paragraph{Standard algebraic normalisation and vanishing of nodal coefficients.}

\begin{definition}[Standard algebraic normalisation of the wave matrices]
\label{def:RH-standard-algebraic-normalisation}
Use the chosen branches of $\tau$, $w$ and $Z$,
with the branch cut folded onto $\mathrm{FD}$.
In each region $\mathcal U_j$, retain the ordered
pair of solution lines spanned by the first and
second columns of $\Psi_{(j)}^{\mathrm{WKB}}$.
We consider diagonal right rescalings of these
matrices, which change the scalar normalisations
of the columns while preserving their solution
lines and their ordering.
A family of sectorial fundamental matrices $\left(\Psi_{(j)}\right)_{j\in\{A,\dots,F\}}$ is in
\emph{standard algebraic normalisation} if its jump
matrices have the forms
\eqref{eq:RH-nodal-triangular-data} and
\eqref{eq:RH-simple-vertex-free-a}, for $(x,t)$-independent coefficients $(s_+(\hbar),s_-(\hbar),a(\hbar))\in \mathbb{C}^3$ holomorphic on $\hbar\in S_{0,R_{V_0}}$  satisfying $a(\hbar)\neq0$ and $\delta(\hbar)=1+s_+(\hbar)s_-(\hbar)\neq0$.
\end{definition}

The preceding proposition and lemma show that every
family with these fixed branches and ordered WKB
column directions can be brought into this form by
diagonal rescalings. However, the standard algebraic normalisation does not provide a unique family. Indeed two families that have standard algebraic normalisation differ by a common diagonal right factor
$C(\hbar)$, independent of $(x,t)$.
Indeed, write
$\Psi'_{(j)}=\Psi_{(j)}C_j$, with each $C_j$ diagonal.
Along the chain
$\mathrm A,\mathrm B,\mathrm C,\mathrm D,\mathrm E,\mathrm F$,
both families have triangular unipotent connections
of the same triangular type.
Comparing the diagonal entries of
$C_j^{-1}S_{ij}C_i$ therefore gives $C_i=C_j$
on every edge of this chain. Hence all $C_j$ coincide.
The action of this remaining common diagonal factor
on the coefficients is
\eqref{eq:RH-relative-scale-action}.
Fixing $a=i$ removes the relative column scale,
leaving only a common nonzero scalar factor, which
does not affect any jump matrix.
The normalisation at infinity below fixes this
last scalar freedom.
Once the relative column scale has been fixed,
the algebraic relations leave the two nodal
coefficients $(s_+,s_-)$ undetermined.
We now use \autoref{thm:no-jump-quadrivalent}
to show that both vanish for the Borel-summed
WKB matrices. Consequently $\delta=1$, and
all four jumps incident with the double
turning point are identities.

\begin{proposition}[Vanishing of the nodal coefficients]
\label{prop:RH-nodal-vanishing}
For the sectorial matrices obtained from
\autoref{thm:exact-WKB-existence} by the standard algebraic
normalisation of
\autoref{def:RH-standard-algebraic-normalisation}, one has
\begin{equation}
\label{eq:RH-zero-nodal-jumps}
 s_+=s_-=0,\qquad \Leftrightarrow\qquad
 S_{\mathrm{DA}}=S_{\mathrm{AB}}=S_{\mathrm{BC}}
       =S_{\mathrm{CD}}=I.
\end{equation}
\end{proposition}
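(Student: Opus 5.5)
The plan is to compare the standard algebraic normalisation with the consistent normalisation supplied by \autoref{thm:no-jump-quadrivalent}. Fix a vertex point $v_\ast=(\tau_\ast,-6\tau_\ast^2)$ with $-6\tau_\ast^2\in V_0$, together with the polydisc $\Omega$ and Borel disk of that theorem. For each $j\in\{\mathrm A,\mathrm B,\mathrm C,\mathrm D\}$, choose $b_j\in\Omega\cap\mathcal U_j$ and the normalisation data $\kappa_{\varepsilon,j}=-S_\varepsilon(b_j)$, $\hat b_{\varepsilon,j}=\hat a^{\mathrm v}_\varepsilon(b_j;\hbar)$. These data give exact WKB matrices $\Psi^{\mathrm{WKB},\mathrm v}_{(j)}$, and by \autoref{thm:no-jump-quadrivalent} all four continue to $\Psi^{\mathrm v}$. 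Hence their mutual connection matrices are $I_2$. One point needs care here: the branches and the column ordering in the theorem must match those of the standard normalisation on $\mathrm A,\dots,\mathrm D$. This holds because the branch cut is folded onto $\mathrm{FD}$, away from the quadrivalent vertex, so a single sheet of $w$ is used on the four incident regions.

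Next I relate these matrices to the given family. By \eqref{eq:exact-change-normalisation}, the ordered column directions in each region are determined by $\hat r_\pm$, which are fixed by the branch and the ordering. Therefore the given matrices satisfy $\Psi_{(j)}=\Psi^{\mathrm{WKB},\mathrm v}_{(j)}C_j$ with $C_j$ diagonal, holomorphic and nowhere zero on the common Borel disk. The standard-normalised family is itself obtained from the exact WKB matrices by diagonal rescalings, so composing the two rescalings still gives a diagonal $C_j$. By \autoref{lem:connection-matrices-constant}, the jumps become $S_{ij}=C_j^{-1}I_2C_i=C_j^{-1}C_i$. These are diagonal matrices.

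Now I read off the coefficients. From \eqref{eq:RH-nodal-triangular-data}, $S_{\mathrm{AB}}=\operatorname{Up}(s_+)$ is diagonal exactly when $s_+=0$, and $S_{\mathrm{BC}}=\operatorname{Lw}(s_-)$ is diagonal exactly when $s_-=0$. Then $\delta=1$, and the same formulas give $S_{\mathrm{CD}}=S_{\mathrm{DA}}=I$. Unipotency of these diagonal jumps also forces $C_{\mathrm A}=C_{\mathrm B}=C_{\mathrm C}=C_{\mathrm D}$, which is consistent. The converse implication in \eqref{eq:RH-zero-nodal-jumps} is immediate from \eqref{eq:RH-nodal-triangular-data}.

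The identities are first obtained on the small Borel disk where \autoref{thm:no-jump-quadrivalent} applies. The coefficients $s_\pm$ are holomorphic and independent of $(x,t)$, so the identity theorem extends $s_\pm=0$ to $S_{0,R_{V_0}}$. I expect the main obstacle to be the bookkeeping in the second step. One must check that the standard-normalised columns span exactly the same ordered solution lines as the vertex-normalised exact WKB matrices. In particular, the entire continuations of \autoref{sec:connection-matrices} and the sheet choices should not introduce a hidden permutation or sign. I would settle this by evaluating at $b_j$ and using uniqueness of the formal ratios $\hat r_\pm$ in \autoref{241113190220}.
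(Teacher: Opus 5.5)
Your proposal is correct and follows essentially the same route as the paper: you identify each standard-normalised matrix on $\mathrm A,\ldots,\mathrm D$ with $\Psi^{\mathrm v}C_j$ for a constant diagonal $C_j$ via \autoref{thm:no-jump-quadrivalent}, conclude that the four nodal jumps $C_j^{-1}C_i$ are diagonal, and extend from the small Borel disk by the identity theorem. The only cosmetic difference is that the paper combines diagonality with unipotency to get $S_{\mathrm{AB}}=S_{\mathrm{BC}}=S_{\mathrm{CD}}=I$ and then uses the cyclic relation for $S_{\mathrm{DA}}$, whereas you read off $s_\pm=0$ directly from $\operatorname{Up}(s_+)$ and $\operatorname{Lw}(s_-)$ and substitute into \eqref{eq:RH-nodal-triangular-data}.
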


\begin{proof}
By \autoref{thm:no-jump-quadrivalent}, there exist
a full neighbourhood $W$ of the double turning point
$v_\ast$ and an exact fundamental matrix $\Psi^{\mathrm v}$
holomorphic throughout $W$ on a sufficiently small
common Borel disk. Choose the branches and column
ordering consistently with those of $\Psi^{\mathrm v}$. The matching assertion of
\autoref{thm:no-jump-quadrivalent} identifies the
two columns of $\Psi^{\mathrm v}$, up to their scalar
normalisations, with the corresponding exact WKB
columns in each of the four adjacent regions.
The subsequent diagonal rescalings used in the
standard algebraic normalisation preserve this
columnwise identification. Consequently, for every
$j\in\{\mathrm A,\mathrm B,\mathrm C,\mathrm D\}$,
there is an invertible diagonal matrix $C_j(\hbar)$
such that $\Psi_{(j)}=\Psi^{\mathrm v} C_j$ on $W\cap\mathcal U_j$.
Since both matrices satisfy the same Lax pair,
$C_j$ is independent of $(x,t)$, by
\autoref{lem:connection-matrices-constant}. These identities also describe the continuations of the four regional matrices throughout $W$.
Thus the directed convention
$\Psi_{(i)}=\Psi_{(j)}S_{ij}$ gives $S_{ij}=C_j^{-1}C_i$ for every pair of adjacent regions around the
double turning point. In particular, all four
connection matrices are diagonal since $(C_j)_{j\in\{\mathrm A,\mathrm B,\mathrm C,\mathrm D\}}$ are diagonal.

By \autoref{prop:RH-nodal-cyclic-data}, the three
matrices $S_{\mathrm{AB}}$, $S_{\mathrm{BC}}$
and $S_{\mathrm{CD}}$ are also unipotent.
A diagonal unipotent matrix is the identity, so $S_{\mathrm{AB}}=S_{\mathrm{BC}}=S_{\mathrm{CD}}=I$.
The cyclic relation
\eqref{eq:RH-vertex-cycles} then gives $S_{\mathrm{DA}}=I$.

Finally, the expressions
$S_{\mathrm{AB}}=\operatorname{Up}(s_+)$ and
$S_{\mathrm{BC}}=\operatorname{Lw}(s_-)$ imply
$s_+=s_-=0$, and hence $\delta=1$.
Since the connection matrices are independent of $(x,t)$,
these identities hold along all four incident Stokes curves
on the smaller Borel disk used above. Their holomorphy in
$\hbar$ and the identity theorem extend them to the common
Borel disk $S_{0,R_{V_0}}$.
\end{proof}

Consequently, diagrams in \autoref{fig:RH-cut-folded-general} and 
\autoref{fig:RH-branch-cut-positions} reduce to identity jumps
at the quadrivalent vertex by setting $s_+=s_-=0$ and $\delta=1$.
Folding the red cut onto $\mathrm{FD}$ as in \autoref{fig:RH-cut-folded-general} then gives
\autoref{fig:final-Stokes-matrices}, with the three nontrivial
connections in \eqref{eq:RH-simple-vertex-free-a}.
The vanishing of $s_+$ and $s_-$ is the analytic conclusion supplied
by \autoref{thm:no-jump-quadrivalent}. As explained above, the choice $a=i$ is
independently available through relative column normalisation.

\paragraph{Canonical normalisation at infinity.}
We next impose the canonical normalisation at infinity. This
fixes the relative scale to $a=i$ and fixes the remaining common
scalar factor. We use the asymptotic theory of the original
linear system \eqref{250408155610_bis}. Put
\begin{equation}
\label{eq:RH-phase-Hamiltonian}
 \sigma_3=\begin{pmatrix}1&0\\0&-1\end{pmatrix},\qquad
 \sigma_1=\begin{pmatrix}0&1\\1&0\end{pmatrix},\qquad
 \theta=\frac{\frac45x^{\frac{5}{2}}+t x^{\frac{1}{2}}}{\hbar},\qquad
 \mathcal H=\frac{p^2}{2}-2q^3-tq,
\end{equation}
and
\begin{equation}
\label{eq:leading-normalisation}
    N(x)=\begin{pmatrix}
          2x^{-\frac{1}{4}}&2x^{-\frac{1}{4}}\\
          x^{\frac{1}{4}}&-x^{\frac{1}{4}}
         \end{pmatrix}\quad \Rightarrow\quad \det N(x)=-4.
\end{equation}
Fractional powers are first fixed near the positive real axis
and then continued with the chosen sheets.

The canonical-solution construction in
\cite[Section~2, equations~(2.3)--(2.4)]{Kapaev_2004}
gives canonical fundamental matrices $F_k(x,t,\hbar)$ of the
original Lax pair \eqref{250408155610_bis}.
For fixed $(t,\hbar)$ and every closed subsector
$\Omega'$ whose boundary rays lie strictly inside
$\Omega_k^\infty(\hbar)$, the following expansion
holds uniformly:
\begin{equation}
\label{eq:canonical-normalisation-infinity}
 F_k(x,t,\hbar)
 \overset{x\to\infty}{\underset{x\in\Omega'}{=}}
 N(x)\left[
 I-\frac{\mathcal H}{\hbar\sqrt{x}}\sigma_3
 +\frac1{2x}
    \left(\frac{\mathcal H^2}{\hbar^2}I-q\sigma_1\right)
 +O\left(x^{-\frac32}\right)
 \right]e^{\theta\sigma_3}.
\end{equation}
Here the canonical sectors are
\begin{equation}
\label{eq:RH-canonical-sectors}
 \Omega_k^\infty(\hbar)=
 \left\{x:
 \frac25\arg\hbar+\frac{2\pi}{5}\left(k-\frac32\right)
 <\arg x<
 \frac25\arg\hbar+\frac{2\pi}{5}\left(k+\frac12\right)
 \right\}.
\end{equation}
Five consecutive indices suffice to describe the
canonical family. For the comparison with the WKB
Stokes regions below, we use $k\in\{-1,0,1,2,3\}$.
We nevertheless retain the indexing by
$k\in\mathbb Z$ to record continuation of the
chosen determination of $\arg x$: increasing $k$
by five corresponds to one full turn.
The resulting change of the fractional powers
is accounted for by the continuation relations
below. The expansion in \eqref{eq:canonical-normalisation-infinity} continues to all orders. Its displayed coefficients
are obtained by substitution in the spatial equation of \eqref{250408155610_bis}. The time
equation is compatible with this normalisation. We use here the
standard canonical-solution theorem for this polynomial system;
see \cite[Section~2, equations~(2.1)--(2.8)]{Kapaev_2004}.
The constant change of gauge relating the conventions is written
explicitly in the last subsection. This linear-ODE result does
not assume any special Stokes data for $(q,p)$.

After decreasing $R_{V_0}$ if necessary, for each
$k\in\mathbb Z$, the canonical matrix $F_k$, continued in $x$ by the spatial Lax equation, defines a jointly holomorphic map
\[
    F_k:
    \mathbb C_x\times\mathcal P_{V_0}
       =\mathbb C_x\times V_0\times S_{0,R_{V_0}}
       \longrightarrow\operatorname{GL}_2(\mathbb C).
\]
Here the canonical sectors are followed continuously
with $\arg\hbar\in\left(-\frac{\pi}{2},\frac{\pi}{2}\right)$, keeping the index
$k$ fixed. For fixed $(t,\hbar)\in\mathcal P_{V_0}$, the
matrices $F_k$ and $F_{k+1}$ are fundamental matrices of the same spatial system \eqref{250408155610_bis}. Hence $T_k(\hbar)\coloneqq F_{k}(x,t,\hbar)^{-1}F_{k+1}(x,t,\hbar)$ is independent of $x$, and the time equation also
makes it independent of $t$, as in \autoref{lem:connection-matrices-constant}.
The matrices $(F_k(x,t,\hbar))_{k\in \mathbb{Z}}$ are related by 
\begin{equation}
\label{eq:RH-native-Stokes}
 F_{k+1}(x,t,\hbar) =F_k(x,t,\hbar)T_k(\hbar),\qquad
 T_{2m-1}(\hbar)=\operatorname{Up}(b_{2m-1}(\hbar)),\qquad T_{2m}(\hbar)=\operatorname{Lw}(b_{2m}(\hbar)), 
\end{equation}
for all $(x,t,\hbar)\in \mathbb C_x\times\mathcal P_{V_0}$. These connection matrices are called Stokes matrices
because they compare the canonical normalisations
in consecutive asymptotic sectors: their definition
does not require choosing a jump contour. Here $b_k:S_{0,R_{V_0}}\longrightarrow\mathbb C$
are holomorphic functions, called the canonical
Stokes multipliers.
The triangular type follows from the common recessive column while
the diagonal entries are one because the asymptotic
normalisations agree. Continuation of the fractional powers gives
$N(e^{2\pi i}x)=N(x)(-i\sigma_1)$
and exchanges the two exponentials.
Uniqueness of the canonical normalisation, together
with the entire continuation in $x$, therefore gives $F_{k+5}(x,t,\hbar)=F_k(x,t,\hbar)(-i\sigma_1)$.
Consequently,
\begin{equation} \forall\, k\in \mathbb{Z}\,:\,
\label{eq:RH-native-cyclic}
 T_{k+5}=\sigma_1T_k\sigma_1, \text{ and }
 T_1T_2T_3T_4T_5=-i\sigma_1.
\end{equation}

The jump matrices below already follow from
\autoref{prop:RH-nodal-vanishing} and
\autoref{lem:RH-simple-vertex-normalisation}
by choosing $a=i$. The remaining point is to show
that this algebraic normalisation is compatible
with the prescribed canonical asymptotics at infinity. The identities $F_{-1}=F_0=F_1$ asserted below are
special to the Painlev\'e solution constructed here.
They follow by transferring the vanishing result
of \autoref{prop:RH-nodal-vanishing} to the canonical
matrices through the comparison of their recessive
columns.

\begin{proposition}[Canonical normalisation of the WKB matrices]
\label{thm:WKB-RHP}
The sectorial WKB matrices admit diagonal right
normalisations, denoted by $\Psi_{\mathrm{can},(j)}$,
for which
\[
 \begin{gathered}
 \Psi_{\mathrm{can},(\mathrm A)}
   =\Psi_{\mathrm{can},(\mathrm B)}
   =\Psi_{\mathrm{can},(\mathrm C)}
   =\Psi_{\mathrm{can},(\mathrm D)}=F_0,
 \\
 \Psi_{\mathrm{can},(\mathrm E)}=F_2,\qquad
 \Psi_{\mathrm{can},(\mathrm F)}=F_3,\qquad
 F_{-1}=F_0=F_1.
 \end{gathered}
\]
Consequently, the canonically normalised fundamental matrices
$\left(\Psi_{\mathrm{can},(j)}\right)_
 {j\in\{\mathrm A,\ldots,\mathrm F\}}$
have the canonical asymptotic expansion
\eqref{eq:canonical-normalisation-infinity}
at their corresponding ends as $x\to\infty$.
These regional matrices define the piecewise matrix
$\Psi_{\mathrm{can}}$ by
$\left.\Psi_{\mathrm{can}}\right|_{\mathcal U_j}
 =\Psi_{\mathrm{can},(j)}$.
The matrix $\Psi_{\mathrm{can}}$ has the jump matrices
\begin{equation}
\label{eq:all-jumps}
 \begin{gathered}
 S_{\mathrm{DA}}=S_{\mathrm{AB}}=S_{\mathrm{BC}}
       =S_{\mathrm{CD}}=I,\qquad
 S_{\mathrm{DE}}=\operatorname{Up}(i),\qquad
 S_{\mathrm{EF}}=\operatorname{Lw}(i),\\
 J_{\mathrm{FD}}=\operatorname{Ex}(i)^{-1}\operatorname{Up}(i),
 \qquad \operatorname{Ex}(i)=i\sigma_1.
 \end{gathered}
\end{equation}
\end{proposition}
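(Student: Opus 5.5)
The approach is to compare, region by region, the recessive solution lines of the WKB columns with those of the canonical matrices $F_k$. The diagonal factors are then fixed by matching the leading asymptotics at infinity. Finally, \autoref{prop:RH-nodal-vanishing} is transported to the canonical side.

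\emph{Step 1: region-to-sector assignment.} Using \autoref{PropLabelling} in the representative sector \eqref{eq:RH-representative-sector}, with $\alpha=0$ and small positive $\hbar$, each region's tails at infinity are assigned to canonical sectors $\Omega_k^\infty$ with $\arg\hbar=0$. The ends of $\mathrm A,\mathrm B,\mathrm C,\mathrm D$ lie between the limiting directions $-\frac{4\pi}{5},-\frac{2\pi}{5},0,\frac{2\pi}{5}$. These directions are the boundaries of $\Omega_{-1}^\infty,\Omega_0^\infty,\Omega_1^\infty$ in \eqref{eq:RH-canonical-sectors}. The ends of $\mathrm E$ and $\mathrm F$ are similarly matched with $\Omega_2^\infty$ and $\Omega_3^\infty$. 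On each such end, \autoref{lem:RH-recessive-comparison} identifies the recessive WKB column with the recessive column of the relevant $F_k$. Here the leading phase $\pm\frac45x^{5/2}/\hbar$ of $S_\pm$ matches $\pm\theta$, and $w\sim x^{1/2}$, so the column positions correspond to those of $e^{\theta\sigma_3}$, after the sheet adjustment across the cut $\mathrm{FD}$. A fundamental matrix is determined up to a right diagonal factor by the pair of recessive lines at two adjacent Stokes-curve ends. Hence each $\Psi_{(j)}^{\mathrm{WKB}}$ equals $F_{k(j)}$ times a diagonal $\hbar$-dependent constant, with $k(\mathrm E)=2$ and $k(\mathrm F)=3$. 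The bounding ends of $\mathrm A,\dots,\mathrm D$ give $F_{-1},F_0,F_1$ respectively.

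\emph{Step 2: the identities $F_{-1}=F_0=F_1$.} After \autoref{prop:RH-nodal-vanishing}, the standard normalised matrices in $\mathrm A,\dots,\mathrm D$ coincide; call the common matrix $\Psi_{\mathrm{ABCD}}$. Both columns of $\Psi_{\mathrm{ABCD}}$ are recessive at several consecutive ends. This pins down the recessive lines of $F_{-1},F_0,F_1$ in both column positions, so $T_{-1}$ and $T_0$ are triangular and also diagonal. Their diagonal entries equal one by \eqref{eq:RH-native-Stokes}, so $T_{-1}=T_0=I$, which is the claim.

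\emph{Step 3: fixing the scales and deriving the jumps.} Set $\Psi_{\mathrm{can},(j)}=F_{k(j)}$ for all six regions; this choice is a diagonal right rescaling by Step 1. Its jumps are $T_1=\operatorname{Up}(b_1)$ between $F_0,F_1$ and $F_2$ (edge $\mathrm{DE}$), and $T_2=\operatorname{Lw}(b_2)$ (edge $\mathrm{EF}$). The cyclic relation \eqref{eq:RH-native-cyclic}, combined with $T_{-1}=T_0=I$ and $T_4=\sigma_1T_{-1}\sigma_1=I$, reduces to $T_1T_2T_3=-i\sigma_1$. Together with $J_{\mathrm{FD}}$ from \autoref{lem:RH-simple-vertex-normalisation}, this is the relation \eqref{eq:ULU-equals-P} with $b_1=b_3=a$ and $b_2=-a^{-1}$, together with $\operatorname{Ex}(a)=-i\sigma_1$ or its inverse according to the folded-cut orientation. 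This forces $a=i$ and yields \eqref{eq:all-jumps}, with $\operatorname{Lw}(-i^{-1})=\operatorname{Lw}(i)$. Canonical asymptotics hold by construction, since each $\Psi_{\mathrm{can},(j)}$ is literally a canonical $F_k$. The identities, proved first for small positive $\hbar$, extend to $S_{0,R_{V_0}}$ by the identity theorem.

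\emph{Main obstacle.} The hardest step is Step 1: the precise bookkeeping of which Stokes-curve end lies in which canonical sector, and of the column positions across the branch cut. Matching the WKB sheet $w\sim\pm x^{1/2}$ with the determination of $x^{1/4}$ in $N(x)$ requires care, since a wrong sign gives $a=-i$. I would first verify this on a single end, say $\arg x=0$ for the edge $\mathrm{BC}$. There I would compare $S_+\sim\frac45x^{5/2}$ with $\theta$ directly, and then propagate around the ends using the recorded arguments $-\frac{4\pi}{5}$ and $\frac{6\pi}{5}$ for $\mathrm{FD}$.
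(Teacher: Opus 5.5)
Your Steps 1 and 2 follow the paper's argument. Proportionality of recessive columns gives constant diagonal matrices $C_{\mathrm A},C_{\mathrm B},C_{\mathrm C}$ with $F_{-1}=\Psi_{(\mathrm A)}C_{\mathrm A}$, $F_0=\Psi_{(\mathrm B)}C_{\mathrm B}$ and $F_1=\Psi_{(\mathrm C)}C_{\mathrm C}$. Since $\Psi_{(\mathrm A)}=\Psi_{(\mathrm B)}=\Psi_{(\mathrm C)}$ after \autoref{prop:RH-nodal-vanishing}, $T_{-1}$ and $T_0$ are diagonal and unipotent, hence equal to $I$.

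Two small corrections there. First, the directions $-\frac{4\pi}{5},\dots,\frac{2\pi}{5}$ are not boundaries of the canonical sectors. By \eqref{eq:RH-canonical-sectors} at $\arg\hbar=0$, $\Omega_k^\infty$ has endpoints at odd multiples of $\frac{\pi}{5}$ and contains $\frac{2\pi}{5}(k-1)$ and $\frac{2\pi}{5}k$ in its interior. This is exactly what makes the expansion of $F_k$ valid along both bounding ends of the region. Second, reducing \eqref{eq:RH-native-cyclic} needs $T_5=\sigma_1T_0\sigma_1=I$ as well as $T_4=I$.

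The genuine gap is in Step 3: the sign of $a$, the one nontrivial output there, is asserted rather than derived. You treat $T_1$ as the jump across $\mathrm{DE}$ and match $b_1=b_3=a$, $b_2=-a^{-1}$. But in the directed convention $\Psi_{(i)}=\Psi_{(j)}S_{ij}$ one has $S_{\mathrm{DE}}=F_2^{-1}F_1=T_1^{-1}$, $S_{\mathrm{EF}}=F_3^{-1}F_2=T_2^{-1}$ and $J_{\mathrm{FD}}=F_1^{-1}F_3=T_1T_2$.

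With your identifications, \eqref{eq:ULU-equals-P} gives $T_1T_2T_3=\operatorname{Ex}(a)$. Then $\operatorname{Ex}(a)=-i\sigma_1$ forces $a=-i$, and the jump on $\mathrm{DE}$ would be $\operatorname{Up}(-i)$, contradicting \eqref{eq:all-jumps}. The clause ``or its inverse according to the folded-cut orientation'' does not repair this. The relation $T_1T_2T_3T_4T_5=-i\sigma_1$ is fixed by $F_{k+5}=F_k(-i\sigma_1)$, not by a cut convention. Moreover $\operatorname{Ex}(a)^{-1}=\operatorname{Ex}(-a)$, so choosing the inverse is the same as choosing the sign of $a$, which is what you were trying to determine.

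The correct matching with \autoref{lem:RH-simple-vertex-normalisation} is $b_1=-a$, $b_2=a^{-1}$. The lower-left entry of $\operatorname{Up}(b_1)\operatorname{Lw}(b_2)\operatorname{Up}(b_3)=-i\sigma_1$ then gives $b_2=-i$, that is, $a=i$.

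The paper avoids the matching altogether. It solves this product identity entrywise: $b_2=-i$, and then $b_1=b_3=-i$ from the diagonal entries. It then sets $\Psi_{\mathrm{can},(\mathrm E)}=F_2=F_0\operatorname{Up}(-i)$ and $\Psi_{\mathrm{can},(\mathrm F)}=F_3=F_2\operatorname{Lw}(-i)$. Inverting gives $S_{\mathrm{DE}}=\operatorname{Up}(i)$, $S_{\mathrm{EF}}=\operatorname{Lw}(i)$ and $J_{\mathrm{FD}}=\operatorname{Up}(-i)\operatorname{Lw}(-i)=\operatorname{Ex}(i)^{-1}\operatorname{Up}(i)$.

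Once every regional matrix is a canonical $F_k$, the value $a=i$ is forced by the canonical cyclic relation together with the direction convention. The WKB sheet sign you worry about in your last paragraph enters only through the column-position table of Step 1.
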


\begin{proof}
Start with the regional matrices $\Psi_{(j)}$ in
the standard algebraic normalisation of
\autoref{def:RH-standard-algebraic-normalisation}.
By \autoref{prop:RH-nodal-vanishing}, their entire
continuations satisfy $\Psi_{(\mathrm A)}
      =\Psi_{(\mathrm B)}
      =\Psi_{(\mathrm C)}
      =\Psi_{(\mathrm D)}$.
First take sufficiently small positive $\hbar$.
We compare the columns of these regional matrices
with the canonical columns by identifying their
recessive ends at infinity.
By \eqref{eq:RH-end-labelling}, the chosen column
ordering, and \autoref{lem:RH-recessive-comparison},
the corresponding limiting arguments $\arg x$ are
\[
 \begin{array}{c|c|c|c}
  \text{region}
   &\text{first column}
   &\text{second column}
   &\text{canonical matrix}\\ \hline
  \mathrm A&-\frac{2\pi}{5}&-\frac{4\pi}{5}&F_{-1}\\
  \mathrm B&-\frac{2\pi}{5}&0&F_0\\
  \mathrm C&\frac{2\pi}{5}&0&F_1\\
  \mathrm E&\frac{2\pi}{5}&\frac{4\pi}{5}&F_2\\
  \mathrm F&\frac{6\pi}{5}&\frac{4\pi}{5}&F_3
 \end{array}
\]
Indeed, the canonical expansion
\eqref{eq:canonical-normalisation-infinity}
contains $e^\theta$ in its first column and
$e^{-\theta}$ in its second column.
Since $\theta\sim \frac{4}{5\hbar}x^{\frac{5}{2}}$, these columns
are recessive at the respective ends displayed
above, which lie inside their canonical sectors
\eqref{eq:RH-canonical-sectors}.

Uniqueness of the recessive solution line implies
that the regional and canonical columns in each
row of the table are proportional.
In particular, there are invertible diagonal
matrices $C_{\mathrm A},C_{\mathrm B},C_{\mathrm C}$
such that
$F_{-1}=\Psi_{(\mathrm A)}C_{\mathrm A}$,
    $F_0=\Psi_{(\mathrm B)}C_{\mathrm B}$,
    $F_1=\Psi_{(\mathrm C)}C_{\mathrm C}$.
Both sides satisfy the same Lax pair, so
\autoref{lem:connection-matrices-constant}
shows that these comparison matrices depend
only on $\hbar$. Since the three regional matrices coincide,
\eqref{eq:RH-native-Stokes} gives
$T_{-1}=C_{\mathrm A}^{-1}C_{\mathrm B}$,
    $T_0=C_{\mathrm B}^{-1}C_{\mathrm C}$.
Thus $T_{-1}$ and $T_0$ are diagonal. Their unipotent triangular forms force
$T_{-1}=T_0=I$, so that 
    $C_{\mathrm A}=C_{\mathrm B}=C_{\mathrm C}$ and 
    $F_{-1}=F_0=F_1$.
We now apply this common diagonal normalisation
to all four regions incident with the double
turning point, defining
\[
    \Psi_{\mathrm{can},(j)}
       \coloneqq \Psi_{(j)}C_{\mathrm A}=F_0,
    \qquad
    \forall\, j\in\{\mathrm A,\mathrm B,\mathrm C,\mathrm D\}.
\]
Their jump matrices remain identities.
The periodicity in \eqref{eq:RH-native-cyclic}
gives $T_4=T_5=I$. Hence the cyclic relation becomes
\[
    \operatorname{Up}(b_1)
    \operatorname{Lw}(b_2)
    \operatorname{Up}(b_3)
       =-i\sigma_1.
\]
Its lower-left entry gives $b_2=-i$, and its diagonal
entries then give $b_1=b_3=-i$.
Define
\begin{equation*}
    \Psi_{\mathrm{can},(\mathrm E)}
       \coloneqq F_2
        =\Psi_{\mathrm{can},(\mathrm D)}
             \operatorname{Up}(-i),\qquad
    \Psi_{\mathrm{can},(\mathrm F)}
       \coloneqq F_3
        =\Psi_{\mathrm{can},(\mathrm E)}
             \operatorname{Lw}(-i).
 \end{equation*}
By the last two rows of the previous table, these matrices
are diagonal right renormalisations of
$\Psi_{(\mathrm E)}$ and $\Psi_{(\mathrm F)}$,
respectively. Inverting the preceding relations gives $S_{\mathrm{DE}}=\operatorname{Up}(i)$ and $S_{\mathrm{EF}}=\operatorname{Lw}(i)$.
Furthermore, $J_{\mathrm{FD}}
       =\operatorname{Up}(-i)\operatorname{Lw}(-i)
       =\operatorname{Ex}(i)^{-1}\operatorname{Up}(i)$.
This proves \eqref{eq:all-jumps} for sufficiently
small positive $\hbar$.
All comparison matrices are holomorphic and
invertible on the common Borel disk $S_{0,R_{V_0}}$:
they are quotients of holomorphic fundamental
matrices. Their off-diagonal entries vanish for
sufficiently small positive $\hbar$, so they vanish
throughout $S_{0,R_{V_0}}$ by the identity theorem.
The same argument extends all the displayed
connection identities to this disk, with the
canonical sectors followed continuously in
$\arg\hbar$ and their indices kept fixed.

Finally, the canonical expansions of the $F_k$
give the asserted asymptotics at the corresponding
ends of each region. For $\mathrm D$, use
$F_0=F_{-1}=F_1$ to cover both ends.
Every regional matrix is entire in $x$ and
satisfies both Lax equations.
\end{proof}

\autoref{fig:final-Stokes-matrices} displays the
resulting jump matrices with the relative column
scale $a$ left arbitrary. For the canonically
normalised matrix $\Psi_{\mathrm{can}}$, one has
$a=i$.

\begin{figure}[H]
\centering
\includegraphics[width=.62\textwidth]{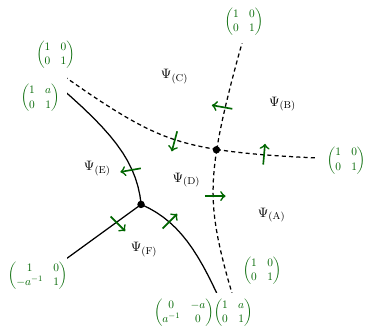}
\caption{The WKB contour after
\autoref{prop:RH-nodal-vanishing}: the four nodal jumps are
identities. The canonical prefactor $N(x)$ selects
$a=i$. The red cut of
\autoref{fig:RH-branch-cut-positions} has been folded onto
$\mathrm{FD}$, giving $J_{\mathrm{FD}}=\operatorname{Ex}(a)^{-1}\operatorname{Up}(a)$.}
\label{fig:final-Stokes-matrices}
\end{figure}

\subsection{The normalized WKB Riemann--Hilbert problem}
\label{sec:WKB-RHP-problem}

Fix $(t,\hbar)\in\mathcal P_{V_0}$, and let
$\mathcal G(t)$ be the seven-armed Stokes graph of
\eqref{eq:time-dependent-Stokes-regions}, including its two
vertices. We fold the branch cut onto the curve $\mathrm{FD}$.
Outside a sufficiently large disk, choose the
determination of $\arg x$ obtained by continuation
from the positive real axis, with the cut along
$\mathrm{FD}$. Its limiting arguments on the two
sides of this curve are $-\frac{4\pi}{5}$ on the
$\mathrm D$ side and $\frac{6\pi}{5}$ on the
$\mathrm F$ side.
The fractional powers in $N$ and $\theta$ are
defined using this determination.
In particular, beyond the negative real axis in
region $\mathrm F$, they are continued rather
than reset to their principal values.

\bigskip

We now formulate the Riemann--Hilbert problem on the
WKB contour and prove that $\Psi_{\mathrm{can}}$ is
its unique solution.
Its comparison with the standard Painlev\'e~I
Riemann--Hilbert problem
\cite[Section~2]{Kapaev_2004}
is given in \autoref{sec:canonical-RHP}.
For a related exact WKB approach to the direct
monodromy problem, see \cite[Section~5]{Iwaki-P1}.

\begin{problem}[The normalized WKB Riemann--Hilbert problem]
\label{prob:WKB-RHP}
Find a matrix $\Psi(x,t,\hbar)$ with the following properties.
\begin{enumerate}
\item For fixed $(t,\hbar)\in\mathcal P_{V_0}$, it is holomorphic on
$\mathbb C\setminus\mathcal G(t)$ and has continuous one-sided
boundary values on every open contour arc of the Stokes graph $\mathcal{G}(t)$.
\item For each $j\in\{\mathrm A,\ldots,\mathrm F\}$,
write $\Psi_{(j)}$ for the restriction of
$x\mapsto\Psi(x,t,\hbar)$ to the spatial region
$R_j(\tau(t))$ of
\eqref{eq:time-dependent-Stokes-regions}.
On each directed open arc separating regions
$i$ and $j$, the one-sided boundary values satisfy
$\Psi_{(i)}=\Psi_{(j)}S_{ij}$, with the jump
matrices in \eqref{eq:all-jumps}.
On the directed arc $\mathrm{FD}$, the jump
matrix is $J_{\mathrm{FD}}$.
\item It is bounded at both finite vertices in every adjacent
region.
\item There exists a scalar $d=d(t,\hbar)$, not prescribed
as part of the data, such that
\begin{equation}
\label{eq:RH-native-infinity-condition}
    N(x)^{-1}\Psi(x,t,\hbar)e^{-\theta\sigma_3}
       =I+\frac{d\sigma_3}{\sqrt{x}}+O(x^{-1}),
       \qquad x\longrightarrow\infty.
\end{equation}
For fixed $(t,\hbar)$, this estimate is uniform
on every end of each spatial region
$R_j(\tau(t))$, including its one-sided contour
boundaries, with the determinations of the
fractional powers specified above.
\end{enumerate}
\end{problem}

The diagonal form of the coefficient of $x^{-\frac{1}{2}}$ is part of
the normalisation. Neither $q$ nor its formal expansion is
included in the RHP data.

\begin{theorem}[Existence and uniqueness of the WKB RHP]
\label{thm:RH-WKB-existence-uniqueness}
For every $(t,\hbar)\in\mathcal P_{V_0}$,
\autoref{prob:WKB-RHP} has a unique solution. It is the matrix $\Psi_{\mathrm{can}}$ constructed
in \autoref{thm:WKB-RHP}.
Its regional entire continuations form jointly
holomorphic families on
$\mathbb C_x\times\mathcal P_{V_0}$ and satisfy
both equations of the original Lax system. It has determinant $-4$ and
$d=-\frac{1}{\hbar}\mathcal H$.
\end{theorem}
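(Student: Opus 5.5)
The plan is to prove existence by checking that $\Psi_{\mathrm{can}}$ from \autoref{thm:WKB-RHP} satisfies the four conditions of \autoref{prob:WKB-RHP}, and uniqueness by a Liouville argument on $\Psi\Psi_{\mathrm{can}}^{-1}$.

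\emph{Existence.} Each regional matrix $\Psi_{\mathrm{can},(j)}$ is the restriction of an entire fundamental matrix, as shown in \autoref{sec:connection-matrices}.
\begin{itemize}
\item Holomorphy off $\mathcal G(t)$, continuity of the one-sided boundary values, and boundedness at both finite vertices are therefore immediate.
\item The jump relations are exactly \eqref{eq:all-jumps}.
\item Joint holomorphy on $\mathbb C_x\times\mathcal P_{V_0}$ and the two Lax equations were established for the $F_k$ and hence for $\Psi_{\mathrm{can}}$.
\item For condition 4, each region's ends lie strictly inside the canonical sector of its matrix $F_k$, by the table in the proof of \autoref{thm:WKB-RHP}. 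For $\mathrm D$ we use $F_{-1}=F_0=F_1$.
\item Choosing closed subsectors $\Omega'$ that contain the tails of the regions together with their one-sided boundary arcs, \eqref{eq:canonical-normalisation-infinity} gives
\[
N^{-1}\Psi_{\mathrm{can}}e^{-\theta\sigma_3}=I-\frac{\mathcal H}{\hbar\sqrt x}\sigma_3+O(x^{-1}).
\]
This yields $d=-\frac{\mathcal H}{\hbar}$.
\item On $\mathrm F$ and across the folded cut, the determination of $\arg x$ must be checked against the continuation relation $F_{k+5}=F_k(-i\sigma_1)$. This is bookkeeping.
\item Since $L$ and $A$ are traceless, $\det\Psi_{\mathrm{can}}$ is constant in $(x,t)$. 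Letting $x\to\infty$ in the expansion gives $\det\Psi_{\mathrm{can}}=\det N=-4$.
\end{itemize}

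\emph{Uniqueness.} Let $\Psi$ be any solution.

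First, $\det\Psi$ has no jumps, since $\det\operatorname{Up}=\det\operatorname{Lw}=\det\operatorname{Ex}(a)=1$. By continuity of the boundary values and Morera's theorem, it is holomorphic away from the two vertices. It is bounded there, hence entire, and it tends to $\det N=-4$ at infinity. So $\det\Psi\equiv-4$ and $\Psi$ is invertible.

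Next, set $X\coloneqq\Psi\Psi_{\mathrm{can}}^{-1}$.
\begin{itemize}
\item The jump matrices are the same for both, so $X$ has no jumps.
\item $X$ is bounded near the vertices, because $\Psi_{\mathrm{can}}^{-1}$ is entire there.
\item Hence $X$ is entire.
\end{itemize}
At infinity the exponentials cancel:
\[
N^{-1}XN=(N^{-1}\Psi e^{-\theta\sigma_3})(N^{-1}\Psi_{\mathrm{can}}e^{-\theta\sigma_3})^{-1}=I+\frac{(d-d_{\mathrm{can}})\sigma_3}{\sqrt x}+O(x^{-1}),
\]
uniformly on every end. Conjugating back by $N$ produces at most a factor $x^{\frac12}$, since the entries of $N$ and $N^{-1}$ are $O(x^{\pm\frac14})$. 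Therefore $X=N(I+O(x^{-\frac12}))N^{-1}$ is bounded, and Liouville gives a constant $X_0=\begin{pmatrix}\alpha&\beta\\\gamma&\varepsilon\end{pmatrix}$.

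It remains to show $X_0=I$ by expanding $N^{-1}X_0N$ in powers of $x^{\frac12}$.
\begin{itemize}
\item The $\beta$-contribution grows like $x^{\frac12}$, so $\beta=0$.
\item The constant term must be $I$, which gives $\alpha=\varepsilon=1$.
\item A direct computation gives $N^{-1}E_{21}N=x^{-\frac12}\begin{pmatrix}1&1\\-1&-1\end{pmatrix}$. This has nonzero off-diagonal entries at order $x^{-\frac12}$, but the normalisation allows only a diagonal $\sigma_3$ term at that order. Hence $\gamma=0$.
\end{itemize}
Thus $X_0=I$, so $\Psi=\Psi_{\mathrm{can}}$, and also $d=d_{\mathrm{can}}=-\frac{\mathcal H}{\hbar}$.

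The main obstacle is the uniformity of the expansion at infinity up to the contour arcs on every end, including the $\mathrm{FD}$ cut with its continued fractional powers. It is needed both for existence and for applying Liouville to $X$. I would handle it by showing that each end direction in \eqref{eq:RH-end-labelling} lies strictly inside the relevant canonical sector \eqref{eq:RH-canonical-sectors}, uniformly for $\arg\hbar$ near $0$. On the Borel disk $|\arg\hbar|<\frac{\pi}{2}$ the sectors move by less than $\frac{\pi}{5}$, so the margin has to be checked explicitly. Where it fails, I would fall back on the overlapping identities $F_{-1}=F_0=F_1$ together with the explicit jumps, and cover each end by two canonical sectors.
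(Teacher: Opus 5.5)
Your proposal is correct and follows essentially the same route as the paper. Existence comes from \autoref{thm:WKB-RHP} together with the check that each end lies strictly inside its canonical sector, using $F_{-1}=F_0=F_1$ for $\mathrm D$ and $\bigl|\frac25\arg\hbar\bigr|<\frac{\pi}{5}$; this already suffices for fixed $\hbar$, so your fallback is not needed. Uniqueness comes from $\det\equiv-4$ and Liouville applied to the ratio of two solutions, with the constant forced to be $I$ by the diagonal form of the $x^{-1/2}$ coefficient.

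The only cosmetic difference is in that last step. The paper reads off $M=I+\frac{\Delta}{2}E_{21}+O(x^{-1/2})$ directly from the conjugated expansion, whereas you parametrise a general constant matrix and eliminate its entries one by one.
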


\begin{proof}
Existence follows from \autoref{thm:WKB-RHP} and
\eqref{eq:canonical-normalisation-infinity}. For clarity, the
endwise asymptotic condition holds also in the narrow ends of
$\mathrm D$: its matrix is $F_{-1}=F_0=F_1$, whose combined
canonical sectors contain both ends. The sectors for $F_2$ and
$F_3$ contain the ends of $\mathrm E$ and $\mathrm F$.
Since $|\arg\hbar|<\frac{\pi}{2}$, the angular
displacement satisfies
$\left|\frac25\arg\hbar\right|<\frac{\pi}{5}$.
Thus, for fixed $\hbar$, all limiting contour
directions remain strictly inside the relevant
canonical sectors.
For uniqueness, all jumps have determinant one. The determinant
of any solution therefore extends holomorphically across the
arcs and, by boundedness, across both vertices. Its limit at
infinity is $\det N=-4$, so it equals $-4$ by Liouville's
theorem. In particular every solution is invertible. More precisely, let $\Psi^{(1)},\Psi^{(2)}$ be two solutions of the RHP and put
$M=\Psi^{(1)}(\Psi^{(2)})^{-1}$ and
$\Delta=d_1-d_2$. The jumps of $M$ cancel, and its finite
singularities are removable.
Using \eqref{eq:RH-native-infinity-condition} and
the explicit matrix $N$ in
\eqref{eq:leading-normalisation}, we obtain
\[
 M=N\left[I+\Delta\sigma_3x^{-\frac{1}{2}}+O(x^{-1})\right]N^{-1}
   =I+\frac{\Delta}{2}E_{21}+O\left(x^{-\frac{1}{2}}\right),
 \qquad E_{21}=\begin{pmatrix}0&0\\1&0\end{pmatrix}.
\] 
Thus $M$ is bounded and equals the constant
$I+\frac\Delta2E_{21}$ by Liouville's theorem. Then, we have
$N^{-1}E_{21}N
       =\frac1{\sqrt{x}}
          \begin{pmatrix}1&1\\-1&-1\end{pmatrix}$.
The exact identity $M=I+\frac{\Delta}{2}E_{21}$
therefore gives
\[
    N^{-1}MN
       =I+\frac{\Delta}{2\sqrt{x}}
          \begin{pmatrix}1&1\\-1&-1\end{pmatrix}.
\]
On the other hand, applying
\eqref{eq:RH-native-infinity-condition}
to each of the two solutions gives
\begin{align*}
    N^{-1}MN
       &=
       \left(N^{-1}\Psi^{(1)}e^{-\theta\sigma_3}\right)
       \left(N^{-1}\Psi^{(2)}e^{-\theta\sigma_3}\right)^{-1}=
       \left(I+\frac{d_1\sigma_3}{\sqrt{x}}+O(x^{-1})\right)
       \left(I-\frac{d_2\sigma_3}{\sqrt{x}}+O(x^{-1})\right)\\
       &=I+\frac{\Delta\sigma_3}{\sqrt{x}}+O(x^{-1}),
\end{align*}
where $\Delta=d_1-d_2$.
Comparing the $(1,2)$-entries gives
$\frac{\Delta}{2\sqrt{x}}=O(x^{-1})$.
Multiplication by $\sqrt{x}$ and passage to the
limit yield $\Delta=0$. Hence $M=I$, proving uniqueness.
\end{proof}

\subsection{Comparison with the standard Painlev\'e I problem}
\label{sec:canonical-RHP}

We first change the matrix normalisation of the WKB RHP.
We then recall the standard Painlev\'e~I RHP, without
$\hbar$, and its correspondence with Painlev\'e solutions.
Finally, a rescaling identifies the two problems and
therefore identifies our Borel-summed solution with a
rescaled classical tritronqu\'ee solution.

\paragraph{A constant change of gauge and the RHP for $\Gamma$.}

Fix $(t,\hbar)\in\mathcal P_{V_0}$ and let
$\Psi_{\mathrm{can}}$ be the solution of
\autoref{prob:WKB-RHP}. Define
\begin{equation}
\label{eq:RH-constant-gauge}
    G=\begin{pmatrix}0&2\\1&0\end{pmatrix},
    \qquad
    H_0=\frac1{\sqrt2}
        \begin{pmatrix}1&1\\1&-1\end{pmatrix},
    \qquad
    \Gamma(x,t,\hbar)=\frac1{2\sqrt2}
               G\Psi_{\mathrm{can}}(x,t,\hbar)\sigma_3.
\end{equation}
The left multiplication by $G$ changes the Lax gauge.
The constant right multiplication by $\sigma_3$
changes the convention for the connection matrices.
We now state the RHP satisfied by $\Gamma$.
The contour $\mathcal G(t)$, the branch cut on
$\mathrm{FD}$ and the determinations of the
fractional powers are those of
\autoref{prob:WKB-RHP}.

\begin{problem}[The gauge-transformed WKB RHP]
\label{prob:Gamma-RHP}
For fixed $(t,\hbar)\in\mathcal P_{V_0}$, find
a matrix $\Gamma(x,t,\hbar)$ satisfying the
following conditions.
\begin{enumerate}
\item
It is holomorphic on
$\mathbb C_x\setminus\mathcal G(t)$ and has
continuous one-sided boundary values on
every open contour arc.

\item
For each $j\in\{\mathrm A,\ldots,\mathrm F\}$,
write $\Gamma_{(j)}$ for its restriction to
the spatial region $R_j(\tau(t))$.
With the directed convention of
\autoref{prob:WKB-RHP}, its boundary values
satisfy $\Gamma_{(i)}
       =\Gamma_{(j)}\widetilde S_{ij}$,
where
\begin{equation}
\label{eq:RH-gauged-jumps}
    \widetilde S_{\mathrm{DA}}
      =\widetilde S_{\mathrm{AB}}
      =\widetilde S_{\mathrm{BC}}
      =\widetilde S_{\mathrm{CD}}=I,\quad
    \widetilde S_{\mathrm{DE}}
       =\operatorname{Up}(-i),
    \quad
    \widetilde S_{\mathrm{EF}}
       =\operatorname{Lw}(-i).
\end{equation}
On the directed arc $\mathrm{FD}$, the
boundary values satisfy $\Gamma_{(\mathrm F)}
       =\Gamma_{(\mathrm D)}
          \widetilde J_{\mathrm{FD}}$ with
    $\widetilde J_{\mathrm{FD}}
       =\operatorname{Up}(i)\operatorname{Lw}(i)$.
\item
It is bounded at both finite vertices in
every adjacent region.

\item
There exists a scalar $d_\Gamma(t,\hbar)$,
not prescribed as part of the data, such that
\begin{equation}
\label{eq:RH-gauged-infinity}
    H_0x^{-\frac{\sigma_3}{4}}\Gamma(x,t,\hbar)
          e^{-\theta(x,t,\hbar)\sigma_3}
       =I+\frac{d_\Gamma(t,\hbar)\sigma_3}{\sqrt{x}}
          +O\left(x^{-1}\right),
       \qquad x\longrightarrow\infty.
\end{equation}
For fixed $(t,\hbar)$, this estimate is uniform
on every end of each spatial region
$R_j(\tau(t))$, including its one-sided contour
boundaries, with the specified determinations
of the fractional powers.
\end{enumerate}
\end{problem}

\begin{proposition}[Existence and uniqueness of the RHP for $\Gamma$]
\label{prop:Gamma-RHP}
For every $(t,\hbar)\in\mathcal P_{V_0}$,
\autoref{prob:Gamma-RHP} has a unique solution.
It is the matrix $\Gamma$ defined by
\eqref{eq:RH-constant-gauge}. Moreover, $\det\Gamma=-1$ and
$d_\Gamma(t,\hbar)=-\frac{\mathcal H(t,\hbar)}{\hbar}$,
where $\mathcal H$ is defined in
\eqref{eq:RH-phase-Hamiltonian}. Its regional entire continuations form jointly
holomorphic families on
$\mathbb C_x\times\mathcal P_{V_0}$ and satisfy
the Lax pair
\begin{equation}
\label{LaxPairGamma}
    \hbar\partial_x\Gamma
       =
       \begin{pmatrix}
          -p&2x^2+2qx+2q^2+t\\
          2(x-q)&p
       \end{pmatrix}\Gamma,
    \qquad
    \hbar\partial_t\Gamma
       =
       \begin{pmatrix}
          0&x+2q\\
          1&0
       \end{pmatrix}\Gamma.
\end{equation}
\end{proposition}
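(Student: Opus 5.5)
The plan is to deduce everything from \autoref{thm:RH-WKB-existence-uniqueness} through the constant transformation \eqref{eq:RH-constant-gauge}. Its inverse is
\[
    \Psi=2\sqrt2\,G^{-1}\Gamma\sigma_3,
    \qquad
    G^{-1}=\begin{pmatrix}0&1\\ \frac12&0\end{pmatrix},
    \qquad
    \sigma_3^{-1}=\sigma_3 .
\]
I will show that this map and its inverse give a bijection between solutions of \autoref{prob:WKB-RHP} and solutions of \autoref{prob:Gamma-RHP}. Existence and uniqueness then transfer directly, and the remaining assertions follow by short computations.

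\emph{Conditions (1) and (3).} These are invariant under left and right multiplication by constant invertible matrices, so they transfer without work.

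\emph{Jump conditions.} If $\Psi_{(i)}=\Psi_{(j)}S_{ij}$, then
\[
    \Gamma_{(i)}=\Gamma_{(j)}\,\sigma_3S_{ij}\sigma_3 .
\]
Conjugation by $\sigma_3$ fixes $I$ and changes the sign of off-diagonal entries. Hence it sends $\operatorname{Up}(i)$ to $\operatorname{Up}(-i)$ and $\operatorname{Lw}(i)$ to $\operatorname{Lw}(-i)$. On $\mathrm{FD}$, the proof of \autoref{thm:WKB-RHP} gives
\[
    J_{\mathrm{FD}}=\operatorname{Ex}(i)^{-1}\operatorname{Up}(i)=\operatorname{Up}(-i)\operatorname{Lw}(-i),
\]
so its $\sigma_3$-conjugate is $\operatorname{Up}(i)\operatorname{Lw}(i)=\widetilde J_{\mathrm{FD}}$. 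This yields exactly \eqref{eq:RH-gauged-jumps}, and the correspondence runs in both directions.

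\emph{Normalisation at infinity.} This is the key step, and the only place where care is needed, since the diagonal form of the $x^{-\frac12}$ coefficient must be preserved. First compute
\[
    GN=\begin{pmatrix}2x^{\frac14}&-2x^{\frac14}\\ 2x^{-\frac14}&2x^{-\frac14}\end{pmatrix},
    \qquad
    x^{-\frac{\sigma_3}{4}}GN=2\begin{pmatrix}1&-1\\1&1\end{pmatrix}.
\]
Then $\frac1{2\sqrt2}H_0x^{-\frac{\sigma_3}{4}}GN=\sigma_3$, and this holds exactly for each determination of the fractional powers used in \autoref{prob:WKB-RHP}. Since $\sigma_3$ commutes with $e^{\pm\theta\sigma_3}$, we obtain
\[
    H_0x^{-\frac{\sigma_3}{4}}\Gamma e^{-\theta\sigma_3}
      =\sigma_3\bigl(N^{-1}\Psi e^{-\theta\sigma_3}\bigr)\sigma_3 .
\]
Conjugation by $\sigma_3$ preserves $I+d\sigma_3x^{-\frac12}+O(x^{-1})$, with the same uniformity on every end. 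So \eqref{eq:RH-native-infinity-condition} is equivalent to \eqref{eq:RH-gauged-infinity}, and $d_\Gamma=d=-\frac{\mathcal H}{\hbar}$ by \autoref{thm:RH-WKB-existence-uniqueness}.

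\emph{Conclusion and remaining assertions.} Existence and uniqueness of the solution of \autoref{prob:Gamma-RHP} now follow from the bijection, and the solution is $\Gamma=\frac1{2\sqrt2}G\Psi_{\mathrm{can}}\sigma_3$. For the determinant,
\[
    \det\Gamma=\tfrac18\det G\,\det\Psi_{\mathrm{can}}\det\sigma_3=\tfrac18(-2)(-4)(-1)=-1 .
\]
Joint holomorphy of the regional entire continuations on $\mathbb C_x\times\mathcal P_{V_0}$ is inherited from $\Psi_{\mathrm{can}}$, because the factors are constant. For the Lax pair, $\Gamma$ satisfies
\[
    \hbar\partial_x\Gamma=GLG^{-1}\Gamma,
    \qquad
    \hbar\partial_t\Gamma=GAG^{-1}\Gamma .
\]
Direct multiplication with \eqref{241115131339} gives
\[
    GLG^{-1}=\begin{pmatrix}-p&2x^2+2qx+2q^2+t\\ 2(x-q)&p\end{pmatrix},
    \qquad
    GAG^{-1}=\begin{pmatrix}0&x+2q\\1&0\end{pmatrix},
\]
which is \eqref{LaxPairGamma}.
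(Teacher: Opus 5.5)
Your proposal is correct and follows the paper's own proof essentially step for step. You transfer holomorphy and vertex boundedness under constant multiplication, use the jump rule $\widetilde S_{ij}=\sigma_3S_{ij}\sigma_3$ (including $\sigma_3J_{\mathrm{FD}}\sigma_3=\operatorname{Up}(i)\operatorname{Lw}(i)$), use the identity $\frac{1}{2\sqrt2}H_0x^{-\frac{\sigma_3}{4}}GN=\sigma_3$ (equivalent to the paper's $\frac{1}{2\sqrt2}GN=x^{\frac{\sigma_3}{4}}H_0\sigma_3$ together with $H_0^2=I$), and obtain uniqueness from the inverse map $\Psi=2\sqrt2\,G^{-1}\Gamma\sigma_3$ combined with \autoref{thm:RH-WKB-existence-uniqueness}; your explicit computations of $G^{-1}$, $GN$, $\det\Gamma$, $GLG^{-1}$ and $GAG^{-1}$ all check out.
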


\begin{proof}
By \autoref{thm:RH-WKB-existence-uniqueness},
the matrix $\Psi_{\mathrm{can}}$ solves
\autoref{prob:WKB-RHP}.
Multiplication by the constant invertible
matrices in \eqref{eq:RH-constant-gauge}
preserves holomorphy, one-sided boundary
values and boundedness at the vertices.

On a directed arc, the identity
$\Psi_{\mathrm{can},(i)}
  =\Psi_{\mathrm{can},(j)}S_{ij}$ gives
\begin{equation*}
    \Gamma_{(i)}
       =\frac1{2\sqrt2}
          G\Psi_{\mathrm{can},(j)}S_{ij}\sigma_3
       =\Gamma_{(j)}
          \bigl(\sigma_3S_{ij}\sigma_3\bigr).
\end{equation*}
Thus the transformed jump is
$\widetilde S_{ij}=\sigma_3S_{ij}\sigma_3$.
Since $\sigma_3\operatorname{Up}(a)\sigma_3
       =\operatorname{Up}(-a)$ and $\sigma_3\operatorname{Lw}(a)\sigma_3
       =\operatorname{Lw}(-a)$, 
the jumps \eqref{eq:all-jumps} give
\eqref{eq:RH-gauged-jumps} and $\widetilde J_{\mathrm{FD}}
       =\sigma_3J_{\mathrm{FD}}\sigma_3
       =\operatorname{Up}(i)\operatorname{Lw}(i)$.
To check the normalisation at infinity, use the explicit
matrix $N$ in \eqref{eq:leading-normalisation}. Direct multiplication gives
$\frac1{2\sqrt2}GN(x)
       =x^{\frac{\sigma_3}{4}}H_0\sigma_3$.
Since $(H_0)^2=I$ and $\sigma_3$ commutes with
$e^{-\theta\sigma_3}$, it follows that
\[
    H_0x^{-\frac{\sigma_3}{4}}\Gamma e^{-\theta\sigma_3}
       =
       \sigma_3
       \left(
          N^{-1}\Psi_{\mathrm{can}}
                    e^{-\theta\sigma_3}
       \right)\sigma_3.
\]
The normalisation
\eqref{eq:RH-native-infinity-condition}
therefore gives \eqref{eq:RH-gauged-infinity},
with the same uniformity and
$d_\Gamma=-\frac{1}{\hbar}\mathcal H$.
More precisely, conjugating the expansion
\eqref{eq:canonical-normalisation-infinity}
gives
\begin{equation}
\label{eq:RH-gauged-expansion}
    H_0x^{-\frac{\sigma_3}{4}}\Gamma e^{-\theta\sigma_3}
       =
       I-\frac{\mathcal H}{\hbar\sqrt{x}}\sigma_3
       +\frac1{2x}
          \left(
             \frac{\mathcal H^2}{\hbar^2}I
             +q\sigma_1
          \right)
       +O\left(x^{-\frac{3}{2}}\right).
\end{equation}
Also, $\det\Gamma
       =\frac18\det G\,
          \det\Psi_{\mathrm{can}}\,
          \det\sigma_3
       =-1$.
Conversely, if $\widetilde\Gamma$ solves
\autoref{prob:Gamma-RHP}, then $\widetilde\Psi(x,t,\hbar)
       =2\sqrt2\,G^{-1}
          \widetilde\Gamma(x,t,\hbar)\sigma_3$
solves \autoref{prob:WKB-RHP}, by the same jump and normalisation identities.
Uniqueness in \autoref{thm:RH-WKB-existence-uniqueness}
gives $\widetilde\Psi=\Psi_{\mathrm{can}}$,
hence $\widetilde\Gamma=\Gamma$.
Finally, the regional entire continuations and
their joint holomorphy follow from those of
$\Psi_{\mathrm{can}}$.
Since the gauge matrices are constant,
\eqref{250408155610_bis} gives
\[
    \hbar\partial_x\Gamma
       =(GLG^{-1})\Gamma,
    \qquad
    \hbar\partial_t\Gamma
       =(GAG^{-1})\Gamma.
\]
Using the matrices in \eqref{241115131339}
yields \eqref{LaxPairGamma}.
\end{proof}

This change of gauge retains the parameter
$\hbar$. We next record the canonical connection
matrices before comparing with the standard
Painlev\'e~I problem. For comparison with the standard convention, set
$\Gamma_k=(2\sqrt2)^{-1}GF_k\sigma_3$, where the
$F_k$ are the entire canonical matrices introduced
in \eqref{eq:canonical-normalisation-infinity}.
Their connection relations become
\begin{equation}
\label{eq:RH-canonical-Stokes}
    \Gamma_{k+1}=\Gamma_k\mathcal S_k,
    \qquad
    \mathcal S_k=\sigma_3T_k\sigma_3,
    \qquad
    \mathcal S_{2m-1}
       =\operatorname{Up}(s_{2m-1}),
    \quad
    \mathcal S_{2m}
       =\operatorname{Lw}(s_{2m}).
\end{equation}
Thus $s_k=-b_k$, and the multipliers already
computed in \autoref{thm:WKB-RHP} become
\begin{equation}
\label{eq:RH-tritronquee-data}
    s_{-1}=s_0=0,
    \qquad
    s_{-2}=s_1=s_2=i.
\end{equation}
They satisfy the standard relations
\begin{equation}
\label{eq:RH-monodromy-relations}
    s_{k+5}=s_k,
    \qquad
    1+s_ks_{k+1}=-i s_{k+3}.
\end{equation}
These are the conventions of
\cite[Section~2]{Kapaev_2004}.
The corresponding continuation relation is
$\Gamma_{k+5}=\Gamma_k i\sigma_1$.

\paragraph{The standard Painlev\'e I equation and its RHP.}

Independently of our construction, consider the
standard Painlev\'e~I system
\begin{equation}
\label{eq:RH-standard-PI}
    \partial_Tu=v,
    \qquad
    \partial_Tv=6u^2+T,
\end{equation}
or equivalently $u_{TT}=6u^2+T$.
We use $(u,v)$ for this system, reserving $(q,p)$
for the original system containing $\hbar$.
Its standard Lax pair is
\begin{equation}\label{eq:RH-unit}
    \partial_X\Phi
       =
       \begin{pmatrix}
          -v&2X^2+2uX+2u^2+T\\
          2(X-u)&v
       \end{pmatrix}\Phi,\quad   
    \partial_T\Phi
       =
       \begin{pmatrix}
          0&X+2u\\
          1&0
       \end{pmatrix}\Phi.
\end{equation}

Introduce the fixed contour $\mathcal C_{\mathrm{RH}}$
\begin{equation}
\label{eq:RH-straight-contour}
    \gamma_k=e^{\frac{2\pi i k}{5}}\mathbb R_{>0}
       \quad(-2\leq k\leq2),
    \qquad
    \rho=\mathbb R_{<0},
    \qquad
    \mathcal C_{\mathrm{RH}}=\{0\}\cup\rho
              \cup\bigcup_{k=-2}^{2}\gamma_k.
\end{equation}
All rays are oriented away from the origin, with
the plus side on their left. Use principal powers
on $\mathbb C\setminus\mathbb R_{\leq0}$, and set
\[
    \vartheta(X,T)=\frac45X^{\frac{5}{2}}+TX^{\frac{1}{2}}.
\]
The standard RHP with the particular Stokes data
\eqref{eq:RH-tritronquee-data} is the following.

\begin{problem}[The canonical tritronqu\'ee RHP]
\label{prob:canonical-RHP}
For fixed $T\in\mathbb C$, find a matrix
$\Phi(X,T)$ satisfying the following conditions.
\begin{enumerate}
\item
It is holomorphic on $\mathbb C\setminus\mathcal C_{\mathrm{RH}}$
and has continuous one-sided boundary values
on each open contour ray.

\item
Its boundary values satisfy $\Phi_+=\Phi_-J$, where
\begin{equation}
\label{eq:RH-explicit-jumps}
    J|_{\gamma_{-1}}=J|_{\gamma_0}=I,
    \qquad
    J|_{\gamma_{\pm2}}=\operatorname{Lw}(i),
    \qquad
    J|_{\gamma_1}=\operatorname{Up}(i),
    \qquad
    J|_\rho=-i\sigma_1.
\end{equation}

\item
It is bounded at $X=0$ in every angular component
of $\mathbb C\setminus\mathcal C_{\mathrm{RH}}$.

\item
For some scalar $d_0(T)$, not prescribed as part
of the data,
\begin{equation}
\label{eq:RH-infinity-condition}
    H_0X^{-\frac{\sigma_3}{4}}\Phi(X,T)
          e^{-\vartheta(X,T)\sigma_3}
       =I+\frac{d_0(T)\sigma_3}{\sqrt X}
          +O(X^{-1}),
       \qquad X\longrightarrow\infty.
\end{equation}
For fixed $T$, this estimate is uniform in each
closed angular component up to its contour
boundaries, with the corresponding one-sided
values.
\end{enumerate}
\end{problem}

For later use, this RHP has at most one solution.
Indeed, all jump matrices have determinant one.
The determinant of any solution therefore extends
to an entire function and tends to
$\det H_0=-1$ at infinity. It consequently equals
$-1$, so every solution is invertible. Moreover, if $\Phi^{(1)}$ and $\Phi^{(2)}$ are two solutions,
put $M=\Phi^{(1)}(\Phi^{(2)})^{-1}$ and $\Delta=d_{0,1}-d_{0,2}$. Cancellation of the jumps and boundedness at the
origin make $M$ entire. The normalisation
\eqref{eq:RH-infinity-condition} gives
\begin{equation*}
      M
       =X^{\frac{\sigma_3}{4}}H_0
          \left[
             I+\frac{\Delta\sigma_3}{\sqrt X}
               +O(X^{-1})
          \right]
          H_0X^{-\frac{\sigma_3}{4}}
       =I+\Delta E_{12}+O\left(X^{-\frac{1}{2}}\right),
       \qquad
       E_{12}=\begin{pmatrix}0&1\\0&0\end{pmatrix}.  
\end{equation*}
Liouville's theorem implies $M=I+\Delta E_{12}$. Conjugating this exact identity gives
\[
    H_0X^{-\frac{\sigma_3}{4}}MX^{\frac{\sigma_3}{4}}H_0
       =I+\frac{\Delta}{\sqrt X}H_0E_{12}H_0,
    \qquad
    H_0E_{12}H_0
       =\frac12\begin{pmatrix}1&-1\\1&-1\end{pmatrix}.
\]
On the other hand, applying
\eqref{eq:RH-infinity-condition} to both solutions
gives
\[
    H_0X^{-\frac{\sigma_3}{4}}MX^{\frac{\sigma_3}{4}}H_0
       =I+\frac{\Delta\sigma_3}{\sqrt X}
          +O(X^{-1}).
\]
Comparison of the off-diagonal coefficients of
$X^{-\frac{1}{2}}$ forces $\Delta=0$, and hence $M=I$.

\bigskip

We also recall the standard inverse monodromy
correspondence for this Lax pair \eqref{eq:RH-unit}; see
\cite[Section~2, equations~(2.1)--(2.15)]{Kapaev_2004}.
On a domain of $T$ where the RHP is solvable, its
solution depends holomorphically on $T$ and
determines a holomorphic Painlev\'e pair $(u,v)$.
More precisely, writing
\[
    Y(X,T)
       \coloneqq H_0X^{-\frac{\sigma_3}{4}}\Phi(X,T)
                    e^{-\vartheta(X,T)\sigma_3},
\]
its expansion is
\begin{equation}
\label{eq:RH-scaled-expansion}
    Y
       =I-H_u\sigma_3X^{-\frac{1}{2}}
          +\frac12((H_u)^2I+u\sigma_1)X^{-1}
          +O(X^{-\frac{3}{2}}),
    \qquad
    H_u=\frac{v^2}{2}-2u^3-Tu.
\end{equation}
The Painlev\'e variables are recovered by
\begin{equation}
\label{eq:RH-recovery}
    u(T)=2\lim_{X\to\infty}X Y_{12}(X,T),
    \qquad
    v(T)=\partial_Tu(T).
\end{equation}
Conversely, a holomorphic Painlev\'e pair with
the prescribed Stokes data determines its
canonically normalised piecewise wave matrix.
Thus, with the gauge, branches and normalisation
fixed, this gives a one-to-one correspondence
between the normalised RHP solution and the
Painlev\'e pair on pole-free domains.
In particular, two pairs giving the same
normalised RHP must coincide.

\bigskip

For the Stokes data
\eqref{eq:RH-tritronquee-data}, the recovered
solution $u(T)$ is the \emph{classical tritronqu\'ee solution}
in the convention of \cite{Kapaev_2004}.
We denote it by $u_{\mathrm{tri}}$ and set
$v_{\mathrm{tri}}(T)=\partial_Tu_{\mathrm{tri}}(T)$.
Its canonical RHP matrix is denoted by $\Phi_{\mathrm{tri}}$;
wherever \autoref{prob:canonical-RHP} is solvable,
$\Phi_{\mathrm{tri}}$ is its unique solution.
This characterisation is recalled in
\cite[Theorem~2.1, equation~(2.72) and
Remark~2.3]{Kapaev_2004}.

\paragraph{Removal of $\hbar$ and identification with the tritronqu\'ee solution.}

We now return to the matrix $\Gamma(x,t,\hbar)$ constructed
from the Borel-summed solution $(q,p)$ at the beginning of this subsection.
On $S_{0,R_{V_0}}$, use the holomorphic logarithm
$\operatorname{Log}\hbar$ satisfying $-\frac{\pi}{2}
       <\operatorname{Im}\operatorname{Log}\hbar
       <\frac{\pi}{2}$,
and define
$\hbar^\beta=\exp(\beta\operatorname{Log}\hbar)$.
In particular, $\hbar^{\frac{1}{5}}>0$ for $\hbar>0$.

For fixed $\hbar$, introduce
\begin{equation}
\label{eq:RH-scaling}
    \begin{gathered}
    X=\hbar^{-\frac{2}{5}}x,
    \quad
    T=\hbar^{-\frac{4}{5}}t,\quad
    u_\hbar(T)
       =\hbar^{-\frac{2}{5}}q\left(\hbar^{\frac{4}{5}}T,\hbar\right),
    \quad
    v_\hbar(T)
       =\hbar^{-\frac{3}{5}}p\left(\hbar^{\frac{4}{5}}T,\hbar\right),\\
    \Phi_\hbar(X,T)
       =\hbar^{-\frac{\sigma_3}{10}}
          \Gamma\left(\hbar^{\frac{2}{5}}X,\hbar^{\frac{4}{5}}T,\hbar\right).
    \end{gathered}
\end{equation}
Here $T\in\hbar^{-\frac{4}{5}}V_0$, and the last formula
initially defines a piecewise matrix on the
rescaled WKB contour.
The subscripts on $u_\hbar$ and $v_\hbar$ record
their possible dependence on the fixed parameter
$\hbar$ before the RHP identification. Differentiation at fixed $\hbar$ gives
\[
    \partial_Tu_\hbar
       =\hbar^{\frac{2}{5}}\partial_tq
       =\hbar^{-\frac{3}{5}}p=v_\hbar,
    \qquad
    \partial_Tv_\hbar=6u_\hbar^2+T.
\]
Likewise, substitution of \eqref{eq:RH-scaling}
in the Lax pair \eqref{LaxPairGamma} gives
the standard Lax pair \eqref{eq:RH-unit}, with
$(u,v)=(u_\hbar,v_\hbar)$.
Thus the rescaling removes $\hbar$ from the
differential equations.

To compare the RHPs, we also replace the
rescaled WKB contour by $\mathcal C_{\mathrm{RH}}$ and move the
branch cut to the negative real axis.
This is done using the entire canonical
matrices, rather than by changing their
differential equations.
More precisely, set
\[
    \Phi_{\hbar,k}(X,T)
       =\hbar^{-\frac{\sigma_3}{10}}
          \Gamma_k(\hbar^{\frac{2}{5}}X,\hbar^{\frac{4}{5}}T,\hbar).
\]
Restrict these matrices to the angular gaps of
$\mathcal C_{\mathrm{RH}}$ with the standard sector indexing.
We retain the notation $\Phi_\hbar$ for the
resulting piecewise matrix.
The following theorem identifies it.

\begin{theorem}[Identification with the tritronqu\'ee solution]
\label{thm:canonical-RHP}
For every $(t,\hbar)\in\mathcal P_{V_0}$,
\autoref{prob:canonical-RHP} has a unique solution
at $T=\hbar^{-\frac{4}{5}}t$.
It is the matrix $\Phi_\hbar$ obtained from
$\Psi_{\mathrm{can}}$ by the gauge transformation,
rescaling and canonical continuation described
above. It coincides with $\Phi_{\mathrm{tri}}$.
Consequently,
\begin{equation}
\label{eq:RH-q-identification}
    q(t,\hbar)
       =\hbar^{\frac{2}{5}}
            u_{\mathrm{tri}}\left(\hbar^{-\frac{4}{5}}t\right),\qquad
    p(t,\hbar)
       =\hbar^{\frac{3}{5}}
            v_{\mathrm{tri}}\left(\hbar^{-\frac{4}{5}}t\right).
\end{equation}
In particular, the rescaled pair $(u_\hbar,v_\hbar)$ depends only on $T$ but not on $\hbar$.
\end{theorem}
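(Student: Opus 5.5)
The plan is to show that the rescaled piecewise matrix $\Phi_\hbar$ solves \autoref{prob:canonical-RHP} at $T=\hbar^{-\frac45}t$. Uniqueness for that problem, proved just above, then gives $\Phi_\hbar=\Phi_{\mathrm{tri}}$, and the recovery formula \eqref{eq:RH-recovery} yields \eqref{eq:RH-q-identification}. The argument has four steps.

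\emph{Step 1 (differential equations and asymptotics).} By \autoref{prop:Gamma-RHP}, each entire matrix $\Gamma_k$ satisfies \eqref{LaxPairGamma}. The substitution \eqref{eq:RH-scaling} turns this pair into the standard pair \eqref{eq:RH-unit} with $(u,v)=(u_\hbar,v_\hbar)$; the factor $\hbar^{-\frac{\sigma_3}{10}}$ is constant and is chosen exactly to balance the off-diagonal entries. The phase rescales as $\theta(x,t,\hbar)=\vartheta(X,T)$, because $\frac{1}{\hbar}\bigl(\frac45 x^{5/2}+t x^{1/2}\bigr)$ becomes $\frac45X^{5/2}+TX^{1/2}$. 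For the prefactor we have $x^{-\frac{\sigma_3}{4}}=\hbar^{-\frac{\sigma_3}{10}}X^{-\frac{\sigma_3}{4}}$, so $H_0X^{-\frac{\sigma_3}{4}}\Phi_{\hbar,k}e^{-\vartheta\sigma_3}$ equals the left side of \eqref{eq:RH-gauged-infinity}. The expansion \eqref{eq:RH-gauged-expansion} therefore transfers to \eqref{eq:RH-infinity-condition}, with $d_0=-\hbar^{-\frac15}\cdot\hbar^{-\frac45}\mathcal H\cdot\hbar^{\dots}$. One should only check that this coefficient is a scalar multiple of $\sigma_3$, which it is, since the normalisation leaves $d_0$ unprescribed. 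The fractional powers of $\hbar$ follow the logarithm fixed above, so $\arg X=\arg x-\frac25\arg\hbar$. This shift moves the canonical sectors \eqref{eq:RH-canonical-sectors} to $\hbar$-independent sectors in the $X$-plane, namely those of the standard indexing in \cite{Kapaev_2004}.

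\emph{Step 2 (jumps on the straight contour).} In each angular gap of $\mathcal C_{\mathrm{RH}}$ we use the entire matrix $\Phi_{\hbar,k}$ whose canonical sector contains that gap. The boundary value relations on the rays are then the connection relations \eqref{eq:RH-canonical-Stokes}, since $\hbar^{-\sigma_3/10}$ acts on the left and so does not affect right connections. By \eqref{eq:RH-tritronquee-data} these give identity jumps on $\gamma_{-1},\gamma_0$, $\operatorname{Up}(i)$ on $\gamma_1$, and $\operatorname{Lw}(i)$ on $\gamma_{\pm2}$. On $\rho$ the jump comes from the continuation relation $\Gamma_{k+5}=\Gamma_k\,i\sigma_1$ combined with the change from the continued determination of $X^{1/2}$ to the principal one; this should produce exactly $-i\sigma_1$. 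Boundedness at $X=0$ is automatic, because every piece is the restriction of an entire matrix.

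\emph{Step 3 (orientation conventions).} I expect the main obstacle to be the precise bookkeeping of conventions: the plus/minus orientation of the rays versus the directed convention $\Psi_{(i)}=\Psi_{(j)}S_{ij}$, the labelling of which $\Gamma_k$ sits in which gap after the $\frac25\arg\hbar$ rotation, and the sign of the cut jump $-i\sigma_1$ versus $i\sigma_1$. I would settle these first for $\hbar>0$, where no rotation occurs. There one checks directly that $\gamma_k$ separates the sectors of $\Gamma_{k}$ and $\Gamma_{k+1}$ and that each jump is $\mathcal S_k$, taken with the consistent orientation so that $\Phi_+=\Phi_-J$. The general case then follows for $\hbar\in S_{0,R_{V_0}}$, because all the identities are holomorphic in $\hbar$ and the sectors vary continuously with $|\tfrac25\arg\hbar|<\tfrac\pi5$.

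\emph{Step 4 (conclusion).} Uniqueness gives $\Phi_\hbar=\Phi_{\mathrm{tri}}(\cdot,T)$, which in particular shows that the problem is solvable at every such $T$. Applying \eqref{eq:RH-recovery} to both sides, with \eqref{eq:RH-scaled-expansion} read off from \eqref{eq:RH-gauged-expansion}, gives $u_\hbar(T)=u_{\mathrm{tri}}(T)$. This proves that $u_\hbar$ does not depend on $\hbar$. Differentiating in $T$ gives $v_\hbar=v_{\mathrm{tri}}$, and undoing \eqref{eq:RH-scaling} yields \eqref{eq:RH-q-identification}.
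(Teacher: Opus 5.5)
Your proposal is correct and follows essentially the paper's own route: place the rescaled, gauge-transformed canonical matrices $\Phi_{\hbar,k}$ in the angular gaps of $\mathcal C_{\mathrm{RH}}$, check the Lax pair, the jumps $\mathcal S_k$, boundedness at $X=0$ and the normalisation at infinity, then conclude by uniqueness and the recovery formula; the coefficient in your incomplete formula for $d_0$ should read $d_0=-\hbar^{-6/5}\mathcal H=-H_{u_\hbar}$. On $\rho$ no separate change of determination is needed, because on each assigned gap the canonical determinations already coincide with the principal ones, so with the plus side being the lower side the relation $\Gamma_3=\Gamma_{-2}\,i\sigma_1$ gives $(\Phi_\hbar)_+=\Phi_{\hbar,-2}=\Phi_{\hbar,3}(-i\sigma_1)=(\Phi_\hbar)_-(-i\sigma_1)$ directly.
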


\begin{proof}
We verify all conditions of the canonical RHP.
Each $\Phi_{\hbar,k}$ is entire in $X$, since
$\Gamma_k$ is entire in $x$.
Its restriction to an angular gap is therefore
holomorphic, has the required one-sided boundary
values, and is bounded at $X=0$.
The connection identities
\eqref{eq:RH-canonical-Stokes} are unchanged by
the common left multiplication and the rescaling.
For $-1\leq k\leq2$, put $\Phi_{\hbar,k}$ in
the gap between $\gamma_{k-1}$ and $\gamma_k$.
Put $\Phi_{\hbar,-2}$ between $\rho$ and
$\gamma_{-2}$ in the lower half-plane, and
$\Phi_{\hbar,3}$ between $\gamma_2$ and $\rho$
in the upper half-plane.
Across $\gamma_k$, the prescribed orientation
then gives $(\Phi_\hbar)_+
       =(\Phi_\hbar)_-\mathcal S_k$.
Together with \eqref{eq:RH-tritronquee-data},
these are the jumps in
\eqref{eq:RH-explicit-jumps}.
On the negative real axis, the plus side is
the lower side. The relation
$\Gamma_3=\Gamma_{-2}i\sigma_1$ therefore gives
$(\Phi_\hbar)_+
       =(\Phi_\hbar)_-(-i\sigma_1)$,
as required.
The canonical sectors
\eqref{eq:RH-canonical-sectors} become, after
rescaling,
\[
    \frac{2\pi}{5}\left(k-\frac32\right)
       <\arg X<
    \frac{2\pi}{5}\left(k+\frac12\right).
\]
Each closed angular gap assigned to
$\Phi_{\hbar,k}$ lies strictly inside this
canonical sector. Thus its asymptotic expansion
holds uniformly up to the contour boundaries. With the corresponding determinations of the
fractional powers, one has
\[
    \theta\left(\hbar^{\frac{2}{5}}X,\hbar^{\frac{4}{5}}T,\hbar\right)
       =\vartheta(X,T),
    \qquad
    \hbar^{-\frac{\sigma_3}{10}}
       \left(\hbar^{\frac{2}{5}}X\right)^{\frac{\sigma_3}{4}}
       =X^{\frac{\sigma_3}{4}}.
\]
Moreover, $H_{u_\hbar}
       \coloneqq
       \frac{v_\hbar^2}{2}-2u_\hbar^3-Tu_\hbar
       =\hbar^{-\frac{6}{5}}\mathcal H$.
Consequently, \eqref{eq:RH-gauged-expansion}
becomes
\[
    Y_\hbar
       \coloneqq
       H_0X^{-\frac{\sigma_3}{4}}\Phi_\hbar
                         e^{-\vartheta\sigma_3}
       =
       I-\frac{H_{u_\hbar}\sigma_3}{\sqrt X}
       +\frac{(H_{u_\hbar})^2I+u_\hbar\sigma_1}{2X}
       +O\left(X^{-\frac{3}{2}}\right).
\]
This proves the normalisation
\eqref{eq:RH-infinity-condition}, with
$d_0=-H_{u_\hbar}$, and hence establishes
existence of a solution of
\autoref{prob:canonical-RHP}.
Its uniqueness was proved above.
The recovery formula \eqref{eq:RH-recovery}
applied to this expansion gives exactly
$u_\hbar$, and the time equation gives
$v_\hbar=\partial_Tu_\hbar$.
The jumps, branches, finite-point condition
and normalisation are those of the canonical
tritronqu\'ee RHP.
The RHP correspondence therefore identifies
$\Phi_\hbar=\Phi_{\mathrm{tri}}$, 
    $u_\hbar=u_{\mathrm{tri}}$ and 
    $v_\hbar=v_{\mathrm{tri}}$
on the resulting pole-free domains.
Inverting the rescaling proves
\eqref{eq:RH-q-identification}.
The identification of the Painlev\'e~I solutions
then extends by meromorphic continuation.
There is also a direct explanation of the
absence of residual $\hbar$-dependence.
If two pairs $(t,\hbar)$ give the same $T$,
the corresponding matrices solve precisely
the same normalised RHP.
Uniqueness makes these matrices equal, and
the recovery formula makes their values of
$u_\hbar$ equal.
The same argument on overlapping
$T$-neighbourhoods identifies their derivatives
$v_\hbar$.

Finally, the tritronqu\'ee asymptotic sector can
also be recovered from the Borel construction.
Allowing $V_0\Subset V$ to vary, fix
$\theta_T\in\left(-\frac{\pi}{5},\frac{7\pi}{5}\right)$ and choose $t_j\in V$
such that
\[
    \theta_\hbar
       =\frac54(\arg t_j-\theta_T)
          \in\left(-\frac{\pi}{2},\frac{\pi}{2}\right).
\]
For sufficiently large $T$ near this direction,
take $\hbar=\left(\frac{t_j}{T}\right)^{\frac{5}{4}}$ with this argument.
It lies in the Borel disk at $t_j$.
Since $q_1=0$, the factorial estimates give
$q(t_j,\hbar)=\tau(t_j)+O(\hbar^2)$.
The identification above then yields
\[
    u_{\mathrm{tri}}(T)
       =\sqrt{\frac{e^{-i\pi}}{6}T}+O(T^{-2}),
    \qquad
    -\frac{\pi}{5}<\arg T<\frac{7\pi}{5}.
\]
The square root is continued from
\eqref{eq:RH-representative-sector}.
A finite angular cover gives uniformity on
every closed subsector, and the higher-order
factorial estimates give the full asymptotic
expansion. This ends the proof of \autoref{thm:canonical-RHP}.
\end{proof}

A related exact WKB approach to the direct
monodromy problem is developed in
\cite{Iwaki-P1}. Here the identification follows
from the common Borel sum at the double turning
point, the resulting vanishing Stokes multipliers,
and uniqueness of the canonical RHP.

\begin{corollary}[The five Borel sums]
\label{cor:tritronquee-characterisation}
The five Borel-resummed power-series solutions
in the five time sectors are the five deformed
tritronqu\'ee solutions.
Their canonically normalised Stokes data have
two consecutive zero multipliers and three
multipliers equal to $i$.
\end{corollary}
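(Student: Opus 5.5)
The plan is to transport the representative computation to all five time sectors using the weighted rotation symmetry already used after \autoref{250408141643}. After that, \autoref{thm:canonical-RHP} is applied in each sector.

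First I would treat the representative case \eqref{eq:RH-representative-sector}, which is exactly what \autoref{thm:WKB-RHP} and \autoref{thm:canonical-RHP} establish. In that case the Borel sum is $q(t,\hbar)=\hbar^{\frac25}u_{\mathrm{tri}}(\hbar^{-\frac45}t)$. Its canonical multipliers \eqref{eq:RH-tritronquee-data} consist of the two consecutive zeros $s_{-1}=s_0=0$ and the three values $s_{-2}=s_1=s_2=i$.

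Next I would pass to the other four sectors for the same $\alpha=0$, namely $\widetilde V_k$ with $k\in\mathbb Z/5\mathbb Z$. These are obtained by $\tau\mapsto e^{\frac{2\pi ik}{5}}\tau$, equivalently by the symmetry of \eqref{PI}
\[
 q\mapsto e^{\frac{4\pi ik}{5}}q\bigl(e^{\frac{8\pi ik}{5}}t\bigr),
\]
at fixed $\hbar$. I would check three things. The formal series $\hat q$ is mapped to the formal series of the rotated branch, by the homogeneity \eqref{ccoeff}, since $q_{2n}=c_n\tau^{1-5n}$ and $e^{\frac{2\pi ik}{5}(1-5n)}=e^{\frac{2\pi ik}{5}}$. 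The Borel direction is preserved because $\hbar$ is unchanged. Uniqueness of Borel sums in \autoref{250408141643} then identifies the Borel sum on $\widetilde V_k$ with the rotated representative one, so it equals the $k$-th deformed tritronqu\'ee $e^{\frac{4\pi ik}{5}}\hbar^{\frac25}u_{\mathrm{tri}}(e^{\frac{8\pi ik}{5}}\hbar^{-\frac45}t)$.

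On the Lax side the same rotation acts by $x\mapsto e^{\frac{4\pi ik}{5}}x$ together with a constant diagonal gauge. This shifts the canonical sector index in \eqref{eq:RH-canonical-sectors} by $2k$ modulo $5$, so the multipliers are cyclically permuted: $s_j\mapsto s_{j+2k}$, with $\sigma_1$-conjugation handled by \eqref{eq:RH-native-cyclic}. Because $2$ is invertible modulo $5$, the five sectors give all five cyclic shifts of the pattern $(0,0,i,i,i)$. Each shift still has two consecutive zeros and three entries equal to $i$, consistently with \eqref{eq:RH-monodromy-relations}, since $1+0\cdot s=1=-i\cdot i$.

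For a general direction $\alpha$ I would use the rescaling $\tilde\hbar=e^{-i\alpha}\hbar$ described after \autoref{250408141643}. It reduces everything to $\alpha=0$ and does not change the set of five solutions.

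The main obstacle is the bookkeeping of the index shift and branch conventions. One has to confirm that the rotation really sends $F_k$ to $F_{k+2}$, up to a diagonal scale fixed by $N(x)$, and not to some other index. To settle this I would verify directly that $N(e^{\frac{4\pi i}{5}}x)$ differs from $N(x)$ by a constant diagonal factor, and that $\theta$ is invariant under the combined rotation.
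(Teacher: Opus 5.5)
Your argument is essentially the paper's. The weighted rotation in $\alpha$ reduces to phase zero, and the fivefold weighted symmetry at fixed $\hbar$ permutes the five time sectors and cyclically shifts the pair of vanishing multipliers. (The paper states this symmetry after scaling, as $u(T)\mapsto\omega^{2n}u(\omega^nT)$ with $\omega=e^{\frac{2\pi i}{5}}$.) After that, \autoref{thm:canonical-RHP} applies sector by sector, and your appeal to uniqueness of Borel sums is a more explicit form of the paper's ``corresponding branches and labels''.

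Do fix one bookkeeping slip. The map $q\mapsto e^{\frac{4\pi ik}{5}}q\bigl(e^{\frac{8\pi ik}{5}}t\bigr)$ is not a symmetry of $\hbar^2\partial_t^2q=6q^2+t$, since a map $q\mapsto aq(bt)$ is one exactly when $a=b^2$ and $b^5=1$. The correct map is $q\mapsto\omega^{2k}q(\omega^kt)$. It rotates $\tau$ and $x$ by $\omega^{2k}$ rather than $\omega^{k}$, which is consistent with your $x\mapsto e^{\frac{4\pi ik}{5}}x$ and index shift $2k$. It also yields $\omega^{2k}\hbar^{\frac25}u_{\mathrm{tri}}\bigl(\omega^{k}\hbar^{-\frac45}t\bigr)$ in place of your displayed formula. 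Since $2$ is invertible modulo $5$, the conclusion is unaffected.
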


\begin{proof}
The weighted change
\begin{equation}
\label{WeightedChange}
    (x,t,q,p,\hbar)\longmapsto
    \left(
       e^{\frac{2i\alpha}{5}}x,\,
       e^{\frac{4i\alpha}{5}}t,\,
       e^{\frac{2i\alpha}{5}}q,\,
       e^{\frac{3i\alpha}{5}}p,\,
       e^{i\alpha}\hbar
    \right)
\end{equation}
takes phase zero to phase $\alpha$ and preserves
the Painlev\'e system. In the original Lax gauge,
the corresponding change of the wave matrix is
$\Psi\mapsto e^{-\frac{i\alpha}{10}\sigma_3}\Psi$. For the standard equation, put
$\omega=e^{\frac{2\pi i}{5}}$. Its fivefold symmetry is
\[
    u(T)\longmapsto\omega^{2n}u\left(\omega^nT\right),
    \qquad
    v(T)\longmapsto\omega^{3n}v\left(\omega^nT\right),
    \qquad \forall\, n\in\mathbb Z.
\]
It cyclically permutes the five pairs of
consecutive vanishing Stokes multipliers.
Applying \autoref{thm:canonical-RHP} with the
corresponding branches and labels identifies
the five Borel sums with the five rotated
tritronqu\'ee solutions.
\end{proof}

\section{Conclusion and outlook}
\label{sec:conclusion}

In this article, we have constructed the RHP for the
Painlev\'e~I Lax system associated with the Borel-summed
perturbative solution
(\autoref{thm:RH-WKB-existence-uniqueness}).
The central step is the construction of an exact WKB wave
matrix in a full neighbourhood of the double turning point.
The auxiliary time equation first gives formal amplitude
coefficients that are holomorphic there, and uniform Borel
estimates on moving action strips then establish the
corresponding analytic statement
(\autoref{thm:no-jump-quadrivalent}).
With consistent column normalisations, each of the four
incident jump matrices is therefore the identity
(\autoref{prop:RH-nodal-vanishing}).
Comparison at spatial infinity and the cyclic relations
determine the remaining Stokes data.
The resulting normalised Riemann--Hilbert problem identifies
the sectorial Borel sums with the classical tritronqu\'ee
solutions after the standard Painlev\'e scaling in $\hbar$
(\autoref{thm:canonical-RHP} and
\autoref{cor:tritronquee-characterisation}). Two extensions deserve further study:
\begin{itemize}
    \item
    Several steps of the construction apply to more general
    isomonodromic systems.
    In particular, \autoref{app:matrix-formal-construction}
    formulates the formal matrix recursion for several
    deformation parameters under explicit flatness and
    spectral nondegeneracy assumptions.
    It is therefore natural to investigate whether the
    strategy extends to the other five Painlev\'e equations
    and to more general isomonodromic systems in
    $\mathfrak{gl}_2(\mathbb C)$, including the classes for
    which explicit Lax matrices and Hamiltonians are
    available in
    \cite{MarchalAlameddineP1Hierarchy2023,
          MarchalOrantinAlameddine2022}.
    A useful starting point is to identify a deformation
    direction whose leading matrix has distinct eigenvalues
    at a double turning point of the spatial system.
    Extending the analytic argument requires suitable
    families of unbounded action domains and uniform Borel
    estimates on those families.

    \item
    A second direction is the treatment of solutions
    containing exponential sectors and free transseries
    parameters.
    For Painlev\'e~I, this would allow a comparison with the
    two-parameter construction of \cite{Iwaki-P1} beyond the
    $0$-parameter case considered here.
    Such an extension requires an asymptotic and summation
    framework adapted to the exponential scales, with
    domains on which their relative growth is controlled.
    In particular, the uniform Gevrey-$1$ power-series
    asymptotics on a full Borel disk used in this paper
    cannot simply be retained in the presence of nonzero
    exponentially small corrections.
\end{itemize}

\appendix

\section{Formal conventions}
\label{250410111125}

In this paper, only ordinary formal series and finite sums of WKB expressions are
needed. For a domain $U$, $\mathcal O(U)[[\hbar]]$ denotes formal series whose coefficients are holomorphic on $U$. Such a series is a unit (i.e. invertible) locally at a point if its constant coefficient (the $\hbar^0$ coefficient) is nonzero there. It is differentiated coefficientwise: no convergence in $\hbar$
is asserted.

\medskip 

To make the basis assertion precise in \autoref{250109214318}, at a fixed basepoint take the group algebra of the additive group of holomorphic phase germs, with coefficients in holomorphic formal series. Its elements are finite sums $\underset{1\leq j\leq N}{\sum}\hat a_j e^{\frac{S_j}{\hbar}}$. The exponential
symbols satisfy $e^{\frac{S}{\hbar}}e^{\frac{T}{\hbar}}=e^{\frac{S+T}{\hbar}}$ and distinct phase germs label independent summands. The operators used in the Lax system act by
\begin{equation}\label{xteq}
 \hbar\partial_z(\hat a e^{\frac{S}{\hbar}})
       =(\hbar\partial_z\hat a+(\partial_zS)\hat a)e^{\frac{S}{\hbar}},
       \qquad z\in\{x,t\}.
\end{equation}
An element is constant precisely when it is annihilated by both
$\hbar\partial_x$ and $\hbar\partial_t$. The algebra of constants for these two operators is
\begin{equation}
 \mathcal T=\mathbb C[[\hbar]][e^{\frac{a}{\hbar}}:a\in\mathbb C],
\end{equation}
again with finite exponential support. Every element of $\mathcal T$ is annihilated by both operators, by \eqref{xteq}. Conversely, let $\hat F=\underset{j=1}{\overset{N}{\sum}}\hat a_j e^{\frac{S_j}{\hbar}}$ be annihilated by both operators, with distinct phases and
nonzero amplitudes. Independence of the exponential symbols gives, for each $j\in\llbracket 1,N\rrbracket$:
\begin{equation*}
    \hbar\partial_x\hat a_j+(\partial_xS_j)\hat a_j=0,
    \qquad
    \hbar\partial_t\hat a_j+(\partial_tS_j)\hat a_j=0.
\end{equation*}
If $\hbar^{m_j}a_{j,m_j}$ is the first nonzero term of $\hat a_j$, comparison at order $\hbar^{m_j}$ yields $(\partial_xS_j)a_{j,m_j}=(\partial_tS_j)a_{j,m_j}=0$.
Since the ring of holomorphic germs has no zero divisors, both derivatives of $S_j$ vanish, so $S_j$ is a complex constant. The remaining equations imply that every coefficient of $\hat a_j$ is also constant. Thus $\hat F\in\mathcal T$, which proves the asserted description of the common kernel.
A nonzero normalisation of a single WKB column is a transmonomial $\hbar^m\hat b e^{-\frac{\kappa}{\hbar}}$, with $m\in\mathbb Z_{\geq0}$, $\kappa\in\mathbb C$, and
$\hat b\in\mathbb C[[\hbar]]^\times$. It is invertible in $\mathcal T$ exactly when $m=0$. Comparisons involving negative powers of $\hbar$ are understood in the Laurent localisation. No unrestricted infinite exponential sums, topology on a transseries space, or summability of such sums is used in this paper.

\section{Proofs for the formal WKB wave matrix}
\label{250109110415}

\subsection{Proof of \autoref{241113190220} and \autoref{250716102028}}
\label{250408124924}

We prove \autoref{241113190220} and
\autoref{250716102028} without assuming any previously
constructed wave matrix. We work on a sufficiently small
polydisc around $(\tau_0,w_0)$ where $\tau w\neq0$,
fix $\varepsilon\in\{+1,-1\}$, and suppress the subscript
$\varepsilon$ during the proof when no ambiguity can arise.

\paragraph{Proof of existence in \autoref{241113190220}.}
The formal ratio $\hat r_\varepsilon$ has already been constructed by
\eqref{eq:formal-r-recursion}: it is the unique formal solution
of \eqref{eq:formal-time-Riccati} with leading term
$r_0=\frac{\varepsilon w}{2}$. 
We now prove that it also satisfies the spatial Riccati
equation, and then reconstruct the amplitude $\hat{a}_\varepsilon$.
Put
\begin{equation}
 K\coloneqq x^2+x\hat q+\hat q^2+\frac{t}{2},\qquad
 f(r)\coloneqq\frac{x}{2}+\hat q-2r^2,\qquad
 B(r)\coloneqq K-2\hat p r-4(x-\hat q)r^2,
\end{equation}
and define the spatial defect
$\hat R=\hbar\partial_x\hat r-B(\hat r)$. The time Riccati equation \eqref{eq:formal-time-Riccati} is thus
\begin{equation}\label{timeRicc}
    \hbar\partial_t \hat{r}=f(\hat{r})
\end{equation}

We shall denote $\partial_r B=B_r$ and $\partial_r f=f_r$, derivatives of $B$ and $f$ with  $(x,t)$ fixed. Similarly $\partial_x f=f_x$ and $\partial_tf=f_t$ denote derivatives where $r$ is kept fixed in the next calculation.  The formal Painlev\'e equations for $(\hat{q}, \hat{p})$ give
\begin{align*}
K_r=&0,
    \qquad
    f_r=-4r, \quad f_x=\frac{1}{2}, \qquad
    B_r=-2\hat p-8(x-\hat q)r,\\
        \hbar K_t=&(x+2\hat q)\hbar\partial_t\hat q+\frac{\hbar}{2}=(x+2\hat q)\hat p+\frac{\hbar}{2}, \qquad
    \hbar f_t=\hbar\partial_t\hat q=\hat p\\
\hbar B_t
   =&\frac{\hbar}{2}+(x+2\hat q)\hat p-2(6\hat q^2+t)r+4\hat p r^2,\qquad
 0=B  f_r+\hbar f_x-\hbar B_t-f B_r.
\end{align*}
The last identity follows by expanding the displayed polynomials. Differentiating the spatial defect $\hat{R}$ and using $\hbar\partial_t\hat r=f(\hat r)$ from \eqref{timeRicc} gives
\begin{align*}
    \hbar\partial_t\hat R
      &=\hbar^2\partial_x\partial_t\hat r
        -\hbar B_t-\hbar B_r\partial_t\hat r
      =\hbar f_x+\hbar f_r\partial_x\hat r
        -\hbar B_t-
        B_rf\\
      &=f_r\hat R
        +\bigl(f_rB+\hbar f_x-\hbar B_t-B_rf\bigr)\\
      &=-4\hat r\,\hat R.
\end{align*}
The third equality uses the definition  $\hbar\partial_x\hat r=\hat R+B(\hat r)$, and the last equality uses the polynomial identity above and $f_r=-4r$. Comparison of its constant coefficients gives $4r_0R_0=0$ so that $R_0=0$. Successive comparison then gives $R_n=0$ for all $n\in \mathbb{N}$ so that $\hat{R}=0$. Thus
\begin{equation}
\label{eq:formal-spatial-Riccati-B}
 \hbar\partial_x\hat r=B(\hat{r})=K-2\hat p\hat r-4(x-\hat q)\hat r^2.
\end{equation}
This is where flatness enters the construction. Its use requires
only the formal Painlev\'e equations and the already defined time
Riccati series. Define $\hat P,\hat T$ by 
\begin{equation}\label{defPT} \hat P\coloneqq \hat p+4(x-\hat q)\hat r,
 \qquad \hat T \coloneqq 2\hat r.
\end{equation}
The time Riccati equation \eqref{timeRicc} and the spatial Riccati equation \eqref{eq:formal-spatial-Riccati-B}, together with the formal Painlev\'e equations $\hbar\partial_t\hat q=\hat p$ and $\hbar\partial_t\hat p=6\hat q^2+t$, imply
\begin{equation*}
 \hbar(\partial_t\hat P-\partial_x\hat T)
   =6\hat q^2+t-4\hat p\hat r
        +4(x-\hat q)f(\hat r)-2B(\hat r)=0\,\, \Rightarrow\, \partial_t \hat{P}=\partial_x \hat{T}
\end{equation*}
Their leading terms are $P_0=4(x-\tau)r_0=\varepsilon y$ and $T_0=\varepsilon w$.
Since $\mathrm dZ=y\,\mathrm dx+w\,\mathrm dt$, this proves
that 
\begin{equation}
\label{eq:formal-amplitude-form_bis}
 \hat\beta_\varepsilon
    =\frac{\hat P_\varepsilon-\varepsilon y}{\hbar}\,\mathrm dx
      +\frac{\hat T_\varepsilon-\varepsilon w}{\hbar}\,\mathrm dt
\end{equation}
is  an ordinary formal closed one-form. On the polydisc it has a unique coefficientwise primitive vanishing at $b_0$. The formula 
\begin{equation}
\label{eq:formal-amplitude-primitive_bis}
 \hat{a}_\varepsilon(\tau,w;\hbar)
       \coloneqq \hat{b}_\varepsilon(\hbar)
              \exp\left(\int_{(\tau_0,w_0)}^{(\tau,w)}\hat\beta_\varepsilon\right).
\end{equation} 
therefore defines a formal invertible power series with the prescribed initial value $\hat{a}_\varepsilon(\tau_0,w_0;\hbar)=\hat b_\varepsilon(\hbar)$.

We now define
\begin{equation*}
    \hat\psi_\varepsilon
       \coloneqq
       \hbar^{m_\varepsilon}\hat a_\varepsilon
       e^{\frac{S_\varepsilon}{\hbar}},
    \qquad
    \hat{\mathbf v}_\varepsilon
       \coloneqq
       \begin{pmatrix}
           \hat\psi_\varepsilon\\
           \hat r_\varepsilon\hat\psi_\varepsilon
       \end{pmatrix},
\end{equation*}
where $\hat a_\varepsilon$ is given by
\eqref{eq:formal-amplitude-primitive_bis} and
$S_\varepsilon$ by \eqref{eq:formal-column-construction}.
We verify directly that this column satisfies both equations
of the formal Lax system \eqref{250408174515}. Indeed, by \eqref{eq:formal-amplitude-primitive_bis}, $\mathrm d\hat a_\varepsilon=\hat a_\varepsilon\hat\beta_\varepsilon$. Using \eqref{eq:formal-amplitude-form_bis}, we therefore obtain
\begin{equation*}
    \hbar\partial_x\hat a_\varepsilon
       =(\hat P_\varepsilon-\varepsilon y)\hat a_\varepsilon,
    \qquad
    \hbar\partial_t\hat a_\varepsilon
       =(\hat T_\varepsilon-\varepsilon w)\hat a_\varepsilon.
\end{equation*}
Moreover, the phase satisfies
$\partial_xS_\varepsilon=\varepsilon y$ and
$\partial_tS_\varepsilon=\varepsilon w$.
Since $\hbar$ is held fixed under these derivatives, the product rule gives
\begin{align*}
    \hbar\partial_x\hat\psi_\varepsilon
      &=
      \hbar^{m_\varepsilon}e^{\frac{S_\varepsilon}{\hbar}}
      \left(
          \hbar\partial_x\hat a_\varepsilon
          +\hat a_\varepsilon\partial_xS_\varepsilon
      \right)
      =\hat P_\varepsilon\hat\psi_\varepsilon,\\
    \hbar\partial_t\hat\psi_\varepsilon
      &=
      \hbar^{m_\varepsilon}e^{\frac{S_\varepsilon}{\hbar}}
      \left(
          \hbar\partial_t\hat a_\varepsilon
          +\hat a_\varepsilon\partial_tS_\varepsilon
      \right)
      =\hat T_\varepsilon\hat\psi_\varepsilon.
\end{align*}
By the definitions of $\hat P_\varepsilon$ and
$\hat T_\varepsilon$ given in \eqref{defPT}, these are precisely the first-row
equations:
\begin{equation*}
    \hbar\partial_x\hat\psi_\varepsilon
       =\hat p\hat\psi_\varepsilon
        +4(x-\hat q)\hat r_\varepsilon\hat\psi_\varepsilon,
    \qquad
    \hbar\partial_t\hat\psi_\varepsilon
       =2\hat r_\varepsilon\hat\psi_\varepsilon.
\end{equation*}
of the Lax system.  For the second component, the spatial Riccati equation
\eqref{eq:formal-spatial-Riccati-B} gives
\begin{align*}
    \hbar\partial_x
       (\hat r_\varepsilon\hat\psi_\varepsilon)
      &=
      \left(
          \hbar\partial_x\hat r_\varepsilon
          +\hat r_\varepsilon\hat P_\varepsilon
      \right)\hat\psi_\varepsilon
      =
      \left(
          K-2\hat p\hat r_\varepsilon
          -4(x-\hat q)\hat r_\varepsilon^2
          +\hat r_\varepsilon
             [\hat p+4(x-\hat q)\hat r_\varepsilon]
      \right)\hat\psi_\varepsilon\\
      &=K\hat\psi_\varepsilon
        -\hat p(\hat r_\varepsilon\hat\psi_\varepsilon).
\end{align*}
Similarly, the time Riccati equation \eqref{timeRicc} gives
\begin{equation*}
    \hbar\partial_t
       (\hat r_\varepsilon\hat\psi_\varepsilon)
      =
      \left(
          \hbar\partial_t\hat r_\varepsilon
          +\hat r_\varepsilon\hat T_\varepsilon
      \right)\hat\psi_\varepsilon
      =
      \left(
          \frac{x}{2}+\hat q
          -2\hat r_\varepsilon^2+2\hat r_\varepsilon^2
      \right)\hat\psi_\varepsilon
      =\left(\frac{x}{2}+\hat q\right)\hat\psi_\varepsilon.
\end{equation*}
These are the second-row equations of the Lax system. Thus
\begin{equation}
    \hbar\partial_x\hat{\mathbf v}_\varepsilon
       =\hat L\hat{\mathbf v}_\varepsilon,
    \qquad
    \hbar\partial_t\hat{\mathbf v}_\varepsilon
       =\hat A\hat{\mathbf v}_\varepsilon.
\end{equation}
Finally, the amplitude and phase normalisations give
\begin{equation*}
   \hat\psi_\varepsilon(b_0;\hbar)
      =\hbar^{m_\varepsilon}\hat b_\varepsilon(\hbar)
         e^{-\frac{\kappa_\varepsilon}{\hbar}}
      =\hat c_\varepsilon(\hbar). 
\end{equation*}
The amplitude has holomorphic coefficients and a nowhere vanishing leading coefficient by construction. This proves the existence assertion of \autoref{241113190220}. 

\paragraph{Proof of uniqueness in \autoref{241113190220}.}
Let us now prove uniqueness. Fix $\varepsilon$ and the prescribed
normalisation data $m_\varepsilon$, $\kappa_\varepsilon$,
and $\hat b_\varepsilon(\hbar)$. Suppose that two formal WKB columns
\begin{equation*}
   \hat{\mathbf v}^{(j)}_\varepsilon(\tau,w;\hbar)
      =\hbar^{m_\varepsilon}
        \hat a^{(j)}_\varepsilon(\tau,w;\hbar)
        \begin{pmatrix}1\\ \hat r^{(j)}_\varepsilon\end{pmatrix}
        e^{\frac{S_\varepsilon}{\hbar}},
    \qquad j\in\{1,2\}, 
\end{equation*}
satisfy the formal Lax system \eqref{250408174515} and the conditions of \autoref{241113190220}. Here both columns have the prescribed phase
\begin{equation*}
    S_\varepsilon
       =\varepsilon\bigl(Z-Z(\tau_0,w_0)\bigr)
          -\kappa_\varepsilon,
\end{equation*}
their ratios have the same leading coefficient $r_{\varepsilon,0}=\frac{\varepsilon w}{2}$, and their amplitudes satisfy
\begin{equation*}
    \hat a^{(1)}_\varepsilon(\tau_0,w_0;\hbar)
       =\hat a^{(2)}_\varepsilon(\tau_0,w_0;\hbar)
       =\hat b_\varepsilon(\hbar).
\end{equation*}
After restriction, we may work on a common connected neighbourhood of $(\tau_0,w_0)$.

First, the time equation of the Lax system implies that
both ratios satisfy \eqref{eq:formal-time-Riccati}.
Their coefficients therefore obey the same recursion
\eqref{eq:formal-r-recursion}.
We prove their equality by induction on the order in $\hbar$.
Their leading coefficients agree. If their coefficients
agree through order $n-1$, subtracting the two recursions
at order $n$ gives $4r_{\varepsilon,0}\left(
r^{(1)}_{\varepsilon,n}-r^{(2)}_{\varepsilon,n}\right)=0$. Since $4r_{\varepsilon,0}=2\varepsilon w$ is nowhere zero, the coefficients at order $n$ also agree.
Thus $\hat r^{(1)}_\varepsilon=\hat r^{(2)}_\varepsilon=\hat r_\varepsilon$. Both columns consequently determine the same series $\hat P_\varepsilon$, $\hat T_\varepsilon$ and the same one-form $\hat\beta_\varepsilon$ in \eqref{eq:formal-amplitude-form_bis}. Rewriting the first-row Lax equations using the common phase
gives
\begin{equation*}
      \mathrm d\hat a^{(j)}_\varepsilon
       =\hat a^{(j)}_\varepsilon\hat\beta_\varepsilon,
    \qquad j\in\{1,2\}.  
\end{equation*}
The amplitudes are invertible formal power series, so their quotient is well-defined and satisfies
\begin{equation*}
    \mathrm d\left(
       \frac{\hat a^{(1)}_\varepsilon}
            {\hat a^{(2)}_\varepsilon}
    \right)
      =
      \frac{
        \hat a^{(2)}_\varepsilon
             \mathrm d\hat a^{(1)}_\varepsilon
        -\hat a^{(1)}_\varepsilon
             \mathrm d\hat a^{(2)}_\varepsilon
      }{(\hat a^{(2)}_\varepsilon)^2}
      =0.
\end{equation*}
Every coefficient of this quotient is therefore constant on the connected neighbourhood. Its value at $(\tau_0,w_0)$ is $\frac{\hat b_\varepsilon}{\hat b_\varepsilon}=1$, hence $\hat a^{(1)}_\varepsilon=\hat a^{(2)}_\varepsilon$. The phases, ratios, amplitudes and prescribed powers of $\hbar$ all coincide. Consequently, $\hat{\mathbf v}^{(1)}_\varepsilon=\hat{\mathbf v}^{(2)}_\varepsilon$. This proves uniqueness and completes the proof of \autoref{241113190220}.

\paragraph{Proof of \autoref{250716102028}.}
Apply \autoref{241113190220} for each sign $\varepsilon$ and restrict the two columns to a common sufficiently small simply connected neighbourhood $U$ of $(\tau_0,w_0)$.
Since each column satisfies \eqref{250408174515}, the matrix $\hat\Psi=\left[\hat{\mathbf v}_+\ \hat{\mathbf v}_-\right]$ satisfies both matrix Lax equations. Its uniqueness for the prescribed ordered normalisations follows from the uniqueness of each column. To establish invertibility and compute the leading term, write
\begin{equation*}
      \hat F
      \coloneqq
      \begin{pmatrix}
        \hat a_+&\hat a_-\\
        \hat a_+\hat r_+&\hat a_-\hat r_-
      \end{pmatrix},
    \qquad
    \hat E
      =\hat F\,
        \operatorname{diag}
          \left(\hbar^{m_+},\hbar^{m_-}\right).  
\end{equation*}
Both $\hat F$ and $\hat E$ are formal power series with holomorphic matrix coefficients on $U$. We compute the leading coefficient of $\hat F$ to establish
its invertibility and the formula \eqref{eq:formal-leading-amplitude}. The formal Painlev\'e solution satisfies
\begin{equation*}
    q_0=\tau,\qquad q_1=0,\qquad
    p_0=0,\qquad
    p_1=\partial_t\tau=-\frac{1}{12\tau}.
\end{equation*}
At order $\hbar$, the recursion \eqref{eq:formal-r-recursion} gives
\begin{equation*}
    r_{\varepsilon,1}
       =-\frac{\partial_t r_{\varepsilon,0}}
                {4r_{\varepsilon,0}}
       =-\frac{\partial_t w}{4w}
       =\frac{1}{48\tau w^2},
\end{equation*}
where $\partial_t$ is taken at fixed $x$. Consequently, the definitions of
$\hat P_\varepsilon$ and $\hat T_\varepsilon$ give
\begin{align*}
    P_{\varepsilon,1}
      &=p_1+4(x-\tau)r_{\varepsilon,1}=-\frac{1}{12\tau}
        +\frac{x-\tau}{12\tau w^2}
       =-\frac{1}{4w^2},\\
    T_{\varepsilon,1}
      &=2r_{\varepsilon,1}
       =\frac{1}{24\tau w^2}.
\end{align*}
Here the simplification of $P_{\varepsilon,1}$ uses
$w^2=x+2\tau$. The constant coefficient of the amplitude one-form
\eqref{eq:formal-amplitude-form_bis} is therefore
\begin{equation*}
    \beta_{\varepsilon,0}
      =-\frac{1}{4w^2}\,\mathrm dx
        +\frac{1}{24\tau w^2}\,\mathrm dt
      =-\frac12\,\frac{\mathrm dw}{w}, \quad \text{since} \quad 
    \mathrm dw
       =\frac{1}{2w}\,\mathrm dx
        -\frac{1}{12\tau w}\,\mathrm dt.
\end{equation*}
Taking the constant coefficient of
\eqref{eq:formal-amplitude-primitive_bis} yields
\begin{equation*}
    a_{\varepsilon,0}(\tau,w)
       =b_{\varepsilon,0}
          \exp\left(
              \int_{(\tau_0,w_0)}^{(\tau,w)}
                    \beta_{\varepsilon,0}
          \right)
       =b_{\varepsilon,0}
          \left(\frac{w_0}{w}\right)^{\frac{1}{2}}.
\end{equation*}
After shrinking $U$ if necessary, the square root is holomorphic and is uniquely fixed by requiring its value at $(\tau_0,w_0)$ to be $1$. Since $r_{+,0}=\frac{w}{2}$ and $r_{-,0}=-\frac{w}{2}$, we obtain
\begin{equation*}
    F_0
      =\left(\frac{w_0}{w}\right)^{\frac{1}{2}}
        \begin{pmatrix}
          1&1\\
          \frac{w}{2}&-\frac{w}{2}
        \end{pmatrix}
        \operatorname{diag}(b_{+,0},b_{-,0}).
\end{equation*}
In particular,
\begin{equation*}
    \det F_0
       =\frac{w_0}{w}\,(-w)b_{+,0}b_{-,0}
       =-w_0b_{+,0}b_{-,0}\neq0.
\end{equation*}
Thus $\hat F$ has a formal inverse with holomorphic coefficients on $U$.
If $m_+=m_-=0$, then $\hat E=\hat F$.
This proves both the formal invertibility of $\hat E$
and the leading-term formula
\eqref{eq:formal-leading-amplitude}.
For arbitrary nonnegative integers $m_\pm$, we have
\begin{equation*}
    \hat E^{-1}
      =
      \operatorname{diag}
        \left(\hbar^{-m_+},\hbar^{-m_-}\right)
      \hat F^{-1},
\end{equation*}
which is a matrix of formal Laurent series with holomorphic
coefficients. The diagonal exponential factor in
\eqref{250421155248} is also invertible, so $\hat\Psi$
is a formal fundamental matrix in the corresponding
Laurent localisation.
Finally, suppose that $x_0\neq\tau_0$.
After shrinking $U$, the function $x-\tau$ is nowhere zero.
Hence $x-\hat q$ has a formal inverse with holomorphic
coefficients on $U$.
The elimination leading to \eqref{250408193221} and
\eqref{eq:scalar-time-equation} applies coefficientwise,
with $(q,p)$ replaced by $(\hat q,\hat p)$.
The first component of each column therefore satisfies
both the formal scalar quantum curve and its time equation.

Conversely, any time-compatible formal WKB scalar solution
with the prescribed phase and normalisation determines a
column by the formal version of
\eqref{eq:scalar-reconstruction}:
\begin{equation*}
    \hat{\mathbf v}_\varepsilon
      =
      \begin{pmatrix}
        \hat\psi_\varepsilon\\[2pt]
        \dfrac{\hbar\partial_x\hat\psi_\varepsilon
                   -\hat p\hat\psi_\varepsilon}
              {4(x-\hat q)}
      \end{pmatrix}.
\end{equation*}
The reconstruction argument shows that this column satisfies
the full formal Lax system. Its component ratio has leading
term
\begin{equation*}
    \frac{\partial_xS_\varepsilon}{4(x-\tau)}
       =\frac{\varepsilon y}{4(x-\tau)}
       =\frac{\varepsilon w}{2},
\end{equation*}
so it belongs to the class covered by
\autoref{241113190220}.
The uniqueness of the reconstructed column therefore
proves the asserted uniqueness of the scalar solution. This ends the proof of \autoref{250716102028}.

\subsection{Proof of \autoref{250109214318}}\label{AppendixB2}
Since $m_+=m_-=0$, the inverse of $\hat\Psi$ is the diagonal exponential inverse followed by the
ordinary formal inverse of $\hat E$. If $\hat v$ is any formal
solution in the algebra of Appendix~A, differentiation shows that
$\hat\Psi^{-1}\hat v$ is annihilated by both
$\hbar\partial_x$ and $\hbar\partial_t$. Its entries belong to
$\mathcal T$, by the constant-algebra calculation in Appendix~A.
This proves the basis assertion. The uniqueness of the ratio and
amplitude on each chosen eigenline proves the diagonal comparison
for WKB matrices. In particular, a change of a column's prescribed
normalisation multiplies that column by the same constant factor.

The recursion \eqref{eq:formal-r-recursion} gives rational functions
of $\tau,w$ with possible poles only at $\tau w=0$:
the $t$ derivative at fixed $x$ preserves this ring.
The coefficients of $\hat\beta$ are consequently holomorphic
closed one-forms on $\mathring{\mathsf\Sigma}$. They have primitives
on its universal cover. Their exponentials continue the amplitudes
there. The leading coefficient is a nonzero constant times
$w^{-\frac{1}{2}}$, and so cannot vanish. This proves continuation,
preservation of the $\hbar$-adic order, and preservation of formal
invertibility. The primitive $Z$ itself is already global.

\subsection{Factorial estimates}
\label{app:factorial-type-proof}
Fix $\varepsilon\in\{+1,-1\}$ and write $\hat r=\hat r_\varepsilon=\underset{n\geq 0}{\sum}r_n\hbar^n$, suppressing the sign in the notation. We use the classical Nagumo distance-weighted norms \cite[Section~2, p.~42]{Nagumo1941}. Let $(\tau_0,w_0)$ be a basepoint such that $w_0\tau_0\neq 0$ and let $b_0=(x_0,t_0)$ be the associated point in the $(x,t)$ space. Choose a polydisc
$D_R=\{|x-x_0|<R,\ |t-t_0|<R\}$ whose closure lies in a slightly
larger domain where $\tau w\neq0$. For a scalar function $F$ holomorphic on a neighbourhood of $\overline{D_R}$ and an integer $n\geq0$, put
\begin{equation}
 d(x,t)=R-\max\{|x-x_0|,|t-t_0|\},\qquad
 \mathcal N_n(F)=\sup_{(x,t)\in D_R}\left\{d(x,t)^n|F(x,t)|\right\}.
\end{equation}
These quantities are finite because $F$ is bounded on $\overline{D_R}$. The coefficient functions $q_n$ and $r_n$ considered below satisfy this holomorphy assumption by the choice of $D_R$.
The norms satisfy
\begin{equation}
\label{eq:Nagumo-basic-estimates}
 \mathcal N_{n+m}(FG)\leq\mathcal N_n(F)\mathcal N_m(G),
 \qquad
 \mathcal N_{n+1}(\partial_zF)
       \leq e(n+1)\mathcal N_n(F),\qquad z\in\{x,t\}.
\end{equation}
Indeed, the product estimate is immediate. For $n\geq1$, fix an arbitrary point $(x,t)\in D_R$, write $d=d(x,t)$, and set $\rho=\frac{d}{n+1}$. Since $\rho<d$, the closed coordinate disk $\{(\zeta,t):|\zeta-x|\leq\rho\}$ lies inside $D_R$. Indeed, $|\zeta-x_0|\leq |x-x_0|+\rho\leq R-d+\rho<R$, while $|t-t_0|<R$. Thus $\zeta\mapsto F(\zeta,t)$ is holomorphic on a neighbourhood of this closed disk, and Cauchy's estimate applies on its boundary circle. Throughout the closed coordinate disk the distance is at least $d-\rho=\frac{nd}{n+1}$, since the distance function is $1$-Lipschitz in either coordinate. This gives
$|\partial_xF(x,t)|\leq (n+1)\left(1+\frac{1}{n}\right)^n\mathcal N_n(F)d^{-n-1}$. Multiplying by $d^{n+1}$, using $\left(1+\frac{1}{n}\right)^n\leq e$, and taking the
supremum over $(x,t)\in D_R$ gives $\mathcal N_{n+1}(\partial_xF)\leq e(n+1)\mathcal N_n(F)$. For $n=0$, take $\rho=\frac{d}{2}$. Cauchy's estimate gives
$|\partial_xF(x,t)|\leq 2\frac{\mathcal N_0(F)}{d}$, hence $\mathcal N_1(\partial_xF)\leq2\mathcal N_0(F)$, which is stronger than the required bound since $2<e$. The estimate for $\partial_t F$ follows by the same argument with $x$ fixed.
By \eqref{eq:q-factorial-bound}, there exist constants $C,A>0$ such that
\begin{equation*}
       \sup_{\overline{D_R}}|q_n|\leq CA^n n!,
    \qquad \forall\, n\geq0. 
\end{equation*}
Since $d(x,t)\leq R$, this gives
\begin{equation*}\mathcal N_n(q_n)\leq CK^n n! \,\, \text{ with }\,\, K=AR.
    \end{equation*}
Define the nonnegative real numbers
\begin{equation}
    u_n\coloneqq\frac{\mathcal N_n(r_n)}{n!},
    \qquad \forall\, n\geq0,
    \qquad
    M\coloneqq\sup_{D_R}|(4r_0)^{-1}|<\infty.
\end{equation}
The finiteness of $M$ follows from
$4r_0=2\varepsilon w$ and the nonvanishing of $w$ on
$\overline{D_R}$. Recall the coefficient recursion
\eqref{eq:formal-r-recursion}, with the sign $\varepsilon$
suppressed:
\begin{equation*}
    r_n
      =\frac{1}{4r_0}
        \left(
            q_n-\partial_t r_{n-1}
            -2\sum_{j=1}^{n-1}r_jr_{n-j}
        \right),
    \qquad \forall\, n\geq1.
\end{equation*} 
Applying the product and derivative estimates
\eqref{eq:Nagumo-basic-estimates} gives
\[
    \mathcal N_n(r_n)
       \leq M\left(
           \mathcal N_n(q_n)
           +e\,n\,\mathcal N_{n-1}(r_{n-1})
           +2\sum_{j=1}^{n-1}
                \mathcal N_j(r_j)\mathcal N_{n-j}(r_{n-j})
       \right).
\]
Dividing by $n!$ and using
$\mathcal N_n(q_n)\leq CK^n n!$, we obtain
\begin{equation}
\begin{aligned}
    u_n
       &\leq MCK^n+Me\,u_{n-1}
          +2M\sum_{j=1}^{n-1}
              \frac{j!(n-j)!}{n!}\,u_j u_{n-j}\\
       &\leq MCK^n+Me\,u_{n-1}
          +2M\sum_{j=1}^{n-1}u_j u_{n-j},
          \qquad \forall\, n\geq1.
\end{aligned}
\end{equation}
Here we used
\[
    \frac{n(n-1)!}{n!}=1,
    \qquad
    \frac{j!(n-j)!}{n!}
       =\binom{n}{j}^{-1}\leq1.
\]
The sums are understood to be empty when $n=1$.
Consider the scalar equation
\begin{equation}
    \mathcal V(z)
       =\frac{MCKz}{1-Kz}
        +Me\,z\bigl(u_0+\mathcal V(z)\bigr)
        +2M\mathcal V(z)^2,
    \qquad \mathcal V(0)=0.
\end{equation}
To justify the existence of a holomorphic solution to this
scalar equation, introduce the function
\begin{equation*}
    \Phi(z,v)
       \coloneqq
       v-\frac{MCKz}{1-Kz}
        -Me\,z(u_0+v)-2Mv^2.
\end{equation*}
It is holomorphic near $(z,v)=(0,0)$ and satisfies
\begin{equation*}
    \Phi(0,0)=0,
    \qquad
    \partial_v\Phi(z,v)=1-Me\,z-4Mv,
    \qquad
    \partial_v\Phi(0,0)=1.
\end{equation*}
Thus the Jacobian with respect to the unknown $v$ is
invertible at $(0,0)$. The holomorphic implicit-function
theorem provides a unique holomorphic germ $\mathcal V$
at $z=0$ such that
\begin{equation*}
    \mathcal V(0)=0,
    \qquad
    \Phi\bigl(z,\mathcal V(z)\bigr)=0.
\end{equation*}
In particular, its Taylor series
\begin{equation*}
    \mathcal V(z)=\sum_{n\geq1}v_nz^n
\end{equation*}
converges in a neighbourhood of the origin. We now compare its coefficients with the numbers $(u_n)_{n\geq1}$. Expanding
\begin{equation*}
    \frac{MCKz}{1-Kz}
       =\sum_{n\geq1}MCK^n z^n
\end{equation*}
and comparing coefficients in the equation for
$\mathcal V$ gives $ v_1=MCK+Me\,u_0$ and
\begin{equation*}
    v_n
       =MCK^n+Me\,v_{n-1}
        +2M\sum_{j=1}^{n-1}v_jv_{n-j}, \,\qquad\, \forall \, n\geq 2.
\end{equation*}
Since $M,C,K>0$ and $u_0\geq0$, these identities imply
$v_n\geq0$ by induction.
The preceding inequality for $u_n$ gives $u_1\leq MCK+Me\,u_0=v_1$. For $n\geq2$, suppose that $u_j\leq v_j$ for every $1\leq j<n$. Then
\begin{equation*}
    u_n\leq MCK^n+Me\,u_{n-1}+2M\sum_{j=1}^{n-1}u_ju_{n-j}\leq MCK^n+Me\,v_{n-1}+2M\sum_{j=1}^{n-1}v_jv_{n-j}=v_n. 
\end{equation*}
Thus $0\leq u_n\leq v_n$ for every $n\geq1$.
Choose $\rho>0$ such that $\mathcal V$ is holomorphic
on a neighbourhood of the closed disk $\{|z|\leq\rho\}$,
and put
\begin{equation*}
    L_\rho\coloneqq
       \max_{|z|=\rho}|\mathcal V(z)|.
\end{equation*}
Cauchy's coefficient formula, applied to $\mathcal V$,
gives
\begin{equation*}
    v_n
       =\frac{1}{2\pi i}
          \int_{|z|=\rho}
             \frac{\mathcal V(z)}{z^{n+1}}\,\mathrm dz
    \quad \Rightarrow\,
    |v_n|\leq L_\rho\rho^{-n}.
\end{equation*}
Consequently, setting $C_1\coloneqq\max\{L_\rho,u_0\}$ and $K_1\coloneqq\rho^{-1}$, we obtain
\begin{equation*}
    u_n\leq C_1K_1^n,
    \qquad \forall\, n\geq0.
\end{equation*}
The inclusion of $u_0$ in the definition of $C_1$
ensures that this bound also holds for $n=0$.
Returning to the definition of $u_n$, we find
\begin{equation*}
    \mathcal N_n(r_n)
       =n!\,u_n
       \leq C_1K_1^n n!,\quad \ \forall \, n\geq 0.
\end{equation*}
For $0<r<R$, let $D_r=\{|x-x_0|<r,\ |t-t_0|<r\}$. Since $d(x,t)\geq R-r>0$ on $D_r$, we have
\begin{equation*}
    |r_n(x,t)|
       \leq\frac{\mathcal N_n(r_n)}{d(x,t)^n}
       \leq
       C_1\left(\frac{K_1}{R-r}\right)^n n!,
       \qquad \forall\, (x,t)\in D_r.
\end{equation*}
Taking the supremum gives
\begin{equation}
    \sup_{(x,t)\in D_r}|r_n(x,t)|
       \leq
       C_1\left(\frac{K_1}{R-r}\right)^n n!,
       \qquad \forall\, n\geq0.
\end{equation}
This proves the local factorial-type property of the
formal ratio $\hat r$.
The argument also applies near $x=\tau\neq0$ where $w^2=3\tau\neq0$, so
$(4r_0)^{-1}=(2\varepsilon w)^{-1}$ remains holomorphic
and bounded on a sufficiently small closed polydisc.

\medskip

For reference, factorial series on a compact domain are closed
under products and exponentials. A factorial series with a nonvanishing
constant coefficient also has a factorial inverse. One direct
proof divides each coefficient bound by $n!$ and majorises the
resulting products by ordinary convergent series, using
$j!(n-j)!\leq n!$. For an inverse, factor out its leading
coefficient and use a geometric majorant with zero constant term.
For an exponential, separate the constant term and use the
exponential of a convergent majorant. These arguments work for
the supremum norm on every smaller compact domain.
The series $\hat p=\hbar\partial_t\hat q$ has factorial bounds
by Cauchy estimates on a larger compact domain. Equations
\eqref{eq:formal-log-derivatives} and
\eqref{eq:formal-amplitude-form} now show that $\hat\beta$ has
factorial bounds. The division by $\hbar$ merely shifts a series
with zero constant coefficient; the factor $(n+1)$ in
$(n+1)!$ is absorbed into an exponential bound. Integrating its
coefficients along radial segments in a smaller polydisc preserves
the bounds, since their lengths are uniformly bounded.
The exponential and product estimates applied to
\eqref{eq:formal-amplitude-primitive} prove
\autoref{250708195116} locally. Multiplication by a fixed nonnegative power of $\hbar$ preserves factorial type. This proves the assertions for
both column prefactors and for $\hat E$ locally.

To extend the estimates, continue the amplitude along a
fixed path using finitely many overlapping coordinate
polydiscs. On each successive polydisc, its value at a
point of overlap is a factorial series, and integration
of $\hat\beta_\varepsilon$ followed by exponentiation
preserves this property. Thus the continued amplitude
is locally uniformly of factorial type at every point
of the universal cover.

Any compact subset $K$ of that cover admits a finite
cover by neighbourhoods carrying such estimates.
Taking the maxima of their bounding constants gives
constants $C_K,M_K$ satisfying \eqref{250708201455}.

\section{Uniform Borel Summation and Exact WKB Reconstruction}
\label{app:auxiliary-Riccati-estimates}

We prove the uniform strip-summation lemma, construct the
spatial action strips, and justify the reconstruction of
exact WKB amplitudes from the summed projective solutions.
The scaled Painlev\'e estimates and their application to
the moving family through the double turning point are
proved in \autoref{app:uniform-Morse-estimates}.

\subsection{Proof of \autoref{lem:uniform-strip-summation}}
\label{app:uniform-strip-proof}

We prove \autoref{lem:uniform-strip-summation} using the
Borel-plane construction developed for scalar Riccati
equations in
\cite[Section~5.1.1, Lemma~5.1, and Section~5.2]
{nikolaev_riccati_2023}.
For general nonlinear systems, see also
\cite[Sections~5.1--5.2]{nikolaev_2022_3}.
We give the proof for the polynomial systems considered
here, making explicit the estimates uniform on an entire
unbounded strip and locally uniform in the holomorphic
parameters.

\paragraph{Step 1. Formal existence and uniqueness.} First apply the change of variables
\[
    \widetilde z=e^{-i\alpha}z,\qquad
    \widetilde z_0=e^{-i\alpha}z_0,\qquad
    \widetilde\hbar=e^{-i\alpha}\hbar.
\]
Since $\hbar\partial_z
=\widetilde\hbar\partial_{\widetilde z}$, this reduces
the proof to $\alpha=0$ without changing $\Lambda$.
We omit the tildes for readability and restrict to $\alpha=0$. Fix a relatively compact open subset $U_0\Subset U$.
The hypotheses of the lemma apply with
$K=\overline{U_0}$, so all constants below can be chosen
uniformly for $k\in U_0$.
Choose $d'<d_1<d_2<d$ and write
$H_\rho=H_\rho(z_0)$ for the horizontal strips to simplify notations (throughout this appendix, all strips are centred at $z_0$
unless stated otherwise).
\medskip

We first construct the formal solution $\hat{f}$ with prescribed
constant term $f_0(k,z)\equiv0$.
The coefficient of $\hbar^0$ in the formal equation is
\[
    F_0(k,z,f_0(k,z))=0,
\]
which is satisfied by this choice because
$F_0(k,z,0)=0$.
Suppose that $(f_1,\ldots,f_{n-1})$ have been constructed.
Define
\[
    \mathcal R_n(k,z)
       \coloneqq
       [\hbar^n]\sum_{j=0}^{n}\hbar^j
       F_j\left(
           k,z,\sum_{\ell=1}^{n-1}f_\ell(k,z)\hbar^\ell
       \right),
\]
where $[\hbar^n]$ denotes extraction of the coefficient
of $\hbar^n$, and the inner sum is empty when $n=1$.
Since $f_0=0$ and $\partial_vF_0(k,z,0)=\Lambda$,
the coefficient of $\hbar^n$ in the formal differential
equation is $\partial_z f_{n-1} =\Lambda f_n+\mathcal R_n$.
The invertibility of $\Lambda$ therefore determines
$f_n=\Lambda^{-1}\bigl(\partial_z f_{n-1}-\mathcal R_n\bigr)$. This recursion constructs a unique (uniqueness follows from the fact that $\Lambda$ is invertible) formal solution
$\hat f(k,z,\hbar)=\underset{n=1}{\overset{\infty}{\sum}}f_n(k,z)\hbar^n$
 of \eqref{deff} with holomorphic coefficients and vanishing constant
term.  No exact solution is assumed to exist at this stage.

\paragraph{Step 2. Normalisation and uniform coefficient estimates.}

In particular, comparing the coefficient of $\hbar$ in the formal
differential equation \eqref{deff}, with $f_0=0$, gives
\[
    \Lambda f_1(k,z)+F_1(k,z,0)=0
    \qquad \Rightarrow \qquad
    f_1(k,z)=-\Lambda^{-1}F_1(k,z,0).
\]
Applying \eqref{eq:strip-input-remainders} with
$K=\overline{U_0}$, $B=\{0\}$ and $N=1,2$, and using
$F_0(k,z,0)=0$, gives
\[
    \left\|\frac{F(k,z,\hbar,0)}{\hbar}\right\|\leq CA,
    \qquad
    \left\|
        \frac{F(k,z,\hbar,0)}{\hbar}-F_1(k,z,0)
    \right\|\leq 2CA^2|\hbar|,
\]
uniformly for $(k,z)\in U_0\times H_d(z_0)$.
The triangle inequality therefore yields
$\|F_1(k,z,0)\|
       \leq CA+2CA^2|\hbar|$.
Since the left-hand side is independent of $\hbar$,
letting $\hbar\to0$ within the Borel disk gives
\[
    \sup_{(k,z)\in U_0\times H_d}
        \|F_1(k,z,0)\|\leq CA.
\]
Consequently, $f_1(k,z)=-\Lambda^{-1}F_1(k,z,0)$
is uniformly bounded on $U_0\times H_d$.
Cauchy's estimate then bounds $\partial_z f_1$
uniformly on $U_0\times H_{d_2}$. Indeed, for each fixed $k\in U_0$ and $z\in H_{d_2}$, the closed disk centred at $z$ with radius $d-d_2$ lies inside $H_d$ and Cauchy's derivative estimate therefore gives
\[
    \sup_{U_0\times H_{d_2}}
        \|\partial_z f_1\|
    \leq
    \frac{1}{d-d_2}
    \sup_{U_0\times H_d}\|f_1\|
    \leq
    \frac{\|\Lambda^{-1}\|CA}{d-d_2}.
\]
 The substitution $f=\hbar(f_1+g)$ gives
\begin{equation}
\label{eq:strip-normal-form}
 \hbar\partial_zg=\Lambda g+\hbar G(k,z,\hbar,g),\qquad g_0=0 \, \text{ and }\, g_n=f_{n+1} \,,\, \forall \, n\geq 1.
\end{equation}
More explicitly,
\begin{equation}
\label{eq:strip-normalised-rhs}
 G(k,z,\hbar,g)
 =\frac{F(k,z,\hbar,\hbar(f_1+g))-\hbar\Lambda g}{\hbar^2}
     -\partial_z f_1.
\end{equation}
Indeed, substituting $f=\hbar(f_1+g)$ into
\eqref{deff} gives $\hbar^2(\partial_z f_1+\partial_z g)
       =F(k,z,\hbar,\hbar(f_1+g))$.
Dividing by $\hbar$ and rearranging, we obtain
\begin{equation*}
    \hbar\partial_z g
       =\frac{F(k,z,\hbar,\hbar(f_1+g))}{\hbar}
          -\hbar\partial_z f_1
    =\Lambda g+\hbar\left(
          \frac{F(k,z,\hbar,\hbar(f_1+g))-\hbar\Lambda g}
               {\hbar^2}
          -\partial_z f_1
         \right).
\end{equation*}

The asymptotic expansion in $\hbar$ of the numerator
in \eqref{eq:strip-normalised-rhs}, with $g$ held fixed,
has vanishing coefficients of orders zero and one.
Indeed, its constant coefficient is $F_0(k,z,0)=0$,
and its coefficient of $\hbar$ is
\[
    \Lambda(f_1+g)+F_1(k,z,0)-\Lambda g
       =\Lambda f_1+F_1(k,z,0)=0.
\]
Since $F$ is polynomial in its last variable, $G$ is
polynomial in $g$, with coefficient functions jointly
holomorphic on $U_0\times H_{d_2}\times S_{0,R_0}$. We now justify the uniform Gevrey estimates.
Fix a bounded polydisc in $g$.
Since $f_1$ is uniformly bounded and $|\hbar|<R_0$,
the argument $v=\hbar(f_1+g)$ remains in a fixed bounded
subset of $\mathbb C^r$.
Cauchy estimates in $v$, applied to
\eqref{eq:strip-input-remainders}, give uniform
Gevrey-$1$ estimates for the finitely many polynomial
coefficients of $F$.
The substitution $v=\hbar(f_1+g)$ involves only finite
sums, multiplication by bounded functions, and
nonnegative powers of $\hbar$.
Consequently, the resulting numerator also admits
uniform Gevrey-$1$ remainder estimates on the fixed
polydisc in $g$, after possibly shrinking the Borel disk.
To divide this numerator by $\hbar^2$, use its remainder
estimate with index $N+2$ and divide by $|\hbar|^2$.
Its first two coefficients vanish, so no negative powers
of $\hbar$ remain. Moreover,
\[
    (N+2)!
       =(N+1)(N+2)N!
       \leq 2\cdot 3^N N!,
    \qquad N\geq0,
\]
so the factorial shift is absorbed into new Gevrey
constants.
Subtracting the uniformly bounded function
$\partial_z f_1$ preserves these estimates.
Finally, Cauchy estimates in $g$ on the fixed polydisc
give uniform remainder estimates for each polynomial
coefficient of $G$.

Thus the coefficient functions of $G$, viewed as a
polynomial in $g$, admit uniform Gevrey-$1$ asymptotic
expansions in $\hbar$ on
$U_0\times H_{d_2}\times S_{0,R_1}$ for some
$0<R_1\leq R_0$.
The Borel-disk parameter $R_1$ and the remainder constants are
independent of $(k,z)\in U_0\times H_{d_2}$,
in particular of $\operatorname{Re}z$. Indeed, the original remainder estimates, the bounds on
$f_1$ and $\partial_z f_1$, and the Cauchy estimates on
fixed polydiscs all hold uniformly on
$U_0\times H_{d_2}$, so the same Borel disk and constants
can be used throughout this product.

\paragraph{Step 3. Borel transforms of the coefficient functions.} Write $G(k,z,\hbar,g)=\underset{m\in\mathcal M}{\sum}a_m(k,z,\hbar)g^m$, where $\mathcal M\subset\mathbb N^r$ is finite and contains the zero multi-index,
$g^m=g_1^{m_1}\cdots g_r^{m_r}$, and each coefficient function $a_m$ takes values in $\mathbb C^r$. Let $\hat a_m(k,z,\hbar)=\underset{n=0}{\overset{\infty}{\sum}}a_{m,n}(k,z)\hbar^n$ denote the Gevrey-$1$ asymptotic expansion of $a_m$, and define
its Borel transform near $\xi=0$ by
\begin{equation*}A_m(k,z,\xi)
       \coloneqq
       \sum_{n=0}^{\infty}
          \frac{a_{m,n+1}(k,z)}{n!}\xi^n.
\end{equation*}
The factorial coefficient bounds imply that this series converges
uniformly near $\xi=0$. By the Nevanlinna--Sokal theorem, the
uniform factorial remainder estimates for $a_m$ on the Borel disk
$S_{0,R_1}$ ensure that this germ extends holomorphically to a tube
around the positive real ray, with at most exponential growth,
and that its Laplace transform reconstructs $a_m$. In other words,
after possibly decreasing $R_1$,
\begin{equation}
 a_m(k,z,\hbar)=a_{m,0}(k,z)
          +\int_0^\infty e^{-\frac{\xi}{\hbar}}A_m(k,z,\xi)\,\mathrm d\xi.
\end{equation}
The functions $A_m$ are jointly holomorphic on
$U_0\times H_{d_2}\times\Xi_{0,\delta_0}$ and satisfy
$|A_m(k,z,\xi)|\leq C e^{c|\xi|}$, uniformly for
$(k,z)\in U_0\times H_{d_2}$. Recall that $\Xi_{0,\delta_0}$ is the open tubular neighbourhood of radius $\delta_0>0$ around the nonnegative real axis. 
For the scalar Nevanlinna--Sokal theorem and its proof, see
\cite[p.~262, especially Eqs.~(5)--(7)]{Sokal1980}.
To match our Borel-transform convention, we apply that theorem to
$\hbar\longmapsto
    \frac{a_m(k,z,\hbar)-a_{m,0}(k,z)}{\hbar}.
$
Its uniform factorial remainder estimates follow from those of
$a_m$ by shifting the truncation order by one. For uniformity in $(k,z)$, regard this function as taking values
in the Banach space of bounded holomorphic maps
$U_0\times H_{d_2}\to\mathbb C^r$, equipped with the supremum norm.
The uniform remainder estimates give local boundedness
in this Banach norm. Cauchy's formula on small
$\hbar$-circles contained in the Borel disk then proves
holomorphy with values in this Banach space.
The absolutely convergent contour integrals and estimates in
Sokal's proof remain valid in this Banach space.
They therefore provide a common Borel tube and exponential growth
bound, uniformly for $(k,z)\in U_0\times H_{d_2}$.
Evaluation at $(k,z)$ gives the functions $A_m$ above, with joint
holomorphy following from locally uniform convergence of the
integral construction.
Since there are finitely many indices $m$, the tube and constants
can be chosen common to all coefficients.

\medskip

At this stage, we have constructed the Borel transforms $A_m$
of the known coefficient functions of $G$.
It remains to prove Borel summability of the unknown formal
solution $\hat g$.
We do this by constructing a holomorphic solution of its
Borel-transformed equation and identifying its Taylor series
with $\mathcal B\hat g$.

\paragraph{Step 4. Construction of the solution in the Borel plane.}
Initially, let 
\begin{equation*}
    \varphi(k,z,\xi)
       =(\mathcal B\hat g)(k,z,\xi)
       =\sum_{n=0}^{\infty}
          \frac{g_{n+1}(k,z)}{n!}\xi^n
\end{equation*}
denote a formal power series in $\xi$ with holomorphic
coefficients in $(k,z)$; no convergence is assumed at this stage. For functions in the Borel variable
write 
\begin{equation}\label{ConvolutionDef} (u*v)(\xi)=\int_0^\xi u(\xi-\eta)v(\eta)\,\mathrm d\eta
\end{equation},
integrating along the straight segment. For formal power series in $\xi$, convolution is understood
coefficientwise, using the identity
\[
    \frac{\xi^p}{p!}*\frac{\xi^q}{q!}
       =\frac{\xi^{p+q+1}}{(p+q+1)!},
    \qquad \forall\, p\geq 0, q\geq0.
\]
No convergence is needed for this formal operation. For formal series $\hat u$ and $\hat v$, our Borel-transform
convention gives
\[
    \mathcal B(\hat u\hat v)
       =u_0\,\mathcal B\hat v
        +v_0\,\mathcal B\hat u
        +(\mathcal B\hat u)*(\mathcal B\hat v).
\]
Since every component of $\hat g$ has zero constant coefficient,
for every $m\in\mathcal M$ with $|m|\geq1$ we have
\[
    \mathcal B(\hat g^m)=\varphi^{*m},
\]
where $\varphi^{*m}$ denotes the convolution of $m_j$ copies
of $\varphi_j$ for each $j\in \llbracket 1,\ldots,r\rrbracket$; factors with $m_j=0$
are omitted. Here $\varphi_j$ denotes the $j$th component
of $\varphi$. Writing $\hat a_m$ for the formal expansion of $a_m$ and
$A_m=\mathcal B\hat a_m$, the product formula therefore gives
\[
    \mathcal B(\hat a_m\hat g^m)
       =a_{m,0}\varphi^{*m}+A_m*\varphi^{*m},
       \qquad \forall \, |m|\geq1.
\]
For the zero multi-index, the contribution is simply $A_0$. In expressions involving $\hat g$, the coefficients of $G$
are understood through their formal expansions $\hat a_m$.
Thus the formal Borel transform of
$G(k,z,\hbar,\hat g)$ is
\begin{equation}
    \mathcal G(\varphi)
       =A_0+
        \sum_{\substack{m\in\mathcal M\\|m|\geq1}}
        \left(
            a_{m,0}\varphi^{*m}
            +A_m*\varphi^{*m}
        \right).
\end{equation}
This sum is finite because $\mathcal M$ is finite. We now apply the formal Borel transform to
\eqref{eq:strip-normal-form}. For any formal series
$\hat H=\underset{n\geq0}{\sum}H_n\hbar^n$, our convention gives
\[
    \mathcal B(\hbar\hat H)(\xi)
       =\sum_{n\geq0}\frac{H_n}{n!}\xi^n
       =H_0+\int_0^\xi
                    \mathcal B\hat H(\eta)\,\mathrm d\eta.
\]
At this stage, integration in $\xi$ means the termwise
primitive of a formal power series, vanishing at $\xi=0$. Since $g_0=0$ and the formal Borel transform commutes
with $\partial_z$, we obtain
\[
    \mathcal B(\hbar\partial_z\hat g)
       =\int_0^\xi
              \partial_z\varphi(k,z,\eta)\,\mathrm d\eta.
\]
Moreover, the constant coefficient of
$G(k,z,\hbar,\hat g)$ is $a_{0,0}(k,z)$.
Consequently, the formal Borel transform of
\eqref{eq:strip-normal-form} is
\begin{equation}
\label{eq:strip-integrated-Borel-equation}
    \int_0^\xi
        \partial_z\varphi(k,z,\eta)\,\mathrm d\eta
       =\Lambda\varphi(k,z,\xi)+a_{0,0}(k,z)
        +\int_0^\xi
            \mathcal G(\varphi)(k,z,\eta)\,\mathrm d\eta.
\end{equation}

For $j\in \llbracket 1,r\rrbracket$, write $\lambda_j\coloneqq\Lambda_{jj}$,
and denote by $\mathcal G_j(\varphi)$ and $a_{0,0,j}$
the $j$th components of $\mathcal G(\varphi)$ and
$a_{0,0}$, respectively.
Differentiating the preceding identity with respect to
$\xi$ gives
\begin{equation}\label{Boreldiffeq}
    \partial_z\varphi
       =\Lambda\partial_\xi\varphi+\mathcal G(\varphi),
\end{equation}
whereas evaluating it at $\xi=0$ gives (the two integrals vanish) $0=\Lambda\varphi(k,z,0)+a_{0,0}(k,z)$, i.e.
\begin{equation}\label{InitialConditionBoreldiffeq}
\varphi(k,z,0)=-\Lambda^{-1}a_{0,0}(k,z)  .
\end{equation}
Componentwise, these two identities become
\begin{equation}\label{Components}
    (\partial_z-\lambda_j\partial_\xi)\varphi_j
       =\mathcal G_j(\varphi),
    \qquad
    \varphi_j(k,z,0)
       =-\lambda_j^{-1}a_{0,0,j}(k,z),\quad \forall\, j\in \llbracket 1,r\rrbracket.
\end{equation}
Together, the differential equation \eqref{Boreldiffeq} and its initial
condition \eqref{InitialConditionBoreldiffeq} are equivalent to
\eqref{eq:strip-integrated-Borel-equation}.
\medskip

At this stage, we have derived in \eqref{Components} equations and initial conditions for the formal Borel transform of $\hat g$. We have not yet proved that this formal series converges or continues along the Borel ray. The next goal is thus to construct a holomorphic solution of this equation, with an exponential growth bound. Recall that a characteristic is a curve in the $(z,\xi)$-variables along which the differential operator on the left of \eqref{Components} becomes a total derivative. To integrate this equation, fix $j\in \llbracket 1,r\rrbracket$ and introduce the linear change of variables
\[
    \zeta=z+\frac{\xi}{\lambda_j},
    \qquad
    \eta=\xi,
    \qquad\text{equivalently}\qquad
    z=\zeta-\frac{\eta}{\lambda_j},
    \quad \xi=\eta.
\]
The parameter $k$ remains fixed. In these coordinates, $\partial_z-\lambda_j\partial_\xi=-\lambda_j\partial_\eta$. Define
\[
    \widetilde\varphi_j(k,\zeta,\eta)
       \coloneqq
       \varphi_j\!\left(
           k,\zeta-\frac{\eta}{\lambda_j},\eta
       \right).
\]
The differential equation and its initial condition \eqref{Components} become
\[
    \partial_\eta\widetilde\varphi_j(k,\zeta,\eta)
       =-\lambda_j^{-1}
          \mathcal G_j(\varphi)\!\left(
              k,\zeta-\frac{\eta}{\lambda_j},\eta
          \right),
    \qquad
    \widetilde\varphi_j(k,\zeta,0)
       =-\lambda_j^{-1}a_{0,0,j}(k,\zeta).
\]
The curves obtained by keeping $\zeta$ fixed are called
the characteristics of the differential operator.
Integrating with respect to $\eta$ at fixed $\zeta$,
and then substituting $\zeta=z+\frac{\xi}{\lambda_j}$, gives
\begin{equation}
\label{eq:strip-Borel-integral}
    \varphi_j(k,z,\xi)
       =-\lambda_j^{-1}
          a_{0,0,j}\!\left(k,z+\frac{\xi}{\lambda_j}\right)
        -\lambda_j^{-1}\int_0^\xi
          \mathcal G_j(\varphi)\!\left(
              k,z+\frac{\xi-\eta}{\lambda_j},\eta
          \right)\,\mathrm d\eta.
\end{equation}

Recall the intermediate strip half-widths $d'<d_1<d_2<d$ chosen at the beginning of the proof and $\delta_0$ introduced in $\Xi_{0,\delta_0}$. Choose $0<\delta<\min\{\delta_0,d_1-d'\}$ and work on
\begin{equation}
 \mathcal D_{\mathrm B}=\{(z,\xi):\xi\in\Xi_{0,\delta},\quad
          |\operatorname{Im}(z-z_0)|+|\operatorname{Im}\xi|<d_1\}.
\end{equation}
Note that $\mathcal D_{\mathrm B}$ provides a joint domain in which the available strip width for $z$ decreases as $\xi$ moves away from the real axis. 

\begin{figure}[htbp]
    \centering
    \includegraphics[width=0.40\textwidth]{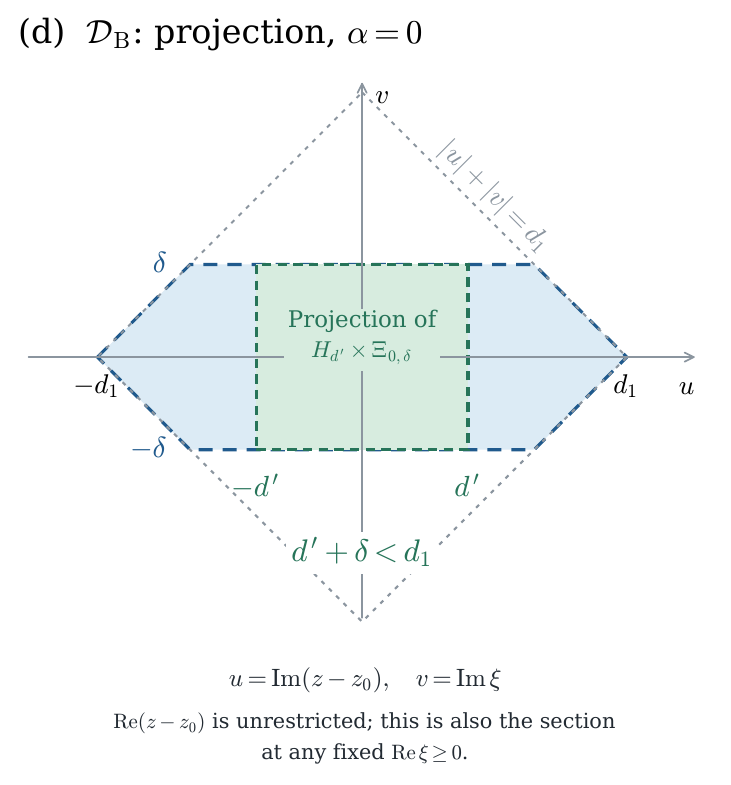}
    \caption{Projection of $\mathcal D_{\mathrm B}$ onto the
    imaginary coordinates $u=\operatorname{Im}(z-z_0)$ and
    $v=\operatorname{Im}\xi$, after rotation to $\alpha=0$.
    The real coordinate $\operatorname{Re}(z-z_0)$ is unrestricted,
    since the defining inequalities are invariant under real
    translations of $z$.
    The projected domain is
    $\{(u,v):|v|<\delta,\ |u|+|v|<d_1\}$.
    The green rectangle represents the projection of the
    smaller product $H_{d'}\times\Xi_{0,\delta}$, which is contained
    in $\mathcal D_{\mathrm B}$ because $d'+\delta<d_1$.
    For fixed $\operatorname{Re}\xi\geq0$ and fixed
    $\operatorname{Re}(z-z_0)$, this picture is also the exact
    section of the domain. For $\operatorname{Re}\xi<0$,
    membership in the rounded end of $\Xi_{0,\delta}$ additionally
    requires $(\operatorname{Re}\xi)^2+v^2<\delta^2$.
    All boundaries are excluded.}
    \label{fig:joint-Borel-domain}
\end{figure}

For $\eta=s\xi$ with $0\leq s\leq1$, using
$\lambda_j\in\{+1,-1\}$ and the triangle inequality, we obtain
\begin{align*}
    &\left|
        \operatorname{Im}\left(
            z+\frac{\xi-\eta}{\lambda_j}-z_0
        \right)
    \right|
    +|\operatorname{Im}\eta|=
    \left|
        \operatorname{Im}(z-z_0)
        +\frac{1-s}{\lambda_j}\operatorname{Im}\xi
    \right|
    +s|\operatorname{Im}\xi|\\
    &\qquad\leq
    |\operatorname{Im}(z-z_0)|
    +(1-s)|\operatorname{Im}\xi|
    +s|\operatorname{Im}\xi|=
    |\operatorname{Im}(z-z_0)|+|\operatorname{Im}\xi|.
\end{align*}
The tube $\Xi_{0,\delta}$ is star-shaped about $0$.
Hence, for $\eta=s\xi$ with $0\leq s\leq1$, both
$\eta$ and $\xi-\eta$ belong to $\Xi_{0,\delta}$.
Together with the preceding inequality, this proves that
the characteristic arguments $\left(z+\frac{\xi-\eta}{\lambda_j},\eta\right)$ 
remain in $\mathcal D_{\mathrm B}$ whenever $(z,\xi)\in\mathcal D_{\mathrm B}$. We must also check that convolution products only evaluate
their factors inside this domain. Indeed, the convolution
\[
    (u*v)(k,z,\xi)
       =\int_0^\xi
          u(k,z,\xi-\eta)\,v(k,z,\eta)\,\mathrm d\eta
\]
requires values at both $(z,\xi-\eta)$ and $(z,\eta)$.
For $\eta=s\xi$, we have
\begin{align*}
    |\operatorname{Im}(z-z_0)|+|\operatorname{Im}\eta|
       &=
       |\operatorname{Im}(z-z_0)|+s|\operatorname{Im}\xi|
       <d_1,\\
    |\operatorname{Im}(z-z_0)|+|\operatorname{Im}(\xi-\eta)|
       &=
       |\operatorname{Im}(z-z_0)|+(1-s)|\operatorname{Im}\xi|
       <d_1.
\end{align*}
Thus both arguments belong to $\mathcal D_{\mathrm B}$ whenever $(z,\xi)\in\mathcal D_{\mathrm B}$. Repeating this observation proves the same assertion for
all the iterated convolutions occurring in
$\mathcal G(\varphi)$. Finally, real translations of $z$ preserve
$\mathcal D_{\mathrm B}$, since $|\operatorname{Im}(z+\ell-z_0)|
       =|\operatorname{Im}(z-z_0)|$ for $\ell\in \mathbb{R}$. This matters because, for $\xi\geq0$ and
$\eta\in[0,\xi]$, the characteristic shift
$\frac{\xi-\eta}{\lambda_j}$ is real and can become arbitrarily
large as $\xi\to+\infty$. The full horizontal strip
allows these shifts without leaving the domain.
For complex $\xi$, their imaginary parts are controlled
by the preceding characteristic estimate.
For $a>0$, let $\mathcal X_a$ be the vector space of
jointly holomorphic maps
$U_0\times\mathcal D_{\mathrm B}\to\mathbb C^r$
for which the following weighted supremum norm is finite:
\begin{equation}
 \|\varphi\|_a
    \coloneqq\sup_{\substack{k\in U_0\\(z,\xi)\in\mathcal D_{\mathrm B}}}
       e^{-a|\xi|}(1+|\xi|^2)|\varphi(k,z,\xi)|.
\end{equation}
and we shall use the maximum norm for vectors. The norms $\|.\|_a$ are monotone with respect to the weight parameter: for $0<a\leq b$ and every $u$ with $\|u\|_a<\infty$,
\begin{equation}\label{monotonenorms}
    \|u\|_b
       =\sup_{\substack{k\in U_0\\
                        (z,\xi)\in\mathcal D_{\mathrm B}}}
          e^{-b|\xi|}(1+|\xi|^2)|u(k,z,\xi)|
       \leq\|u\|_a,
\end{equation}
because $e^{-b|\xi|}\leq e^{-a|\xi|}$.
Thus a bound established at one value of the weight parameter remains valid for every larger value. The space $\mathcal X_a$ is complete. Indeed, a Cauchy
sequence for this norm converges uniformly on compact
subsets to a holomorphic map. Passing to the limit
in the weighted Cauchy estimates shows that the limit
has finite norm and that convergence also holds
in $\|\cdot\|_a$. Scalar convolution \eqref{ConvolutionDef} satisfies
\begin{equation}
\label{eq:strip-convolution-bound}
 \|u*v\|_a\leq4\pi\|u\|_a\|v\|_a.
\end{equation}
Indeed, let $u$ and $v$ have finite weighted norms, and fix
$(k,z,\xi)\in U_0\times\mathcal D_{\mathrm B}$.
For $\xi=0$, the convolution vanishes. Otherwise, write
$\xi=re^{i\theta}$ with $r=|\xi|>0$ and parametrise the
integration segment by $\eta=se^{i\theta}$, $0\leq s\leq r$.
Then $|\mathrm d\eta|=\mathrm ds$ and $|\xi-\eta|=r-s$.
Using the definition of the weighted norms, we obtain
\begin{align*}
    &e^{-ar}(1+r^2)|(u*v)(k,z,\xi)|=
    e^{-ar}(1+r^2)
    \left|
        \int_0^\xi
            u(k,z,\xi-\eta)v(k,z,\eta)\,\mathrm d\eta
    \right|\\
    &\quad\leq
    e^{-ar}(1+r^2)
    \int_{[0,\xi]}
        |u(k,z,\xi-\eta)|\,|v(k,z,\eta)|\,|\mathrm d\eta|
    \leq
    \|u\|_a\|v\|_a(1+r^2)
    \int_0^r
        \frac{e^{-ar}e^{a(r-s)}e^{as}}
             {(1+(r-s)^2)(1+s^2)}\,\mathrm ds\\
    &\quad=
    2\|u\|_a\|v\|_a(1+r^2)
    \int_0^{\frac{r}{2}}
        \frac{\mathrm ds}
             {(1+(r-s)^2)(1+s^2)}
    \leq
    4\pi\|u\|_a\|v\|_a.
\end{align*}
Here the exponential factors cancel, and the factor $2$
comes from the symmetry of the integrand under
$s\mapsto r-s$. For the last inequality, we used,
for $0\leq s\leq \frac{r}{2}$,
\[
    \frac{1+r^2}{1+(r-s)^2}\leq4,
    \qquad
    (1+r^2)\int_0^{\frac{r}{2}}
        \frac{\mathrm ds}{(1+s^2)(1+(r-s)^2)}
       \leq4\int_0^\infty\frac{\mathrm ds}{1+s^2}
       =2\pi.
\]
Taking the supremum over
$(k,z,\xi)\in U_0\times\mathcal D_{\mathrm B}$
proves \eqref{eq:strip-convolution-bound}.
For $\lambda\in\{+1,-1\}$, define the characteristic
integration operator by
\[
    (\mathcal I_\lambda u)(k,z,\xi)
       \coloneqq
       \int_0^\xi
          u\!\left(
              k,z+\frac{\xi-\eta}{\lambda},\eta
          \right)\,\mathrm d\eta,
\]
where integration is along the straight segment $[0,\xi]$.
The domain invariance established above ensures that all
arguments of $u$ remain in $U_0\times\mathcal D_{\mathrm B}$ whenever $(z,\xi)\in \mathcal D_{\mathrm B}$. We claim that
\begin{equation}
\label{eq:strip-Volterra-bound}
 \|\mathcal I_\lambda u\|_a
       \leq\left(\frac2a+\frac4{a^3}\right)\|u\|_a.
\end{equation}
The domain invariance proved above ensures that $\left(z+\frac{\xi-\eta}{\lambda},\eta\right)\in\mathcal D_{\mathrm B}$, for all $\eta\in[0,\xi]$. Hence the definition of the weighted norm gives
$\left|
        u\!\left(k,z+\frac{\xi-\eta}{\lambda},\eta\right)
    \right|
       \leq \|u\|_a
             \frac{e^{a|\eta|}}{1+|\eta|^2}$.
For $\xi=0$, the characteristic integral vanishes. Otherwise, write $\xi=re^{i\theta}$ and parametrise $\eta=se^{i\theta}$, with $0\leq s\leq r$.
Since $|\mathrm d\eta|=\mathrm ds$, we obtain
\small{\begin{equation*}
e^{-ar}(1+r^2)|\mathcal I_\lambda u(k,z,\xi)|
    \quad\leq
    e^{-ar}(1+r^2)
    \int_0^r
        \left|
            u\!\left(
                k,z+\frac{(r-s)e^{i\theta}}{\lambda},
                se^{i\theta}
            \right)
        \right|\,\mathrm ds
    \quad\leq
    \|u\|_a\int_0^r
        e^{-a(r-s)}\frac{1+r^2}{1+s^2}\,\mathrm ds.
\end{equation*}}
\normalsize
To estimate the last integral, put $v=r-s\geq0$.
Since $r=s+v$, we have
\[
    1+r^2
       =1+(s+v)^2
       \leq1+2s^2+2v^2
       \leq2(1+s^2)(1+v^2).
\]
Consequently, changing variables from $s$ to $v=r-s$ gives
\begin{equation*}
    \int_0^r
        e^{-a(r-s)}\frac{1+r^2}{1+s^2}\,\mathrm ds
    \leq
    2\int_0^r e^{-av}(1+v^2)\,\mathrm dv
    \leq
    2\int_0^\infty e^{-av}(1+v^2)\,\mathrm dv
    =\frac2a+\frac4{a^3},
\end{equation*}
where we used $a>0$ and $\int_0^\infty e^{-av}\,\mathrm dv=\frac1a$ and
    $\int_0^\infty v^2e^{-av}\,\mathrm dv=\frac2{a^3}$. Taking the supremum over
$(k,z,\xi)\in U_0\times\mathcal D_{\mathrm B}$ proves \eqref{eq:strip-Volterra-bound}.
\medskip

We now use the preceding norm estimates to solve
\eqref{eq:strip-Borel-integral} by the Banach fixed-point
theorem. The convolution estimate controls the nonlinear
map $\mathcal G$, while the characteristic integration
estimate provides a factor tending to zero as $a\to+\infty$.

\medskip
\noindent\textbf{Claim.}
There exist constants $a>0$, $M>0$ and a jointly holomorphic
function $\varphi:U_0\times\mathcal D_{\mathrm B}
               \longrightarrow\mathbb C^r$ solving \eqref{eq:strip-Borel-integral}, such that
\begin{equation}\label{expoboundsphi}
    |\varphi(k,z,\xi)|
       \leq M\frac{e^{a|\xi|}}{1+|\xi|^2} \,\,, \forall \, (k,z,\xi)\in U_0\times\mathcal D_{\mathrm B}.
\end{equation}
Its Taylor series at $\xi=0$ is the formal Borel transform $\mathcal B\hat g(\xi)$:
\begin{equation}\label{TaylorPhi}\frac{\partial^n \varphi}{\partial \xi^n}(k,z,0)=g_{n+1}(k,z),
    \qquad \forall\, n\geq0,\quad (k,z)\in U_0\times H_{d_1}.
\end{equation}

\medskip
\noindent\emph{Proof of the claim.}
For $a\geq c+2$, we work in the Banach space
$\mathcal X_a$ defined above; the value of $a$
will be chosen sufficiently large below.
Define the known initial term by
\[
    \varphi^{\mathrm{in}}_j(k,z,\xi)
       \coloneqq
       -\lambda_j^{-1}
       a_{0,0,j}\!\left(k,z+\frac{\xi}{\lambda_j}\right),
       \qquad \forall\, j\in \llbracket 1,r\rrbracket.
\]
For $\Phi\in\mathcal X_a$, define the operator
$\mathcal T$ componentwise by
\[
    (\mathcal T\Phi)_j
       \coloneqq
       \varphi^{\mathrm{in}}_j
       -\lambda_j^{-1}
        \mathcal I_{\lambda_j}\bigl(\mathcal G_j(\Phi)\bigr).
\]
Thus \eqref{eq:strip-Borel-integral} is precisely the
fixed-point equation $\varphi=\mathcal T\varphi$ in $\mathcal X_a$. First, we obtain bounds independent of sufficiently large
$a$. Recall that $|A_m(k,z,\xi)|\leq Ce^{c|\xi|}$ and that
the coefficients $a_{m,0}(k,z)$ are uniformly bounded.
For $a\geq c+2$, we have
\[
    \|A_m\|_a
       \leq C\sup_{\rho\geq0}
                    (1+\rho^2)e^{-(a-c)\rho}
       \leq C\sup_{\rho\geq0}
                    (1+\rho^2)e^{-2\rho}
       \leq C.
\]
uniformly in $a$ and $m\in\mathcal M$. Here we used $a-c\geq2$ and
$1+\rho^2\leq e^{2\rho}$ for $\rho\geq0$. In particular, this bound is independent of $a\geq c+2$. Since $a_{0,0}$ is uniformly bounded on
$U_0\times H_{d_2}$, set
\[
    M_0\coloneqq
    \sup_{(k,\zeta)\in U_0\times H_{d_2}}
         |a_{0,0}(k,\zeta)|<\infty.
\]
For $(z,\xi)\in\mathcal D_{\mathrm B}$, the translated
arguments $z+\frac{\xi}{\lambda_j}$ belong to
$H_{d_1}\subset H_{d_2}$. Since $|\lambda_j|=1$,
the definition of $\varphi^{\mathrm{in}}$ therefore gives
$|\varphi^{\mathrm{in}}(k,z,\xi)|\leq M_0$.
Consequently, for $a\geq c+2\geq2$,
\[
    \|\varphi^{\mathrm{in}}\|_a
       \leq M_0\sup_{\rho\geq0}
                    (1+\rho^2)e^{-a\rho}
       \leq M_0,
\]
using $(1+\rho^2)e^{-2\rho}\leq1$.
Thus the same constant $M_0$ works for every $a\geq c+2$.
\medskip

Choose once and for all a radius $M>2M_0$, independently
of $a$. For every $a\geq c+2$, define the closed ball
\[
    \mathcal B_a(M)
       \coloneqq
       \left\{
           \Phi\in\mathcal X_a
           \,\middle|\,
           \|\Phi\|_a\leq M
       \right\}.
\]
This ball is nonempty and complete for the metric
induced by $\|\cdot\|_a$.
Choose a constant $C_\ast>0$, independent of $a\geq c+2$ (from \eqref{monotonenorms}), such that, for every $m\in\mathcal M$,
\[
    \sup_{(k,z)\in U_0\times H_{d_2}}
         |a_{m,0}(k,z)|\leq C_\ast,
    \qquad
    \|A_m\|_a\leq C_\ast.
\]
Such a common constant exists because $\mathcal M$ is finite. Fix $m\in\mathcal M$ with $p=|m|\geq1$, and write $\Phi^{*m}=\Phi_{i_1}*\cdots*\Phi_{i_p}$,
where each index $j$ occurs $m_j$ times.
For $\Phi\in\mathcal B_a(M)$, each component has norm
at most $M$. Repeated application of
\eqref{eq:strip-convolution-bound} therefore gives
\[
    \|\Phi^{*m}\|_a
       \leq(4\pi)^{p-1}
             \prod_{\ell=1}^p\|\Phi_{i_\ell}\|_a
       \leq(4\pi)^{p-1}M^p.
\]

For $(\Phi,\Psi)\in\mathcal B_a(M)\times \mathcal B_a(M)$, the telescoping identity
\[
    \Phi^{*m}-\Psi^{*m}
       =
       \sum_{\ell=1}^p
       \Phi_{i_1}*\cdots*\Phi_{i_{\ell-1}}
       *(\Phi_{i_\ell}-\Psi_{i_\ell})
       *\Psi_{i_{\ell+1}}*\cdots*\Psi_{i_p}
\]
holds, with empty blocks of factors omitted.
Each summand contains $p$ factors: one has norm at most
$\|\Phi-\Psi\|_a$, and the remaining $p-1$ factors
have norm at most $M$. Consequently,
\[
    \|\Phi^{*m}-\Psi^{*m}\|_a
       \leq
       p(4\pi)^{p-1}M^{p-1}\|\Phi-\Psi\|_a.
\]

Multiplication by the bounded coefficient $a_{m,0}$
has norm at most $C_\ast$, whereas convolution with
$A_m$ has norm at most $4\pi C_\ast$.
Thus
\begin{align*}
    \|a_{m,0}\Phi^{*m}+A_m*\Phi^{*m}\|_a
       &\leq C_\ast(1+4\pi)(4\pi)^{p-1}M^p,\\
    \|a_{m,0}(\Phi^{*m}-\Psi^{*m})
       +A_m*(\Phi^{*m}-\Psi^{*m})\|_a
       &\leq C_\ast(1+4\pi)
          p(4\pi)^{p-1}M^{p-1}\|\Phi-\Psi\|_a.
\end{align*}

Summing over the finitely many nonzero multi-indices
in $\mathcal M$, and including the term $A_0$ in
$\mathcal G(\Phi)$, we may therefore take
\begin{equation*}
    P_M
       \coloneqq
       C_\ast+
       C_\ast(1+4\pi)
       \sum_{\substack{m\in\mathcal M\\|m|\geq1}}
          (4\pi)^{|m|-1}M^{|m|},\quad
    L_M
       \coloneqq
       C_\ast(1+4\pi)
       \sum_{\substack{m\in\mathcal M\\|m|\geq1}}
          |m|(4\pi)^{|m|-1}M^{|m|-1}.
\end{equation*}
The term $A_0$ cancels in
$\mathcal G(\Phi)-\mathcal G(\Psi)$.
These constants are independent of $a\geq c+2$ and satisfy
\begin{equation}\label{Norms}
    \|\mathcal G(\Phi)\|_a\leq P_M,
    \qquad
    \|\mathcal G(\Phi)-\mathcal G(\Psi)\|_a
       \leq L_M\|\Phi-\Psi\|_a \,,\,\, \forall\, (\Phi,\Psi)\in\mathcal B_a(M) \times \mathcal B_a(M).
\end{equation}

Define $\kappa(a)\coloneqq\frac2a+\frac4{a^3}$. By \eqref{eq:strip-Volterra-bound}, \eqref{Norms} and since $|\lambda_j|=1$, we obtain
\begin{align*}
    \|\mathcal T\Phi\|_a
       &\leq M_0+\kappa(a)P_M,\\
    \|\mathcal T\Phi-\mathcal T\Psi\|_a
       &\leq\kappa(a)L_M\|\Phi-\Psi\|_a.
\end{align*}
Because $\kappa(a)\to0$ as $a\to+\infty$ and $(M_0,P_M,L_M)$ are independent of $a$, we can now fix $a\geq c+2$ sufficiently large that
\[
    M_0+\kappa(a)P_M\leq M,
    \qquad
    \kappa(a)L_M<1.
\]
The first inequality shows that $\mathcal T$ maps
$\mathcal B_a(M)$ into itself while the second shows that
it is a strict contraction there.
The Banach fixed-point theorem therefore supplies a
unique fixed point $\varphi$ in this ball.
It is jointly holomorphic, and $\|\varphi\|_a\leq M$
gives the asserted exponential bound.
\medskip

It remains to identify its Taylor series.
Differentiating the fixed-point equation along the
characteristics recovers the Borel differential equation
and its initial condition. Consequently, the formal series $\underset{n\geq 0}{\sum}
       \frac{\partial^n \varphi}{\partial \xi^n}(k,z,0)\,\hbar^{n+1}$
solves \eqref{eq:strip-normal-form} with zero constant
coefficient. Under the substitution $f=\hbar(f_1+g)$, this formal
series yields a formal solution of \eqref{deff}
with zero constant coefficient. By the uniqueness
proved in Step~1, the resulting series equals $\hat f$.
Consequently, the original series equals $\hat g$, and $\frac{\partial^n \varphi}{\partial \xi^n}(k,z,0)=g_{n+1}(k,z)$, for all $n\geq0$. This proves the claim.

\medskip

The proven claim will be used in the next step to justify the Laplace transform. Indeed, since
$H_{d'}\times\Xi_{0,\delta}\subset\mathcal D_{\mathrm B}$,
the formal Borel transform $\mathcal B\hat g$ admits
a jointly holomorphic continuation to
$U_0\times H_{d'}\times\Xi_{0,\delta}$ with a uniform
exponential bound. This is the function whose Laplace
transform will reconstruct $g$ in Step~5.

\paragraph{Step 5. Laplace reconstruction and uniform remainder estimates.}

Since $H_{d'}\times\Xi_{0,\delta}\subset\mathcal D_{\mathrm B}$,
the function $\varphi$ is jointly holomorphic on
$U_0\times H_{d'}\times\Xi_{0,\delta}$ and satisfies
$|\varphi(k,z,\xi)|\leq M e^{a|\xi|}$ there.
With $a>0$ fixed by the contraction argument, set $R\coloneqq\min\left\{R_1,\frac1{2a}\right\}$. Then $0<R\leq R_1$ and $R^{-1}\geq2a>a$.
For $(k,z,\hbar)\in U_0\times H_{d'}\times S_{0,R}$,
define $g$ by the Laplace transform of $\varphi$:
\[
    g(k,z,\hbar)
       \coloneqq
       \int_0^{+\infty}
          e^{-\frac{\xi}{\hbar}}\varphi(k,z,\xi)\,\mathrm d\xi.
\]
where $\varphi$ has been derived in the previous step from the contraction mapping theorem. The exponential bound on $\varphi$ implies that this integral
defines a jointly holomorphic function on
$U_0\times H_{d'}\times S_{0,R}$.
We verify that $g$ satisfies \eqref{eq:strip-normal-form}.
The exponential bounds and
$\operatorname{Re}(\hbar^{-1})>R^{-1}>a$ justify Fubini's
theorem in the Laplace transforms of primitives and
convolution products. Differentiation under the integral
with respect to $z$ is justified by Cauchy's estimate:
for real $\xi\geq0$, the function $\varphi$ is holomorphic
on the larger strip $H_{d_1}$ with the same exponential
bound, so $\partial_z\varphi$ has a uniform exponential
bound on $H_{d'}$.

Laplace transformation turns convolution into
multiplication. Using the reconstruction formula for
each coefficient $a_m$, we therefore obtain
\[
    \int_0^\infty e^{-\frac{\xi}{\hbar}}
        \mathcal G(\varphi)(k,z,\xi)\,\mathrm d\xi
       =G(k,z,\hbar,g(k,z,\hbar))-a_{0,0}(k,z).
\]
Moreover, Laplace transformation of a primitive
vanishing at $\xi=0$ introduces a factor $\hbar$.
Thus \eqref{eq:strip-integrated-Borel-equation} gives
\begin{equation*}
    \hbar\partial_z g
       =\Lambda g+\hbar a_{0,0}
         +\hbar\bigl(G(k,z,\hbar,g)-a_{0,0}\bigr)
       =\Lambda g+\hbar G(k,z,\hbar,g),
\end{equation*}
which is \eqref{eq:strip-normal-form}.
\medskip

We now establish the uniform factorial remainder estimates
on the whole Borel disk. On the smaller product $H_{d'}\times\Xi_{0,\delta}\subset\mathcal D_{\mathrm B}$, the fixed point $\varphi$ has a bound
$|\varphi(k,z,\xi)|\leq M e^{a|\xi|}$. Fix a radius $0<\rho<\delta$. For every point
$s\in\mathbb R_{\geq0}$ on the Borel integration ray,
the closed disk $|\xi-s|\leq\rho$ lies in
$\Xi_{0,\delta}$. On its boundary, $|\varphi(k,z,\xi)|\leq M e^{a(s+\rho)}$. Cauchy's estimate at the centre $\xi=s$ therefore gives
\begin{equation}
\label{eq:strip-Borel-derivative-bound}
 \left|\frac{\partial^N \varphi}{\partial \xi^N}(k,z,s)\right|
       \leq C A^N N!e^{as},\qquad \forall\, s\geq0,
\end{equation}
Thus \eqref{eq:strip-Borel-derivative-bound} holds with
$C=M e^{a\rho}$ and $A=\rho^{-1}$, independently of
$(k,z)\in U_0\times H_{d'}$, $s\geq0$, and $N\geq0$. 
For $N\geq1$, integration by parts in the
Laplace integral, using $g_n=\frac{\partial^{n-1} \varphi}{\partial \xi^{n-1}}(k,z,0)$,
gives
\begin{equation}
\label{eq:strip-exact-Laplace-remainder}
 g(k,z,\hbar)-\sum_{n=1}^{N-1}g_n(k,z)\hbar^n
   =\hbar^N\frac{\partial^{N-1} \varphi}{\partial \xi^{N-1}}(k,z,0)+\hbar^N\int_0^{+\infty} e^{-\frac{\xi}{\hbar}}
                   \frac{\partial^{N} \varphi}{\partial \xi^{N}}(k,z,\xi)\,\mathrm d\xi.
\end{equation}
Indeed, for every $\hbar\in S_{0,R}$ and every $s\geq0$, $|e^{-\frac{s}{\hbar}}|
       =e^{-s\operatorname{Re}(\hbar^{-1})}
       \leq e^{-sR^{-1}}$, because $\operatorname{Re}(\hbar^{-1})>R^{-1}$.
Combining this inequality with
\eqref{eq:strip-Borel-derivative-bound}, we obtain,
for every integer $\ell\geq0$,
\begin{equation*}
    \left|
        e^{-\frac{s}{\hbar}}\frac{\partial^{\ell} \varphi}{\partial \xi^{\ell}}(k,z,s)
    \right|
       \leq C A^\ell\ell!\,
          e^{-s\operatorname{Re}(\hbar^{-1})}e^{as}
       \leq C A^\ell\ell!\,
          e^{-(R^{-1}-a)s}
       \longrightarrow0
       \qquad\text{as }s\to+\infty.
\end{equation*}
Here $R^{-1}-a>0$. For each fixed $\ell\geq 0$, this convergence
is uniform in $(k,z,\hbar)\in
U_0\times H_{d'}\times S_{0,R}$.
Thus all boundary terms at infinity vanish in the
successive integrations by parts.
\medskip

We now estimate the two terms on the right-hand side of
\eqref{eq:strip-exact-Laplace-remainder}.
For every $N\geq1$, the triangle inequality gives
\begin{equation*}
    \left|
        g(k,z,\hbar)
        -\sum_{n=1}^{N-1}g_n(k,z)\hbar^n
    \right|
    \leq |\hbar|^N
        \left|\frac{\partial^{N-1} \varphi}{\partial \xi^{N-1}}(k,z,0)\right|
       +|\hbar|^N\int_0^\infty
          e^{-\xi\operatorname{Re}(\hbar^{-1})}
          \left|\partial_\xi^N\varphi(k,z,\xi)\right|
          \,\mathrm d\xi.
\end{equation*}
Applying \eqref{eq:strip-Borel-derivative-bound} at $s=0$
to the first term and at $s=\xi$ to the integrand yields
\begin{align*}
    \left|
        g(k,z,\hbar)
        -\sum_{n=1}^{N-1}g_n(k,z)\hbar^n
    \right|
    &\leq C|\hbar|^N
       \left[
          A^{N-1}(N-1)!
          +A^N N!\int_0^\infty
             e^{-(\operatorname{Re}(\hbar^{-1})-a)\xi}
             \,\mathrm d\xi
       \right]\\
    &\leq C|\hbar|^N
       \left[
          A^{N-1}(N-1)!
          +\frac{A^N N!}{R^{-1}-a}
       \right].
\end{align*}
For the last inequality, we used
$\operatorname{Re}(\hbar^{-1})-a>R^{-1}-a>0$, which gives
\[
    \int_0^\infty
        e^{-(\operatorname{Re}(\hbar^{-1})-a)\xi}\,\mathrm d\xi
       =\frac{1}{\operatorname{Re}(\hbar^{-1})-a}
       \leq\frac{1}{R^{-1}-a}.
\]

Set
\[
    A'\coloneqq\max\{1,A\},
    \qquad
    C'\coloneqq C\left(1+\frac{1}{R^{-1}-a}\right).
\]
For $N\geq1$, we have
$A^{N-1}(N-1)!\leq(A')^N N!$ and
$A^N N!\leq(A')^N N!$.
Consequently,
\[
    \sup_{(k,z)\in U_0\times H_{d'}}
    \left|
        g(k,z,\hbar)
        -\sum_{n=1}^{N-1}g_n(k,z)\hbar^n
    \right|
       \leq C'(A')^N N!|\hbar|^N.
\]
The constants $C'$ and $A'$ are independent of $N$,
$\hbar\in S_{0,R}$, and $(k,z)\in U_0\times H_{d'}$.
For $N=0$, the Laplace representation and
\eqref{eq:strip-Borel-derivative-bound} with derivative
order zero give directly
\[
    |g(k,z,\hbar)|
       \leq C\int_0^\infty
          e^{-(\operatorname{Re}(\hbar^{-1})-a)\xi}
          \,\mathrm d\xi
       \leq\frac{C}{R^{-1}-a}
       \leq C'.
\]
Since $g_0=0$, these estimates establish the uniform
Gevrey-$1$ asymptotic expansion
$g\sim\underset{n\geq0}{\sum}g_n\hbar^n$ on the whole Borel disk
$S_{0,R}$, uniformly for
$(k,z)\in U_0\times H_{d'}$.

\paragraph{Step 6. Return to the original equation and uniqueness.}

Define
\[
    f(k,z,\hbar)
       \coloneqq\hbar\bigl(f_1(k,z)+g(k,z,\hbar)\bigr).
\]
This function is jointly holomorphic on
$U_0\times H_{d'}\times S_{0,R}$.
Using \eqref{eq:strip-normal-form} and the definition
\eqref{eq:strip-normalised-rhs} of $G$, we obtain
\begin{equation*}
    \hbar\partial_z f
       =\hbar^2\partial_z f_1
         +\hbar^2\partial_z g
       =\hbar^2\partial_z f_1
         +\hbar\Lambda g+\hbar^2G(k,z,\hbar,g)
       =F(k,z,\hbar,f).
\end{equation*}
Thus $f$ solves the original equation \eqref{deff}.
We next transfer the uniform factorial remainder
estimates from $g$ to $f$. Since $f_0=g_0=0$ and
$f_n=g_{n-1}$ for $n\geq2$, for every $N\geq2$ we have
\[
    f-\sum_{n=0}^{N-1}f_n\hbar^n
       =\hbar\left(
          g-\sum_{j=1}^{N-2}g_j\hbar^j
         \right).
\]
The estimate from Step~5 with truncation order $N-1$
therefore gives
\[
    \sup_{(k,z)\in U_0\times H_{d'}}
    \left|
        f(k,z,\hbar)
        -\sum_{n=0}^{N-1}f_n(k,z)\hbar^n
    \right|
       \leq C'(A')^{N-1}(N-1)!|\hbar|^N.
\]
The cases $N=0$ and $N=1$ follow from the uniform
boundedness of $f_1$ and $g$, together with
$|f|\leq|\hbar|(|f_1|+|g|)$ and $|\hbar|<R$.
Since $A'\geq1$, increasing the bounding constant
if necessary gives uniform Gevrey-$1$ remainder
estimates for $f$ for every $N\geq0$ throughout
$U_0\times H_{d'}\times S_{0,R}$.

The Borel summability of $\hat f$ also follows directly
from that of $\hat g$. Indeed,
$\hat f=\hbar f_1+\hbar\hat g$ and $g_0=0$ imply
\[
    (\mathcal B\hat f)(k,z,\xi)
       =f_1(k,z)+\int_0^\xi
            \varphi(k,z,\eta)\,\mathrm d\eta.
\]
The right-hand side is jointly holomorphic on
$U_0\times H_{d'}\times\Xi_{0,\delta}$.
Since $f_1$ is uniformly bounded and
$|\varphi(k,z,\eta)|\leq M e^{a|\eta|}$, integration
along the straight segment gives
\[
    |(\mathcal B\hat f)(k,z,\xi)|
       \leq
       \sup_{U_0\times H_{d'}}|f_1|
       +\frac{M}{a}\bigl(e^{a|\xi|}-1\bigr).
\]
Thus this continuation has a uniform exponential bound.
Its Laplace transform is $\hbar f_1+\hbar g=f$,
so $f$ is the Borel sum of $\hat f$.

Finally, suppose that two holomorphic functions admit
$\hat f$ as their uniform Gevrey-$1$ asymptotic expansion
on such products. Restricting to a common smaller Borel
disk and fixing $(k,z)$, their difference has zero
Gevrey-$1$ asymptotic expansion. The Nevanlinna
uniqueness theorem therefore implies that this
difference vanishes. This proves uniqueness as a germ
among functions with the stated asymptotic expansion.
Undoing the rotation of $z$ and $\hbar$ made in
Step~1 restores the direction $\alpha$.
The resulting domains are
$U_0\times H_{d'}(z_0)\times S_{\alpha,R}$, and all
estimates remain uniform throughout the parameter
domain and the full strip.
This completes the proof of
\autoref{lem:uniform-strip-summation}.

\subsection{Proof of \autoref{lem:spatial-adapted-strips} }\label{AppendixProofLemma43}
\textbf{Step 1. Construction of a local lift.}
We first fix $t=t_0$ and put $\tau_0=\tau(t_0)$.
In the normalisation coordinate $w$, twice the action is
\[
    2\widetilde Z_{\tau_0}(w)
       =\frac85w^5-8\tau_0w^3,
    \qquad
    \partial_w\bigl(2\widetilde Z_{\tau_0}(w)\bigr)
       =8w^2(w^2-3\tau_0).
\]
Since $(x_0,t_0)\in\mathcal U_j$, the point
$x_0\in R_j(\tau_0)$ lies outside the Stokes graph.
In particular, it is neither of its vertices
$-2\tau_0$ and $\tau_0$. Thus $w_0\neq0$ and $w_0^2-3\tau_0=x_0-\tau_0\neq0$. The holomorphic inverse-function theorem therefore
gives a local inverse of $s=2\widetilde Z_{\tau_0}(w)$
near $(s_0,w_0)$.

Consider the straight line $\gamma(u)=s_0+e^{i\alpha}u, u \in\mathbb R$. Here $u$ is a real path parameter and the time variable $t$ remains fixed at $t_0$.
The local inverse gives a lift $W(u)$ for $u$ near
zero, satisfying
\[
    2\widetilde Z_{\tau_0}(W(u))=\gamma(u),
    \qquad W(0)=w_0.
\]
Continue this lift on its maximal open interval
$I\subset\mathbb R$ containing zero, subject to $\partial_w\bigl(2\widetilde Z_{\tau_0}\bigr)(W(u))
       \neq0$.
Local uniqueness of the inverse makes this continuation
unambiguous.

\noindent\textbf{Step 2. Continuation along the entire line.}
We prove that $I=\mathbb R$.
By construction, the continued lift satisfies
$2\widetilde Z_{\tau_0}(W(u))
       =s_0+e^{i\alpha}u,
    \qquad u\in I$.
Suppose, for example, that $I$ has a finite right endpoint $u_\ast$.
There is then a constant $M>0$ such that
\[
    \left|2\widetilde Z_{\tau_0}(W(u))\right|
       =|s_0+e^{i\alpha}u|\leq M
\]
for $u\in I$ sufficiently close to $u_\ast$. Choose $R_w>0$ such that $R_w^2\geq10|\tau_0|$.
For $|w|\geq R_w$, the triangle inequality gives
\[
    \left|2\widetilde Z_{\tau_0}(w)\right|
       \geq\frac85|w|^5-8|\tau_0||w|^3
       \geq\frac45|w|^5.
\]
Consequently,
\[
    |W(u)|
       \leq
       \max\left\{R_w,\left(\frac{5M}{4}\right)^{\frac{1}{5}}\right\}
\]
for $u$ sufficiently close to $u_\ast$.
Thus the lift remains bounded as $u\to u_\ast$.

Moreover, $W(u)$ has a limit $w_\ast$ as
$u\to u_\ast$.
Indeed, its accumulation set is nonempty, compact
and connected, and is contained in the finite set $\left\{
       w\in\mathbb C:
       2\widetilde Z_{\tau_0}(w)=\gamma(u_\ast)
    \right\}$.
It must therefore consist of a single point.
Here connectedness follows by expressing the
accumulation set as the intersection of the nested
compact connected sets obtained by closing the
images of terminal subintervals of $I$. If
$\partial_w(2\widetilde Z_{\tau_0})(w_\ast)\neq0$,
the inverse-function theorem continues the lift
beyond $u_\ast$, contradicting maximality.
Consequently, $w_\ast$ would have to be a critical
point of the action. We now show that this alternative is also impossible.
Since $2\widetilde Z_{\tau_0}(w_\ast)
       =s_0+e^{i\alpha}u_\ast$,
we have, along the lifted arc from $w_0$ to $w_\ast$,
\[
    2\left(
       \widetilde Z_{\tau_0}(W(u))
       -\widetilde Z_{\tau_0}(w_\ast)
      \right)
       =e^{i\alpha}(u-u_\ast).
\]
Taking the rotated imaginary part gives
\[
    \operatorname{Im}\left(
       e^{-i\alpha}
       \left(
          \widetilde Z_{\tau_0}(W(u))
          -\widetilde Z_{\tau_0}(w_\ast)
       \right)
    \right)=0.
\]
The lifted arc is regular before its endpoint and
approaches the critical point $w_\ast$.
It is therefore a level-set branch issuing from that
critical point, as in \eqref{eq:Stokes-curves}.
Its projection $X(u)=W(u)^2-2\tau_0$
is consequently a Stokes-curve arc containing
$X(0)=x_0$. This contradicts $x_0\in R_j(\tau_0)$, because
$R_j(\tau_0)$ is a component of the complement of
the Stokes graph.
Thus $I$ has no finite right endpoint.
The same argument excludes a finite left endpoint,
so the lift exists for every $u\in\mathbb R$.

\noindent\textbf{Step 3. Inverse branches near the two ends.}
We next construct inverse branches near the two ends
of the line, including their dependence on $t$.
Choose a small relatively compact neighbourhood
$V_1\Subset V$ of $t_0$.
In particular, $\tau(t)$ is bounded on $V_1$. Choose a branch of $s^{\frac{1}{5}}$ in a sector around each
of the two directions $e^{i\alpha}$ and $-e^{i\alpha}$.
For large $|s|$ in either sector, put
\[
    \eta=s^{-\frac{1}{5}},
    \qquad w=s^{\frac{1}{5}}v.
\]
The equation $2\widetilde Z_{\tau(t)}(w)=s$ becomes
\[
    \frac85v^5-8\tau(t)\eta^2v^3=1.
\]
At $\eta=0$, its five roots satisfy $v^5=\frac{5}{8}$.
They are simple, since the derivative of the left-hand
side with respect to $v$ is $8v^4\neq0$ there.
The holomorphic implicit-function theorem therefore
gives five branches, jointly holomorphic in $(\eta,t)$
near $\eta=0$ and $t=t_0$.
After shrinking $V_1$, their expansions are uniform
for $t\in V_1$ and have the form
\[
    v=c+O(\eta^2),
    \qquad c^5=\frac58.
\]
Indeed, the defining equation depends on $\eta$ only through
$\eta^2$. By uniqueness in the implicit-function
theorem, each branch is even in $\eta$, so its
linear coefficient vanishes. The corresponding inverse branches satisfy
\[
    w(s,t)
       =c\,s^{\frac{1}{5}}
          \left(1+O(|s|^{-\frac{2}{5}})\right),
\]
uniformly for $t\in V_1$.
The choice of $c$ and of the fifth root may differ
between the two ends.

For sufficiently large $|s|$, these five branches
exhaust the roots of
$2\widetilde Z_{\tau(t)}(w)=s$.
At $t=t_0$, the lift $W(u)$ consequently agrees with
one of them at each end of the line.
Continuity and separation of the five roots ensure
that the selected branch is the same along each
sufficiently distant tail. Write
\[
    s=s_0+e^{i\alpha}(u+iv),
    \qquad u,v\in\mathbb R.
\]
For a sufficiently large $T>0$, the two selected
branches are therefore holomorphic on the two
half-strips
\[
    \{u>T,\ |v|<1\},
    \qquad
    \{u<-T,\ |v|<1\},
\]
jointly with $t\in V_1$.
They are noncritical there, after increasing $T$
if necessary.

\noindent\textbf{Step 4. Extension across the finite middle part.}
For every $u\in[-T-1,T+1]$, the point $W(u)$ is
regular for the action.
Applying the holomorphic implicit-function theorem
to $2\widetilde Z_{\tau(t)}(w)-s=0$
at $(s,t,w)=(\gamma(u),t_0,W(u))$ gives a local inverse
branch jointly holomorphic in $(s,t)$. A finite collection of such neighbourhoods covers
the compact middle segment.
After shrinking their time neighbourhoods to a common
neighbourhood of $t_0$, the local inverses agree on
overlaps.
Indeed, at $t=t_0$ they continue the same lift $W$;
local uniqueness in the implicit-function theorem
then gives agreement near each common point, and
the identity theorem extends this agreement over
the overlaps. Consequently, there are $d_{\mathrm{mid}}>0$ and
an open neighbourhood $V_0\Subset V_1$ of $t_0$
on which these local inverses define a single
holomorphic branch for
\[
    |u|<T+1,
    \qquad |v|<d_{\mathrm{mid}},
    \qquad t\in V_0.
\]
Its overlap with each end branch contains a segment
of the original lifted line.
By the same local uniqueness argument, after
shrinking $V_0$ if necessary, the middle and end
branches agree on these overlaps. Choosing $0<d<\min(1,d_{\mathrm{mid}})$ therefore gives a jointly holomorphic inverse
$w(s,t)$ on the full product
$H_d(s_0)\times V_0$, satisfying
\[
    2\widetilde Z_{\tau(t)}(w(s,t))=s,
    \qquad w(s_0,t_0)=w_0.
\]
The construction keeps this inverse branch regular
throughout the product.
It does not require other inverse branches over
the same action values to be regular.

\noindent\textbf{Step 5. Projection, estimates and local uniformity.}
Define
\[
    x(s,t)=w(s,t)^2-2\tau(t),
    \qquad
    y(s,t)=2\bigl(x(s,t)-\tau(t)\bigr)w(s,t).
\]
These functions are jointly holomorphic, and
$x(s_0,t_0)=x_0$.
Furthermore,
\[
    \partial_w\bigl(2\widetilde Z_{\tau(t)}\bigr)(w)
       =8w^2(w^2-3\tau(t))
       =4wy.
\]
Regularity of the constructed inverse therefore gives
$wy\neq0$ throughout $H_d(s_0)\times V_0$.
Differentiating with respect to $s$, with $t$ fixed,
also gives $\partial_s x
       =2w\,\partial_s w
       =\frac{1}{2y}\neq0$.
Thus $x(s,t)$ is the required continuation of the
local spatial inverse selected by $w_0$. The expansions from Step~3 imply, uniformly for
$t\in V_0$ at either end, $|w(s,t)|\asymp |s|^{\frac{1}{5}}$. Since $\tau(t)$ is bounded, we also have
\[
    x(s,t)
       =w(s,t)^2\left(1+O(|s|^{-\frac{2}{5}})\right),
    \qquad
    y(s,t)
       =2w(s,t)^3\left(1+O(|s|^{-\frac{2}{5}})\right).
\]
These identities give the asserted estimates for
$x$ and $y$.

Finally, let the base point vary in a sufficiently
small neighbourhood of $(x_0,t_0)$ contained in
$\mathcal U_j$, and continue the chosen sheet
holomorphically there.
The corresponding action value $s_b=2Z_{\tau(t_b)}(x_b)$ is then close to $s_0$.
After shrinking this neighbourhood, we have
$t_b\in V_0$ and
\[
    \left|\operatorname{Im}
       \bigl(e^{-i\alpha}(s_b-s_0)\bigr)\right|
       <\frac d2,
    \qquad
    H_{\frac{d}{2}}(s_b)\subset H_d(s_0).
\]
Local uniqueness of the inverse ensures that the
already constructed branch takes the prescribed
lifted value at $(s_b,t_b)$.
Its restrictions to these smaller strips therefore
provide the construction for all such nearby base
points, with the same width $\frac{d}{2}$, time neighbourhood
$V_0$ and bounds at infinity.
This proves the stated local uniformity and ends the proof of \autoref{lem:spatial-adapted-strips}.

\subsection{Proof of \autoref{lem:projective-reconstruction}}
\label{app:projective-reconstruction-proof}
We use the notation and hypotheses of
\autoref{lem:projective-reconstruction}. In particular, $\Omega$ is
a polydisc centred at $b$, the formal columns are given by
\eqref{eq:formal-projective-columns}, and
$\Omega_0\Subset\Omega$ is a smaller polydisc centred at $b$.
The proof first establishes the summability operations used in the
reconstruction, then constructs the scalar amplitudes from a closed
one-form and verifies the full Lax system.

\begin{proof}
All Borel sums below are taken in direction $\alpha$. We first record
the summability operations needed in the proof. Locally uniform Borel
summation commutes with addition, multiplication, differentiation in
$x$ and $t$, and integration along finite parameter paths staying in a fixed
compact subset. It also commutes with scalar exponentiation and, when
the constant coefficient vanishes, division by $\hbar$.

For completeness, these statements follow directly in the Borel plane.
Multiplication of series with zero constant coefficient becomes
convolution. If the Borel transform of such a series is bounded by
$C e^{c|\xi|}$ on a tube, its $m$-fold convolution is bounded by
$ C^m e^{c|\xi|}\frac{|\xi|^{m-1}}{(m-1)!}$. Consequently the convolution series for the exponential converges
normally and has at most exponential growth. Constant coefficients
can be separated before applying this argument. Parameter derivatives
are controlled by Cauchy estimates on a slightly larger compact set;
finite path integrals preserve the same bounds. Finally, if
$\hat f=f_1\hbar+f_2\hbar^2+\cdots$, the Borel transform of
$\frac{\hat f}{\hbar}-f_1$ is $\partial_\xi\mathcal B\hat f$.
Cauchy estimates on a narrower tube control this derivative, and
integration by parts in the Laplace integral gives
$s_\alpha\left(\frac{\hat f}{\hbar}\right)=\frac{s_\alpha(\hat f)}{\hbar}$.
Inversion of a series with nonvanishing leading coefficient is
also permitted locally: after reducing to leading coefficient $1$,
the Borel transform of its inverse is given by a convergent
convolution geometric series on a smaller tube, with an increased
exponential bound. The same convolution estimate just displayed
proves this assertion. These arguments justify the interchanges below.

\paragraph{The two projective equations.}
For each sign, the formal ratio $\hat r_\varepsilon$
satisfies the time Riccati equation
\eqref{eq:formal-time-Riccati} and the spatial
Riccati equation \eqref{eq:formal-spatial-Riccati-B},
whose compatibility was established in
\autoref{250408124924}.
Applying Borel--Laplace summation to these identities,
using its compatibility with multiplication and
differentiation in $(x,t)$ proved above, shows that
$r_\varepsilon=s_\alpha(\hat r_\varepsilon)$ satisfies
\begin{align}
    \hbar\partial_t r_\varepsilon
       &=\frac{x+2q}{2}-2r_\varepsilon^2,
       \label{eq:auxiliary-Riccati}\\
    \hbar\partial_x r_\varepsilon
       &=x^2+xq+q^2+\frac t2
          -2pr_\varepsilon-4(x-q)r_\varepsilon^2.
       \label{eq:exact-projective-space}
\end{align}
Both identities hold on $\Omega_0\times S_{\alpha,R}$ after decreasing
$R$. In particular, no assumption that an exact wave matrix already
solves the $x$-equation has been used here.

\paragraph{A closed logarithmic one-form.}
Recall the formal closed one-form
$\hat\beta_\varepsilon$ in
\eqref{eq:formal-amplitude-form}, whose closedness
is established in
\autoref{prop:formal-amplitude-reconstruction}.
Set
\[
    P_\varepsilon
       =s_\alpha(\hat P_\varepsilon)
       =p+4(x-q)r_\varepsilon,
    \qquad
    T_\varepsilon
       =s_\alpha(\hat T_\varepsilon)
       =2r_\varepsilon,
\]
using \eqref{eq:formal-log-derivatives}.
The summability rules proved above, including division
by $\hbar$ when the constant coefficient vanishes, give
\begin{equation}
\label{eq:closed-logarithmic-amplitude-form}
    \beta_\varepsilon
       \coloneqq s_\alpha(\hat\beta_\varepsilon)
       =\frac{P_\varepsilon-\varepsilon y}{\hbar}
          \,\mathrm dx
        +\frac{T_\varepsilon-\varepsilon w}{\hbar}
          \,\mathrm dt.
\end{equation}
Here summation is applied to the coefficients of the
one-form. Since it commutes with differentiation in
$(x,t)$, we obtain
\[
    \mathrm d\beta_\varepsilon
       =s_\alpha(\mathrm d\hat\beta_\varepsilon)=0.
\] 
Since $\Omega_0$ is a polydisc, define
\begin{equation*}
    J_\varepsilon(x,t,\hbar)
       =\int_b^{(x,t)}\beta_\varepsilon,
    \qquad J_\varepsilon(b,\hbar)=0.
\end{equation*}
For each fixed $\hbar\in S_{\alpha,R}$,
$\beta_\varepsilon$ is a holomorphic closed one-form
on the polydisc $\Omega_0$, which is simply connected.
It therefore has a unique holomorphic primitive
vanishing at $b$. Consequently, the integral defining
$J_\varepsilon$ is independent of the path in $\Omega_0$;
in particular, it may be taken along the straight
segment from $b$ to $(x,t)$. It is holomorphic in $(x,t,\hbar)$ and is the
Borel sum of the corresponding formal primitive. Define
\begin{equation}
\label{eq:reconstructed-amplitude}
    a_\varepsilon
       =s_\alpha(\hat b_\varepsilon)\exp J_\varepsilon
    \qquad
    u_\varepsilon=a_\varepsilon e^{\frac{S_\varepsilon}{\hbar}},
    \qquad \widetilde u_\varepsilon=r_\varepsilon u_\varepsilon.
\end{equation}
Applying Borel--Laplace summation to the formal
amplitude formula \eqref{eq:formal-amplitude-primitive},
with basepoint $b_0=b$, and using compatibility with
finite path integration and exponentiation, gives
$a_\varepsilon=s_\alpha(\hat a_\varepsilon)$.

\paragraph{Verification of the full system.}
Equations \eqref{eq:reconstruction-phases} and
\eqref{eq:closed-logarithmic-amplitude-form} give
\begin{equation}
\label{eq:amplitude-transport}
    \hbar\partial_ta_\varepsilon
       =(2r_\varepsilon-\varepsilon w)a_\varepsilon,
    \qquad
    \hbar\partial_xa_\varepsilon
       =(p+4(x-q)r_\varepsilon-\varepsilon y)a_\varepsilon.
\end{equation}
Consequently
\begin{equation*}
    \hbar\partial_tu_\varepsilon=T_\varepsilon u_\varepsilon,
    \qquad
    \hbar\partial_xu_\varepsilon=P_\varepsilon u_\varepsilon.
\end{equation*}
Together with \eqref{eq:auxiliary-Riccati} and
\eqref{eq:exact-projective-space}, this yields
\begin{align*}
    \hbar\partial_tu_\varepsilon&=2\widetilde u_\varepsilon,&
    \hbar\partial_t\widetilde u_\varepsilon&=\left(\frac x2+q\right)u_\varepsilon,\\
    \hbar\partial_xu_\varepsilon&=pu_\varepsilon+4(x-q)\widetilde u_\varepsilon,&
    \hbar\partial_x\widetilde u_\varepsilon
       &=\left(x^2+xq+q^2+\frac t2\right)u_\varepsilon-p\widetilde u_\varepsilon.
\end{align*}
Thus the matrix with columns $(u_\varepsilon,\widetilde u_\varepsilon)^{\mathsf T}$
solves the complete Painlev\'e~I system \eqref{250408155610_bis}. By construction, $J_\varepsilon(b,\hbar)=0$.
Thus \eqref{eq:reconstructed-amplitude} gives $a_\varepsilon(b,\hbar)=s_\alpha(\hat b_\varepsilon)(\hbar)$. Moreover, \eqref{eq:reconstruction-phases} gives
$S_\varepsilon(b)=-\kappa_\varepsilon$.
Since the first component of the corresponding column
is $u_\varepsilon=a_\varepsilon e^{\frac{S_\varepsilon}{\hbar}}$
by \eqref{eq:reconstructed-amplitude}, we obtain $u_\varepsilon(b,\hbar)=s_\alpha(\hat b_\varepsilon)(\hbar)e^{-\frac{\kappa_\varepsilon}{\hbar}}$, which is the prescribed first-row normalisation. The amplitude determinant is $a_+a_-(r_--r_+)$. Its leading coefficient is $-w a_{0,+}a_{0,-}\neq0$.
Uniform asymptotics on a slightly larger polydisc therefore make the
matrix invertible on $\Omega_0\times S_{\alpha,R}$ when $R$ is
sufficiently small. Borel--Laplace summation with tubular continuation
gives \eqref{eq:exact-amplitude-Gevrey}.
Finally, uniqueness follows from the Nevanlinna uniqueness theorem:
an amplitude with zero uniform Gevrey-$1$ expansion on a Borel disk
vanishes identically. See also the uniqueness discussion following
\cite[Theorem~1]{nikolaev_2022_3}. Apply this entrywise to the
difference of two amplitudes with the prescribed phases and expansion.
\end{proof}

\section{Proof of \autoref{prop:vertex-projective-summability}: Projective summability at the double turning point}
\label{app:uniform-Morse-estimates}

This appendix gives the complete proof of
\autoref{prop:vertex-projective-summability}.
We first establish uniform estimates for the rescaled Painlev\'e
functions on a full action strip. We then pull back the Lax system
to the coordinates of \autoref{lem:Morse-family} and verify the
coefficient estimates required by \autoref{lem:projective-strip}.
Finally, we apply that lemma and return to the original variables.
Throughout, the summation direction $\alpha$, the time sector,
the branches, and the point $v_\ast$ are those fixed in
\autoref{prop:vertex-projective-summability}.

\subsection{Uniform Painlev\'e estimates on action strips}

We use the change of independent variables constructed in
\autoref{lem:Morse-family}. In the chosen branches, it is
given locally near the double turning point by
\[
    z=-\frac{48\sqrt3}{5}\tau(t)^{\frac52},
    \qquad
    \zeta=\tau(t)^{\frac54}
        f\!\left(\frac{x}{\tau(t)}\right),
\]
where $f$ is the Morse function introduced in the proof
of that lemma.
The first relation has a holomorphic inverse
$\tau=\tau(z)$ on $\mathbb H_\ast$.
Writing $F$ for the local inverse of $f$, the inverse
change of variables is
\[
    t(z)=-6\tau(z)^2,
    \qquad
    x(\zeta,z)=\tau(z)
        F\!\left(\zeta\tau(z)^{-\frac54}\right),
\]
as in \eqref{eq:explicit-Morse-family}.
Thus $\Phi(\zeta,z)=(x(\zeta,z),t(z))$ is the
parametrisation of \autoref{lem:Morse-family}.
It is locally biholomorphic near $(0,z_\ast)$
and therefore defines a local change of coordinates
between $(x,t)$ and $(\zeta,z)$.

In this subsection, we use only the time change
$t=t(z)$: the Painlev\'e functions $q(t(z),\hbar)$
and $p(t(z),\hbar)$, and hence their rescalings $Q$
and $P$ below, depend on $(z,\hbar)$ and are
independent of $\zeta$.
In the next subsection, we pull back the full Lax
system along $\Phi$, treating $\zeta$ as a parameter
and differentiating with respect to $z$ at fixed
$\zeta$.

\begin{lemma}[Uniform Painlev\'e estimates on action strips]
\label{lem:PI-uniform-action-strip}
For the Borel sums $(q,p)$ of \autoref{250408141643}, set locally
\begin{equation}
\label{eq:scaled-Q-P}
    Q(z,\hbar)=\frac{q(t(z),\hbar)}{\tau(z)},
    \qquad
    P(z,\hbar)=\frac{p(t(z),\hbar)}{2\sqrt3\,\tau(z)^{\frac{3}{2}}}.
\end{equation}
Let $0<d'<d$, with
$\overline{H_d(z_\ast)}\subset\mathbb H_\ast$.
There exist constants $C,K,R>0$ such that $Q$ and $P$ extend
jointly holomorphically to
$H_{d'}(z_\ast)\times S_{\alpha,R}$ and satisfy
\begin{equation}
\label{eq:uniform-scaled-PI-remainders}
\begin{aligned}
    \sup_{z\in H_{d'}(z_\ast)}
    \Bigg(
       \left|Q(z,\hbar)-\sum_{n=0}^{N-1}Q_n(z)\hbar^n\right|
       +\left|P(z,\hbar)-\sum_{n=0}^{N-1}P_n(z)\hbar^n\right|
    \Bigg)
    \leq C K^N N!|\hbar|^N
\end{aligned}
\end{equation}
for every integer $N\geq0$ and every $\hbar\in S_{\alpha,R}$.
Here $(Q_n,P_n)_{n\geq 0}$ are the rescaled formal Painlev\'e coefficients. They satisfy $Q_{2n+1}=P_{2n}=0$ for every $n\geq0$ and $Q_0=1$ and $P_1(z)=\frac{2}{5z}$.
These extensions agree with the original local Borel sums
wherever both are defined, after restriction to a common Borel disk.
\end{lemma}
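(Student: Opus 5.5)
We would prove \autoref{lem:PI-uniform-action-strip} by rewriting the Painlev\'e~I system in the action variable $z$ and the rescaled unknowns $(Q,P)$. This turns it into a nonlinear system of exactly the type covered by \autoref{lem:uniform-strip-summation}, with constant leading linearisation $\operatorname{diag}(1,-1)$ and coefficients bounded on the whole strip. We then apply that lemma on $H_d(z_\ast)$ and identify the result with the Borel sums of \autoref{250408141643} by uniqueness.

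\emph{Step 1 (the rescaled equations).} From \eqref{eq:PI-action-coordinate} we have $\frac{\mathrm d\tau}{\mathrm dz}=\frac{2\tau}{5z}$ and $\frac{\mathrm dz}{\mathrm dt}=2\sqrt3\,\tau^{\frac12}$. Substituting $q=\tau Q$ and $p=2\sqrt3\,\tau^{\frac32}P$ into \eqref{250408131147} gives, after dividing by the common powers of $\tau$,
\[
    \hbar\partial_zQ=P-\frac{2\hbar}{5z}\,Q,
    \qquad
    \hbar\partial_zP=\frac{Q^2-1}{2}-\frac{3\hbar}{5z}\,P .
\]
These equations contain $\tau$ only through $z$, and only via the bounded factor $\frac1z$. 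Indeed, $|z|\geq\delta_\ast>0$ on $H_d(z_\ast)$, as in Step~2 of the proof of \autoref{lem:Morse-family}. Write $Q=1+u$ and pass to $v_1=u+P$, $v_2=u-P$. The system then takes the form $\hbar\partial_zv=F(z,\hbar,v)$ with $F$ polynomial in $v$ and polynomial (of degree one) in $\hbar$. Moreover $F_0(z,0)=0$ and $\partial_vF_0(z,0)=\operatorname{diag}(1,-1)=\Lambda$. Since $F$ is exactly polynomial in $\hbar$ with coefficients bounded on the full strip, the Gevrey remainder hypothesis \eqref{eq:strip-input-remainders} holds trivially and uniformly on $H_d(z_\ast)$; the parameter domain $U$ may be taken to be a point.

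\emph{Step 2 (summation and coefficients).} \autoref{lem:uniform-strip-summation} then yields a unique formal solution $\hat v$ with zero constant term. It is uniformly Borel summable in direction $\alpha$ on $H_{d'}(z_\ast)$, and its sum $v$ is holomorphic on $H_{d'}(z_\ast)\times S_{\alpha,R}$ with factorial remainder bounds uniform on the entire strip. Transforming back gives $Q$ and $P$ together with the estimate \eqref{eq:uniform-scaled-PI-remainders}. The formal rescaled series obtained from \autoref{250408135328} also solve the same equation with $Q_0=1$, $P_0=0$, so they coincide with $\hat v$ by the formal uniqueness in that lemma. The explicit coefficients follow from \eqref{ccoeff}. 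We get $Q_{2n}=c_n\tau^{-5n}$, which is a constant multiple of $z^{-2n}$, and $Q_{2n+1}=0$. The parity of $\hat p=\hbar\partial_t\hat q$ gives $P_{2n}=0$. Finally, $p_1=-\frac1{12\tau}$ together with $\tau^{\frac52}=-\frac{5z}{48\sqrt3}$ gives $P_1=\frac{2}{5z}$.

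\emph{Step 3 (agreement).} On a relatively compact piece of the strip, $t(z)$ ranges over a relatively compact subset of $V$. There both the new functions and the original Borel sums of \autoref{250408141643} (divided by the appropriate powers of $\tau$) admit the same uniform Gevrey-$1$ expansion on a common Borel disk. Nevanlinna uniqueness, as used at the end of the proof of \autoref{lem:uniform-strip-summation}, therefore identifies them, so the new functions are genuine extensions.

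\emph{Main obstacle.} The delicate points are bookkeeping rather than analysis. First, the branch conventions for $\tau^{\frac12}$, $\tau^{\frac32}$ and $\tau^{\frac52}$ must be consistent, so that the equations for $(Q,P)$ are genuinely single-valued on $\mathbb H_\ast$. Second, the direction $\alpha$ must be the one making the strip $H_d(z_\ast)$ parallel to the summation ray, so that the characteristic translations in \autoref{lem:uniform-strip-summation} stay inside the strip. Both hold by construction of $z$ in \eqref{eq:PI-action-coordinate}. We would nevertheless verify them explicitly before invoking the lemma.
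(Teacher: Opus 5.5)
Your proposal is correct and follows essentially the same route as the paper. You derive the scaled system in $z$, use the constant change of unknowns that diagonalises the linearisation at $(Q,P)=(1,0)$ to $\operatorname{diag}(1,-1)$, and apply \autoref{lem:uniform-strip-summation} using the boundedness of $1/z$ on the whole strip; you then identify the result with the rescaled Borel sums via formal uniqueness and uniqueness of the Borel sum (your explicit check of $Q_{2n+1}=P_{2n}=0$ and $P_1=\frac{2}{5z}$ fills in a detail the paper leaves implicit).
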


\begin{proof}
The estimates of \autoref{250408141643}, together with
\eqref{eq:scaled-Q-P}, already give the required estimates
on every compact subset of $H_d(z_\ast)$, with constants
and a Borel disk that may depend on that compact subset.
The main point here is to obtain a common Borel disk and
uniform constants on the entire unbounded strip
$H_{d'}(z_\ast)$.
We establish this additional uniformity using the scaled
differential system for $(Q,P)$.

Recall that the strip was chosen with $0<d<
    \left|\operatorname{Im}(e^{-i\alpha}z_\ast)\right|$,
as in the setup of \autoref{lem:Morse-family}.
With $\delta_\ast
       =\left|\operatorname{Im}(e^{-i\alpha}z_\ast)\right|-d
       >0$,
we have, for every $z\in H_d(z_\ast)$,
\begin{equation}
    |z|
       \geq
       \left|\operatorname{Im}(e^{-i\alpha}z)\right|
       \geq
       \left|\operatorname{Im}(e^{-i\alpha}z_\ast)\right|
       -
       \left|\operatorname{Im}
          \bigl(e^{-i\alpha}(z-z_\ast)\bigr)\right|
       >\delta_\ast.
\end{equation}
Consequently, 
\begin{equation} 0\notin H_d(z_\ast) \,\, \text{ and } \,\, 
\sup_{z\in H_d(z_\ast)}
       \left|\frac1z\right|
       \leq\delta_\ast^{-1}.
\end{equation}
The coordinate identities are
\[
    \partial_tz=2\sqrt3\,\tau^{\frac{1}{2}},
    \qquad \partial_t\tau=-\frac1{12\tau},
    \qquad z=-\frac{48\sqrt3}{5}\tau^{\frac{5}{2}}.
\]
Substitute \eqref{eq:scaled-Q-P} into the Painlev\'e equations
$\hbar\partial_tq=p$ and $\hbar\partial_tp=6q^2+t$.
For the first equation, use
\[
    q_t=\tau_tQ+\tau z_tQ_z,
    \qquad \frac{\tau_t}{\tau z_t}=\frac2{5z}.
\]
For the second equation, use
\[
    p_t=2\sqrt3\,\tau^{\frac{3}{2}}z_tP_z
           +3\sqrt3\,\tau^{\frac{1}{2}}\tau_tP,
    \qquad 2\sqrt3\,\tau^{\frac{3}{2}}z_t=12\tau^2,
\]
together with $6q^2+t=6\tau^2(Q^2-1)$.
The resulting system is
\begin{equation}
\label{eq:scaled-Painleve-system}
    \hbar\partial_zQ=P-\frac{2\hbar}{5z}Q,
    \qquad
    \hbar\partial_zP=\frac{Q^2-1}{2}-\frac{3\hbar}{5z}P.
\end{equation}

Put $U=Q-1$. The coefficient of $\hbar^0$ in the
right-hand side for the column $(U,P)^{\mathsf T}$
is $(P,U+\frac{U^2}{2})^{\mathsf T}$.
It vanishes at $(U,P)=(0,0)$ and has Jacobian $J=\begin{pmatrix}0&1\\1&0\end{pmatrix}$ there. The constant change of unknowns
$\binom{U}{P}
       =\begin{pmatrix}1&1\\1&-1\end{pmatrix}
          \binom{u_+}{u_-}$
diagonalises $J$ to $\operatorname{diag}(1,-1)$.
The transformed right-hand side is polynomial in the
unknowns and affine in $\hbar$, so its expansion in
$\hbar$ has zero remainder after the terms of orders
zero and one. Its dependence on $z$ occurs only
through $1/z$.
Since $1/z$ is uniformly bounded on the entire strip
$H_d(z_\ast)$, the transformed right-hand side satisfies
the Gevrey assumptions of
\autoref{lem:uniform-strip-summation} uniformly for
$z\in H_d(z_\ast)$ and for the unknown vector
$v=(u_+,u_-)^{\mathsf T}$ in any fixed bounded subset
of $\mathbb C^2$. The constants may depend on that
bounded subset, but are independent of $z$ throughout
the full strip.
Applying \autoref{lem:uniform-strip-summation},
undoing the constant change of unknowns, and restoring
$Q=1+U$ gives jointly holomorphic Borel sums on
$H_{d'}(z_\ast)\times S_{\alpha,R}$ satisfying
\eqref{eq:uniform-scaled-PI-remainders}.

The formal solution of \eqref{eq:scaled-Painleve-system} with
$(Q_0,P_0)=(1,0)$ is unique, since $J$ is invertible.
The rescaling of the formal solution in \autoref{250408135328}
satisfies that system with these leading coefficients, so it is
the formal solution just summed.
Locally in $z$, Borel summation commutes with the holomorphic
change $t=t(z)$ and multiplication by the scaling factors in
\eqref{eq:scaled-Q-P}. Uniqueness of the Borel sum therefore
identifies the constructed functions with the rescaled exact
functions of \autoref{250408141643} on their common domains.
This proves both the extension assertion and the uniform estimate
for the selected Painlev\'e solution.
\end{proof}

\begin{remark}
Estimate \eqref{eq:uniform-scaled-PI-remainders} holds
uniformly on the full strip $H_{d'}(z_\ast)$.
No additional factor $|z|^{-N}$ is needed.
In particular, this argument does not require
$\tau(z)$ to remain in a closed subsector of
$\widetilde V$ as $z$ tends to either end of the strip.
\end{remark}

\subsection{The Lax system in the moving coordinates}

Use the functions $x(\zeta,z)$, $t(z)$ and $w(\zeta,z)$ of
\autoref{lem:Morse-family}, defined on
$\Delta_\rho\times H_d(z_\ast)$.
Since this product is simply connected and $w$ is nowhere zero,
choose a holomorphic square root of $w$ and use the eigenbasis
\eqref{eq:regular-eigenbasis}, namely
\[
    E=w^{-\frac{1}{2}}
       \begin{pmatrix}1&1\\\frac{w}{2}&-\frac{w}{2}\end{pmatrix},
    \qquad \det E=-1.
\]
For a local solution column $\mathbf v$ of the full Lax system
\eqref{250408155610}, set
\[
    \chi(\zeta,z,\hbar)
       =E(\zeta,z)^{-1}
          \mathbf v(x(\zeta,z),t(z),\hbar).
\]
At fixed $\zeta$, the chain rule gives
$\hbar\partial_z\mathbf v=(x_zL+t_zA)\mathbf v$.
Consequently,
\begin{equation}
\label{eq:Morse-pulled-connection}
    \hbar\partial_z\chi=M_{\mathrm v}\chi,
    \qquad
    M_{\mathrm v}
       =E^{-1}(x_zL+t_zA)E-\hbar E^{-1}\partial_zE.
\end{equation}
All matrices here are evaluated at $(x,t)=(x(\zeta,z),t(z))$.
Both the definition of $M_{\mathrm v}$ and the subsequent
coefficient estimates are independent of the choice of solution
column used to explain this change of variables. The same eigenbasis $E$ diagonalises the two leading Lax matrices:
\[
    E^{-1}L_0E=\operatorname{diag}(y,-y),
    \qquad E^{-1}A_0E=\operatorname{diag}(w,-w).
\]
Since $\mathrm dZ=y\,\mathrm dx+w\,\mathrm dt$, the action identity
\eqref{eq:Morse-action-coordinate} yields
\[
    x_zy+t_zw
       =\partial_z\bigl(Z_{\tau(z)}(x(\zeta,z))\bigr)
       =\frac12
\]
at fixed $\zeta$. Thus the leading coefficient of $M_{\mathrm v}$ is
$D=\operatorname{diag}\left(\frac{1}{2},-\frac{1}{2}\right)$, including at $\zeta=0$.

\begin{lemma}[Uniform coefficients on the Morse family]
\label{lem:Morse-uniform-coefficients}
For $0<\rho_1<\rho$ and $0<d_1<d$, there exist constants
$C_1,K_1,R_1>0$ such that $M_{\mathrm v}$ is jointly holomorphic on $\Delta_{\rho_1}\times H_{d_1}(z_\ast)
       \times S_{\alpha,R_1}$
and has a uniform Gevrey-$1$ expansion
$\hat M_{\mathrm v}=\underset{n\geq 0}{\sum}M_{\mathrm v,n}\hbar^n$
there, with $M_{\mathrm v,0}=D$.
More explicitly,
\[
    \sup_{\substack{|\zeta|<\rho_1\\z\in H_{d_1}(z_\ast)}}
    \left\|M_{\mathrm v}(\zeta,z,\hbar)
       -\sum_{n=0}^{N-1}M_{\mathrm v,n}(\zeta,z)\hbar^n\right\|
       \leq C_1(K_1)^N N!|\hbar|^N
\]
for every integer $N\geq0$ and every
$\hbar\in S_{\alpha,R_1}$.
\end{lemma}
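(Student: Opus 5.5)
We will compute $M_{\mathrm v}$ explicitly, in the same way as $M_x$ in the proof of \autoref{prop:sectorial-projective-summability}, and express every entry as a finite sum of products of bounded geometric factors with the scaled Painlev\'e functions $Q-1$ and $P$ of \autoref{lem:PI-uniform-action-strip}. The uniform Gevrey estimates then follow from \eqref{eq:uniform-scaled-PI-remainders}, together with stability of such estimates under finite sums and products.

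\emph{Step 1: scaling out $\tau$.} Write $x=\tau s$ with $s=s(\zeta,z)$ and $w=\tau^{\frac12}(s+2)^{\frac12}$. Write $q=\tau Q$ and $p=2\sqrt3\,\tau^{\frac32}P$, with $\Delta=q-\tau=\tau(Q-1)$. The decomposition $L=L_0+L_1$ and $A=A_0+A_1$ gives
\[
E^{-1}(x_zL_0+t_zA_0)E=\operatorname{diag}(x_zy+t_zw,\,-(x_zy+t_zw))=D,
\]
using the identity just before the statement. The remainder consists of $E^{-1}(x_zL_1+t_zA_1)E$, with $L_1,A_1$ linear or quadratic in $(\Delta,p)$, and $-\hbar E^{-1}\partial_zE$. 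From \autoref{lem:Morse-family} we have $t_z=(2\sqrt3\,\tau^{\frac12})^{-1}$. Differentiating $x=\tau F(\zeta\tau^{-\frac54})$ gives $x_z=\tau_z\bigl(F-\tfrac54\eta F'(\eta)\bigr)$ with $\tau_z=\tau/(\tfrac52 z)$, which is of order $\tau^{-\frac32}$. Every entry is therefore a homogeneous expression: a power of $\tau$ multiplied by a holomorphic function of $(s,\eta)$, where $\eta=\zeta\tau^{-\frac54}$, and by a monomial in $Q-1$, $P$, $\hbar$. We will check that the powers of $\tau$ cancel exactly, by weighted homogeneity ($x,q\sim\tau$, $p\sim\tau^{\frac32}$, $w\sim\tau^{\frac12}$, $z\sim\tau^{\frac52}$, $\hbar$ carrying the weight of $z$). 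The only leftover factors are $\tau$-free functions of $(s,\eta)$, and $\hbar$ appears through $\hbar/z$ or $\hbar\,\partial_z\log w$. The latter equals $\hbar$ times $\tau_z/\tau$ times a bounded function, which is again $\hbar/z$ times a bounded function.

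\emph{Step 2: boundedness on the full strip.} By \autoref{lem:Morse-family}, $s$ stays in a fixed small neighbourhood of $1$ and $|\eta|<\eta_0/2$ on $\Delta_{\rho_1}\times H_{d_1}(z_\ast)$. Hence $(s+2)^{\pm\frac12}$, $F$, $F'$ and the other $(s,\eta)$-coefficients are bounded. Moreover $|1/z|\leq\delta_\ast^{-1}$ on the strip, as in the proof of \autoref{lem:PI-uniform-action-strip}. Every coefficient multiplying $Q-1$, $P$, $(Q-1)^2$, or $\hbar$ is thus holomorphic and uniformly bounded. For the $\zeta$-derivative implicit in $\partial_zE$ (through $s_z$), we use the shrink from $\rho$ to $\rho_1$ and from $d$ to $d_1$, together with Cauchy estimates, to keep the bounds uniform.

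\emph{Step 3: Gevrey estimates.} We apply \autoref{lem:PI-uniform-action-strip} with $d'=d_1$. Since $Q$ and $P$ are independent of $\zeta$ and $Q_0=1$, $P_0=0$, the products have uniform Gevrey-$1$ remainders on $H_{d_1}(z_\ast)\times S_{\alpha,R_1}$, and the leading coefficient of $M_{\mathrm v}-D$ vanishes. This yields $M_{\mathrm v,0}=D$ and the stated estimate. Joint holomorphy is immediate from the holomorphy of all factors.

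\emph{Main obstacle.} We expect the main difficulty to be the bookkeeping in Step 1, namely verifying that no positive power of $\tau$ survives in any entry; otherwise the entries would be unbounded at the ends of the strip, where $|\tau|\to\infty$. The strategy is to carry out the homogeneity check on each of the three terms $x_zL_1$, $t_zA_1$ and $\hbar E^{-1}\partial_zE$ separately, conjugated by $E$. We will use the explicit form \eqref{eq:spatial-eigenbasis-remainder} as a template, with $2y$ replaced by the scalar factors $x_z$ and $t_z$.
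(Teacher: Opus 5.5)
Your proposal is correct and follows essentially the paper's proof. The paper also writes $M_{\mathrm v}-D=x_zE^{-1}(L-L_0)E+t_zE^{-1}(A-A_0)E+\frac{\hbar w_z}{2w}\begin{pmatrix}0&1\\1&0\end{pmatrix}$ and substitutes $\Delta=\tau(Q-1)$, $p=2\sqrt3\,\tau^{\frac32}P$, $x=\tau s$, $w=\tau^{\frac12}(s+2)^{\frac12}$ so that all powers of $\tau$ cancel. It then bounds the remaining $s$-dependent coefficients on a compact set around $s=1$, uses $|1/z|\leq\delta_\ast^{-1}$, and concludes from \autoref{lem:PI-uniform-action-strip}.

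The differences are cosmetic. The paper checks the cancellation by explicit multiplication rather than by your weighted-homogeneity count. Your appeal to Cauchy estimates for a ``$\zeta$-derivative'' in Step~2 is unnecessary: at fixed $\zeta$ one has explicitly $s_z=-\frac{f(s)}{2zf'(s)}$, which is already $1/z$ times a bounded function.
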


\begin{proof}Let $f$ be the Morse function constructed in
\autoref{lem:Morse-family}, and let $F$ be its local
inverse, defined for $|\eta|<\eta_0$.
Put
\[
    s=s(\zeta,z)=\frac{x(\zeta,z)}{\tau(z)},
    \qquad
    v(s)\coloneqq s-\frac54\frac{f(s)}{f'(s)}.
\]
All derivatives with respect to $\tau$ or $z$ below
are taken with $\zeta$ fixed.
By \eqref{eq:explicit-Morse-family} and the bounds
$|\tau(z)|\geq\mu$ and
$\rho\mu^{-\frac54}<\frac{\eta_0}{2}$, we have
\[
    s(\zeta,z)\in K_s
       \coloneqq
       F\!\left(
          \left\{\eta\in\mathbb C:
                 |\eta|\leq\frac{\eta_0}{2}\right\}
       \right)
\]
throughout $\Delta_{\rho_1}\times H_{d_1}(z_\ast)$.
The set $K_s$ is compact and lies in the neighbourhood
of $1$ where $f$ and the chosen branch of $(s+2)^{\frac{1}{2}}$
are holomorphic, with $f'(s)\neq0$ and $s+2\neq0$.
Consequently, $v(s)$, $\frac{f(s)}{f'(s)}$ and
$(s+2)^{\pm \frac{1}{2}}$ are uniformly bounded on $K_s$,
independently of $(\zeta,z)$.
Differentiating
$\zeta=\tau^{\frac{5}{4}}f(s)$ gives
\begin{equation}
\label{eq:Morse-coordinate-derivatives}
    x_\tau=v(s),\qquad
    x_z=-\frac{v(s)}{24\sqrt3\,\tau^{\frac{3}{2}}},\qquad
    t_z=\frac1{2\sqrt3\,\tau^{\frac{1}{2}}},\qquad
    s_z=-\frac{f(s)}{2zf'(s)}.
\end{equation}
In particular, the functions of $s$ in these formulas are
uniformly bounded on the stated product.
Since $w=\tau^{\frac{1}{2}}(s+2)^{\frac{1}{2}}$, we also obtain
\begin{equation}
\label{eq:log-w-Morse}
    \frac{w_z}{w}
       =\frac1{5z}-\frac{f(s)}{4z(s+2)f'(s)}.
\end{equation}

Put $\Delta=q-\tau$.
Direct multiplication using \eqref{eq:regular-eigenbasis} gives
\begin{align}
    E^{-1}(L-L_0)E
       &=\begin{pmatrix}
          \frac{\Delta^2}{w} & p+\frac{\Delta(2x+q+3\tau)}{w}\\
          p-\frac{\Delta(2x+q+3\tau)}{w} & -\frac{\Delta^2}{w}
         \end{pmatrix}
         \eqqcolon D_x,
         \label{eq:delta-L-eigenbasis}\\
    E^{-1}(A-A_0)E
       &=\frac{\Delta}{w}
           \begin{pmatrix}1&1\\-1&-1\end{pmatrix}
         \eqqcolon D_t,\qquad
    E^{-1}E_z
       =-\frac{w_z}{2w}
           \begin{pmatrix}0&1\\1&0\end{pmatrix}.
         \label{eq:delta-A-eigenbasis}
\end{align}
Together with the leading coefficient computed above, this yields
\begin{equation}
\label{eq:Morse-remainder-formula}
    M_{\mathrm v}-D
       =x_zD_x+t_zD_t
          +\frac{\hbar w_z}{2w}
              \begin{pmatrix}0&1\\1&0\end{pmatrix}.
\end{equation}

To see why this expression is uniformly controlled even at the
strip ends, set $\theta(s)=(s+2)^{\frac{1}{2}}$ and substitute
$\Delta=\tau(Q-1)$, $p=2\sqrt3\,\tau^{\frac{3}{2}}P$,
$x=\tau s$ and $w=\tau^{\frac{1}{2}}\theta(s)$.
All powers of $\tau$ cancel:
\[
\begin{aligned}
    x_zD_x
       &=-\frac{v(s)}{24\sqrt3}
          \begin{pmatrix}
           \frac{(Q-1)^2}{\theta(s)}
             &2\sqrt3 P+\frac{(Q-1)(2s+Q+3)}{\theta(s)}\\
           2\sqrt3 P-\frac{(Q-1)(2s+Q+3)}{\theta(s)}
             &-\frac{(Q-1)^2}{\theta(s)}
          \end{pmatrix},\\
    t_zD_t
       &=\frac{Q-1}{2\sqrt3\,\theta(s)}
           \begin{pmatrix}1&1\\-1&-1\end{pmatrix}.
\end{aligned}
\]
The gauge term in \eqref{eq:Morse-remainder-formula} is $\frac{\hbar}{z}$
times a bounded holomorphic matrix, by \eqref{eq:log-w-Morse}.
Thus every entry of $M_{\mathrm v}-D$ is a polynomial in
$Q-1$, $P$ and $\frac{\hbar}{z}$, with bounded holomorphic coefficients
depending on $s$. In particular,
\[
    \|M_{\mathrm v}-D\|
       \leq C\left(|Q-1|+|P|+|Q-1|^2+\left|\frac{\hbar}{z}\right|\right).
\]
By \autoref{lem:PI-uniform-action-strip}, $Q$ and $P$ have
uniform Gevrey-$1$ expansions on $H_{d_1}(z_\ast)$ with a common
Borel disk. Finite products and bounded holomorphic multipliers
preserve these remainder estimates, after decreasing the disk
parameter if necessary. Since $\frac{1}{z}$ is uniformly bounded, the
displayed formulas prove the asserted estimates and joint
holomorphy. Their leading remainder coefficient vanishes because
$Q_0=1$ and $P_0=0$.
\end{proof}

\subsection{Proof of \autoref{prop:vertex-projective-summability}}

\begin{proof}
Choose $0<\rho_2<\rho_1<\rho$ and $0<d_2<d_1<d$.
By \autoref{lem:Morse-uniform-coefficients}, the pulled-back
system \eqref{eq:Morse-pulled-connection} has the form
\[
    \hbar\partial_z\chi=(D+B)\chi,
    \qquad B=M_{\mathrm v}-D,\qquad B_0=0,
\]
with the uniform coefficient estimates required by
\autoref{lem:projective-strip}.
We apply \autoref{lem:projective-strip} with $z$ as the
differential variable and $\zeta$ as the parameter.
More explicitly, the correspondence with the notation of \autoref{lem:projective-strip} is $(m,k,U,U_0,z_0,d,d',R_0)=(1,\zeta,\Delta_{\rho_1},\Delta_{\rho_2},z_\ast,d_1,d_2,R_1)$. The summation direction remains $\alpha$, and the matrix $B$ in \autoref{lem:projective-strip} is $M_{\mathrm v}-D$.
Indeed, \autoref{lem:Morse-uniform-coefficients}
gives its joint holomorphy and uniform Gevrey estimates on
$\Delta_{\rho_1}\times H_{d_1}(z_\ast)\times S_{\alpha,R_1}$,
with $B_0=0$.
Since $\Delta_{\rho_2}\Subset\Delta_{\rho_1}$ and
$d_2<d_1$, all hypotheses of
\autoref{lem:projective-strip} are satisfied.
\autoref{lem:projective-strip} supplies unique formal eigenbasis ratios $\hat g_\pm$
with zero constant coefficient and their Borel sums $g_\pm$,
jointly holomorphic on $\Delta_{\rho_2}\times H_{d_2}(z_\ast)
       \times S_{\alpha,R_2}$
for some $0<R_2\leq R_1$.
The Borel tubes, exponential bounds and Gevrey remainder
constants are uniform on this product, and $g_\pm=O(\hbar)$
uniformly there.

We identify these series with the directions of the formal
columns from \autoref{241113190220}.
Pulling back such a column by $\Phi$ and multiplying by $E^{-1}$
gives a formal column of \eqref{eq:Morse-pulled-connection}.
For the positive sign its leading direction is $(1,0)^{\mathsf T}$,
and for the negative sign it is $(0,1)^{\mathsf T}$.
Its corresponding eigenbasis ratio therefore has zero constant
coefficient and solves the same formal Riccati equation as
$\hat g_\varepsilon$.
The linearised leading coefficients of these two equations
are $-1$ and $1$, respectively, so coefficientwise recursion
determines their formal solutions uniquely.
It follows that these ratios are precisely $\hat g_\varepsilon$.
Returning to the original basis gives $\hat r_\varepsilon
       \bigl(x(\zeta,z),t(z);\hbar\bigr)
       =
       \varepsilon\frac{w(\zeta,z)}2
       \frac{1-\hat g_\varepsilon(\zeta,z;\hbar)}
            {1+\hat g_\varepsilon(\zeta,z;\hbar)}$,
as in \eqref{eq:projective-back-transform}.

After decreasing $R_2$, the uniform estimate $g_\pm=O(\hbar)$
ensures $|g_\pm|<\frac{1}{2}$ throughout the product.
Thus $1+g_\pm$ does not vanish, and we may define
\[
    \widetilde r_\varepsilon(\zeta,z,\hbar)
       =\varepsilon\frac{w(\zeta,z)}2\,
          \frac{1-g_\varepsilon(\zeta,z,\hbar)}
               {1+g_\varepsilon(\zeta,z,\hbar)}.
\]
Stability of Borel summation under products and reciprocals
of formal units shows that this is the locally uniform
Borel sum, with respect to $\hbar$, of
$\hat r_\varepsilon\bigl(x(\zeta,z),t(z);\hbar\bigr)$.
The factor $w$ is holomorphic and bounded on each compact
coordinate neighbourhood, which is sufficient for this local
assertion.

By \autoref{lem:Morse-family}, $\Phi$ is biholomorphic from a
neighbourhood $W$ of $(0,z_\ast)$ onto a full neighbourhood of
$v_\ast$. We may choose $W$ inside
$\Delta_{\rho_2}\times H_{d_2}(z_\ast)$.
Since $\Phi$ is independent of $\hbar$, composing with its inverse
commutes with the Borel transform and the Laplace integral.
Consequently, $r_\varepsilon(x,t,\hbar)
       =\widetilde r_\varepsilon
          \bigl(\Phi^{-1}(x,t),\hbar\bigr)$
is the locally uniform Borel sum of $\hat r_\varepsilon$ on
$\Phi(W)$.

Choose a polydisc $\Omega$ centred at $v_\ast$ with
$\overline\Omega\subset\Phi(W)$.
Local uniform summability and compactness of $\overline\Omega$
give a common tube $\Xi_{\alpha,\delta}$ and constants $C,c>0$
such that the continued Borel transforms $\varphi_\varepsilon$
are jointly holomorphic on $\Omega\times\Xi_{\alpha,\delta}$ and
satisfy $\sup_{(x,t)\in\Omega}
       |\varphi_\varepsilon(x,t,\xi)|
       \leq C e^{c|\xi|}$
for both signs.
Choose $0<R\leq R_2$ with $R^{-1}>c$.
The Borel--Laplace integral in the proposition then converges
normally on $\Omega\times S_{\alpha,R}$ and defines jointly
holomorphic functions there.
After a further decrease of $R$ if needed, their uniform
Gevrey-$1$ remainder estimates hold throughout $\Omega$, by
the product and reciprocal estimates used above on a slightly
larger neighbourhood of $\overline\Omega$.
Finally, $\Omega$ is a full neighbourhood of $v_\ast$;
it therefore crosses all four incident Stokes curves.
The same Borel transforms and Laplace integrals apply on all
sides of these curves, as asserted.
\end{proof}

\section{An Order-by-Order Matrix Construction of Formal Solutions}
\label{app:matrix-formal-construction}

This appendix gives the full matrix proof of
\autoref{241113190220} and \autoref{250716102028}, independently of
the Riccati reduction in \autoref{250109110415}. We retain the seven-step gauge
reduction, the detailed first-order example, and the coefficientwise
compatibility calculations. The last subsection records how the
matrix induction extends to several deformation variables.
The factorial estimates remain in \autoref{app:factorial-type-proof}.

\subsection{Existence and Uniqueness}
\label{Dold:250408124924}
\label{app:matrix-classical-gauge}

This subsection is devoted to the proof of \autoref{250716102028}, by an explicit order-by-order construction of the formal WKB columns $\hat v_\pm$. The column statement \autoref{241113190220} then follows by taking either column of the resulting wave matrix.
The proof is organised into three main stages.
The first stage --- summarised in \autoref{Dold:250109115117} --- is to make a suitable gauge transformation in order to reduce the construction of the transseries solution $\hat{\Psi}$ to the question of constructing an (ordinary) formal power series solution $\hat{\Phi}$ to a related differential system.
The second stage --- summarised in \autoref{Dold:250417183730} --- is to construct this solution $\hat{\Phi}$ recursively order-by-order in $\hbar$.
The spatial and time equations give two algebraic constraints on each off-diagonal coefficient and two differential constraints on each diagonal coefficient.
The third and final stage --- summarised in \autoref{Dold:250109110444} and \autoref{Dold:250123150709}, proved in \autoref{Dold:app:compatibility-identities} --- verifies the compatibility of these constraints. The lemmas use only the already constructed orders; their precise inductive hypotheses are stated below. 
The first stage in the proof of \autoref*{250716102028} is to reduce the construction to \autoref{Dold:250417183730}.
This stage is organised into seven steps.
In Step 1, we plug the \textit{matrix WKB ansatz} \eqref{250421155248} into the formal Painlevé I system \eqref{250408174515} to separate the problem into equations for $\hat{\mathit{E}}$ and equations for $\mathit{S}$.
In Step 2, we solve for the \textit{WKB exponent} $\mathit{S}$.
In Steps 3 and 4, we apply a gauge transformation to the system for $\hat{\mathit{E}}$ that diagonalises the leading-order part, which results in a related and simplified system for $\hat{\Phi}$ as summarised in \autoref{Dold:250109115117}.
In Steps 5 and 6, we solve for the leading-order part of $\hat{\Phi}$ and hence $\hat{\mathit{E}}$.
Finally, in Step 7 we state the existence and uniqueness result for $\hat{\Phi}$ (\autoref{Dold:250417183730}) and explain how it implies \autoref*{241113190220}.
Fix a sufficiently small polydisc $W$ about $(x_0,t_0)$ on which the chosen branches $\tau,w$ are holomorphic and nonzero. All coefficients below are constructed on this same polydisc.
For each sign, write the prescribed nonzero normalisation as in \autoref{241113190220}:
\begin{equation}
\label{Dold:250110171027}
    \hat c_\pm(\hbar)
        =
        \hbar^{m_\pm}\hat{b}_\pm(\hbar)e^{-\frac{\kappa_\pm}{\hbar}},
    \qquad
    m_\pm\in\mathbb{Z}_{\geq0}, \quad \text{where} \quad 
    \hat{b}_\pm(\hbar)
        =
        \sum_{k=0}^{\infty}b_{\pm,k}\hbar^k,
    \qquad
    b_{\pm,0}\neq0.
\end{equation}
Since multiplication by the formal constant $\hbar^{m_\pm}$ preserves the formal Painlevé~I system, it is enough first to construct the solutions normalised by
\begin{equation}
    \hat{b}_\pm(\hbar)e^{-\frac{\kappa_\pm}{\hbar}}.
\end{equation}
The solutions with the prescribed normalisations $\hat c_\pm(\hbar)$ will then be obtained by multiplying the corresponding columns by $\hbar^{m_\pm}$.

\paragraph*{Step 1. The matrix WKB ansatz.}
If we plug the ansatz \eqref{250421155248} into the formal Painlevé I system \eqref{250408174515} and multiply through by $e^{-\frac{\mathit{S}}{\hbar}}$ on the right, then we find the following new formal system for $\hat{\mathit{E}}$ and $\mathit{S}$:
\begin{equation}
\label{Dold:241112153348}
\begin{cases}
	\hbar \partial_x \hat{\mathit{E}} + \hat{\mathit{E}} \partial_x \mathit{S} = \hat{\mathit{L}} \hat{\mathit{E}}
\\
	\hbar \partial_t \hat{\mathit{E}} + \hat{\mathit{E}} \partial_t \mathit{S} = \hat{\mathit{A}} \hat{\mathit{E}}
.
\end{cases}
\end{equation}
We will first examine these equations at the zeroth order in $\hbar$ to determine $\mathit{S}$ and the leading-order $\mathit{E}_0$ of $\hat{\mathit{E}}$:
\begin{equation}
\label{Dold:241216203759}
\begin{cases}
	\mathit{E}_0 \partial_x \mathit{S} = \mathit{L}_0 \mathit{E}_0
\\	\mathit{E}_0 \partial_t \mathit{S} = \mathit{A}_0 \mathit{E}_0
.
\end{cases}
\end{equation}
Meanwhile, the formal zero-curvature equation implies $[\mathit{L}_0, \mathit{A}_0] = 0$ which means the matrices $\mathit{L}_0$ and $\mathit{A}_0$ have common eigenvectors.
So the most straightforward way to solve equations \eqref{Dold:241216203759} is to take $\mathit{E}_0$ to be a matrix of common eigenvectors of $\mathit{L}_0$ and $\mathit{A}_0$, which forces the diagonal matrix $\mathit{S}$ to satisfy the following differential equation:
\begin{equation}
\label{Dold:241113155001}
	\,\mathrm{d} \mathit{S} = \mathit{Y} \,\mathrm{d}{x} + \mathit{W} \,\mathrm{d}{t}
\qquad\text{i.e.}\qquad
    \partial_x \mathit{S} = \mathit{Y}
\quad\text{and}\quad
    \partial_t \mathit{S} = \mathit{W},
\end{equation}
where $\mathit{Y}$ and $\mathit{W}$ are the diagonal matrices of eigenvalues of $\mathit{L}_0$ and $\mathit{A}_0$.
Namely,
\begin{align}
\label{Dold:241112152831}
    \mathit{L}_0 \mathit{E}_0 &= \mathit{E}_0 \mathit{Y}
\quad\text{where}\quad
    \mathit{Y} = \mathit{Y} (x,t) \coloneqq \begin{pmatrix} y (x,t) &0 \\ 0& - y (x,t) \end{pmatrix}
,
\\
\label{Dold:241125133953}
    \mathit{A}_0 \mathit{E}_0 &= \mathit{E}_0 \mathit{W}
\quad\text{where}\quad
    \mathit{W} = \mathit{W} (x,t) \coloneqq \begin{pmatrix} w (x,t) &0 \\0 & - w (x,t) \end{pmatrix}
,
\end{align}
where $y$ and $w$ are the local holomorphic solutions (chosen as part of the hypothesis) of
\begin{equation}
\label{Dold:241216205410}
	y^2 = 4 (x - \tau)^2 (x + 2\tau)
\quad\text{and}\quad
	w^2 = x + 2\tau
\quad\text{such that}\quad
	y = 2 (x - \tau) w
.
\end{equation}
In particular, the first row of
$\mathit A_0\mathit E_0=\mathit E_0\mathit W$
implies that the components of
\begin{equation}\label{Dold:DefE0}
	\mathit{E}_0 = \begin{pmatrix} e_+ & e_- \\ e'_+ & e'_- \end{pmatrix}
\end{equation}
must satisfy the following relations (here the primes label the second components, rather than derivatives):
\begin{equation}
\label{Dold:241112160711}
	e'_\pm =  \pm \tfrac{1}{2} w e_\pm
.
\end{equation}
This relation holds throughout the chosen polydisc, including
at $x=\tau$, without division by $x-\tau$.
In turn, it implies the identities
\begin{equation}
\label{Dold:241112201954}
    \det\mathit{E}_0=-w e_+e_-,
    \qquad
    e_+e'_-+e'_+e_-=0.
\end{equation}
Conversely, if $\mathit{S}$ is diagonal and
$\mathit{E}_0$ is invertible, then the leading-order equations
imply that the columns of $\mathit{E}_0$ are common eigenvectors
of $\mathit{L}_0$ and $\mathit{A}_0$.

\paragraph*{Step 2. Solving for the WKB exponent.}
The differential equation \eqref{Dold:241113155001} is solvable because the diagonal matrices $\mathit{Y}$ and $\mathit{W}$ satisfy the identity
\begin{equation}
\label{Dold:250129191049}
	\partial_t \mathit{Y} = \partial_x \mathit{W}
.
\end{equation}
This identity can be verified directly by differentiating $y (x,t) = 2 (x-\tau)\sqrt{x+2\tau}$ and $w (x,t) = \sqrt{x + 2\tau}$, and using the identity $-12 \tau \partial_t \tau = 1$.
Consequently, for any pair of constants $\kappa_\pm \in \mathbb{C}$, there is a unique solution $\mathit{S}$ satisfying the initial condition
\begin{equation}
\label{Dold:250110164958}
	\mathit{S} (x_0, t_0) = \operatorname{diag} (-\kappa_+, -\kappa_-)
.
\end{equation}

\paragraph*{Step 3. The classical eigenbasis gauge.}
Next, suppose $\mathit{E}_0$ is any matrix of common eigenvectors of the leading-order matrices $\mathit{L}_0$ and $\mathit{A}_0$.
Let us use it to gauge transform the system \eqref{Dold:241112153348}; i.e., we make the following change of the unknown variable:
\begin{equation}
\label{Dold:250415205035}
	\hat{\mathit{E}} \to \hat{\Phi}
\qquad\text{given by}\qquad
	\hat{\mathit{E}} \coloneqq \mathit{E}_0 \hat{\Phi}
.
\end{equation}
In this gauge, the differential system becomes diagonal to leading-order in $\hbar$ and it can be written like so:
\begin{equation}
\label{Dold:250417165300}
\begin{cases}
	\hbar \partial_x \hat{\Phi} - [\mathit{Y}, \hat{\Phi}] 
		= \Big( \mathit{E}_0^{-1} \hat{\mathit{L}} \mathit{E}_0 - \hbar \mathit{E}_0^{-1} \partial_x \mathit{E}_0 - \mathit{Y} \Big) \hat{\Phi}
,
\\
	\hbar \partial_t \hat{\Phi} - [\mathit{W}, \hat{\Phi}]
		= \Big( \mathit{E}_0^{-1} \hat{\mathit{A}} \mathit{E}_0 - \hbar \mathit{E}_0^{-1} \partial_t \mathit{E}_0 - \mathit{W} \Big) \hat{\Phi}
.
\end{cases}
\end{equation}
Let us summarise the transformations we have made so far in the following Lemma.

\begin{lemma}
\label{Dold:250109115117}
Let $\mathit{E}_0$ be an invertible matrix of common eigenvectors of $\mathit{L}_0, \mathit{A}_0$, and let $\mathit{S}$ be any solution of the differential equation \eqref{Dold:241113155001}, both defined and holomorphic on some domain $W \subset \mathbb{C}^2_{xt}$.
Then any invertible formal matrix $\hat\Phi\in\operatorname{GL}_2(\mathcal O(W)[[\hbar]])$ solving \eqref{Dold:250417165300} determines a formal WKB fundamental matrix of \eqref{250408174515}, in the finite-exponential algebra of \autoref{250410111125}, by setting
\begin{equation}
\label{Dold:250418093048}
	\hat{\Psi} (x,t,\hbar) = \mathit{E}_0 (x,t) \hat{\Phi} (x,t,\hbar) e^{\frac{\mathit{S} (x,t)}{\hbar}}
.
\end{equation}
\end{lemma}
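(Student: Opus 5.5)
The argument is a direct substitution that reverses the gauge transformation and the exponential factorisation of Steps 1 and 3. Given $\hat\Phi$ solving \eqref{Dold:250417165300}, set $\hat E\coloneqq E_0\hat\Phi$. Since $E_0$ is holomorphic and invertible on $W$, and $\hat\Phi$ is invertible in $\operatorname{GL}_2(\mathcal O(W)[[\hbar]])$, the series $\hat E$ is again an invertible formal matrix with holomorphic coefficients. Its inverse is $\hat\Phi^{-1}E_0^{-1}$.

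\textbf{Recovering \eqref{Dold:241112153348}.} Multiply the first equation of \eqref{Dold:250417165300} on the left by $E_0$. By the product rule,
\[
\hbar\partial_x\hat E=\hbar(\partial_xE_0)\hat\Phi+E_0\hbar\partial_x\hat\Phi .
\]
Substituting the equation for $\hbar\partial_x\hat\Phi$, the terms $\pm\hbar(\partial_xE_0)\hat\Phi$ cancel. The $E_0Y\hat\Phi$ terms cancel against the $E_0[Y,\hat\Phi]$ contribution. The result is $\hbar\partial_x\hat E=\hat L\hat E-\hat EY$. By \eqref{Dold:241113155001} we have $Y=\partial_xS$, so this is exactly the first line of \eqref{Dold:241112153348}. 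The time equation is identical, with $(L,Y)$ replaced by $(A,W)$ and using $\partial_tS=W$.

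\textbf{Recovering the Lax system.} Define $\hat\Psi=\hat Ee^{S/\hbar}$ in the finite-exponential algebra of \autoref{250410111125}. Since $S$ is diagonal, its columns are $\hat E_{\cdot,\pm}e^{S_\pm/\hbar}$. Rule \eqref{xteq} gives
\[
\hbar\partial_x\hat\Psi=\bigl(\hbar\partial_x\hat E+\hat E\,\partial_xS\bigr)e^{S/\hbar},
\]
and the diagonal $\partial_xS$ acts column by column. By the identity just proved, the right-hand side equals $\hat L\hat E e^{S/\hbar}=\hat L\hat\Psi$. The same holds for $t$. Hence $\hat\Psi$ solves \eqref{250408174515}.

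\textbf{Fundamental property.} $\hat\Psi$ is invertible in the Laurent-localised algebra, with inverse $e^{-S/\hbar}\hat\Phi^{-1}E_0^{-1}$. It is therefore a formal fundamental matrix in the sense used in \autoref{250109214318}.

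No step is a real obstacle. The only care point is bookkeeping in the commutator: the sign convention $[Y,\hat\Phi]=Y\hat\Phi-\hat\Phi Y$ must be what produces $-\hat EY$, and the exponential must act through \eqref{xteq} rather than through any analytic meaning.
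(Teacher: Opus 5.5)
Your proof is correct and follows the paper's argument: substitute $\hat\Psi=E_0\hat\Phi e^{S/\hbar}$, use $\partial_xS=Y$ and $\partial_tS=W$ to pass between \eqref{Dold:250417165300} and the Lax system, and deduce fundamentality from the invertibility of the three factors. One minor point: you do not need the Laurent localisation for the inverse, since $\hat\Phi$ is already a unit in $\operatorname{GL}_2(\mathcal O(W)[[\hbar]])$ and $E_0$ is invertible.
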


\begin{proof}
Substitute \eqref{Dold:250418093048} into both equations of the formal
Painlev\'e~I system.  Using
$\partial_x\mathit S=\mathit Y$, $\partial_t\mathit S=\mathit W$,
the two resulting equations are exactly \eqref{Dold:250417165300}.
Since $\mathit E_0$, $\hat\Phi$, and $e^{\frac{\mathit S}{\hbar}}$ are
invertible, the resulting matrix $\hat\Psi$ is fundamental.
\end{proof}

\paragraph*{Step 4. Simplifying notation.}
Our focus from now on is solving the differential system \eqref{Dold:250417165300} for $\hat{\Phi}$ which we do order-by-order in $\hbar$.
To this end, let us simplify the notation in \eqref{Dold:250417165300} to enhance readability.
However, instead of introducing new notation for the matrices in brackets, let us re-define the symbols ``$\hat{\mathit{L}}$'' and ``$\hat{\mathit{A}}$'' only for the remainder of this construction and its compatibility proofs. The matrices in the main text retain their original meaning.
Thus, we put
\begin{equation}
\label{Dold:250119122930}
\begin{aligned}
	\hat{\mathit{L}} = \hat{\mathit{L}}^{\scriptscriptstyle \text{new}} &\coloneqq \mathit{E}_0^{-1} \hat{\mathit{L}}^{\scriptscriptstyle \text{old}} \mathit{E}_0 - \hbar \mathit{E}_0^{-1} \partial_x \mathit{E}_0 - \mathit{Y}
,
\\
	\hat{\mathit{A}} = \hat{\mathit{A}}^{\scriptscriptstyle \text{new}} &\coloneqq \mathit{E}_0^{-1} \hat{\mathit{A}}^{\scriptscriptstyle \text{old}} \mathit{E}_0 - \hbar \mathit{E}_0^{-1} \partial_t \mathit{E}_0 - \mathit{W}
.
\end{aligned}
\end{equation}
Explicitly, the new $\hat{\mathit{L}}$ and $\hat{\mathit{A}}$ are power series in $\hbar$ with matrix-valued function coefficients related to those of $\hat{\mathit{L}}^{\scriptscriptstyle \text{old}}$ and $\hat{\mathit{A}}^{\scriptscriptstyle \text{old}}$ by
\begin{equation}
\label{Dold:250203162604}
\begin{gathered}
	\mathit{L}_0 = 0
,
\qquad	\mathit{L}_1 = \mathit{E}_0^{-1} \mathit{L}^{\scriptscriptstyle \text{old}}_1 \mathit{E}_0 - \mathit{E}_0^{-1} \partial_x \mathit{E}_0
,
\qquad	\mathit{L}_{k} = \mathit{E}_0^{-1} \mathit{L}^{\scriptscriptstyle \text{old}}_{k} \mathit{E}_0
;
\\
	\mathit{A}_0 = 0
,
\qquad	\mathit{A}_1 = \mathit{E}_0^{-1} \mathit{A}^{\scriptscriptstyle \text{old}}_1 \mathit{E}_0 - \mathit{E}_0^{-1} \partial_t \mathit{E}_0
,
\qquad	\mathit{A}_{k} = \mathit{E}_0^{-1} \mathit{A}^{\scriptscriptstyle \text{old}}_{k} \mathit{E}_0
,
\end{gathered}
\qquad (k \geq 2)
.
\end{equation}
In this notation, system \eqref{Dold:250417165300} reads
\begin{equation}
\label{Dold:250417173233}
\begin{cases}
	\hbar \partial_x \hat{\Phi} - [\mathit{Y}, \hat{\Phi}] 
		= \hat{\mathit{L}} \hat{\Phi}
,
\\
	\hbar \partial_t \hat{\Phi} - [\mathit{W}, \hat{\Phi}]
		= \hat{\mathit{A}} \hat{\Phi}
.
\end{cases}
\end{equation}
Similarly, the formal zero-curvature equation \eqref{250408171411} leads --- after some manipulation, conjugation by $\mathit{E}_0$, and repeated use of identities \eqref{Dold:241112152831} and \eqref{Dold:241125133953} --- to the identity
\begin{equation}
\label{Dold:250109114843}
	\Big( \hbar \partial_t \hat{\mathit{L}} - [\mathit{W}, \hat{\mathit{L}}] \Big)
	- \Big( \hbar \partial_x \hat{\mathit{A}} - [\mathit{Y}, \hat{\mathit{A}}] \Big)
	+ [\hat{\mathit{L}}, \hat{\mathit{A}}] = 0
.
\end{equation}

\paragraph*{Step 5. Solving for $\Phi_0$.}
Examining the system \eqref{Dold:250417173233} at the leading-order in $\hbar$, we find that $\Phi_0$ must satisfy the following pair of linear algebraic equations:
\begin{equation}
\label{Dold:250418101326}
\begin{cases}
	[\mathit{Y}, \Phi_0] = 0
,
\\
	[\mathit{W}, \Phi_0] = 0
.
\end{cases}
\end{equation}
Since $\mathit{W}=\operatorname{diag}(w,-w)$ has distinct eigenvalues on $\mathring{\mathsf{\Sigma}}$, the equation $[\mathit{W},\Phi_0]=0$ implies that $\Phi_0$ is diagonal.
But if $\Phi_0$ is diagonal, then $\mathit{E}_0 \Phi_0$ is also a matrix of common eigenvectors of $\mathit{L}^{\scriptscriptstyle \text{old}}_0, \mathit{A}^{\scriptscriptstyle \text{old}}_0$.
Thus, let us assume that $\Phi_0$ is the identity matrix $\mathit{I}$, or in other words let us search for a solution $\hat{\Phi}$ of \eqref{Dold:250417173233} in the following form:
\begin{equation}
\label{Dold:250418102713}
	\hat{\Phi} (x,t,\hbar) = \mathit{I} + \sum_{k=1}^\infty \Phi_k (x,t) \hbar^k
.
\end{equation}
Then this assumption puts a constraint on the eigenvector matrix $\mathit{E}_0$ which we investigate in the next step.

\paragraph*{Step 6. Solving for the eigenvector matrix.}
Examining the new system \eqref{Dold:250417173233} at the first order in $\hbar$ under the assumption that $\Phi_0 = \mathit{I}$, we find the following pair of linear algebraic equations for $\Phi_1$:
\begin{equation}
\label{Dold:241112183751}
\begin{cases}
	[\mathit{Y}, \Phi_1] = - \mathit{L}_1
,
\\
	[\mathit{W}, \Phi_1] = - \mathit{A}_1
.
\end{cases}
\end{equation}
Since $\mathit{Y}$ and $\mathit{W}$ are diagonal matrices, the commutators in these equations are strictly off-diagonal.
This has a number of important consequences.
Firstly, if a solution $\Phi_1$ of \eqref{Dold:241112183751} exists, then only its off-diagonal part $\operatorname{off} \Phi_1$ is determined uniquely.
Secondly, for \eqref{Dold:241112183751} to be well-posed, $\mathit{L}_1$ and $\mathit{A}_1$ must likewise be strictly off-diagonal; i.e.,
\begin{equation}
\label{Dold:250122133055}
	\operatorname{diag} \mathit{L}_1 = \operatorname{diag} \mathit{A}_1 = 0 
.
\end{equation}
Here, we are using the following notation for any $2 \!\times\! 2$ matrix $\mathit{M} = [ \mathit{M}_{ij} ]$:
\begin{equation}
	\operatorname{diag} \mathit{M} \coloneqq \begin{pmatrix} \mathit{M}_{11} & 0 \\ 0 & \mathit{M}_{22} \end{pmatrix}
\qquad\text{and}\qquad
	\operatorname{off} \mathit{M} \coloneqq \begin{pmatrix} 0 & \mathit{M}_{12} \\ \mathit{M}_{21} & 0 \end{pmatrix}
\end{equation}

Equation \eqref{Dold:250122133055}, in turn, imposes an additional constraint on the eigenvector matrix $\mathit{E}_0$.
Indeed, the first-order corrections $\mathit{L}^{\scriptscriptstyle \text{old}}_1$ and $\mathit{A}^{\scriptscriptstyle \text{old}}_1$ can be read off the expressions \eqref{241115131339}:
\begin{equation}
	\mathit{L}^{\scriptscriptstyle \text{old}}_1 (x,t) = - \frac{1}{12 \tau} \begin{pmatrix} 1 & 0 \\ 0 & -1 \end{pmatrix}
\qquad\text{and}\qquad
	\mathit{A}^{\scriptscriptstyle \text{old}}_1 (x,t) = \begin{pmatrix} 0 & 0 \\ 0 & 0 \end{pmatrix}
.
\end{equation}
Then, using the identities \eqref{Dold:241112160711} and \eqref{Dold:241112201954}, we find:
\begin{equation}
	\mathit{E}_0^{-1} \mathit{L}^{\scriptscriptstyle \text{old}}_1 \mathit{E}_0
		= - \frac{1}{6 \tau \det (\mathit{E}_0)}
		\begin{pmatrix}
			0 & e_- e'_-
		\\	- e_+ e'_+ & 0
		\end{pmatrix}
		= - \frac{1}{12 \tau}
		\begin{pmatrix}
			0 & \frac{e_-}{e_+}
		\\	\frac{e_+}{e_-} & 0
		\end{pmatrix}
.
\end{equation}
Thus, the diagonal parts of $\mathit{E}_0^{-1} \mathit{L}^{\scriptscriptstyle \text{old}}_1 \mathit{E}_0$ and $\mathit{E}_0^{-1} \mathit{A}^{\scriptscriptstyle \text{old}}_1 \mathit{E}_0$ are zero.
So, using \eqref{Dold:250203162604}, the diagonal parts of the first-order corrections $\mathit{L}_1$ and $\mathit{A}_1$ are
\begin{equation}
    \operatorname{diag}\mathit{L}_1
        =-\operatorname{diag}(\mathit{E}_0^{-1}\partial_x\mathit{E}_0)
\quad\text{and}\quad
    \operatorname{diag}\mathit{A}_1
        =-\operatorname{diag}(\mathit{E}_0^{-1}\partial_t\mathit{E}_0)
.
\end{equation}
Consequently, equation \eqref{Dold:250122133055} leads to the following differential equation constraints on the eigenvector matrix $\mathit{E}_0$:
\begin{equation}
	e'_- \,\mathrm{d}{e_+} = e_- \,\mathrm{d}{e'_+}
\qquad\text{and}\qquad
	e_+ \,\mathrm{d}{e'_-} = e'_+ \,\mathrm{d}{e_-}
.
\end{equation}
Using identities \eqref{Dold:241112160711}, we can further reduce these constraints to a pair of differential equations for each $e_\pm$ that read as follows:
\begin{equation}
\label{Dold:241112205842}
	\operatorname{dlog} e_\pm = - \tfrac{1}{2} \operatorname{dlog} w
.
\end{equation}
General solutions of these equations are constant multiples of $w^{-\frac{1}{2}}$, and any such solution completely determines the eigenvector matrix $\mathit{E}_0$ via the relations \eqref{Dold:241112160711}.
In particular, from now on we let $\mathit{E}_0$ be the unique eigenvector matrix satisfying the initial condition
\begin{equation}
\label{Dold:250417183827}
	e_\pm (x_0, t_0) = b_{\pm,0}
.
\end{equation}

\paragraph*{Step 7. Existence and uniqueness.}
Finally, we claim that the differential system for $\hat{\Phi}$ with the above choice of $\mathit{E}_0$ is compatible and therefore can be solved uniquely for a given initial condition.

\begin{lemma}
\label{Dold:250417183730}
\label{lem:D-matrix-recursion}
Choose any pair of formal power series 
\begin{equation}
\label{Dold:250417190714}
	\hat{d}_\pm (\hbar) = \sum_{k=0}^\infty d_k^\pm \hbar^k \in \mathbb{C}[[\hbar]]
\quad\text{with}\quad
	d_0^\pm = 1
.
\end{equation}
On the fixed polydisc $W$ chosen above, the system \eqref{Dold:250417173233} has a unique fundamental solution $\hat\Phi\in\operatorname{GL}_2(\mathcal O(W)[[\hbar]])$ with leading-order part $\Phi_0 = \mathit{I}$ that satisfies the initial condition
\begin{equation}
\label{Dold:250417191149}
	\operatorname{diag} \hat{\Phi} (x_0, t_0, \hbar) = \operatorname{diag} \big( \, \hat{d}_+ (\hbar), \hat{d}_- (\hbar) \big)
.
\end{equation}
Moreover, $\hat{\Phi}$ is constructed recursively order-by-order in $\hbar$ in the sense that, for every $k \geq 1$, the off-diagonal part $\operatorname{off} \Phi_k$ is completely specified by $\Phi_1, \ldots, \Phi_{k-1}$.
\end{lemma}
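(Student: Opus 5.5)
We construct $\hat\Phi=I+\sum_{k\geq1}\Phi_k\hbar^k$ by induction on $k$, splitting each coefficient as $\Phi_k=\operatorname{off}\Phi_k+\operatorname{diag}\Phi_k$. At order $\hbar^k$, the system \eqref{Dold:250417173233} reads
\[
  [\mathit Y,\Phi_k]=\partial_x\Phi_{k-1}-\sum_{j=1}^{k}\mathit L_j\Phi_{k-j},
  \qquad
  [\mathit W,\Phi_k]=\partial_t\Phi_{k-1}-\sum_{j=1}^{k}\mathit A_j\Phi_{k-j}.
\]
The left-hand sides are off-diagonal, and the right-hand sides involve $\Phi_k$ only through the term $\mathit L_1\Phi_{k-1}$, etc.; they involve only $\Phi_0,\dots,\Phi_{k-1}$. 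The second equation is decisive. Since $[\mathit W,\cdot]$ acts on off-diagonal matrices by multiplying the $(1,2)$ entry by $2w$ and the $(2,1)$ entry by $-2w$, and $w\neq0$ on $W$, it determines $\operatorname{off}\Phi_k$ uniquely and holomorphically from the off-diagonal part of the time right-hand side. This gives the recursive structure asserted in the lemma, and works even at $x=\tau$, where $y=0$ and the spatial equation alone gives nothing. The diagonal parts of the two equations give the transport system
\[
  \partial_x\operatorname{diag}\Phi_{k}=\operatorname{diag}\Bigl(\sum_{j\geq1}\mathit L_j\Phi_{k+1-j}\Bigr),\qquad
  \partial_t\operatorname{diag}\Phi_{k}=\operatorname{diag}\Bigl(\sum_{j\geq1}\mathit A_j\Phi_{k+1-j}\Bigr),
\]
read at order $k+1$. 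Because $\operatorname{diag}\mathit L_1=\operatorname{diag}\mathit A_1=0$ by Step~6, the right-hand sides contain $\Phi_k$ only through $\operatorname{diag}(\mathit L_1\operatorname{off}\Phi_k)$ and the analogous $\mathit A_1$ term, which are already known. Hence $\mathrm d\operatorname{diag}\Phi_k=\omega_k$ for a diagonal one-form $\omega_k$ built from known data, and on the polydisc it has a unique solution with $\operatorname{diag}\Phi_k(x_0,t_0)=\operatorname{diag}(d_k^+,d_k^-)$, provided $\omega_k$ is closed.

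The induction therefore carries two compatibility statements. The first (off-diagonal compatibility) is that the $\operatorname{off}\Phi_k$ produced by the time equation also satisfies the off-diagonal spatial equation at order $k$. The second (diagonal compatibility) is $\mathrm d\omega_k=0$. Both are to be derived from the conjugated zero-curvature identity \eqref{Dold:250109114843}, read at order $\hbar^{k}$ and $\hbar^{k+1}$ respectively, together with the induction hypothesis that all lower-order equations hold.

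For the off-diagonal step, define the spatial defect $\mathcal R_k\coloneqq[\mathit Y,\Phi_k]-(\text{spatial right-hand side})$. Substituting the lower-order equations into \eqref{Dold:250109114843} should yield a relation of the form $[\mathit W,\mathcal R_k]=\frac{y}{w}[\mathit W,\ldots]$. More to the point, the structural fact $\mathit Y=2(x-\tau)\mathit W$ lets us express the spatial commutator as $2(x-\tau)[\mathit W,\cdot]$. I expect the identity to reduce to $[\mathit W,\mathcal R_k]=0$ with $\mathcal R_k$ off-diagonal, forcing $\mathcal R_k=0$. This mirrors the proof that $\hat R=0$ in \autoref{250408124924}, where the defect satisfied a homogeneous equation with invertible leading coefficient $4r_0$. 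For the diagonal step, take the diagonal part of \eqref{Dold:250109114843} at order $k+1$. The diagonal parts of $[\mathit W,\cdot]$ and $[\mathit Y,\cdot]$ vanish, leaving $\partial_t\operatorname{diag}(\ldots)-\partial_x\operatorname{diag}(\ldots)+\operatorname{diag}[\hat{\mathit L},\hat{\mathit A}]=0$. After inserting the equations already solved at orders $\le k$, the commutator terms should cancel, giving $\mathrm d\omega_k=0$.

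I expect these two bookkeeping identities, stated in the paper as \autoref{Dold:250109110444} and \autoref{Dold:250123150709}, to be the main obstacle. The difficulty is tracking exactly which orders are used, and making sure the order-$k$ unknowns $\operatorname{diag}\Phi_k$ drop out where they should, via $\operatorname{diag}\mathit L_1=\operatorname{diag}\mathit A_1=0$.

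Uniqueness is immediate from the construction: the off-diagonal parts are forced, and the diagonal parts are fixed by their initial values. Invertibility follows from $\Phi_0=I$.
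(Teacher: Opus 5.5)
Your induction has exactly the shape of the paper's proof. You read $\operatorname{off}\Phi_k$ off the order-$\hbar^k$ time equation by dividing by $\pm 2w$, and you take $\operatorname{diag}\Phi_k$ as the primitive of a diagonal one-form $\omega_k=\omega_{k,x}\,\mathrm dx+\omega_{k,t}\,\mathrm dt$ with value $\operatorname{diag}(d_k^+,d_k^-)$ at $(x_0,t_0)$; this imposes the diagonal parts of the order-$\hbar^{k+1}$ equations. Uniqueness then follows from the construction. The gap is that the two compatibility identities carry the whole content of the lemma, yet they are only conjectured (``should'', ``I expect''), and the ingredients you name for them are not sufficient. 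For the off-diagonal step your reformulation is sound. Since $W$ and $Y$ commute and $[W,\Phi_k]=\mathrm{RHS}_{A,k}$, one has $[W,\mathcal R_k]=[Y,\mathrm{RHS}_{A,k}]-[W,\mathrm{RHS}_{L,k}]$, so $\mathcal R_k=0$ is equivalent to the identity of \autoref{Dold:250109110444}. But that identity is not a reading of \eqref{Dold:250109114843} at the single order $\hbar^k$. The paper obtains it by differentiating the order-$\hbar^{k-1}$ equations, using $\partial_xW=\partial_tY$, resubstituting the equations of orders $\le k-1$ for $\partial_x\Phi_j$ and $\partial_t\Phi_j$, and collapsing the sums with the zero-curvature identities at all orders $\hbar^1,\dots,\hbar^k$, multiplied on the right by $\Phi_j$.

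For the diagonal step, the diagonal part of \eqref{Dold:250109114843} at one order cannot give $\mathrm d\omega_k=0$. The expression $\partial_t\omega_{k,x}-\partial_x\omega_{k,t}$ contains $\operatorname{diag}\bigl((\partial_tL_n-\partial_xA_n)\Phi_{k+1-n}\bigr)$ for $1\le n\le k+1$. Because the $\Phi_m$ have off-diagonal entries, the off-diagonal parts of $\partial_tL_n-\partial_xA_n$ contribute, so you need the full matrix identities at orders $\hbar^2,\dots,\hbar^{k+1}$ and the diagonal one at $\hbar^{k+2}$. It also contains $\operatorname{diag}(L_1\partial_t\Phi_k-A_1\partial_x\Phi_k)$, which involves derivatives of the freshly built $\operatorname{off}\Phi_k$ and needs its own argument. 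In \autoref{Dold:TechnicalLemma2} the paper applies \autoref{Dold:250109110444} at index $k+1$ to get an $\operatorname{off}\Phi_{k+1}$ solving the off-diagonal order-$\hbar^{k+1}$ equations. It then cancels $L_1[W,\Phi_{k+1}]-A_1[Y,\Phi_{k+1}]$ using $L_1=2(x-\tau)A_1$ and $Y=2(x-\tau)W$, followed by a long reindexing.

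Your defect idea, pushed one step further, yields both identities quickly, as in \autoref{app:matrix-several-times}. Set $\mathcal D_xM=\hbar\partial_xM-[Y,M]-\hat LM$ and $\mathcal D_tM=\hbar\partial_tM-[W,M]-\hat AM$. Then \eqref{Dold:250109114843} and $\partial_tY=\partial_xW$ give $[\mathcal D_x,\mathcal D_t]=0$, so the residuals $R_a=\mathcal D_aP$ of a truncation $P$ satisfy $\mathcal D_tR_x=\mathcal D_xR_t$. For $P=\sum_{j<k}\Phi_j\hbar^j$, the lowest order of this relation is exactly $[W,\mathrm{RHS}_{L,k}]=[Y,\mathrm{RHS}_{A,k}]$. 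Now take $P=\sum_{j\le k}\Phi_j\hbar^j$ with $\operatorname{diag}\Phi_k=0$, and write $R_a=\hbar^{k+1}r_a+O(\hbar^{k+2})$. The diagonal part of the next order reads $\partial_t\operatorname{diag}r_x-\partial_x\operatorname{diag}r_t=\operatorname{diag}(A_1r_x-L_1r_t)$. This vanishes because $\operatorname{off}r_x=2(x-\tau)\operatorname{off}r_t$ and $L_1=2(x-\tau)A_1$. Since $\operatorname{diag}r_x=-\omega_{k,x}$ and $\operatorname{diag}r_t=-\omega_{k,t}$, this is precisely $\mathrm d\omega_k=0$.
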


The proof is done in \autoref{Dold:ProoffLemmaB2}. Assuming this lemma for the moment, we can finish the proof of \autoref{250716102028}. The existence part follows from \autoref{Dold:250109115117}. To prove uniqueness, it remains to explain how to choose the initial data \eqref{Dold:250417190714} from the prescribed normalisations $\hat c_\pm$.
Let $\mathit{C} \coloneqq \mathit{E}_0 (x_0, t_0)$ and $d_k^{ij} \coloneqq [\Phi_k (x_0, t_0)]_{ij}$ for all $k$.
We show how to choose constants $d_k^{11}=d_k^+$ and $d_k^{22}=d_k^-$ for every $k$.
Then the initial condition of \autoref{241113190220} becomes
\begin{align*}
    \hat{b}_+(\hbar)e^{-\frac{\kappa_+}{\hbar}}
        = \hat{\psi}_+(x_0,t_0,\hbar)
        &= \big[\hat{\Psi}(x_0,t_0,\hbar)\big]_{11}
        = \big[\mathit{C}\hat{\Phi}(x_0,t_0,\hbar)\big]_{11}
          e^{-\frac{\kappa_+}{\hbar}}
,
\\
    \hat{b}_-(\hbar)e^{-\frac{\kappa_-}{\hbar}}
        = \hat{\psi}_-(x_0,t_0,\hbar)
        &= \big[\hat{\Psi}(x_0,t_0,\hbar)\big]_{12}
        = \big[\mathit{C}\hat{\Phi}(x_0,t_0,\hbar)\big]_{12}
          e^{-\frac{\kappa_-}{\hbar}}
.
\end{align*}
which, after cancelling the exponentials and comparing the power series coefficients, reduces to
\begin{equation}
\begin{gathered}
    b_{+,k}
        = \mathit{C}_{11}d_k^{11}+\mathit{C}_{12}d_k^{21},
\\
    b_{-,k}
        = \mathit{C}_{11}d_k^{12}+\mathit{C}_{12}d_k^{22}.
\end{gathered}
.
\end{equation}
Note that for $k = 0$, we have that $d_0^{11} = d_0^{22} = 1$ and $d_0^{12} = d_0^{21} = 0$.
By \autoref{Dold:250417183730}, for every $k \geq 1$, the constants $d_k^{21}$ and $d_k^{12}$ are completely determined by $d_1^{ij}, \ldots, d_{k-1}^{ij}$, which means $d_k^{11}$ and $d_k^{22}$ can be uniquely solved for in terms of $b_{\pm,k}$.
This proves the result when $m_\pm=0$. For the general prescribed normalisations
\[
 \hat c_\pm(\hbar)=\hbar^{m_\pm}\hat{b}_\pm(\hbar)e^{-\frac{\kappa_\pm}{\hbar}},
\]
multiply the two constructed columns by $\hbar^{m_+}$ and $\hbar^{m_-}$, respectively. Since these factors are formal constants, the resulting columns still satisfy the formal Painlevé~I system, and their first-row components have precisely the prescribed normalisations $\hat c_\pm(\hbar)$. Uniqueness follows
from the uniqueness in the case $m_\pm=0$.
This proves \autoref{250716102028}. The column statement \autoref{241113190220} follows by taking either column.

\subsection{Proof of  \autoref{Dold:250417183730}}\label{Dold:ProoffLemmaB2}
The proof proceeds by constructing $\hat{\Phi}$ order-by-order in $\hbar$ using induction on the order $k$.
The base case $k = 0$, for which $\Phi_0 = \mathit{I}$, has already been dealt with.
The remainder of the proof is organised into five steps.
Steps 1-3 illustrate the construction of $\Phi_1$, whilst Steps 4 and 5 describe the general inductive step for $\Phi_k$ for general $k \geq 1$.
Only Steps 4 and 5 are logically necessary.
However, we have included Steps 1-3 in order to clarify the general argument because Steps 4-5 rely on two Lemmas (\autoref{Dold:250109110444} and \autoref{Dold:250123150709}) whose lengthy proofs are relegated to \autoref{Dold:app:compatibility-identities}.

The induction is simultaneous. Suppose that
$\Phi_0,\ldots,\Phi_{k-1}$ have been constructed, that both
equations hold through order $\hbar^{k-1}$, and that their diagonal
parts at order $\hbar^k$ already hold. The last condition is imposed
by the preceding diagonal integration; for $k=1$ it is exactly
$\operatorname{diag}\mathit L_1=\operatorname{diag}\mathit A_1=0$.
First, \autoref{Dold:250109110444} determines
$\operatorname{off}\Phi_k$ consistently in both equations.
Both full equations then hold through order $\hbar^k$, independently
of the as-yet-undetermined diagonal part of $\Phi_k$.
Next, \autoref{Dold:250123150709} shows that the one-form prescribing
$\operatorname{diag}\Phi_k$ is closed. Its primitive with the
prescribed value at $(x_0,t_0)$ determines this diagonal part and
imposes the diagonal equations at order $\hbar^{k+1}$.
This is the invariant used in every subsequent step.

\paragraph*{Step 1. Solving for the off-diagonal part of $\Phi_1$.}
The system \eqref{Dold:250417173233} at first order in $\hbar$ reduces to the following pair of linear algebraic equations for $\Phi_1$:
\begin{equation}
\label{Dold:241112183751x}
\begin{cases}
	[\mathit{Y}, \Phi_1] = - \mathit{L}_1
,
\\
	[\mathit{W}, \Phi_1] = - \mathit{A}_1
.
\end{cases}
\end{equation}
Since $w\neq0$ on $\mathring{\mathsf{\Sigma}}$, the restriction of
$\operatorname{ad}_{\mathit W}$ to the off-diagonal matrices is
invertible.  Hence the second equation in \eqref{Dold:241112183751x}
uniquely determines the off-diagonal part of $\Phi_1$:
\begin{equation}
\label{Dold:250109171250}
    \operatorname{off}\Phi_1=-\tfrac12\mathit{W}^{-1}\mathit{A}_1
.
\end{equation}
The formal zero-curvature equation \eqref{Dold:250109114843} at first order in $\hbar$ gives
\begin{equation}
\label{Dold:250109165619}
    [\mathit{W},\mathit{L}_1]=[\mathit{Y},\mathit{A}_1]
.
\end{equation}
Since $\mathit{L}_1$ and $\mathit{A}_1$ are strictly off-diagonal and $\mathit{Y}=2(x-\tau)\mathit{W}=\frac{y}{w}\mathit{W}$, this identity implies that $\mathit{L}_1$ and $\mathit{A}_1$ are commuting matrices and 
\begin{equation}
\label{Dold:250109171618}
    \mathit{L}_1=2(x-\tau)\mathit{A}_1=\frac{y}{w}\mathit{A}_1
.
\end{equation}
Consequently, away from the divisor $x=\tau(t)$, one may equivalently
write $\operatorname{off}\Phi_1=-\tfrac12\mathit{Y}^{-1}\mathit{L}_1$. The apparent singularity of this latter expression along $x=\tau(t)$ is removable, and its holomorphic extension is given by
\eqref{Dold:250109171250}.

\paragraph*{Step 2. Solving for the diagonal part of $\Phi_1$.}
To determine the diagonal components of $\Phi_1$, we examine equations \eqref{Dold:250417173233} at the next order in $\hbar$, where we find the following equations:
\begin{equation}
\label{Dold:250109182351}
\begin{cases}
	[\mathit{Y},\Phi_2] = \partial_x \Phi_1 - \mathit{L}_1 \Phi_1 - \mathit{L}_2
,
\\
	[\mathit{W},\Phi_2] = \partial_t \Phi_1 - \mathit{A}_1 \Phi_1 - \mathit{A}_2
.
\end{cases}
\end{equation}
In these equations, $\Phi_2$ is unknown whilst $\Phi_1$ is only partially known.
The lefthand side is again strictly off-diagonal so the diagonal part of the righthand side must be zero, which yields two uncoupled first-order linear differential equations, one for each diagonal component of $\Phi_1$.
Namely,
\begin{equation}
\label{Dold:250109183536}
\begin{cases}
	\operatorname{diag} \partial_x \Phi_1
		= \operatorname{diag} \big( \mathit{L}_1 \Phi_1 + \mathit{L}_2 \big)
		= \mathit{L}_1 \operatorname{off} (\Phi_1) + \operatorname{diag} (\mathit{L}_2)
,
\\
	\operatorname{diag} \partial_t \Phi_1
		= \operatorname{diag} \big( \mathit{A}_1 \Phi_1 + \mathit{A}_2 \big)
		= \mathit{A}_1 \operatorname{off} (\Phi_1) + \operatorname{diag} (\mathit{A}_2)
.
\end{cases}
\end{equation}
Notice that the righthand side only depends on the off-diagonal components of $\Phi_1$ because $\mathit{L}_1$ and $\mathit{A}_1$ are strictly off-diagonal.
Assuming that these equations are compatible --- which we will verify in the next step and more generally in \autoref{Dold:250123150709} below --- then they can be solved by direct integration.
Namely, for any constant diagonal matrix $\mathit{B}_1 \coloneqq \operatorname{diag} (d^{+}_1, d^{-}_1)$, let $\mathit{U}_1 \coloneqq \operatorname{diag} (u^{+}_1, u^{-}_1)$ and $\mathit{V}_1 \coloneqq \operatorname{diag} (v^{+}_1, v^{-}_1)$ be the unique solutions of the following initial value problems:
\begin{equation}
\label{Dold:250109185404}
\begin{aligned}
	\partial_t \mathit{U}_1 (t) = \operatorname{diag} \big( \mathit{A}_1 \Phi_1 + \mathit{A}_2 \big) (x_0, t)
&\quad\text{such that}\quad
	\mathit{U}_1 (t_0) = \mathit{B}_1
,
\\
	\partial_x \mathit{V}_1 (x) = \operatorname{diag} \big( \mathit{L}_1 \Phi_1 + \mathit{L}_2 \big) (x, t_0)
&\quad\text{such that}\quad
	\mathit{V}_1 (x_0) = \mathit{B}_1
.
\end{aligned}
\end{equation}
Then the holomorphic function
\begin{equation}
\label{Dold:250109184046}
\begin{aligned}
	\operatorname{diag} \Phi_1 (x,t) 
		&= \int_{x_0}^x \operatorname{diag} \big( \mathit{L}_1 \Phi_1 + \mathit{L}_2 \big)(x',t) \,\mathrm{d}{x'}
		+ \mathit{U}_1 (t)
\\
		&= \int_{t_0}^t \operatorname{diag} \big( \mathit{A}_1 \Phi_1 + \mathit{A}_2 \big)(x,t') \,\mathrm{d}{t'}
		+ \mathit{V}_1 (x)
\end{aligned}
\end{equation}
is the unique solution of the system of equations \eqref{Dold:250109183536} for all $(x,t)$ near $(x_0,t_0)$ satisfying 
\begin{equation}
\label{Dold:250113113442}
	\operatorname{diag} \Phi_1 (x_0,t_0) = \mathit{B}_1
.
\end{equation}

\paragraph*{Step 3. Compatibility of the system for $\Phi_1$.}
To demonstrate that the solution \eqref{Dold:250109184046} is well-defined, we must check that equations \eqref{Dold:250109183536} form a compatible system, meaning the following identity is true:
\begin{equation}
\label{Dold:250119135222}
	\operatorname{diag} \partial_t \big( \mathit{L}_1 \Phi_1 + \mathit{L}_2 \big)
	= \operatorname{diag} \partial_x \big( \mathit{A}_1 \Phi_1 + \mathit{A}_2 \big)
.
\end{equation}
Thus, we need to show that the following quantity is strictly off-diagonal:
\begin{equation}
\label{Dold:250119135712}
	\big( \partial_t \mathit{L}_1 - \partial_x \mathit{A}_1 \big) \Phi_1
		+ \mathit{L}_1 \partial_t \Phi_1 - \mathit{A}_1 \partial_x \Phi_1
		+ ( \partial_t \mathit{L}_2 - \partial_x \mathit{A}_2 )
.
\end{equation}
This can be seen as follows.
Firstly, the formal zero-curvature equation \eqref{Dold:250109114843} at orders $\hbar^2$ and $\hbar^3$ yields the following two identities:
\begin{equation}
\label{Dold:250119141945}
\begin{aligned}	
	\partial_t \mathit{L}_1 - \partial_x \mathit{A}_1
		&= [\mathit{W}, \mathit{L}_2] - [\mathit{Y}, \mathit{A}_2]
,
\\
	\partial_t \mathit{L}_2 - \partial_x \mathit{A}_2
		&= \textcolor{gray}{[\mathit{W}, \mathit{L}_3] - [\mathit{Y}, \mathit{A}_3]} + [\mathit{A}_1, \mathit{L}_2] + [\mathit{A}_2, \mathit{L}_1]
.
\end{aligned}
\end{equation}
Note that $[\mathit{A}_1, \mathit{L}_1] = 0$ because $\mathit{A}_1$ and $\mathit{L}_1$ are scalar multiples of each other (cf. \eqref{Dold:250109171618}).
Notice also that the terms $[\mathit{W}, \mathit{L}_3]$ and $[\mathit{Y}, \mathit{A}_3]$ are strictly off-diagonal, so their contribution to \eqref{Dold:250119135712} is irrelevant and for this reason they have been greyed out.
Secondly, we claim that
\begin{equation}
\label{Dold:250123123741}
	\mathit{L}_1 \partial_t \Phi_1 - \mathit{A}_1 \partial_x \Phi_1
		= - \Phi_1 \big( \partial_t \mathit{L}_1 - \partial_x \mathit{A}_1 \big)
			\textcolor{gray}{- [\mathit{Y}, \Phi_1 \partial_t \Phi_1] + [\mathit{W}, \Phi_1 \partial_x \Phi_1]}
.
\end{equation}
Since the greyed-out commutators $[\mathit{Y}, \Phi_1 \partial_t \Phi_1]$ and $[\mathit{W}, \Phi_1 \partial_x \Phi_1]$ are strictly off-diagonal, they
vanish after applying $\operatorname{diag}$ and therefore do not contribute to the compatibility condition \eqref{Dold:250119135222}.
Using the identities $[\mathit{Y}, \Phi_1] = -\mathit{L}_1$ and $[\mathit{W}, \Phi_1] = -\mathit{A}_1$ derived in Step 1, as well as their derivatives $[\partial_t \mathit{Y}, \Phi_1] + [\mathit{Y}, \partial_t \Phi_1] = - \partial_t \mathit{L}_1$ and $[\partial_x \mathit{W}, \Phi_1] + [\mathit{W}, \partial_x \Phi_1] = - \partial_x \mathit{A}_1$, 
we find:
\begin{equation}
\label{Dold:250123122257}
\begin{aligned}
	\mathit{L}_1 \partial_t \Phi_1
		= - [\mathit{Y}, \Phi_1] \partial_t \Phi_1 ~
		&= \textcolor{gray}{- [\mathit{Y}, \Phi_1 \partial_t \Phi_1]} + \Phi_1 [\mathit{Y}, \partial_t \Phi_1]
\\
		&
		= \textcolor{gray}{- [\mathit{Y}, \Phi_1 \partial_t \Phi_1]} - \Phi_1 \partial_t \mathit{L}_1 - \Phi_1 [\partial_t \mathit{Y}, \Phi_1]
,
\\
	- \mathit{A}_1 \partial_x \Phi_1
		= + [\mathit{W}, \Phi_1] \partial_x \Phi_1
		&= \textcolor{gray}{[\mathit{W}, \Phi_1 \partial_x \Phi_1]} - \Phi_1 [\mathit{W}, \partial_x \Phi_1]
\\
		&
		= \textcolor{gray}{[\mathit{W}, \Phi_1 \partial_x \Phi_1]} + \Phi_1 \partial_x \mathit{A}_1 + \Phi_1 [\partial_x \mathit{W}, \Phi_1]
.
\end{aligned}
\end{equation}
Identity \eqref{Dold:250123123741} now follows from this because $\partial_t \mathit{Y} = \partial_x \mathit{W}$.
Putting \eqref{Dold:250119141945} and \eqref{Dold:250123123741} together, the expression \eqref{Dold:250119135712} becomes the following after omitting all strictly off-diagonal terms:
\begin{multline*}
	\Big[ \big(\partial_t \mathit{L}_1 - \partial_x \mathit{A}_1 \big), \Phi_1 \Big]
		+ [\mathit{A}_1, \mathit{L}_2] + [\mathit{A}_2, \mathit{L}_1]
= \Big[ [\mathit{W}, \mathit{L}_2], \Phi_1 \Big]
		- \Big[ [\mathit{Y}, \mathit{A}_2], \Phi_1 \Big]
		+ [\mathit{A}_1, \mathit{L}_2] + [\mathit{A}_2, \mathit{L}_1]
.
\end{multline*}
On the other hand, using the Jacobi identity as well as the identities $[\mathit{Y}, \Phi_1] = -\mathit{L}_1$ and $[\mathit{W}, \Phi_1] = -\mathit{A}_1$ again, we find:
\begin{align*}
	\Big[ [\mathit{W}, \mathit{L}_2], \Phi_1 \Big]
		&= \textcolor{gray}{+ \Big[ \mathit{W}, [\mathit{L}_2, \Phi_1] \Big]} + \Big[ \mathit{L}_2, [\Phi_1, \mathit{W}] \Big]
		= \textcolor{gray}{+ \Big[ \mathit{W}, [\mathit{L}_2, \Phi_1] \Big]} + [ \mathit{L}_2, \mathit{A}_1 ]
,
\\
	-\Big[ [\mathit{Y}, \mathit{A}_2], \Phi_1 \Big]
		&= \textcolor{gray}{- \Big[ \mathit{Y}, [\mathit{A}_2, \Phi_1] \Big]} - \Big[ \mathit{A}_2, [\Phi_1, \mathit{Y}] \Big]
		= \textcolor{gray}{- \Big[ \mathit{Y}, [\mathit{A}_2, \Phi_1] \Big]} - [ \mathit{A}_2, \mathit{L}_1 ]
.
\end{align*}
Consequently, the diagonal part of the expression \eqref{Dold:250119135712} is zero.

Having determined $\Phi_1$ completely, we return to \eqref{Dold:250109182351} to solve for the off-diagonal part of $\Phi_2$.
Then, by examining the next order in $\hbar$ of the system \eqref{Dold:250417173233}, the diagonal part of $\Phi_2$ is found by direct integration.
And so on, this process can be repeated at every order in $\hbar$ as described next.

\paragraph*{Step 4. Solving for the off-diagonal part of $\Phi_k$.}
Suppose now that $k \geq 1$ is arbitrary, and that $\Phi_1, \ldots, \Phi_{k-1}$ have been constructed.
Thus, $\Phi_i$ for each $i = 1, \ldots, k-1$ satisfies the system of equations \eqref{Dold:250417173233} at order $\hbar^i$:
\begin{equation}
\label{Dold:250123171817}
\begin{cases}
	\displaystyle
	[\mathit{Y}, \Phi_i] = \partial_x \Phi_{i-1} - \sum_{m+n = i} \mathit{L}_m \Phi_n
,
\\
	\displaystyle
	[\mathit{W}, \Phi_i] = \partial_t \Phi_{i-1} - \sum_{\smash{m+n = i}} \mathit{A}_m \Phi_n
.
\end{cases}
\end{equation}
The diagonal equations at order $\hbar^k$ also hold, by the induction invariant stated above. Thus both right-hand sides in \eqref{Dold:250109192430} have zero diagonal. We explain how to construct $\Phi_k$.
First, we solve for the off-diagonal part of $\Phi_k$ using our system of equations \eqref{Dold:250417173233} at order $k$ in $\hbar$.
It reduces to the following pair of linear algebraic equations for $\Phi_k$:
\begin{equation}
\label{Dold:250109192430}
\begin{cases}
	\displaystyle
	[\mathit{Y}, \Phi_k] = \partial_x \Phi_{k-1} - \sum_{i+j = k} \mathit{L}_i \Phi_j \eqqcolon \mathrm{RHS}_{\mathit{L},k}
,
\\
	\displaystyle
	[\mathit{W}, \Phi_k] = \partial_t \Phi_{k-1} - \sum_{\smash{i+j = k}} \mathit{A}_i \Phi_j \eqqcolon \mathrm{RHS}_{\mathit{A},k}
.
\end{cases}	
\end{equation}
Here and in the rest of this subsection, the sums are understood to run over all $0 \leq i,j \leq k$ satisfying the constraint $i + j = k$.
Note also that the righthand side does not involve $\Phi_k$ because $\mathit{L}_0 = \mathit{A}_0 = 0$.
Since $w\neq0$ on $\mathring{\mathsf{\Sigma}}$, the restriction of
$\operatorname{ad}_{\mathit W}$ to the off-diagonal matrices is
invertible.  Hence the second equation in \eqref{Dold:250109192430}
uniquely determines the off-diagonal part of $\Phi_k$.  Namely, for
every off-diagonal matrix $\mathit{M}$ one has
$[\mathit{W},\mathit{M}]=2\mathit{W}\mathit{M}$, and therefore
\begin{equation}
\label{Dold:250109191643}
    \operatorname{off}\Phi_k
        =
        \tfrac12\mathit{W}^{-1}
        \operatorname{off}\bigl(\mathrm{RHS}_{\mathit{A},k}\bigr)
.
\end{equation}
That this matrix also satisfies the first equation in \eqref{Dold:250109192430} is a consequence of the formal zero-curvature equation \eqref{Dold:250109114843}, as expressed in the following lemma.

\begin{lemma}
\label{Dold:250109110444}
Assume $\Phi_0=I$ and that both equations of \eqref{Dold:250417173233} hold through order $\hbar^{k-1}$. For the right-hand sides defined in \eqref{Dold:250109192430}, the following identity holds:
\begin{equation}
\label{Dold:250109122646}
	\Big[ \mathit{W}, \mathrm{RHS}_{\mathit{L},k} \Big]
	=	\Big[ \mathit{Y}, \mathrm{RHS}_{\mathit{A},k} \Big]
.
\end{equation}
In particular, since $\mathit{W}, \mathit{Y}$ are diagonal matrices satisfying $\mathit{Y} = 2 (x-\tau) \mathit{W}$, the off-diagonal entries of $\mathrm{RHS}_{\mathit{L},k}$ and $\mathrm{RHS}_{\mathit{A},k}$ are related by the following identity:
\begin{equation}
	\Big[ \mathrm{RHS}_{\mathit{L},k} \Big]_{ij}
		= 2 (x-\tau) \Big[ \mathrm{RHS}_{\mathit{A},k} \Big]_{ij}
		= \frac{y}{w} \Big[ \mathrm{RHS}_{\mathit{A},k} \Big]_{ij}
\qquad
	\text{for $i \neq j$}
.
\end{equation}
\end{lemma}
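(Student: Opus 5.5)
The strategy is to extract the coefficient of $\hbar^{k+1}$ in the formal zero-curvature identity \eqref{Dold:250109114843}, and then rewrite it through the lower-order equations of \eqref{Dold:250417173233}. The guiding idea is the classical one: flatness of the connection implies that the \lq\lq curvature\rq\rq{} of the truncated solution is a commutator expression. Concretely, set $\hat\Phi_{<k}=\sum_{i<k}\Phi_i\hbar^i$ and define the two defects
\[
\hat{\mathcal E}_x\coloneqq\hbar\partial_x\hat\Phi_{<k}-[\mathit Y,\hat\Phi_{<k}]-\hat{\mathit L}\hat\Phi_{<k},
\qquad
\hat{\mathcal E}_t\coloneqq\hbar\partial_t\hat\Phi_{<k}-[\mathit W,\hat\Phi_{<k}]-\hat{\mathit A}\hat\Phi_{<k}.
\]
By the hypothesis, both defects are $O(\hbar^k)$. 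Since $\Phi_k$ is absent from $\hat\Phi_{<k}$, their $\hbar^k$ coefficients are exactly $\mathrm{RHS}_{\mathit L,k}$ and $\mathrm{RHS}_{\mathit A,k}$; this identification follows from \eqref{Dold:250109192430} together with $\mathit L_0=\mathit A_0=0$.

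The key step is a cross-differentiation identity. Define the operators $\nabla_x=\hbar\partial_x-\operatorname{ad}_{\mathit Y}-\hat{\mathit L}\cdot$ and $\nabla_t=\hbar\partial_t-\operatorname{ad}_{\mathit W}-\hat{\mathit A}\cdot$, acting on matrices on the left, so that $\operatorname{ad}$ acts by commutator while $\hat{\mathit L}$ and $\hat{\mathit A}$ act by left multiplication. I will compute the commutator $[\nabla_t,\nabla_x]$. It has four contributions: the term $\hbar(\partial_t\mathit Y-\partial_x\mathit W)$, which vanishes by \eqref{Dold:250129191049}; the term $[\operatorname{ad}_{\mathit W},\operatorname{ad}_{\mathit Y}]=\operatorname{ad}_{[\mathit W,\mathit Y]}=0$, because both matrices are diagonal; the left-multiplication curvature $-(\hbar\partial_t\hat{\mathit L}-[\mathit W,\hat{\mathit L}]-\hbar\partial_x\hat{\mathit A}+[\mathit Y,\hat{\mathit A}]+[\hat{\mathit L},\hat{\mathit A}])$, which vanishes by \eqref{Dold:250109114843}; and the mixed terms. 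For the mixed terms, I need to check that $\operatorname{ad}_{\mathit W}$ and left multiplication by $\hat{\mathit L}$ combine correctly. This holds because $\operatorname{ad}_{\mathit W}(\hat{\mathit L}X)=[\mathit W,\hat{\mathit L}]X+\hat{\mathit L}[\mathit W,X]$, and the extra piece $[\mathit W,\hat{\mathit L}]X$ is precisely the one already absorbed into the curvature term. The outcome is $\nabla_t\nabla_x=\nabla_x\nabla_t$, that is, $\nabla_t\hat{\mathcal E}_x=\nabla_x\hat{\mathcal E}_t$.

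Finally I take the coefficient of $\hbar^{k}$ in $\nabla_t\hat{\mathcal E}_x=\nabla_x\hat{\mathcal E}_t$. Since $\hat{\mathcal E}_{x},\hat{\mathcal E}_t=O(\hbar^k)$, the terms $\hbar\partial$ and $\hat{\mathit L}\cdot$, $\hat{\mathit A}\cdot$ (which start at order $\hbar^1$) push contributions to order $\hbar^{k+1}$. Only the commutator terms survive at order $\hbar^k$, giving $-[\mathit W,\mathrm{RHS}_{\mathit L,k}]=-[\mathit Y,\mathrm{RHS}_{\mathit A,k}]$, which is \eqref{Dold:250109122646}. The stated entrywise relation then follows because, for $i\neq j$, the $(i,j)$ entry of $[\operatorname{diag}(a,-a),M]$ equals $\pm2aM_{ij}$, combined with $\mathit Y=2(x-\tau)\mathit W$ and $w\neq0$. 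The main obstacle is the careful bookkeeping in the mixed terms of the operator commutator: the adjoint action of $\mathit W$ and $\mathit Y$ must combine with left multiplication so that the gauge-transformed zero-curvature identity \eqref{Dold:250109114843} appears exactly with its sign conventions. I would check this first on a general matrix $X$ before expanding in $\hbar$.
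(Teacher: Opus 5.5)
Your proof is correct, but it is organised differently from the paper's proof of this lemma.

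The paper works coefficient by coefficient:
\begin{itemize}
\item it differentiates the order-$(k-1)$ equations for $[\mathit W,\Phi_{k-1}]$ and $[\mathit Y,\Phi_{k-1}]$ in $x$ and in $t$, respectively;
\item it uses $\partial_x\mathit W=\partial_t\mathit Y$ to reach \eqref{Dold:250109131538};
\item it substitutes the lower-order equations for $\partial_x\Phi_j$ and $\partial_t\Phi_j$ and reindexes the resulting triple sums;
\item it finally recognises the order-$i$ flatness relation \eqref{Dold:250109121431} as the coefficient of each $\Phi_j$.
\end{itemize}

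You instead prove once that $[\nabla_t,\nabla_x]=0$ on arbitrary formal matrices. Your four contributions are right. The identity is even more transparent after writing $\nabla_xM=\hbar\partial_xM-(\mathit Y+\hat{\mathit L})M+M\mathit Y$: a flat connection acts on the left, and the commuting closed diagonal pair $(\mathit Y,\mathit W)$ acts on the right. You then read off the lowest coefficient $\hbar^k$ of $\nabla_t\hat{\mathcal E}_x=\nabla_x\hat{\mathcal E}_t$. There $\mathit L_0=\mathit A_0=0$ leaves only the commutator terms.

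Your route is shorter, removes the index bookkeeping, and makes it evident that $\Phi_k$ never enters. It also generalises verbatim to several deformation variables and higher rank. It is in fact the mechanism the paper itself uses in \autoref{app:matrix-several-times}, via \eqref{eq:D-residual-identity}. The paper's explicit route, in exchange, shows precisely which orders of flatness and which lower-order equations are consumed. This matches the style of its subsequent diagonal computation for \autoref{Dold:250123150709}.

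A cosmetic point: your opening sentence speaks of the coefficient of $\hbar^{k+1}$ of the zero-curvature identity. Your argument actually uses the full formal identity through the operator commutator and extracts the coefficient of $\hbar^k$.
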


Away from the divisor $x=\tau(t)$, the same solution can equivalently be written as $\operatorname{off}\Phi_k=\tfrac12\mathit{Y}^{-1}\operatorname{off}\bigl(\mathrm{RHS}_{\mathit{L},k}\bigr)$. 
The expression involving $\mathit{W}^{-1}$ is the one that extends holomorphically across the double turning divisor.

\paragraph*{Step 5. Solving for the diagonal part of $\Phi_k$.}
To solve for the diagonal part $\operatorname{diag} \Phi_k$ of $\Phi_k$, we examine the diagonal part of the system of equations \eqref{Dold:250417173233} at order $k + 1$ in $\hbar$.
It yields the following pair of first-order linear differential equations:
\begin{equation}
\label{Dold:250109200635}
	\partial_x \operatorname{diag} \Phi_k
		= \!\! \sum_{m+n = k + 1} \operatorname{diag} \big( \mathit{L}_m \Phi_n \big)
\quad\text{and}\quad
	\partial_t \operatorname{diag} \Phi_k
		= \!\! \sum_{m+n = k + 1} \operatorname{diag} \big( \mathit{A}_m \Phi_n \big)
~.
\end{equation}
Notice that the righthand side in both of these equations does not involve the as-yet-undetermined matrix $\Phi_{k+1}$ because $\mathit{L}_0 = \mathit{A}_0 = 0$.
The righthand side also does not involve the components of $\operatorname{diag} \Phi_k$ because $\mathit{L}_1$ and $\mathit{A}_1$ are strictly off-diagonal.
Thus, these equations can be solved by direct integration, provided the system \eqref{Dold:250109200635} is compatible.
The compatibility of this system is a nontrivial consequence of the formal zero-curvature equation \eqref{Dold:250109114843}, summarised in the following Lemma whose proof is presented in \autoref{Dold:app:compatibility-identities}.

\begin{lemma}
\label{Dold:250123150709}
Assume that both equations of \eqref{Dold:250417173233} hold through order $\hbar^k$, after constructing $\operatorname{off}\Phi_k$ and assigning any holomorphic diagonal part to $\Phi_k$. Then the following identity holds, independently of that diagonal part:
\begin{equation}
\label{Dold:250123150851}
	\partial_t \sum_{i+j = k+1} \operatorname{diag} \big( \mathit{L}_{i} \Phi_j \big)
		= \partial_x \sum_{i+j = k+1} \operatorname{diag} \big( \mathit{A}_i \Phi_j \big)
.
\end{equation}
\end{lemma}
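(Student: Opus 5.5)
The idea is to generalise the order-one computation of Step~3 verbatim, keeping careful track of which terms are strictly off-diagonal.

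Write $\mathcal L_k\coloneqq\sum_{i+j=k+1}\mathit L_i\Phi_j$ and $\mathcal A_k\coloneqq\sum_{i+j=k+1}\mathit A_i\Phi_j$. We must show that $\operatorname{diag}(\partial_t\mathcal L_k-\partial_x\mathcal A_k)=0$. Expanding by the product rule gives
\[
\sum_{i+j=k+1}\bigl((\partial_t\mathit L_i-\partial_x\mathit A_i)\Phi_j+\mathit L_i\partial_t\Phi_j-\mathit A_i\partial_x\Phi_j\bigr).
\]

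\textbf{Step 1 (curvature terms).} For the first group we use the zero-curvature identity \eqref{Dold:250109114843} at order $\hbar^{i+1}$:
\[
\partial_t\mathit L_i-\partial_x\mathit A_i=[\mathit W,\mathit L_{i+1}]-[\mathit Y,\mathit A_{i+1}]-\sum_{a+b=i+1}[\mathit L_a,\mathit A_b].
\]

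\textbf{Step 2 (derivative terms).} For the second group we replace $\partial_t\Phi_j$ and $\partial_x\Phi_j$ using the equations \eqref{Dold:250123171817} at order $j+1$. These give
\[
\partial_x\Phi_j=[\mathit Y,\Phi_{j+1}]+\sum\mathit L_m\Phi_n,\qquad \partial_t\Phi_j=[\mathit W,\Phi_{j+1}]+\sum\mathit A_m\Phi_n,
\]
and we use them only where they are valid: through order $\hbar^k$, which covers $j\le k-1$. The term with $j=k$ is paired with $\mathit L_1,\mathit A_1$. Here we use $\mathit L_1=\tfrac{y}{w}\mathit A_1$ from \eqref{Dold:250109171618}, together with the fact that only the diagonal part of $\partial\Phi_k$ survives against an off-diagonal factor. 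The diagonal part of $\Phi_k$ then enters only through the combination $\mathit L_1\partial_t\operatorname{diag}\Phi_k-\mathit A_1\partial_x\operatorname{diag}\Phi_k$, which is off-diagonal and drops out. This matches the claimed independence of the choice of diagonal part.

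\textbf{Step 3 (cancellation).} After the substitutions, the commutator terms $[\mathit W,\mathit L_{i+1}]\Phi_j$ and $\mathit L_i[\mathit W,\Phi_{j+1}]$, and their $\mathit Y$-analogues, are rearranged as in \eqref{Dold:250123122257}. We write
\[
\mathit L_i[\mathit W,\Phi_{j+1}]=[\mathit L_i\Phi_{j+1},\mathit W]\text{-type terms}+[\mathit W,\mathit L_i]\Phi_{j+1},
\]
and discard every commutator with the diagonal matrices $\mathit W,\mathit Y$ of the form $[\mathit W,\cdot\,]$, since these are strictly off-diagonal. We also invoke $\partial_t\mathit Y=\partial_x\mathit W$ from \eqref{Dold:250129191049} wherever derivatives of $\mathit W,\mathit Y$ appear. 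What remains should be a reindexed double sum of terms $[\mathit L_a,\mathit A_b]\Phi_c$ and $\mathit L_a\mathit A_b\Phi_c-\mathit A_b\mathit L_a\Phi_c$ over $a+b+c=k+2$, which cancel pairwise by antisymmetry. This reproduces the Jacobi-identity cancellation of Step~3 at order~1.

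\textbf{Main obstacle.} The difficulty is the bookkeeping of index ranges: showing that the boundary terms (those with $i=1$ or $j=0$, and the top-order $\Phi_{k+1}$ contributions, which must appear only inside off-diagonal commutators) cancel exactly rather than leaving residues. To control this, I would first verify the cancellation at order $k=2$ explicitly as a check. I would then organise the general case with generating series: treat the statement as the diagonal part of $\hbar\partial_t(\hat{\mathit L}\hat\Phi)-\hbar\partial_x(\hat{\mathit A}\hat\Phi)$ truncated at order $k+1$. Using the full formal identity
\[
\hbar\partial_t(\hat{\mathit L}\hat\Phi)-\hbar\partial_x(\hat{\mathit A}\hat\Phi)=[\cdots]\text{ off-diagonal}+\hat{\mathit L}\,\mathcal E_A-\hat{\mathit A}\,\mathcal E_L,
\]
the errors $\mathcal E$ vanish through order $k$, so their contribution at order $k+1$ is multiplied by $\mathit L_0=\mathit A_0=0$.
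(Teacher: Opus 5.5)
Your proof has a genuine gap at the $j=k$ term $\operatorname{diag}(\mathit L_1\partial_t\Phi_k-\mathit A_1\partial_x\Phi_k)$, which you discard. The reasoning there is reversed. Because $\mathit L_1,\mathit A_1$ are strictly off-diagonal, $\operatorname{diag}(\mathit L_1M)=\operatorname{diag}(\mathit L_1\operatorname{off}M)$. So it is $\operatorname{diag}\Phi_k$ that drops out, which is only the trivial half of the independence claim. What survives is
\[
\operatorname{diag}\bigl(\mathit L_1\partial_t\operatorname{off}\Phi_k-\mathit A_1\partial_x\operatorname{off}\Phi_k\bigr)
=\operatorname{diag}\bigl(\mathit A_1\,(2(x-\tau)\partial_t-\partial_x)\operatorname{off}\Phi_k\bigr).
\]
The relation $\mathit L_1=\frac yw\mathit A_1$ does not remove this term, and it is generically nonzero; for $k=1$ a direct computation gives a nonzero multiple of $\mathit W^{-1}$.

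The remaining terms also do not cancel ``pairwise by antisymmetry''. After your Steps~1--2 with $j\le k-1$, and after dropping the $[\mathit W,\cdot]$ and $[\mathit Y,\cdot]$ terms, what is left is exactly $\operatorname{diag}\sum_{m=0}^{k-1}(\mathit A_1\mathit L_{k+1-m}-\mathit L_1\mathit A_{k+1-m})\Phi_m$. The discarded $j=k$ term is the only thing that cancels it.

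Evaluating that term requires the \emph{off-diagonal} parts of the order-$\hbar^{k+1}$ equations, which you explicitly declined to use. They are legitimately available. Lemma~\ref{Dold:250109110444} at index $k+1$ needs only the equations through order $\hbar^k$, so some $\operatorname{off}\Phi_{k+1}$ solves both off-diagonal order-$(k+1)$ equations. After substitution, the $\Phi_{k+1}$ contributions collapse to $\mathit L_1[\mathit W,\Phi_{k+1}]-\mathit A_1[\mathit Y,\Phi_{k+1}]=0$. This is where $\mathit L_1=2(x-\tau)\mathit A_1$ and $\mathit Y=2(x-\tau)\mathit W$ are actually needed. What remains is minus the leftover sum above, which is the content of Lemma~\ref{Dold:TechnicalLemma2}.

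Your generating-series identity is correct. With $\mathcal E_{\mathit L}=\hbar\partial_x\hat\Phi-[\mathit Y,\hat\Phi]-\hat{\mathit L}\hat\Phi$ and $\mathcal E_{\mathit A}=\hbar\partial_t\hat\Phi-[\mathit W,\hat\Phi]-\hat{\mathit A}\hat\Phi$, the transformed zero-curvature identity gives
\[
\hbar\partial_t(\hat{\mathit L}\hat\Phi)-\hbar\partial_x(\hat{\mathit A}\hat\Phi)
=[\mathit W,\hat{\mathit L}\hat\Phi]-[\mathit Y,\hat{\mathit A}\hat\Phi]
+\hat{\mathit L}\,\mathcal E_{\mathit A}-\hat{\mathit A}\,\mathcal E_{\mathit L}.
\]
However, your order count is off by one. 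Because of the factor $\hbar$ in $\hbar\partial_t$, the lemma is the diagonal part of the coefficient of $\hbar^{k+2}$. At order $\hbar^{k+1}$, $\mathit L_0=\mathit A_0=0$ does kill the error terms, but that only reproduces the closedness already used for $\operatorname{diag}\Phi_{k-1}$. At order $\hbar^{k+2}$ the surviving term is $\operatorname{diag}(\mathit L_1\mathcal E_{\mathit A,k+1}-\mathit A_1\mathcal E_{\mathit L,k+1})$, which is the same missing piece.

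It can be handled in one line. For any choice of $\Phi_{k+1}$, this term equals $\operatorname{diag}\bigl(\mathit A_1\operatorname{off}(2(x-\tau)\mathrm{RHS}_{\mathit A,k+1}-\mathrm{RHS}_{\mathit L,k+1})\bigr)$, and this vanishes by Lemma~\ref{Dold:250109110444} at index $k+1$. With that line added, your route is a valid and much shorter alternative to the paper's bookkeeping with $S_1,S_2,S_3$. It is essentially the operator argument of \S\ref{app:matrix-several-times}, where the analogous term $\operatorname{diag}(B_{b,1}r_a-B_{a,1}r_b)$ is killed by the ratios $\rho_{a,ij}$.
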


Integrating this closed diagonal one-form on the fixed polydisc $W$ with the prescribed diagonal initial value gives $\operatorname{diag}\Phi_k$ and completes the induction. All coefficients are holomorphic on $W$: the algebraic step divides only by $w$, and the differential step takes a holomorphic primitive on $W$. The same steps force every coefficient of any other solution with the same data, so uniqueness also follows. This completes the proof of \autoref{Dold:250417183730}, subject to the compatibility proofs below.

\subsection{Compatibility identities in the induction}
\label{Dold:app:compatibility-identities}

\subsubsection{Proof of \autoref{Dold:250109110444}}

In preparation, let us expand the formal zero-curvature equation \eqref{Dold:250109114843} in powers of $\hbar$: we obtain the following sequence of equations for each $k \geq 1$:
\begin{equation}
\label{Dold:250109121431}
	\Big( \partial_t \mathit{L}_{k-1} - [\mathit{W}, \mathit{L}_k] \Big)
	- \Big( \partial_x \mathit{A}_{k-1} - [\mathit{Y}, \mathit{A}_k] \Big)
	+ \sum_{i+j=k} [\mathit{L}_i, \mathit{A}_j] = 0
\qquad (k \geq 1)
.
\end{equation}
Here and throughout this subsection, the sum is understood to run over all $0 \leq i,j \leq k$ satisfying the constraint $i + j = k$.
We prove \autoref{Dold:250109110444} using only the equations through order $k-1$. The case $k=1$ was verified in Step~1 of \autoref{Dold:ProoffLemmaB2}.
Namely, we find from \eqref{Dold:250109192430} with $k = 1$ that $\mathrm{RHS}_{\mathit{L},1} = - \mathit{L}_1$ and $\mathrm{RHS}_{\mathit{A},1} = - \mathit{A}_1$.
Meanwhile, the compatibility equation \eqref{Dold:250109121431} at order $k = 1$ reduces to $- [\mathit{W}, \mathit{L}_1] + [\mathit{Y}, \mathit{A}_1] = 0$, which is nothing but the desired identity \eqref{Dold:250109122646} for $k = 1$.

Now, suppose $k \geq 2$ is arbitrary, and assume that $\Phi_1, \ldots, \Phi_{k-1}$ have been found such that
\begin{equation}
\label{Dold:250109122501}
	[\mathit{Y}, \Phi_\ell] = \partial_x \Phi_{\ell-1} - \sum_{i+j = \ell} \mathit{L}_i \Phi_j
\quad\text{and}\quad
	[\mathit{W}, \Phi_\ell] = \partial_t \Phi_{\ell-1} - \sum_{i+j = \ell} \mathit{A}_i \Phi_j
\end{equation}
for all $\ell \in\{1,\ldots,k-1\}$.
We must demonstrate that identity \eqref{Dold:250109122646} is true.
We start by calculating the term $[\mathit{W}, \partial_x \Phi_{k-1}]$ appearing on the left-hand side of \eqref{Dold:250109122646}.
Differentiating both sides of the identity \eqref{Dold:250109122501} for $[\mathit{W}, \Phi_{k-1}]$ 
with respect to $x$, we find:
\begin{equation}
	[\partial_x \mathit{W}, \Phi_{k-1}] + [\mathit{W}, \partial_x \Phi_{k-1}]
		= \partial_x \partial_t \Phi_{k-2} - \sum_{i+j = k - 1} 
		\Big( \partial_x \mathit{A}_i \Phi_j + \mathit{A}_i \partial_x \Phi_j \Big)
.
\end{equation}
Using the identity $\partial_x \mathit{W} = \partial_t \mathit{Y}$ which follows from \eqref{Dold:241113155001}, we obtain:
\begin{equation}
\label{Dold:250109124840}
	[\mathit{W}, \partial_x \Phi_{k-1}]
	= - [\partial_t \mathit{Y}, \Phi_{k-1}] + \partial_x \partial_t \Phi_{k-2} - \sum_{i+j = k - 1} 
		\Big( \partial_x \mathit{A}_i \Phi_j + \mathit{A}_i \partial_x \Phi_j \Big)
.
\end{equation}

At the same time, the identity $\partial_t [\mathit{Y}, \Phi_{k-1}] = [\partial_t \mathit{Y}, \Phi_{k-1}] + [\mathit{Y}, \partial_t \Phi_{k-1}]$, in conjunction with the expression \eqref{Dold:250109122501} for $[\mathit{Y}, \Phi_{k-1}]$, yields:
\begin{align*}
	- [\partial_t \mathit{Y}, \Phi_{k-1}]
		&= [\mathit{Y}, \partial_t \Phi_{k-1}] - \partial_t [\mathit{Y}, \Phi_{k-1}]
\\		&= [\mathit{Y}, \partial_t \Phi_{k-1}] - \partial_t \left( \partial_x \Phi_{k-2} 
			- \sum_{i+j=k-1} \mathit{L}_i \Phi_j \right)
\\		&= [\mathit{Y}, \partial_t \Phi_{k-1}] - \partial_t \partial_x \Phi_{k-2} 
			+ \sum_{i+j=k-1} 
				\Big( \partial_t \mathit{L}_i \Phi_j + \mathit{L}_i \partial_t \Phi_j \Big)
.
\end{align*}
Plugging this into \eqref{Dold:250109124840}, we see that the term $\partial_x \partial_t \Phi_{k-2}$ gets cancelled, whilst the terms in the sums can be combined to yield the following identity:
\begin{equation}
\label{Dold:250109131538}
	[\mathit{W}, \partial_x \Phi_{k-1}]
		= [\mathit{Y}, \partial_t \Phi_{k-1}] 
			+ \sum_{i+j=k-1} 
				\Bigg( \Big( \partial_t \mathit{L}_i - \partial_x \mathit{A}_i \Big) \Phi_j 
					+ \Big(\mathit{L}_i \partial_t \Phi_j - \mathit{A}_i \partial_x \Phi_j \Big)
				\Bigg)
.
\end{equation}

Next, we again use \eqref{Dold:250109122501} with $\ell = j+1$ to substitute for $\partial_x \Phi_j$ and $\partial_t \Phi_j$ in \eqref{Dold:250109131538}.
The term with $i=0$ is zero because $\mathit L_0=\mathit A_0=0$, so only $j+1\leq k-1$ is used; no equation for $\Phi_k$ is assumed. After rearranging the terms, we find:
\begin{equation}
\label{Dold:250109134117}
\begin{aligned}
	[\mathit{W}, \partial_x \Phi_{k-1}]
		= [\mathit{Y}, \partial_t \Phi_{k-1}] 
			&+ \sum_{i+j=k-1} 
				\Big( \partial_t \mathit{L}_i - \partial_x \mathit{A}_i \Big) \Phi_j
\\			&+ \sum_{i+j=k-1} 
				\Big( \mathit{L}_i [\mathit{W}, \Phi_{j+1}] - \mathit{A}_i [\mathit{Y}, \Phi_{j+1}] \Big)
\\			&+ \sum_{i+j=k-1} \sum_{m+n = j+1}
				\Big( \mathit{L}_i \mathit{A}_m 
					- \mathit{A}_i \mathit{L}_m \Big) \Phi_n
.
\end{aligned}
\end{equation}
Set the new first index equal to the old first index plus $1$ in the first sum, and set the new second index equal to the old second index plus $1$ in the second and third sums. Extending the resulting sums by zero boundary terms, and adopting the convention $\mathit{L}_{-1}=\mathit{A}_{-1}=0$, we obtain
\begin{equation}
\label{Dold:250109134138}
\begin{aligned}
	[\mathit{W}, \partial_x \Phi_{k-1}]
		= [\mathit{Y}, \partial_t \Phi_{k-1}] 
			&+ \sum_{i+j=k}
				\Big( \partial_t \mathit{L}_{i-1} - \partial_x \mathit{A}_{i-1} \Big) \Phi_j 
\\			&+ \sum_{i+j=k}
				\Big( \mathit{L}_i [\mathit{W}, \Phi_{j}] - \mathit{A}_i [\mathit{Y}, \Phi_{j}] \Big)
\\			&+ \sum_{i+j=k} \sum_{m+n = j}
				\Big( \mathit{L}_i \mathit{A}_m 
					- \mathit{A}_i \mathit{L}_m \Big) \Phi_n
.
\end{aligned}
\end{equation}
Both double sums range over triples of nonnegative integers with total $k$. Relabel $(i,m,n)$ as $(m,n,j)$ and interchange the two inner indices in the negative term. The double sum in the last line therefore becomes:
\begin{equation}
\label{Dold:250128122206}
	\sum_{i+j=k} \sum_{m+n = j} 
			\Big( \mathit{L}_i \mathit{A}_m 
				- \mathit{A}_i \mathit{L}_m \Big) \Phi_n
	= \sum_{i+j=k} \sum_{m+n = i}
			[\mathit{L}_m, \mathit{A}_n] \Phi_j
.
\end{equation}
Also, notice that
\begin{equation}
	\mathit{L}_i [\mathit{W}, \Phi_j]
		= [\mathit{W}, \mathit{L}_i \Phi_j] - [\mathit{W}, \mathit{L}_i] \Phi_j
\quad\text{and}\quad
	\mathit{A}_i [\mathit{Y}, \Phi_j]
		= [\mathit{Y}, \mathit{A}_i \Phi_j] - [\mathit{Y}, \mathit{A}_i] \Phi_j
.
\end{equation}
Altogether, identity \eqref{Dold:250109134138} simplifies as follows:
\begin{multline*}
	[\mathit{W}, \partial_x \Phi_{k-1}]
		= [\mathit{Y}, \partial_t \Phi_{k-1}]
			+ \sum_{i+j=k}
				\Big( [\mathit{W}, \mathit{L}_i \Phi_j] - [\mathit{Y}, \mathit{A}_i \Phi_j] \Big)
\\			+ \sum_{i+j=k}
				\left( 
					\Big( \partial_t \mathit{L}_{i-1} - [\mathit{W}, \mathit{L}_i] \Big)
					- \Big( \partial_x \mathit{A}_{i-1} - [\mathit{Y}, \mathit{A}_i] \Big)
					+ \sum_{m+n = i} [\mathit{L}_m, \mathit{A}_n]
				\right) \Phi_j
.
\end{multline*}
The parenthesised expression in the last sum is equal to zero thanks to the compatibility equation \eqref{Dold:250109121431}.
Consequently, we find:
\begin{align*}
	[\mathit{W}, \mathrm{RHS}_{\mathit{L},k}]
		= [\mathit{W}, \partial_x \Phi_{k-1}] - \sum_{i+j=k} [\mathit{W}, \mathit{L}_i \Phi_j]
		= [\mathit{Y}, \partial_t \Phi_{k-1}] - \sum_{i+j=k} [\mathit{Y}, \mathit{A}_i \Phi_j]
		= [\mathit{Y}, \mathrm{RHS}_{\mathit{A},k}]
\end{align*}
which is what we needed to show.

\subsubsection{Proof of \autoref{Dold:250123150709}}
The case $k=1$ was proved in Step~3 of \autoref{Dold:ProoffLemmaB2}. We therefore assume below that $k\geq2$. Let us assume by induction that we have built $(\Phi_j)_{j\leq k-1}$. Then \autoref{Dold:250109110444} defines the off-diagonal part of $\Phi_k$. Give $\Phi_k$ any holomorphic diagonal part, for example zero. By the induction invariant, the full equations through order $\hbar^k$ hold, and hence we have the following equations.
\begin{equation}
\label{Dold:KnownRelations2}
\begin{aligned}
 0&=\partial_x\Phi_j-[\mathit Y,\Phi_{j+1}]
       -\sum_{i=0}^{j}\mathit L_{j+1-i}\Phi_i,\\
 0&=\partial_t\Phi_j-[\mathit W,\Phi_{j+1}]
       -\sum_{i=0}^{j}\mathit A_{j+1-i}\Phi_i,
       \qquad \forall\, j\in\{1,\ldots,k-1\}.
\end{aligned}
\end{equation}
The compatibility equations read
\begin{equation} \label{Dold:KnownRelations}\forall\, n\geq 1\,:\, 0=\partial_t \mathit{L}_{n}-\partial_x \mathit{A}_{n}-\left[\mathit{W},\mathit{L}_{n+1}\right]+ \left[\mathit{Y},\mathit{A}_{n+1}\right]+\sum_{i=1}^{n}\left[\mathit{L}_i,\mathit{A}_{n+1-i}\right]
\end{equation}
In the upcoming computations, we shall need the following relations coming from direct computations for the first orders:
\begin{align} \label{Dold:AdditionalRelationsGeneral} 
\partial_t \mathit{Y}&=\partial_x \mathit{W}, \qquad
0=\left[\mathit{Y},\mathit{A}_1\right]-\left[\mathit{W},\mathit{L}_1\right], \qquad
\mathit{L}_1=\left[\Phi_1,\mathit{Y}\right] \nonumber \\
\mathit{A}_1&=\left[\Phi_1,\mathit{W}\right], \qquad
\partial_t \mathit{L}_1-\partial_x\mathit{A}_1=\left[\partial_t \Phi_1,\mathit{Y}\right]-\left[\partial_x \Phi_1,\mathit{W}\right]
\end{align}
We shall also make multiple uses of the following algebraic relations:
\begin{align}\label{Dold:JacobiId} \left[\left[A,B\right],C\right]&=-\left[\left[C,A\right],B\right]-\left[\left[B,C\right],A\right] \nonumber \\
A \left[B,C\right]&= \left[A,B\right]C +\left[B,AC\right]=\left[C,A\right]B+\left[AB,C\right] \nonumber \\
\left[A,B\right]C&=A\left[B,C\right]+\left[AC,B\right]=B\left[C,A\right]+\left[A,BC\right]
\end{align}
for any matrices $(A,B,C)\in \left(M_2(\mathbb{C})\right)^3$. Finally, we shall also denote $\left\{A\right\}_{i,j}$ the $(i,j)$ entry $A_{i,j}$ of any matrix $A$ (we use a curly bracket to avoid confusion with the usual bracket that shall also arise in some commutators). 
We now prove the following technical result.
\begin{lemma}
\label{Dold:TechnicalLemma2}
We have
\begin{equation}
    \left\{
        \mathit{L}_1\partial_t\Phi_k
        -
        \mathit{A}_1\partial_x\Phi_k
    \right\}_{i,i}
    =
    \left\{
        \sum_{m=0}^{k-1}
        \left(
            \mathit{L}_1\mathit{A}_{k+1-m}
            -
            \mathit{A}_1\mathit{L}_{k+1-m}
        \right)\Phi_m
    \right\}_{i,i}
\end{equation}
for every $i\in\{1,2\}$.
\end{lemma}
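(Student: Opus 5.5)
The plan is to reduce the identity to the off-diagonal relation supplied by \autoref{Dold:250109110444} at order $k+1$, using the fact that $\mathit L_1$ and $\mathit A_1$ are strictly off-diagonal and proportional.

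Throughout, write $c\coloneqq2(x-\tau)=\frac{y}{w}$. By \eqref{Dold:250109171618}, $\mathit L_1=c\,\mathit A_1$ and $\mathit Y=c\,\mathit W$. Since $\mathit L_1$ and $\mathit A_1$ are off-diagonal, for any matrix $M$ we have $\operatorname{diag}(\mathit L_1M)=\operatorname{diag}(\mathit L_1\operatorname{off}M)$, and similarly for $\mathit A_1$. Only off-diagonal parts therefore matter.

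\emph{Step 1: rewrite the claim.} Introduce the order-$(k+1)$ right-hand sides of \eqref{Dold:250109192430},
\[
\mathrm{RHS}_{\mathit A,k+1}=\partial_t\Phi_k-\sum_{i+j=k+1}\mathit A_i\Phi_j,
\qquad
\mathrm{RHS}_{\mathit L,k+1}=\partial_x\Phi_k-\sum_{i+j=k+1}\mathit L_i\Phi_j .
\]
In these sums the term $j=k+1$ vanishes because $\mathit L_0=\mathit A_0=0$. The term $j=k$ (that is, $i=1$) contributes
\[
\mathit L_1\mathit A_1\Phi_k-\mathit A_1\mathit L_1\Phi_k=[\mathit L_1,\mathit A_1]\Phi_k=0
\]
to $\mathit L_1\mathrm{RHS}_{\mathit A,k+1}-\mathit A_1\mathrm{RHS}_{\mathit L,k+1}$, since $\mathit L_1$ and $\mathit A_1$ are proportional. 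Hence the asserted identity is equivalent to
\[
\operatorname{diag}\bigl(\mathit L_1\,\mathrm{RHS}_{\mathit A,k+1}-\mathit A_1\,\mathrm{RHS}_{\mathit L,k+1}\bigr)=0,
\]
with the remaining indices $j=m\le k-1$ reproducing exactly the sum in the statement.

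\emph{Step 2: invoke the off-diagonal relation.} Apply \autoref{Dold:250109110444} with $k$ replaced by $k+1$. Its hypothesis is that both equations hold through order $\hbar^k$, which is guaranteed by the induction invariant once $\operatorname{off}\Phi_k$ has been constructed. This holds for any diagonal part assigned to $\Phi_k$, because the lemma's proof only uses those equations. We obtain $[\mathit W,\mathrm{RHS}_{\mathit L,k+1}]=[\mathit Y,\mathrm{RHS}_{\mathit A,k+1}]$, that is,
\[
\operatorname{off}\mathrm{RHS}_{\mathit L,k+1}=c\,\operatorname{off}\mathrm{RHS}_{\mathit A,k+1}.
\]

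\emph{Step 3: conclude.} Substituting into the reduced claim gives
\[
\operatorname{diag}\bigl(c\,\mathit A_1\operatorname{off}\mathrm{RHS}_{\mathit A,k+1}-\mathit A_1\,c\operatorname{off}\mathrm{RHS}_{\mathit A,k+1}\bigr)=0,
\]
because $c$ is a scalar function. This proves the lemma.

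The main point to check carefully is Step 2. One must verify that applying \autoref{Dold:250109110444} at order $k+1$ uses only equations through order $\hbar^k$ and never the unconstructed $\Phi_{k+1}$. The proof of that lemma substitutes \eqref{Dold:250109122501} only for $\ell\le k$, which confirms this. The index bookkeeping in Step 1 (which terms survive, and the range $m\le k-1$) is routine.
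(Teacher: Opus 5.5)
Your proof is correct and follows essentially the same route as the paper. Both arguments discard the $j=k$ term via $[\mathit L_1,\mathit A_1]=0$ and then combine \autoref{Dold:250109110444} at index $k+1$ with $\mathit L_1=2(x-\tau)\mathit A_1$ and $\mathit Y=2(x-\tau)\mathit W$. The only difference is cosmetic: the paper packages that lemma through an auxiliary $\operatorname{off}\Phi_{k+1}$ satisfying both off-diagonal equations, so that $\mathit L_1[\mathit W,\Phi_{k+1}]-\mathit A_1[\mathit Y,\Phi_{k+1}]=0$. You instead use the entrywise proportionality $\operatorname{off}\mathrm{RHS}_{\mathit L,k+1}=2(x-\tau)\operatorname{off}\mathrm{RHS}_{\mathit A,k+1}$ directly, which avoids mentioning $\Phi_{k+1}$ at all.
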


\begin{proof}
At this stage of the induction, the equations of order $\hbar^{k+1}$ have not yet been established in their diagonal part.  Nevertheless, their off-diagonal parts are available.
Indeed, the equations of order $\hbar^k$ are already satisfied after the construction of $\operatorname{off}\Phi_k$; they do not involve derivatives of $\operatorname{diag}\Phi_k$. Hence \autoref{Dold:250109110444}, applied with index $k+1$, shows that one can choose $\operatorname{off}\Phi_{k+1}$ so that
\begin{align*}
 \operatorname{off}(\partial_t\Phi_k)
   &=
   [\mathit W,\Phi_{k+1}]
   +
   \operatorname{off}\left(
      \sum_{m=0}^{k}\mathit A_{k+1-m}\Phi_m
   \right),\\
 \operatorname{off}(\partial_x\Phi_k)
   &=
   [\mathit Y,\Phi_{k+1}]
   +
   \operatorname{off}\left(
      \sum_{m=0}^{k}\mathit L_{k+1-m}\Phi_m
   \right).
\end{align*}
Since $\mathit L_1$ and $\mathit A_1$ are strictly off-diagonal, only these off-diagonal identities contribute after taking the diagonal part of
$\mathit L_1\partial_t\Phi_k-\mathit A_1\partial_x\Phi_k$.
Multiplying the first off-diagonal identity on the left by
$\mathit L_1$, multiplying the second one on the left by
$\mathit A_1$, and then taking the diagonal part, we obtain
\begin{align*}
\operatorname{diag}\left(
    \mathit L_1\partial_t\Phi_k
    -
    \mathit A_1\partial_x\Phi_k
\right) = &
\operatorname{diag}\left(
    \mathit L_1[\mathit W,\Phi_{k+1}]
    -
    \mathit A_1[\mathit Y,\Phi_{k+1}]
\right)
 +
\operatorname{diag}\left(
    \sum_{m=0}^{k}
    \left(
        \mathit L_1\mathit A_{k+1-m}
        -
        \mathit A_1\mathit L_{k+1-m}
    \right)\Phi_m
\right).
\end{align*}
Here we have used the fact that, since $\mathit L_1$ and
$\mathit A_1$ are strictly off-diagonal,
\[
    \operatorname{diag}(\mathit L_1 Z)
        =
    \operatorname{diag}(\mathit L_1\operatorname{off}Z),
    \qquad
    \operatorname{diag}(\mathit A_1 Z)
        =
    \operatorname{diag}(\mathit A_1\operatorname{off}Z)
\]
for every matrix $Z$.
By \eqref{Dold:250109171618},
\[
    \mathit Y=2(x-\tau)\mathit W,
    \qquad
    \mathit L_1=2(x-\tau)\mathit A_1,
\quad \text{and thus} \quad 
    \mathit L_1[\mathit W,\Phi_{k+1}]
        -
    \mathit A_1[\mathit Y,\Phi_{k+1}]
        =0.
\]
Moreover, the summand corresponding to $m=k$ vanishes because
\[
    \mathit L_1\mathit A_1-\mathit A_1\mathit L_1=0. \quad
 \Rightarrow \quad
\operatorname{diag}\left(
    \mathit L_1\partial_t\Phi_k
    -
    \mathit A_1\partial_x\Phi_k
\right)
=
\operatorname{diag}\left(
    \sum_{m=0}^{k-1}
    \left(
        \mathit L_1\mathit A_{k+1-m}
        -
        \mathit A_1\mathit L_{k+1-m}
    \right)\Phi_m
\right),
\]
which proves \autoref{Dold:TechnicalLemma2}.
\end{proof}
Let us now proceed to the proof of \autoref{Dold:250123150709} which is equivalent to proving
\begin{align} 
0&=\left\{\sum_{n=1}^{k+1}(\partial_t \mathit{L}_n-\partial_x \mathit{A}_n)\Phi_{k+1-n}\right\}_{i,i}+\left\{\sum_{j=1}^{k-1}\mathit{L}_{k+1-j}(\partial_t \Phi_j)- \sum_{j=1}^{k-1}\mathit{A}_{k+1-j}(\partial_x \Phi_j)\right\}_{i,i} \nonumber \\
&+\left\{\mathit{L}_1(\partial_t\Phi_k)-\mathit{A}_1(\partial_x\Phi_k)\right\}_{i,i} \,\,\,,\,\,\, \forall \,i\in \{1,2\}\end{align}
Let $i\in \{1,2\}$. We shall split the three contributions as follows:
\begin{align} S_1&:=\left\{\sum_{n=1}^{k+1}(\partial_t \mathit{L}_n-\partial_x \mathit{A}_n)\Phi_{k+1-n}\right\}_{i,i} \nonumber \\
S_2&:=\left\{\sum_{j=1}^{k-1}\mathit{L}_{k+1-j}(\partial_t \Phi_j)- \sum_{j=1}^{k-1}\mathit{A}_{k+1-j}(\partial_x \Phi_j)\right\}_{i,i} \nonumber \\
S_3&:=\left\{\mathit{L}_1(\partial_t\Phi_k)-\mathit{A}_1(\partial_x\Phi_k)\right\}_{i,i}
\end{align}
so that \autoref{Dold:250123150709} is equivalent to proving that $S_1+S_2+S_3=0$. Note that $S_3$ is already determined by \autoref{Dold:TechnicalLemma2}. We shall now compute $S_1$ from the compatibility equations \eqref{Dold:KnownRelations} multiplied by $\Phi_{k+1-n}$ on the right:
\begin{align}\label{Dold:BlueTerms}S_1
&\overset{\eqref{Dold:KnownRelations}}{=}\left\{\sum_{n=1}^{k+1}(\left[\mathit{W},\mathit{L}_{n+1}\right]-\left[\mathit{Y},\mathit{A}_{n+1}\right])\Phi_{k+1-n}-\sum_{n=1}^{k+1}\sum_{s=1}^{n}\left[\mathit{L}_s,\mathit{A}_{n+1-s}\right]\Phi_{k+1-n}\right\}_{i,i} \nonumber \\
&\overset{m=k+1-n}{=}\left\{\sum_{m=0}^k(\left[\mathit{W},\mathit{L}_{k+2-m}\right]-\left[\mathit{Y},\mathit{A}_{k+2-m}\right])\Phi_{m}-\sum_{m=0}^k\sum_{s=1}^{k+1-m}\left[\mathit{L}_s,\mathit{A}_{k+2-m-s}\right]\Phi_{m}\right\}_{i,i} \nonumber \\
&=\left\{\sum_{m=1}^k(\left[\mathit{W},\mathit{L}_{k+2-m}\right]-\left[\mathit{Y},\mathit{A}_{k+2-m}\right])\Phi_{m}-\sum_{m=0}^{k-1}\sum_{s=1}^{k+1-m}\left[\mathit{L}_s,\mathit{A}_{k+2-m-s}\right]\Phi_{m}\right\}_{i,i}
\end{align}
where we have used the fact that $\left[\mathit{L}_1,\mathit{A}_1\right]=0$ and $\left[\mathit{W},\mathit{L}_{k+2}\right]$, $\left[\mathit{Y},\mathit{A}_{k+2}\right]$ are off-diagonal in the last equality.
In order to obtain $S_2$, we shall multiply \eqref{Dold:KnownRelations2} on the left by $\mathit{L}_{k+1-j}$ or $\mathit{A}_{k+1-j}$. It provides:
{\small\begin{align} S_2
&=\Big\{\sum_{j=1}^{k-1}\mathit{L}_{k+1-j}\left[\mathit{W},\Phi_{j+1}\right]- \sum_{j=1}^{k-1}\mathit{A}_{k+1-j}\left[\mathit{Y},\Phi_{j+1}\right] \nonumber \\&
+\sum_{j=1}^{k-1}\sum_{m=0}^{j}\mathit{L}_{k+1-j} \mathit{A}_{j+1-m} \Phi_{m}- \sum_{j=1}^{k-1}\sum_{m=0}^{j} \mathit{A}_{k+1-j}\mathit{L}_{j+1-m} \Phi_{m} \Big\}_{i,i} \nonumber \\
&=\Big\{\sum_{j=1}^{k-1}\mathit{L}_{k+1-j}\left[\mathit{W},\Phi_{j+1}\right]- \sum_{j=1}^{k-1}\mathit{A}_{k+1-j}\left[\mathit{Y},\Phi_{j+1}\right]+\sum_{j=1}^{k-1}(\mathit{L}_{k+1-j} \mathit{A}_{j+1}-\mathit{A}_{k+1-j} \mathit{L}_{j+1}) \nonumber \\&
+\sum_{j=1}^{k-1}\sum_{m=1}^{j}\mathit{L}_{k+1-j} \mathit{A}_{j+1-m} \Phi_{m}- \sum_{j=1}^{k-1}\sum_{m=1}^{j} \mathit{A}_{k+1-j}\mathit{L}_{j+1-m} \Phi_{m} \Big\}_{i,i} \nonumber \\
&=\Big\{\sum_{j=1}^{k-1}\mathit{L}_{k+1-j}\left[\mathit{W},\Phi_{j+1}\right]- \sum_{j=1}^{k-1}\mathit{A}_{k+1-j}\left[\mathit{Y},\Phi_{j+1}\right]+\sum_{j=1}^{k-1}(\mathit{L}_{k+1-j} \mathit{A}_{j+1}-\mathit{A}_{k+1-j} \mathit{L}_{j+1}) \nonumber \\&
+\sum_{m=1}^{k-1}\sum_{j=m}^{k-1}(\mathit{L}_{k+1-j} \mathit{A}_{j+1-m}-\mathit{A}_{k+1-j}\mathit{L}_{j+1-m}) \Phi_{m} \Big\}_{i,i} \nonumber \\
&\overset{s=k+1-j}{\underset{s=j+1-m}{=}}\Big\{\sum_{j=1}^{k-1}\mathit{L}_{k+1-j}\left[\mathit{W},\Phi_{j+1}\right]- \sum_{j=1}^{k-1}\mathit{A}_{k+1-j}\left[\mathit{Y},\Phi_{j+1}\right]+\sum_{j=1}^{k-1}(\mathit{L}_{k+1-j} \mathit{A}_{j+1}-\mathit{A}_{k+1-j} \mathit{L}_{j+1}) \nonumber \\&
+\sum_{m=1}^{k-1}\left(\sum_{s=2}^{k+1-m}\mathit{L}_{s} \mathit{A}_{k+2-s-m} -\sum_{s=1}^{k-m}\mathit{A}_{k+2-s-m}\mathit{L}_{s}\right) \Phi_{m} \Big\}_{i,i} \nonumber \\
&=\Big\{\sum_{j=1}^{k-1}\mathit{L}_{k+1-j}\left[\mathit{W},\Phi_{j+1}\right]- \sum_{j=1}^{k-1}\mathit{A}_{k+1-j}\left[\mathit{Y},\Phi_{j+1}\right]+\sum_{s=2}^{k}\left[\mathit{L}_{s}, \mathit{A}_{k+2-s}\right] \nonumber \\&
+\sum_{m=1}^{k-1}\left(\sum_{s=1}^{k-m}\left[\mathit{L}_{s}, \mathit{A}_{k+2-s-m}\right] +\mathit{L}_{k+1-m}\mathit{A}_1-\mathit{L}_1\mathit{A}_{k+1-m} \right)  \Phi_{m} \Big\}_{i,i}
\end{align}}
But we have:$\mathit{L}_{k+1-j}\left[\mathit{W},\Phi_{j+1}\right]=\left[\mathit{L}_{k+1-j},\mathit{W}\right]\Phi_{j+1}+\left[\mathit{W},\mathit{L}_{k+1-j}\Phi_{j+1}\right] $
so that since the last term is off-diagonal:
\begin{align} \label{Dold:GreenTerms}S_2&
\overset{m=j+1}{=}\Big\{\sum_{m=2}^{k}(\left[\mathit{L}_{k+2-m},\mathit{W}\right]-\left[\mathit{A}_{k+2-m},\mathit{Y}\right])\Phi_{m}+\sum_{s=2}^{k}\left[\mathit{L}_{s}, \mathit{A}_{k+2-s}\right] \nonumber \\&
+\sum_{m=1}^{k-1}\left(\sum_{s=1}^{k-m}\left[\mathit{L}_{s}, \mathit{A}_{k+2-s-m}\right] +\mathit{L}_{k+1-m}\mathit{A}_1-\mathit{L}_1\mathit{A}_{k+1-m} \right)  \Phi_{m} \Big\}_{i,i}
\end{align}

We may now regroup \eqref{Dold:BlueTerms} and \eqref{Dold:GreenTerms}:
{\small\begin{align} \label{Dold:BlueAndGreen}S_1+S_2&=\Big\{\sum_{m=1}^k(\left[\mathit{W},\mathit{L}_{k+2-m}\right]-\left[\mathit{Y},\mathit{A}_{k+2-m}\right])\Phi_{m}-\sum_{m=0}^{k-1}\sum_{s=1}^{k+1-m}\left[\mathit{L}_s,\mathit{A}_{k+2-m-s}\right]\Phi_{m}\Big\}_{i,i} \nonumber \\&
+\Big\{\sum_{m=2}^{k}(\left[\mathit{L}_{k+2-m},\mathit{W}\right]-\left[\mathit{A}_{k+2-m},\mathit{Y}\right])\Phi_{m}+\sum_{s=2}^{k}\left[\mathit{L}_{s}, \mathit{A}_{k+2-s}\right] \nonumber \\&
+\sum_{m=1}^{k-1}\sum_{s=1}^{k-m}\left[\mathit{L}_{s}, \mathit{A}_{k+2-s-m}\right]\Phi_{m}+ \sum_{m=1}^{k-1}(\mathit{L}_{k+1-m}\mathit{A}_1-\mathit{L}_1\mathit{A}_{k+1-m})\Phi_m   \Big\}_{i,i} \nonumber \\
&=\Big\{(\left[\mathit{W},\mathit{L}_{k+1}\right]-\left[\mathit{Y},\mathit{A}_{k+1}\right])\Phi_1-\sum_{s=1}^{k+1}\left[\mathit{L}_s,\mathit{A}_{k+2-s}\right]+\sum_{s=2}^{k}\left[\mathit{L}_{s}, \mathit{A}_{k+2-s}\right] \nonumber \\&
-\sum_{m=1}^{k-1}\left[\mathit{L}_{k+1-m},\mathit{A}_{1}\right]\Phi_{m}+ \sum_{m=1}^{k-1}(\mathit{L}_{k+1-m}\mathit{A}_1-\mathit{L}_1\mathit{A}_{k+1-m})\Phi_m\Big\}_{i,i} \nonumber \\
&=\Big\{(\left[\mathit{W},\mathit{L}_{k+1}\right]-\left[\mathit{Y},\mathit{A}_{k+1}\right])\Phi_1-\left[\mathit{L}_1,\mathit{A}_{k+1}\right]-\left[\mathit{L}_{k+1},\mathit{A}_{1}\right] \nonumber \\&
+\sum_{m=1}^{k-1}\mathit{A}_1\mathit{L}_{k+1-m}\Phi_{m}- \sum_{m=1}^{k-1}\mathit{L}_1\mathit{A}_{k+1-m}\Phi_m\Big\}_{i,i} \nonumber  \\
&=\Big\{\mathit{L}_{k+1}\left[\Phi_1,\mathit{W}\right]-\mathit{A}_{k+1}\left[\Phi_1,\mathit{Y}\right]-\left[\mathit{L}_1,\mathit{A}_{k+1}\right]-\left[\mathit{L}_{k+1},\mathit{A}_{1}\right] \nonumber \\&
+\sum_{m=1}^{k-1}\left(\mathit{A}_1\mathit{L}_{k+1-m}-\mathit{L}_1\mathit{A}_{k+1-m}\right)\Phi_m\Big\}_{i,i} \nonumber \\
&=\Big\{\mathit{A}_1\mathit{L}_{k+1}-\mathit{L}_1\mathit{A}_{k+1}+\sum_{m=1}^{k-1}(\mathit{A}_1\mathit{L}_{k+1-m}-\mathit{L}_1\mathit{A}_{k+1-m})\Phi_m\Big\}_{i,i}
\end{align}}
Let us finally add \eqref{Dold:BlueAndGreen} and \autoref{Dold:TechnicalLemma2} to get
\begin{align} S_1+S_2+S_3&=\left\{\sum_{m=0}^{k-1}(\mathit{L}_1\mathit{A}_{k+1-m}-\mathit{A}_1\mathit{L}_{k+1-m} )\Phi_m\right\}_{i,i} \nonumber \\
&+\left\{\mathit{A}_1\mathit{L}_{k+1}-\mathit{L}_1\mathit{A}_{k+1}+\sum_{m=1}^{k-1}(\mathit{A}_1\mathit{L}_{k+1-m}-\mathit{L}_1\mathit{A}_{k+1-m})\Phi_m\right\}_{i,i} \nonumber \\
&=0
\end{align}
ending the proof of \autoref{Dold:250123150709}.

The construction also makes regularity and continuation explicit.
Since $\tau w\neq0$, the leading matrix $E_0$ is invertible,
and the only eigenvalue denominator in
\eqref{Dold:250109191643} is $w$. Thus all coefficients are
holomorphic on a full neighbourhood of either point above
$x=\tau\neq0$, even though $Y$ vanishes there.
Coefficientwise continuation along a path in
$\mathring{\mathsf\Sigma}$ uses the same algebraic recursion and
primitives of the same closed diagonal one-forms. Their uniqueness
on overlapping local charts gives the continuation asserted in
\autoref{250109214318}; choosing paths on the universal cover
removes the ambiguity of the primitives and square roots.

\subsection{Several variables and higher-rank systems}
\label{app:matrix-several-times}

The matrix induction separates the part specific to Painlev\'e~I
from the part that uses only flatness. We state the latter precisely.
Let $u=(u^1,\ldots,u^m)$ be coordinates on a polydisc, and suppose
that a classical eigenbasis gauge and an exact leading phase have
put a rank-$r$ system into the form
\begin{equation}
\label{eq:D-many-times-system}
 \hbar\partial_a\hat\Phi-[\Lambda_a,\hat\Phi]
       =\hat B_a\hat\Phi,\qquad
 \hat B_a=\sum_{n\geq1}B_{a,n}\hbar^n,
 \qquad 1\leq a\leq m.
\end{equation}
Assume that the $\Lambda_a$ are diagonal and holomorphic,
$\partial_a\Lambda_b=\partial_b\Lambda_a$, that one of them,
say $\Lambda_1$, has pairwise distinct eigenvalues throughout the
polydisc, and that
$\operatorname{diag}B_{a,1}=0$ for every $a$. The transformed
flatness equations are
\begin{equation}
\label{eq:D-many-times-flatness}
 \hbar(\partial_b\hat B_a-\partial_a\hat B_b)
 -[\Lambda_b,\hat B_a]+[\Lambda_a,\hat B_b]
 +[\hat B_a,\hat B_b]=0.
\end{equation}
Under these hypotheses, the same recursion gives a unique
$\hat\Phi=I_r+O(\hbar)$ for any prescribed diagonal initial
series $\operatorname{diag}\hat\Phi(b)=I_r+O(\hbar)$.
Here is a proof that does not use the rank-two proportionality
$Y=2(x-\tau)W$.
Define operators on formal matrices by
\begin{equation*}
 \mathcal D_a M=\hbar\partial_aM-[\Lambda_a,M]-\hat B_aM.
\end{equation*}
Expansion of their commutator, using
\eqref{eq:D-many-times-flatness} and the closedness of the leading
diagonal one-form, gives $[\mathcal D_a,\mathcal D_b]=0$.
For a truncated matrix $P$, put $R_a=\mathcal D_aP$; then
\begin{equation}
\label{eq:D-residual-identity}
 \mathcal D_bR_a=\mathcal D_aR_b.
\end{equation}
If $P=\underset{j=0}{\overset{k-1}{\sum}}\Phi_j\hbar^j$ already solves the equations
through order $k-1$, write $R_a=\hbar^kF_{a,k}+O(\hbar^{k+1})$.
The lowest order in \eqref{eq:D-residual-identity} yields
\begin{equation*}
 [\Lambda_b,F_{a,k}]=[\Lambda_a,F_{b,k}].
\end{equation*}
Together with the already imposed diagonal equations, this allows
us to set
\begin{equation*}
 (\operatorname{off}\Phi_k)_{ij}
       =\frac{(F_{1,k})_{ij}}{\lambda_{1,i}-\lambda_{1,j}},
 \qquad i\neq j,
\end{equation*}
where $\Lambda_a=\operatorname{diag}(\lambda_{a,1},\ldots,
\lambda_{a,r})$. It solves all order-$k$ equations at once.
To check the diagonal transport, temporarily give $\Phi_k$ zero
diagonal and put $P=\sum_{j=0}^{k}\Phi_j\hbar^j$. Write
$R_a=\hbar^{k+1}r_a+\hbar^{k+2}s_a+O(\hbar^{k+3})$.
The next two orders of \eqref{eq:D-residual-identity} imply
\begin{align*}
 [\Lambda_b,r_a]&=[\Lambda_a,r_b],\\
 \partial_b\operatorname{diag}r_a
       -\partial_a\operatorname{diag}r_b
   &=\operatorname{diag}(B_{b,1}r_a-B_{a,1}r_b).
\end{align*}
For $i\neq j$ set
$\rho_{a,ij}=\frac{\lambda_{a,i}-\lambda_{a,j}}{\lambda_{1,i}-\lambda_{1,j}}$.
The first of these identities and the order-one flatness equation
give
\begin{equation*}
 (r_a)_{ij}=\rho_{a,ij}(r_1)_{ij},\qquad
 (B_{a,1})_{ij}=\rho_{a,ij}(B_{1,1})_{ij}.
\end{equation*}
Since $\rho_{a,ji}=\rho_{a,ij}$ and all $B_{a,1}$ have zero
diagonal, each diagonal entry of
$B_{b,1}r_a-B_{a,1}r_b$ is zero: its summand of index $j\neq i$
has the factor
$\rho_{b,ij}\rho_{a,ji}-\rho_{a,ij}\rho_{b,ji}=0$.
Thus the one-form prescribing the diagonal part of $\Phi_k$ is
closed. Integrating it with its prescribed initial value closes
the induction exactly as in \autoref{lem:D-matrix-recursion}.
The assumption $\operatorname{diag}B_{a,1}=0$ is a local
normalisation, not an extra equation to impose on the original
connection. Before this normalisation, let the first coefficients
be $C_{a,1}$. Order-one flatness gives the same off-diagonal
ratios as above, hence $\operatorname{diag}[C_{a,1},C_{b,1}]=0$.
Taking the diagonal part of order-two flatness then gives
\begin{equation*}
 \partial_b\operatorname{diag}C_{a,1}
       =\partial_a\operatorname{diag}C_{b,1}.
\end{equation*}
The diagonal one-form has a holomorphic primitive $H$ on the
polydisc. Multiplying the classical eigenbasis on the right by
the diagonal matrix $e^H$ changes the first-order diagonal part
to $\operatorname{diag}C_{a,1}-\partial_aH=0$.
This is the general version of the choice of $e_\pm$ in
\autoref{app:matrix-classical-gauge}.

Thus the induction extends to other Painlev\'e systems and to
additional deformation variables whenever the local leading
matrices admit a holomorphic common eigenbasis, their eigenvalue
one-forms are closed, and a chosen direction has a simple leading
spectrum. These are the hypotheses to check in each application;
the proof makes no claim at a point where all available leading
directions lose that separation. The number of Darboux coordinates
enters through the coefficients of the flat connection, not through
the recursive argument.

{\footnotesize
\bibliography{RHP_references}
\bibliographystyle{plain}
}

\bigskip

\textsc{Mohamad Alameddine}\\
\small Section de Math\'{e}matiques, Universit\'{e} de Gen\`{e}ve%
\footnote{24 rue du G\'{e}n\'{e}ral Dufour, 1211 Gen\`{e}ve 4, Suisse.}

\smallskip

\textsc{Olivier Marchal}\\
\small Universit\'{e} Jean Monnet, Institut Camille Jordan, %
\small Institut Universitaire de France\footnote{Institut Camille Jordan, CNRS UMR 5208,
Les Forges 2, 20 Rue du Dr Annino, 42000 Saint-\'{E}tienne, France.}

\smallskip

\textsc{Nikita Nikolaev}\\
\small School of Mathematics, University of Birmingham%
\footnote{Watson Building, Edgbaston, Birmingham, B15 2TT, United Kingdom.}

\smallskip

\textsc{Nicolas Orantin}\\
\small Section de Math\'{e}matiques, Universit\'{e} de Gen\`{e}ve%
\footnote{24 rue du G\'{e}n\'{e}ral Dufour, 1211 Gen\`{e}ve 4, Suisse.}

\end{document}